\newtheorem{theorem}{Theorem}[section] 
\newtheorem{lemma}{Lemma}[section]
\newtheorem{corollary}{Corollary}[section]
\theoremstyle{definition}
\newtheorem{example}{Example}[section]
\newtheorem*{example*}{Example}
\newtheorem*{remark*}{Remark}
\newtheorem{remark}{Remark}[section]
\def\argmax{\mathop{\rm argmax}}
\def\argmin{\mathop{\rm argmin}}
\newcommand{\tendsto}[2]
{\raisebox{-1.2ex}{$\stackrel{\textstyle \longrightarrow}{
\scriptscriptstyle #1\rightarrow #2}$}}
\newcommand{\dint}[1]
{\raisebox{-3.6ex}{$\stackrel{\displaystyle \int\int}{
\scriptstyle #1}$}}
\newcommand{\RealF}{\mathds{R}}
\newcommand{\NaturalF}{\mathds{N}}
\newcommand{\RatF}{\mathds{Q}}
\newcommand{\Expt}{\mathds{E}}
\newcommand{\Prob}{\mathds{P}}
\newcommand{\supp}{\text{\rm supp}}
\newcommand{\isupp}{\underline{\rm s\hspace{0.1cm}}\hspace{-0.1cm}\rm upp}
\newcommand{\eps}{\varepsilon}
\newcommand{\Borel}{\mf{B}}
\newcommand{\range}{\rm range}
\newcommand{\sst}{\scriptscriptstyle}
\newcommand{\st}{\scriptstyle}
\newcommand{\m}[1]
{\mathcal{#1}}
\newcommand{\wt}[1]
{\widetilde{#1}}
\newcommand{\mf}[1]{\mathfrak{#1}}
\newcommand{\ui}{(0,1)}
\newcommand{\ssc}[1]
{{}^{#1}\hspace{-0.07cm}}
\def\SNR{\rm SNR}
\def\ll{<\hspace{-0.15cm}<}
\def\bigo{{\mathit{O}}}
\def\lito{\mathit{o}}
\def\ind{\mathds{1}}
\def\wt{\widetilde}
\newcommand{\dfn}{\triangleq}
\def\setdiff{\hspace{-0.05cm}\setminus\hspace{-0.03cm}}
\def\dom{\prec_d}
\begin{document}

\title{Optimal Feedback Communication via Posterior Matching}

\author{{\bf Ofer Shayevitz}${}^\ast$ \and {\bf Meir Feder}${}^\dagger$}

\date{}

\maketitle

\pagestyle{fancy}

\thispagestyle{empty}

\begin{abstract}
In this paper we introduce a fundamental principle for optimal communication over general memoryless channels in the presence of noiseless feedback, termed \textit{posterior matching}. Using this principle, we devise a (simple, sequential) generic feedback transmission scheme suitable for a large class of memoryless channels and input distributions, achieving any rate below the corresponding mutual information. This provides a unified framework for optimal feedback communication in which the Horstein scheme (BSC) and the Schalkwijk-Kailath scheme (AWGN channel) are special cases. Thus, as a corollary, we prove that the Horstein scheme indeed attains the BSC capacity, settling a longstanding conjecture. We further provide closed form expressions for the error probability of the scheme over a range of rates, and derive the achievable rates in a mismatch setting where the scheme is designed according to the wrong channel model. Several illustrative examples of the posterior matching scheme for specific channels are given, and the corresponding error probability expressions are evaluated. The proof techniques employed utilize novel relations between information rates and contraction properties of iterated function systems.
\end{abstract}

\vspace{0.3cm}

{\renewcommand{\thefootnote}{} \footnotetext{The work of O. Shayevitz was supported by the Adams Fellowship Program of the Israel Academy of Sciences and Humanities. This research was supported in part by the Israel Science Foundation, grant no. 223/05. \\ ${}^\ast$ O. Shayevitz is with the Information Theory \& Applications Center, University of California, San Diego, USA \{email: ofersha@ucsd.edu\}. This work has been performed while he was with the Department of EE-Systems, Tel Aviv University, Tel Aviv, Israel. \\ ${}^\dagger$ M. Feder is with the Department of EE-Systems, Tel Aviv University, Tel Aviv, Israel \{email: meir@eng.tau.ac.il\}.}}

\section{Introduction}
Feedback cannot increase the capacity of memoryless channels~\cite{shannon_zero_error,capacity_feedback_memoryless}, but can
significantly improve error probability performance, and perhaps more importantly  - can drastically simplify capacity achieving transmission schemes. Whereas complex coding techniques strive to approach capacity in the absence of feedback, that same goal can sometimes be attained using noiseless feedback via simple deterministic schemes that work ``\textit{on the fly}''. Probably the first elegant feedback scheme in that spirit is due to Horstein~\cite{horstein} for the Binary Symmetric Channel (BSC). In that work, information is represented by a uniformly distributed \textit{message point} over the unit interval, its binary expansion representing an infinite random binary sequence. The message point is then conveyed to the receiver in an increasing resolution by always indicating whether it lies to the left or to the right of its posterior distribution's median, which is also available to the transmitter via feedback. Loosely speaking, using this strategy the transmitter always answers the most informative binary question that can be posed by the receiver based on the information the latter has. Bits from the binary representation of the message point are decoded by the receiver whenever their respective intervals accumulate a sufficient posterior probability mass. The Horstein scheme was conjectured to achieve the capacity of the BSC, but this claim was verified only for a discrete set of crossover probability values for which the medians exhibit regular behavior~\cite{Schalkwijk71,Schalkwijk-Post}, and otherwise not rigorously established hitherto\footnote{The rate and error exponent analysis in the original papers~\cite{horstein,Horstein-report}, while intuitively appealing, are widely considered to be non-rigorous.}.

A few years later, two landmark papers by Schalkwijk-Kailath~\cite{Schalkwijk-Kailath} and Schalkwijk~\cite{Schalkwijk2} presented an elegant capacity achieving feedback scheme for the Additive White Gaussian Noise (AWGN) channel with an average power constraint. The Schalkwijk-Kailath scheme is ``parameter estimation'' in spirit, and its simplest realization is described as follows: Fixing a rate $R$ and a block length $n$, the unit interval is partitioned into $2^{nR}$ equal length subintervals, and a (deterministic) message point is selected as one of the subintervals' midpoints. The transmitter first sends the message point itself, which is corrupted by the additive Gaussian noise in the channel and so received with some bias. The goal of the transmitter is now to refine the receiver's knowledge of that bias, thereby zooming-in on the message point. This is achieved by computing the Minimum Mean Square Error (MMSE) estimate of the bias given the output sequence observed thus far, and sending the error term amplified to match the permissible input power constraint, on each channel use. At the end of transmission the receiver uses a nearest neighbor decoding rule to recover the message point. This linear scheme is strikingly simple and yet achieves capacity; in fact at any rate below capacity it has an error probability decaying double-exponentially with the block length, as opposed to the single exponential attained by non-feedback schemes. A clean analysis of the Schalkwijk-Kailath scheme can be found in~\cite{Kim06onreliability} and a discussion of a sequential delay-universal variant is given in~\cite{Sahai2006}.

Since the emergence of the Horstein and the Schalkwijk-Kailath schemes, it was evident that these are similar in some fundamental sense. Both schemes use the message point representation, and both attempt to ``steer'' the receiver in the right direction by transmitting what is still missing in order to ``get it right''. However, neither the precise correspondence nor a generalization to other cases has ever been established. In this paper, we show that in fact there exists an underlying principal, which we term \textit{posterior matching}, that connects these two schemes. Applying this principle, we present a simple recursive feedback transmission scheme that can be tailored to any memoryless channel and any desired input distribution (e.g., capacity achieving under some input constraints), and is optimal in the sense of achieving the corresponding mutual information, under general conditions. Loosely speaking, the new scheme operates as follows: At each time instance, the transmitter computes the posterior distribution of the message point given the receiver's observations. According to the posterior, it ``shapes'' the message point into a random variable that is independent of the receiver's observations and has the desired input distribution, and transmits it over the channel. Intuitively, this random variable captures the information still missing at the receiver, described in a way that best matches the channel input. In the special cases of a BSC with uniform input distribution and an AWGN channel with a Gaussian input distribution, the posterior matching scheme is reduced to those of Horstein and Schalkwijk-Kailath respectively, thereby also proving the Horstein conjecture as a corollary.

The paper is organized as follows. In Section~\ref{sec:prelim}, notations and necessary mathematical background are provided. In
Section~\ref{sec:scheme}, the posterior matching principle is introduced and the corresponding transmission scheme is derived. Technical regularity conditions for channels and input distributions are discussed in Section~\ref{sec:fam}. The main result of this paper, the achievability of the mutual information via posterior matching, is presented in Section~\ref{sec:main_result_memoryless}. Error probability analysis is addressed in Section~\ref{sec:error_memoryless}, where closed-form expressions are provided for a range of rates (sometimes strictly) below the mutual information. Some extensions including variants of the baseline scheme, and the penalty in rate incurred by a channel model mismatch, are addressed in Section~\ref{sec:extensions}. A discussion and some future research items appear in Section~\ref{sec:discussion}. Several illustrative examples are discussed and revisited throughout the paper, clarifying the ideas developed.

\section{Preliminaries}\label{sec:prelim}
In this section we provide some necessary mathematical background. Notations and definitions are given in Subsection~\ref{subsec:notations}. Information theoretic notions pertaining to the setting of communication with feedback are described in Subsection~\ref{subsec:information_theoretic_notions}. An introduction to the main mathematical tools used in the paper, continuous state-space Markov chains and iterated function systems, is given in Subsections~\ref{subsec:markov_chains} and~\ref{subsec:IFS}.

\subsection{Notations and Definitions}\label{subsec:notations}

Random variables (r.v.'s) are denoted by upper-case letters, their realizations by corresponding lower-case letters. A real-valued r.v. $X$ is associated with a probability distribution $P_X(\cdot)$ defined on the usual Borel $\sigma$-algebra over $\RealF$, and we write $X\sim P_X$. The \textit{cumulative distribution function} (c.d.f.) of $X$ is given by $F_X(x)= P_X\big{(}(-\infty,x\,]\big{)}$, and the inverse c.d.f. is
defined by $F_X^{-1}(t)\dfn\inf\{x:F_X(x)>t\}$. Unless otherwise stated, we assume that any real-valued r.v. $X$ is either continuous, discrete, or a mixture of the two\footnote{This restricts $F_X$ to be the sum of an absolutely continuous function (continuous part) and a jump function (discrete part). This is to say we avoid the case of a \textit{singular part}, where $P_X$ assigns positive probability to some uncountable set of zero Lebesgue measure.}. Accordingly, $X$ admits a (wide sense) \textit{probability density function} (p.d.f.) $f_X(x)$, which can be written as a mixture of a Lebesgue integrable function (continuous part) and Dirac delta functions (discrete part). If there is only a continuous part then $X$ and its distribution/c.d.f./p.d.f. are called \textit{proper}. The \textit{support} of $X$ is the intersection of all closed sets $A$ for which $P_X(\RealF\setdiff A) = 0$, and is denoted $\supp(X)$.\footnote{This coincides with the usual definitions of support for continuous and discrete r.v.'s.} For brevity, we write $P_X(x)$ for $P_X(\{x\})$, and $x\in\supp(X)$ is called a \textit{mass point} if $P_X(x)>0$. The discrete part of the support is the set of all mass points, and the continuous part the complement set. The \textit{interior} of the support is denoted by $\isupp(X)$ for short. A vector of real-valued r.v.'s $X^n=(X_1,X_2,\ldots,X_n)$ is similarly associated with $P_{X^n}$, $F_{X^n}$, $f_{X^n}$ and with $\supp(X^n)$, where the p.d.f. is now called proper if all the scalar conditional distributions are a.s. (almost surely) proper. We write $\Expt(\cdot)$ for expectation and $\Prob(\cdot)$ for the probability of a measurable event within the parentheses. The uniform probability distribution over $\ui$ is denoted throughout by $\m{U}$. A measurable bijective function $\mu:\ui\mapsto\ui$ is called a \textit{uniformity preserving function} (u.p.f.) if $\Theta\sim\m{U}$ implies that $\mu(\Theta)\sim\m{U}$.

A scalar distribution $P_X$ is said to be (strictly) \textit{dominated} by another distribution $P_Y$ if $F_X(x)<F_Y(x)$ whenever $F_Y(x)\in\ui$, and the relation is denoted by $P_X\dom P_Y$. A distribution $P_X$
is called \textit{absolutely continuous} w.r.t. another distribution $P_Y$, if $ P_Y(A)=0$ implies $P_X(A)=0$ for every $A\in\Borel$, where $\Borel$ is the corresponding $\sigma$-algebra. This relation is denoted $P_X\ll P_Y$. If both distributions are absolutely continuous w.r.t. each other, then they are said to be \textit{equivalent}. The \textit{total variation distance} between $P_X$ and $P_Y$ is defined as
\begin{equation*}
d_{TV}(P_X,P_Y) = \sup_{A\in\Borel}\left| P_X(A)-
P_Y(A)\right|
\end{equation*}
A statement is said to be satisfied for \textit{$P_X$-a.a. (almost all)} $x$, if the set of $x$'s for which it is satisfied has probability one under $P_X$.

In what follows we use ${\rm conv}(\cdot)$ for the \textit{convex hull} operator, $|\Delta|$ for the length of an interval $\Delta\subseteq\RealF$, $\log$ for $\log_2$, $\range(f)$ for the range of a function $f$, and $\circ$ for function composition. The indicator function over a set $A$ is denoted by $\ind_A(\cdot)$. A set $A\subseteq \RealF^m$ is said to be \textit{convex in the direction $u\in\RealF^m$}, if the intersection of $A$ with any line parallel to $u$ is a connected set (possibly empty). Note that $A$ is convex if and only if it is convex in any direction.

The following simple lemma states that (up to discreteness issues) any real-valued r.v. can be shaped into a uniform r.v. or vice versa, by applying the corresponding c.d.f or its inverse, respectively. This fact is found very useful in the sequel
\begin{lemma}\label{lem:matching_trans}
Let $X\sim P_X\,,\,\Theta\sim\m{U}$ be statistically independent.
Then
\begin{enumerate}[(i)]
\item $F_X^{-1}(\Theta)\sim P_X$.\label{item:match1}

\item $F_X(X)-\Theta\cdot P_X(X)\sim\,\m{U}$. Specifically, if $X$
is proper then $F_X(X)\sim\,\m{U}$.\label{item:match2}
\end{enumerate}
\end{lemma}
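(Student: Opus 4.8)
The plan is to handle the two parts separately, both by direct computation of cumulative distribution functions, with all of the difficulty concentrated in the behavior of the generalized inverse $F_X^{-1}$ at jump points and level-set boundaries. For part \refeq{item:match1} I would first isolate the quantile identity that, for every $x\in\RealF$ and every $t\in\ui$ with $t\neq F_X(x)$,
\[
F_X^{-1}(t)\le x \quad\Longleftrightarrow\quad t<F_X(x).
\]
The implication $t<F_X(x)\Rightarrow F_X^{-1}(t)\le x$ is immediate from the definition of the infimum. For the converse, monotonicity of $F_X$ disposes of the case $F_X^{-1}(t)<x$ (pick $x'<x$ with $F_X(x')>t$, so $F_X(x)\ge F_X(x')>t$), and right-continuity of $F_X$ handles $F_X^{-1}(t)=x$, which is the only case that might force the excluded boundary value $t=F_X(x)$. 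Since $F_X^{-1}$ is nondecreasing, hence Borel measurable, and since the sets $\{t:F_X^{-1}(t)\le x\}$ and $[0,F_X(x))$ then differ by at most the single point $t=F_X(x)$, which carries no $\m{U}$-mass, we conclude $\Prob\big(F_X^{-1}(\Theta)\le x\big)=\Prob(\Theta<F_X(x))=F_X(x)$ for every $x$, i.e. $F_X^{-1}(\Theta)$ has c.d.f. $F_X$.

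For part \refeq{item:match2}, put $Y=F_X(X)-\Theta\cdot P_X(X)$ and note first that $Y=F_X(X^-)+(1-\Theta)P_X(X)$, where $F_X(X^-)\dfn F_X(X)-P_X(X)$ is the left limit, so that $Y\in[F_X(X^-),F_X(X)]\subseteq[0,1]$; hence it suffices to prove $\Prob(Y\le y)=y$ for each $y\in\ui$. Fixing such a $y$, set $x_y\dfn F_X^{-1}(y)$ (finite because $0<y<1$) and condition on $X$. Reusing the monotonicity facts from part \refeq{item:match1}, one checks that $F_X(X)\le y$ (so $Y\le y$ surely) whenever $X<x_y$; that $F_X(X^-)>y$, hence $Y\ge F_X(X^-)>y$, whenever $X>x_y$; and that the configuration $F_X(X^-)<y<F_X(X)$ can arise only at $X=x_y$, and only if $x_y$ is a mass point whose jump straddles $y$ — the crucial point being that \emph{at most one} mass point can straddle a given level. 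Therefore
\[
\Prob(Y\le y)=\Prob(X<x_y)+P_X(x_y)\cdot\Prob\big(Y\le y\mid X=x_y\big)=F_X(x_y^-)+P_X(x_y)\cdot q_y .
\]
A short case analysis of $q_y$ finishes it: $q_y=1$ when $F_X(x_y)\le y$, since then $Y=y-\Theta P_X(x_y)\le y$; $q_y=0$ when $F_X(x_y^-)\ge y$ while $F_X(x_y)>y$, since then $\{Y\le y\}$ would require $\Theta\ge 1$; and $q_y=(y-F_X(x_y^-))/P_X(x_y)$ in the straddling case, by uniformity of $1-\Theta$. In all three cases $P_X(x_y)\cdot q_y=y-F_X(x_y^-)$, so $\Prob(Y\le y)=F_X(x_y^-)+\big(y-F_X(x_y^-)\big)=y$, and $Y\sim\m{U}$. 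The ``specifically'' clause is then immediate: if $X$ is proper then $P_X(X)=0$ almost surely, so $Y=F_X(X)$ and $F_X(X)\sim\m{U}$.

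The infimum/monotonicity manipulations and the final three-way split are routine; the one step that demands care — and the main obstacle — is pinning down the quantile identities at the boundary, namely correctly accounting for a mass point sitting exactly at $x_y$ and for the endpoint $t=F_X(x)$ in part \refeq{item:match1}. Here it is precisely the right-continuity of $F_X$ (not left-continuity) together with the strict inequalities $0<\Theta<1$ that make the bookkeeping close up to $y$ rather than being off by a jump; everything else is a straightforward conditioning argument.
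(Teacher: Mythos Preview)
Your proof is correct. For part~(\ref{item:match1}) you and the paper take essentially the same route: both compute $\Prob(F_X^{-1}(\Theta)\le x)$ directly from the definition of the generalized inverse, using monotonicity and right-continuity of $F_X$ to handle the boundary; the paper collapses your two-sided quantile identity into the single a.s.\ event equality $\{F_X^{-1}(\Theta)\le x\}=\{\Theta\le F_X(x)\}$, but the content is identical.

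For part~(\ref{item:match2}) the decompositions genuinely differ. The paper fixes $\phi$ and splits on whether $\phi$ lies in the range of $F_X$ restricted to $\supp(X)$: if so, it sandwiches $F_\Phi(\phi)$ between $\Prob(F_X(X)\le\phi)$ and $\Prob(F_X(X)-P_X(X)\le\phi)$, both of which equal $\phi$; if not, $\phi$ sits inside a single jump span and a direct one-line computation gives $F_\Phi(\phi)=\phi$. You instead condition on $X$ relative to the quantile $x_y=F_X^{-1}(y)$ and split into $X<x_y$, $X>x_y$, $X=x_y$, with a three-way subcase at $x_y$. Your approach is a bit more systematic and makes the role of the uniform randomization over the jump explicit in every case; the paper's sandwich argument is shorter when $\phi$ is attained but leaves a little more to the reader. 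One small remark: in your subcases at $X=x_y$, the inequalities $F_X(x_y)\ge y$ and $F_X(x_y^-)\le y$ always hold (by right-continuity and the infimum definition respectively), so your ``$F_X(x_y)\le y$'' and ``$F_X(x_y^-)\ge y$'' cases are really equalities---this does not affect the argument, since your computation of $P_X(x_y)\cdot q_y=y-F_X(x_y^-)$ goes through verbatim under equality.
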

\begin{proof}
See Appendix \ref{app:lemmas}.
\end{proof}

A proper real-valued r.v. $X$ is said to have \textit{a regular tail} if there exists some $\gamma\in(0,\frac{1}{2}]$ and positive constants $c_0,c_1,\alpha_0,\alpha_1$, such that
\begin{equation*}
c_0f_X^{\alpha_0}(x) \leq \min\left(F_X(x),1-F_X(x)\right) \leq c_1f_X^{\alpha_1}(x)
\end{equation*}
for any $x\in\supp(X)$ satisfying
$\min\left(F_X(x),1-F_X(x)\right)\leq \gamma$.

\begin{lemma}\label{lem:regular_dist}
Let X be proper with $\supp(X)=\RealF$ and a bounded unimodal p.d.f. $f_X$. Each of the following conditions implies that $X$ has a regular tail:
\begin{enumerate}[(i)]
\item $f_X(x)=\bigo(|x|^{-a})$ and $f_X(x)=\Omega(|x|^{-b})$ as
$|x|\rightarrow\infty$, for some $b\geq
a>1$.\label{cond:poly_tail}

\item $f_X(x)=\bigo(e^{-b|x|^a})$ and $f_X(x)=\Omega(e^{-b|x|^a})$
as $|x|\rightarrow\infty$, for some $a\geq
1\,,b>0$.\label{cond:exp_tail}

\end{enumerate}
\end{lemma}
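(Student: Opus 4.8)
The plan is to reduce to the right tail (the left tail is symmetric) and, in each case, to use the $\bigo/\Omega$ hypotheses twice: once to sandwich the tail integral $1-F_X(x)=\int_x^\infty f_X(t)\,dt$ between two explicit functions of $x$, and once to sandwich $f_X(x)$ itself, after which the exponents $\alpha_0,\alpha_1$ emerge by matching powers of $x$ (case (i)) or of $x^a$ (case (ii)). Since $X$ is proper with $\supp(X)=\RealF$, the c.d.f.\ $F_X$ is continuous and strictly increasing onto $\ui$, and unimodality forces $f_X>0$ throughout $\RealF$; hence for any $\gamma\in(0,\tfrac12]$ the set $\{x:\min(F_X(x),1-F_X(x))\le\gamma\}$ is precisely the union of the tails $(-\infty,F_X^{-1}(\gamma)]$ and $[F_X^{-1}(1-\gamma),\infty)$, which can be pushed arbitrarily far out by shrinking $\gamma$, and on the right tail $\min(F_X(x),1-F_X(x))=1-F_X(x)$. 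It thus suffices to exhibit, for all $x$ beyond some threshold, constants with $c_0 f_X^{\alpha_0}(x)\le 1-F_X(x)\le c_1 f_X^{\alpha_1}(x)$; one then takes the worse of the two tails' constants and $\gamma$ small enough to absorb every threshold that appears.

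\textbf{Case (i).} Fix $M$ and $C_0,C_1>0$ with $C_0|x|^{-b}\le f_X(x)\le C_1|x|^{-a}$ for $|x|\ge M$. Integrating over $[x,\infty)$ with $x\ge M$ gives $\tfrac{C_0}{b-1}x^{-(b-1)}\le 1-F_X(x)\le\tfrac{C_1}{a-1}x^{-(a-1)}$, where $a-1,b-1>0$ since $b\ge a>1$. For the upper bound, $f_X(x)\ge C_0x^{-b}$ gives $f_X^{\alpha_1}(x)\ge C_0^{\alpha_1}x^{-b\alpha_1}$, so the choice $\alpha_1=\tfrac{a-1}{b}\in(0,1]$ cancels the $x$-powers and $1-F_X(x)\le c_1 f_X^{\alpha_1}(x)$ holds for all $x\ge M$ once $c_1$ is taken large. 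For the lower bound, $f_X(x)\le C_1x^{-a}$ gives $f_X^{\alpha_0}(x)\le C_1^{\alpha_0}x^{-a\alpha_0}$, so $\alpha_0=\tfrac{b-1}{a}>0$ again cancels the $x$-powers and $c_0 f_X^{\alpha_0}(x)\le 1-F_X(x)$ holds for all $x\ge M$ once $c_0$ is taken small.

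\textbf{Case (ii).} Fix $M\ge1$ and $C_0,C_1>0$ with $C_0 e^{-b|x|^a}\le f_X(x)\le C_1 e^{-b|x|^a}$ for $|x|\ge M$. For $x\ge M$ and $t\ge x$ we have $t^a\ge x^{a-1}t$ (as $a\ge1$, $x\ge1$), hence $1-F_X(x)\le C_1\int_x^\infty e^{-bx^{a-1}t}\,dt=\tfrac{C_1}{b\,x^{a-1}}e^{-bx^a}\le\tfrac{C_1}{b}e^{-bx^a}\le\tfrac{C_1}{bC_0}f_X(x)$, which is the upper bound with $\alpha_1=1$. For the lower bound, $1-F_X(x)\ge C_0\int_x^{x+1}e^{-bt^a}\,dt\ge C_0 e^{-b(x+1)^a}$, and using $(x+1)^a\le x^a+a2^{a-1}x^{a-1}$ one gets $2bx^a-b(x+1)^a\ge bx^{a-1}(x-a2^{a-1})\to\infty$; since $f_X^2(x)\le C_1^2 e^{-2bx^a}$, this forces $f_X^2(x)\le C_0 e^{-b(x+1)^a}\le 1-F_X(x)$ for all sufficiently large $x$, i.e.\ the lower bound with $\alpha_0=2$, $c_0=1$.

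All the estimates are elementary calculus; the only step requiring a small idea is the clean upper bound on $\int_x^\infty e^{-bt^a}\,dt$ in case (ii), obtained from the pointwise inequality $t^a\ge x^{a-1}t$ so as to avoid invoking incomplete-Gamma asymptotics, together with the bookkeeping needed to merge the two tails and the various thresholds ($M$, the range where the exponential comparison $2bx^a-b(x+1)^a\to\infty$ becomes effective, and $F_X^{-1}(1-\gamma)\ge\max(M,1)$) into a single admissible $\gamma\in(0,\tfrac12]$ and a single quadruple $(c_0,c_1,\alpha_0,\alpha_1)$.
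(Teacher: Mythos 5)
Your proof is correct and follows essentially the same route as the paper's: sandwich $f_X$ between the polynomial/exponential envelopes, integrate to bound $1-F_X$ on each tail, and match exponents (and thresholds, absorbed into $\gamma$) to read off $\alpha_0,\alpha_1$ — in case (i) you even land on the same exponents $\alpha_1=\tfrac{a-1}{b}$, $\alpha_0=\tfrac{b-1}{a}$, and your $t^a\ge x^{a-1}t$ trick in case (ii) is the same computation as the paper's $(y/x)^{a-1}\ge 1$ substitution. The only cosmetic difference is the lower bound in case (ii), where you use $\int_x^{x+1}e^{-bt^a}\,dt\ge e^{-b(x+1)^a}$ and settle for $\alpha_0=2$, while the paper integrates an exact antiderivative to get $e^{-bx^a}/x^{a-1}$ and an $\alpha_0$ slightly above $1$; both are perfectly adequate for the regular-tail definition.
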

\begin{proof}
See Appendix \ref{app:misc}.
\end{proof}
\begin{example}
If $X$ is either Gaussian, Laplace or Cauchy distributed then $X$
has a regular tail.
\end{example}

\subsection{Information Theoretic Notions}\label{subsec:information_theoretic_notions}
The relative entropy between two distributions $P_X$ and $P_Y$ is denoted by $D(P_X\|P_Y)$. The mutual information between two r.v.'s $X$ and $Y$ is denoted $I(X;Y)$, and the differential entropy of a continuous r.v. $X$ is denoted $h(X)$. A \textit{memoryless channel} is defined via (and
usually identified with) a conditional probability distribution $P_{Y|X}$ on $\RealF$. The \textit{input alphabet} $\m{X}$ of the
channel is the set of all $x\in\RealF$ for which the distribution $P_{Y|X}(\cdot|x)$ is defined, the output alphabet of the channel
is the set $\m{Y}\dfn\bigcup_{x\in\m{X}}\supp(Y|X=x)\subseteq\RealF$. A sequence of real-valued r.v. pairs $\{(X_n,Y_n)\}_{n=1}^\infty$ taking values in $\m{X}\times\m{Y}$ is said to be an \textit{input/output sequence} for the memoryless channel $P_{Y|X}$ if
\begin{equation}\label{def:memoryless_channel}
P_{Y_n|X^nY^{n-1}}(\cdot|x^n,y^{n-1}) =
P_{Y|X}(\cdot|x_n)\,,\qquad n\in\NaturalF
\end{equation}
A probability distribution $P_X$ is said to be a (memoryless) \textit{input distribution} for the channel $P_{Y|X}$ if $\supp(X)\subseteq\m{X}$. The pair $(P_X,P_{Y|X})$ induces an \textit{output distribution} $P_Y$ over the output alphabet, a joint input/output distribution $P_{XY}$, and an \textit{inverse channel} $P_{X|Y}$. Such a pair $(P_X,P_{Y|X})$ is called an \textit{input/channel pair} if $I(X;Y)<\infty$.

A channel for which both the input and output alphabets
$\m{X},\m{Y}$ are finite sets is called a \textit{discrete
memoryless channel (DMC)}. Note that the numerical values of the
inputs/outputs are practically irrelevant for a DMC, and hence in
this case one can assume without loss of generality that
$\m{X}=\{0,1,\ldots,|\m{X}|-1\}$ and
$\m{Y}=\{0,1,\ldots,|\m{Y}|-1\}$. Moreover, two input/DMC pairs
$(P_X, P_{Y|X})$ and $(P_{X^*},P_{Y^*|X^*})$ are said to be
\textit{equivalent} if one can be obtained from the other by input
and output permutations, i.e., there exist permutations
$\sigma_1:\m{X}\mapsto\m{X}$ and $\sigma_2:\m{Y}\mapsto\m{Y}$ such
that
\begin{equation*}
P_X(i) = P_{X^*}(\sigma_1(i))\,,\quad  P_{Y|X}(j|i) =
P_{Y^*|X^*}(\sigma_2(j)|\sigma_1(i))
\end{equation*}
for all $i\in\m{X},j\in\m{Y}$. In particular, equivalent pairs have the same mutual information.

Let $\Theta_0$ be a random \textit{message point} uniformly
distributed over the unit interval, with its binary expansion
representing an infinite independent-identically-distributed
(i.i.d.) ${\rm Bernoulli}\left(\frac{1}{2}\right)$ sequence to be reliably conveyed
by a transmitter to a receiver over the channel $P_{Y|X}$. A
\textit{transmission scheme} is a sequence of a-priori agreed upon
measurable \textit{transmission functions}
$g_n:\ui\times\m{Y}^{n-1}\mapsto \m{X}$, so that the input to
the channel generated by the transmitter is given by
\begin{equation*}
X_n=g_n(\Theta_0,Y^{n-1})\,,\qquad n\in\NaturalF
\end{equation*}
A transmission scheme induces a distribution
$P_{X_n|X^{n-1}Y^{n-1}}$ which together with
(\ref{def:memoryless_channel}) uniquely defines the joint
distribution of the input/output sequence. In the special case
where $g_n$ does not depend on $y^{n-1}$, the transmission scheme
is said to work \textit{without feedback} and is otherwise said to
work \textit{with feedback}.

A \textit{decoding rule} is a sequence of measurable mappings
$\left\{\Delta_n:\m{Y}^{n}\mapsto\m{E}\right\}_{n=1}^\infty$,
where $\m{E}$ is the set of all open intervals in
$\ui$. We refer to $\Delta_n(y^n)$ as the \textit{decoded interval}. The
\textit{error probability} at time $n$ associated with a
transmission scheme and a decoding rule, is defined as
\begin{equation*}
p_e(n) \dfn  \Prob (\Theta_0\not\in \Delta_n(Y^n))
\end{equation*}
and the corresponding \textit{rate} at time $n$ is defined to be
\begin{equation*}
R_n \dfn  -\frac{1}{n}\log\left|\Delta_n(Y^n)\right|
\end{equation*}
We say that a transmission scheme together with a decoding rule
\textit{achieve} a rate $R$ over a channel $P_{Y|X}$ if
\begin{equation}\label{def:achievability}
\lim_{n\rightarrow \infty}\Prob(R_n < R) = 0\,,\qquad
\lim_{n\rightarrow\infty}p_e(n) = 0
\end{equation}
The rate is achieved \textit{within an input constraint}
$(\eta,u)$, if in addition
\begin{equation}\label{def:input_constraint}
\lim_{n\rightarrow\infty}n^{-1}\sum_{k=1}^n\eta(X_k) \leq u\quad\text{a.s. \;(element-wise)}
\end{equation}
where $\eta:\m{X}\mapsto\RealF^m$ is a measurable function and
$u\in\RealF^m$. A scheme and a decoding rule are also said to
\textit{pointwise achieve} a rate $R$ if for all $\theta_0\in\ui$
\begin{align*}
\lim_{n\rightarrow \infty}\Prob(R_n < R|\Theta_0=\theta_0) = 0\,,\qquad \lim_{n\rightarrow\infty}\Prob(\Theta_0\not\in \Delta_n(Y^n)|\Theta_0=\theta_0) = 0
\end{align*}
and to do the above within an input constraint $(\eta,u)$ if
(\ref{def:input_constraint}) is also satisfied. Clearly, pointwise
achievability implies achievability but not vice versa.
Accordingly, a rate $R$ is called (pointwise) achievable over a
channel $P_{Y|X}$ within an input constraint $(\eta,u)$ if there
exist a transmission scheme and a decoding rule (pointwise)
achieving it. The \textit{capacity} (with feedback) $C(P_{Y|X},\eta,u)$ of the channel
under the input constraint is the supremum of all the
corresponding achievable rates\footnote{A pointwise capacity can be defined as well, and
may be smaller than (\ref{eq:capacity_cost}) depending on the
channel. However, we do not pursue this direction.}. It is well known that the capacity
is given by~\cite{gallager_book}
\begin{equation}\label{eq:capacity_cost}
C(P_{Y|X},\eta,u) = \sup_{\stackrel{\scriptstyle P_X:\;\Expt\eta(X)\leq u}{\hspace{22pt}\supp(X)\subseteq\m{X}}} I(X;Y)
\end{equation}
Furthermore, the capacity without feedback (i.e., considering only schemes that work without feedback) is given by the above as
well. The \textit{unconstrained capacity} (i.e., when no input constraint is imposed) is denoted $C(P_{Y|X})$ for short.

An \textit{optimal fixed rate} decoding rule with rate $R$ is one that decodes an interval of length $2^{-nR}$ whose a-posteriori
probability is maximal, i.e.,
\begin{equation*}
\Delta_n(y^n) =
\argmax_{\{J\in\m{E}\,:\,|J|=2^{-nR}\}}P_{\Theta_0|Y^n}(J|y^n)
\end{equation*}
where ties are broken arbitrarily. This decoding rule minimizes the error probability $p_e(n)$ for a fixed $R_n=R$. An \textit{optimal variable rate} decoding rule with a \textit{target error probability} $p_e(n)=\delta_n$ is one that decodes a minimal-length interval whose accumulated a-posteriori probability exceeds $1-\delta_n$, i.e.,
\begin{equation*}
\Delta_n(y^n) =
\argmin_{\{J\in\m{E}\,:\,P_{\Theta_0|Y^n}(J|y^n)\geq
1-\delta_n\}}\hspace{-1cm}|J|
\end{equation*}
where ties are broken arbitrarily, thereby maximizing the instantaneous rate for a given error probability. Both decoding rules make use of the posterior distribution of the message point $P_{\Theta_0|Y^n}(\cdot|y^n)$ which can be calculated online at both terminals.

It should be noted that the main reason we adopt the above nonstandard definitions for channel coding with feedback, is that they result in a much cleaner analysis. It may not be immediately clear how this corresponds to the standard coding framework~\cite{cover}, and in particular, how achievability as defined above translates into the actual reliable decoding of messages at a desired rate. The following Lemma justifies this alternative formalization.
\begin{lemma}\label{lem:setting_eqv}
Achievability as defined in (\ref{def:achievability}) and (\ref{def:input_constraint}) above, implies achievability in the standard framework.
\end{lemma}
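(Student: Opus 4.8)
The alternative formalization differs from the standard one only in that a discrete message is replaced by a continuum ``message point'' and the decoder outputs an interval rather than a message index; the translation is by \emph{binning} the unit interval. The plan is: given any transmission scheme and decoding rule achieving rate $R$ in the sense of (\ref{def:achievability})--(\ref{def:input_constraint}), fix $R'<R$, pick an intermediate $R''\in(R',R)$, and for each block length $n$ build a standard length-$n$ feedback code of rate $\tfrac1n\log\lfloor 2^{nR'}\rfloor\to R'$ with vanishing average error. Concretely, set $M=\lfloor 2^{nR'}\rfloor$, partition $\ui$ into consecutive subintervals $I_1,\dots,I_M$ of equal length $1/M\ge 2^{-nR'}$, and map message $m\in\{1,\dots,M\}$, using local randomness at the encoder, to a point $\Theta_0$ drawn uniformly on $I_m$; the channel inputs are then $X_k=g_k(\Theta_0,Y^{k-1})$ exactly as before. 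The decoder runs the given decoding rule to obtain $\Delta_n(Y^n)$ and declares $\wh m$ to be the index of the subinterval containing the midpoint of $\Delta_n(Y^n)$ (and, say, $\wh m=1$ if that midpoint lies outside $\ui$).

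The crucial point is that when $m$ is uniform over $\{1,\dots,M\}$ the induced $\Theta_0$ is uniform over $\ui$, so the joint law of $(\Theta_0,X^n,Y^n,\Delta_n(Y^n))$ is precisely the one in the original framework. Hence $\Prob(R_n<R'')\to 0$, $\Prob(\Theta_0\notin\Delta_n(Y^n))=p_e(n)\to 0$, and the cost constraint (\ref{def:input_constraint}) continues to hold a.s. It remains to control $\{\wh m\ne m\}$. Off the event $\{R_n<R''\}\cup\{\Theta_0\notin\Delta_n(Y^n)\}$ we have $\Theta_0\in\Delta_n(Y^n)$ with $|\Delta_n(Y^n)|\le 2^{-nR''}$, so the midpoint of $\Delta_n(Y^n)$ lies within $\tfrac12 2^{-nR''}$ of $\Theta_0$, which forces $\wh m=m$ unless $\Theta_0$ is within $\tfrac12 2^{-nR''}$ of an endpoint of $I_m$; conditioned on $m$ (so $\Theta_0$ uniform on $I_m$) this last event has probability at most $2^{-nR''}/(1/M)\le 2^{-n(R''-R')}$. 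Therefore
\begin{equation*}
\Prob(\wh m\ne m)\ \le\ \Prob(R_n<R'')\ +\ p_e(n)\ +\ 2^{-n(R''-R')}\ \longrightarrow\ 0 .
\end{equation*}

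The one place requiring a little care — and the only genuinely nontrivial step — is that the encoder just described uses private randomness (the choice of $\Theta_0$ within $I_m$), which must be removed to land in the strict standard framework. This is handled by a routine expurgation: the error probability is small on average over the private randomness, and the exceptional set for (\ref{def:input_constraint}) has probability zero on average over it, so by Markov's inequality there is, for each $n$, a deterministic assignment $m\mapsto\theta_m^\ast\in I_m$ for which the error probability is at most twice the averaged value while the cost constraint still holds a.s.; using $\{\theta_m^\ast\}$ as a codebook gives a deterministic $(\lfloor 2^{nR'}\rfloor,n)$ feedback code with vanishing error satisfying the constraint (and a further discard of the worst half of the codewords upgrades this to a maximal-error guarantee at negligible rate cost, if desired). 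Letting $R'\uparrow R$ then yields achievability of every rate below $R$ in the standard sense. The remaining points are trivial bookkeeping: $\Delta_n(Y^n)$ may be taken nonempty for all large $n$ since otherwise $p_e(n)=1$, and the floor in $M=\lfloor 2^{nR'}\rfloor$ is immaterial.
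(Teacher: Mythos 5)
Your proof is correct, and it reaches the same destination as the paper's by a slightly different route. The paper assumes (w.l.o.g.) a fixed-rate rule with $|\Delta_n(y^n)|=2^{-nR_0}$, applies Markov's inequality to the average error probability to exhibit a small ``bad set'' $A_n$ of message points with conditional error exceeding $\sqrt{p_e(n)}$, and then directly places $\lfloor 2^{nR}\rfloor$ codeword points outside $A_n$ with pairwise spacing at least $2^{-nR_0}$, so that the decoded interval isolates a unique codeword and the \emph{maximal} error is controlled immediately. You instead bin $\ui$ into $M=\lfloor 2^{nR'}\rfloor$ equal cells, randomize the codeword uniformly within its cell so that the induced $\Theta_0$ is exactly uniform --- which lets you import $p_e(n)\to0$ and $\Prob(R_n<R'')\to0$ verbatim rather than re-deriving pointwise statements --- add the boundary-proximity term $2^{-n(R''-R')}$, and then derandomize by expurgation. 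Both arguments hinge on the same two facts (codeword separation exceeding the decoded-interval length, plus Markov's inequality to pass from average-over-$\Theta_0$ guarantees to a usable codebook); yours buys exact distribution preservation at the cost of an extra expurgation step, while the paper's buys a maximal-error code in one shot. The only place where you are no more careful than the paper is the input constraint: (\ref{def:input_constraint}) is an almost-sure statement about the infinite input sequence, and neither your per-blocklength expurgation nor the paper's proof really addresses how it transfers to a sequence of finite-$n$ codebooks; this is a shared piece of loose bookkeeping, not a defect specific to your argument.
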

\begin{proof}
See Appendix \ref{app:lemmas}. Loosely speaking, a rate $R$ is achievable in our framework if the posterior distribution $P_{\Theta_0|Y^n}$ concentrates in an interval of size $\approx 2^{-nR}$ around $\Theta_0$, as $n$ grows large. This intuitively suggests that $nR$ bits from the message point representation could be reliably decoded, or, more accurately, that the unit interval can be partitioned into $\approx 2^{nR}$ intervals such that the one containing $\Theta_0$ can be identified with high probability.
\end{proof}

\subsection{Markov Chains}\label{subsec:markov_chains}
A \textit{Markov chain} $\{\Psi_n\}_{n=1}^\infty$ over a
measurable \textit{state space} $\mf{F}$, is a stochastic process
defined via an initial distribution $P_{\Psi_1}$ on $\mf{F}$, and
a \textit{stochastic kernel} (conditional probability
distribution) $\m{P}$, such that
\begin{equation*}
P_{\Psi_n|\Psi^{n-1}}(\cdot|\psi^{n-1}) =
P_{\Psi_n|\Psi_{n-1}}(\cdot|\psi_{n-1}) \dfn \m{P}(\cdot|\psi_{n-1})
\end{equation*}
We say $s\in \mf{F}$ is the \textit{initial point} of the chain if
$P_{\Psi_1}(s)=1$, and denote the probability distribution induced
over the chain for an initial point $s$ by $\m{P}_s$. The Markov
chain generated by sampling the original chain in steps of $m$ is
called the $m$-\textit{skeleton}, and its kernel is denoted by
$\m{P}^m$. The chain is said to be $P_\Psi$-\textit{irreducible}
for a distribution $ P_\Psi$ over $\mf{F}$, if any set
$A\in\Borel$ with $P_\Psi(A)>0$ is reached in a finite number of
steps with a positive probability for any initial point, where
$\Borel$ is the corresponding $\sigma$-algebra over $\mf{F}$.
$P_\Psi$ is said to be \textit{maximal} for the chain if any other
irreducibility distribution is absolutely continuous w.r.t. $
P_\Psi$. A maximal $ P_\Psi$-irreducible chain is said to be
\textit{recurrent} if for any initial point, the expected number
of visits to any set $A\in\Borel$ with $P_\Psi(A)>0$, is infinite.
The chain is said to be \textit{Harris recurrent}, if any such set
is visited infinitely often for any initial point. Thus, Harris
recurrence implies recurrence but not vice versa. A set
$A\in\Borel$ is called \textit{invariant} if $\m{P}(A|s)=1$ for
any $s\in A$. An \textit{invariant distribution} $P_\Psi$ is one
for which $P_{\Psi_{n-1}}=P_\Psi$ implies $P_{\Psi_n}=P_\Psi$.
Such an invariant distribution is called \textit{ergodic} if for
every invariant set $A$ either $P_\Psi(A)=0$ or $P_\Psi(A)=1$. A
chain which has (at least one) invariant distribution is called
\textit{positive}. For short, we use the acronym \textit{p.h.r.} to indicate
positive Harris recurrence. A chain is said to have a
\textit{$d$-cycle} if its state space can be partitioned into $d$
disjoint sets amongst which the chain moves cyclicly a.s. The
largest $d$-cycle possible is called a \textit{period}, and a
chain is called \textit{aperiodic} if its period equals one.

The following results are taken from~\cite{meyn_tweedie} and~\cite{hernandez_lasserre}. We will assume here that $\mf{F}$ is an open/closed set of $\RealF^m$ associated with the usual Borel $\sigma$-algebra $\Borel$, although the claims hold under more general conditions.

\begin{lemma}\label{lem:irreducible_to_recurrent}
An irreducible chain that has an invariant distribution is
(positive) recurrent, and the invariant distribution is unique
(and hence ergodic).
\end{lemma}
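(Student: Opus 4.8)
The plan is to show that an irreducible chain with an invariant distribution $P_\Psi$ is recurrent, and then that $P_\Psi$ is the unique invariant distribution, from which ergodicity is immediate. The natural route is to invoke the standard dichotomy from Markov chain theory on general state spaces (as in~\cite{meyn_tweedie}): an irreducible chain is either recurrent or transient. So the crux is to rule out transience under the hypothesis that an invariant distribution exists.

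First I would recall that $P_\Psi$-irreducibility guarantees the existence of a maximal irreducibility measure, and that for such chains one has the recurrent/transient dichotomy: a transient chain admits a countable cover of the state space by uniformly transient sets (sets $A$ for which $\sup_s \Expt_s(\text{number of visits to }A)$ is finite). If the chain were transient, I would integrate the expected-occupation-time bound against the invariant distribution $P_\Psi$: since $P_\Psi$ is invariant, the expected number of visits to any set $A$ in $n$ steps, averaged under $P_\Psi$, equals $n\,P_\Psi(A)$. But uniform transience of $A$ caps this quantity by a constant independent of $n$, forcing $P_\Psi(A)=0$ for every set in the countable cover, hence $P_\Psi\equiv 0$ — contradicting that it is a probability distribution. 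This establishes recurrence; positivity is part of the hypothesis (an invariant distribution exists), so the chain is positive recurrent.

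For uniqueness, I would use the fact that a recurrent $P_\Psi$-irreducible chain possesses a unique (up to scaling) invariant measure, and that when this invariant measure is finite it can be normalized to the unique invariant probability distribution; any other invariant probability distribution must be absolutely continuous with respect to the maximal irreducibility measure and proportional to the same invariant measure, hence equal to $P_\Psi$. Ergodicity then follows: if $A$ is an invariant set with $0<P_\Psi(A)<1$, then restricting and renormalizing $P_\Psi$ to $A$ and to its complement would yield two distinct invariant probability distributions, contradicting uniqueness; therefore $P_\Psi(A)\in\{0,1\}$.

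The main obstacle is not any single hard estimate but rather correctly marshalling the general-state-space machinery — in particular, the transience dichotomy and the existence/uniqueness of the invariant measure for recurrent irreducible chains — and checking that the occupation-time averaging argument against $P_\Psi$ is valid (Fubini/Tonelli applies since everything is nonnegative, and invariance gives the exact count $n\,P_\Psi(A)$). Once transience is excluded the remaining steps are essentially bookkeeping on top of the cited results.
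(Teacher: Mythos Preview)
Your argument is correct and follows the standard route in~\cite{meyn_tweedie}: invoke the recurrent/transient dichotomy for $P_\Psi$-irreducible chains, rule out transience by integrating the bounded occupation-time against the invariant probability to force $P_\Psi\equiv 0$, then use uniqueness of the invariant measure for recurrent chains and deduce ergodicity from uniqueness. There is nothing to compare against, since the paper does not supply its own proof of this lemma; it is stated without proof as a background result imported from~\cite{meyn_tweedie} and~\cite{hernandez_lasserre}, so your sketch simply fills in what those references provide.
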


\begin{lemma}[\it p.h.r. conditions]\label{lem:PHR_cond}
Consider a chain with a kernel $\m{P}$. Each of the following
conditions implies p.h.r.:
\begin{enumerate}[(i)]
\item The chain has a unique invariant distribution $P_{\Psi}$,
and $\m{P}(\cdot|s)\ll P_{\Psi}$ for any $s\in \mf{F}$.
\label{cond:PHR_abs_cont}

\item Some $m$-skeleton $\m{P}^m$ is p.h.r.\label{cond:PHR_skeleton}
\end{enumerate}
\end{lemma}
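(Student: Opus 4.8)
The plan is to deduce p.h.r.\ in both cases from Lemma~\ref{lem:irreducible_to_recurrent}, which supplies positive recurrence and uniqueness of the invariant distribution as soon as $P_\Psi$-irreducibility is in hand. What remains in each case is twofold: (a) verify that the chain is $P_\Psi$-irreducible, and (b) upgrade ordinary recurrence to the Harris property. The strong hypothesis $\m{P}(\cdot|s)\ll P_\Psi$ (respectively, the skeleton structure) is exactly what drives both steps.

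\textbf{Part~\ref{cond:PHR_abs_cont}.} First, by induction on $n$ using $\m{P}^{n+1}(A|s)=\int\m{P}^{n}(A|t)\,\m{P}(dt|s)$ and the hypothesis, one gets $\m{P}^{n}(\cdot|s)\ll P_\Psi$ for all $n\geq 1$ and all $s\in\mf{F}$. For irreducibility, fix $A\in\Borel$ with $P_\Psi(A)>0$, put $h(s)\dfn\sum_{n\geq 1}2^{-n}\m{P}^{n}(A|s)$, and let $B\dfn\{s:h(s)=0\}$ be the set of points from which $A$ is not accessible. The one-step identity $h(s)=\frac{1}{2}\m{P}(A|s)+\frac{1}{2}\int h(t)\,\m{P}(dt|s)$ shows $s\in B$ forces $\m{P}(A|s)=0$ and $\m{P}(B^{c}|s)=0$, so $B$ is invariant (i.e.\ $\m{P}(B|s)=1$ for $s\in B$). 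Suppose $B\neq\emptyset$ and pick $s_{0}\in B$; then $\m{P}(B|s_{0})=1$, hence $P_\Psi(B)>0$ by absolute continuity, while invariance of $P_\Psi$ gives $\int h\,dP_\Psi=\sum_{n\geq 1}2^{-n}P_\Psi(A)=P_\Psi(A)>0$, so $P_\Psi(B^{c})>0$ as well. Invariance of $P_\Psi$ together with $\m{P}(B|s)=1$ on $B$ forces $\int_{B^{c}}\m{P}(B|s)\,P_\Psi(ds)=0$, i.e.\ no mass enters $B$ from $B^{c}$; a short computation then shows the normalized restrictions $P_\Psi(\cdot\cap B)/P_\Psi(B)$ and $P_\Psi(\cdot\cap B^{c})/P_\Psi(B^{c})$ are both invariant distributions, two distinct ones, contradicting uniqueness. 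Hence $B=\emptyset$ and the chain is $P_\Psi$-irreducible, so Lemma~\ref{lem:irreducible_to_recurrent} makes it positive recurrent with $P_\Psi$ its unique invariant distribution. To reach Harris recurrence, fix $A$ with $P_\Psi(A)>0$ and set $Q(s)\dfn\Prob_{s}(\Psi_{n}\in A\text{ for infinitely many }n)$. Standard recurrence theory gives $Q=1$ $P_\Psi$-a.s., and the Markov property makes $Q$ harmonic, $Q(s)=\int Q(t)\,\m{P}(dt|s)$; since $\{Q<1\}$ is $P_\Psi$-null and $\m{P}(\cdot|s)\ll P_\Psi$, we get $Q=1$ $\m{P}(\cdot|s)$-a.s., whence $Q(s)=1$ for \emph{every} $s$. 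Thus the chain is Harris recurrent, and being positive, it is p.h.r.

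\textbf{Part~\ref{cond:PHR_skeleton}.} Let $P_\Psi$ be the invariant distribution of the p.h.r.\ skeleton $\m{P}^{m}$, unique by Lemma~\ref{lem:irreducible_to_recurrent}. From $(P_\Psi\m{P})\m{P}^{m}=(P_\Psi\m{P}^{m})\m{P}=P_\Psi\m{P}$ we see that $P_\Psi\m{P}$ is invariant for $\m{P}^{m}$, so by uniqueness $P_\Psi\m{P}=P_\Psi$; hence $P_\Psi$ is invariant for $\m{P}$ and the chain is positive. It is also $P_\Psi$-irreducible, since reaching a set $A$ with $P_\Psi(A)>0$ in $k$ steps of the skeleton is reaching it in $km<\infty$ steps of $\m{P}$, so Lemma~\ref{lem:irreducible_to_recurrent} applies and $P_\Psi$ is the unique invariant distribution of $\m{P}$. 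For Harris recurrence, fix $A$ with $P_\Psi(A)>0$ and any initial point $s$; the subsequence $\{\Psi_{km+1}\}_{k\geq 0}$ is exactly the $m$-skeleton chain started at $s$, which by hypothesis visits $A$ infinitely often almost surely, so $\{\Psi_{n}\}_{n\geq 1}$ does too. Therefore $\m{P}$ is p.h.r.

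I expect the real obstacle to sit in Part~\ref{cond:PHR_abs_cont}, in the irreducibility step: converting ``some positive-measure set is inaccessible'' into a genuine violation of uniqueness requires the absorbing-set decomposition, and the bookkeeping showing that the restrictions of $P_\Psi$ to $B$ and to $B^{c}$ are each invariant depends delicately on $B$ being invariant and on $P_\Psi$ being invariant (it would fail if mass could leak into $B$ from its complement). The Harris upgrades in both parts, and the whole of Part~\ref{cond:PHR_skeleton}, are comparatively mechanical once irreducibility is secured.
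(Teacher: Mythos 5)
Your proof is correct. Note that the paper itself gives no proof of this lemma: it is stated as background, imported from Meyn--Tweedie and Hern\'andez-Lerma--Lasserre, so there is no in-paper argument to compare against. Your part (ii) is the standard (and essentially only) argument --- visits of the skeleton are a subsequence of visits of the chain, and the invariance identity $P_\Psi\m{P}=P_\Psi$ via uniqueness of the skeleton's invariant law is handled cleanly. Your part (i) is the classical absorbing-set argument, and the two delicate steps check out: the set $B$ of points from which $A$ is inaccessible is invariant, and the restrictions of $P_\Psi$ to $B$ and $B^c$ are each invariant precisely because invariance of $P_\Psi$ forbids mass from entering $B$ out of $B^c$, so $P_\Psi(B)>0<P_\Psi(B^c)$ would contradict uniqueness. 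The one place you lean on an unproved external fact is the assertion that $Q=1$ holds $P_\Psi$-a.s.; this does not follow from the paper's definition of recurrence (infinite \emph{expected} number of visits), but it can be closed with the paper's own tools: by Lemma \ref{lem:irreducible_to_recurrent} the invariant distribution is unique and ergodic, so Lemma \ref{lem:SLLN} applied to $\eta=\ind_A$ gives $n^{-1}\sum_{k\leq n}\ind_A(\Psi_k)\rightarrow P_\Psi(A)>0$ a.s.\ for $P_\Psi$-a.a.\ initial points, forcing infinitely many visits. With that substitution your harmonicity argument ($Q(s)=\int Q\,d\m{P}(\cdot|s)=1$ for \emph{every} $s$, since $\{Q<1\}$ is $P_\Psi$-null and hence $\m{P}(\cdot|s)$-null) upgrades recurrence to Harris recurrence exactly as intended, and the proof is complete and self-contained.
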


\begin{lemma}[\it p.h.r. convergence]\label{lem:PHR_convergence}
Consider an aperiodic p.h.r. chain with a kernel $\m{P}$ and an invariant distribution $P_\Psi$. Then for any $s\in \mf{F}$
\begin{equation*}
\lim_{n\rightarrow\infty}d_{TV}\left(\m{P}^n(\cdot|s),
P_\Psi\right) = 0
\end{equation*}
\end{lemma}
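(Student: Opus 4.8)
The final statement is Lemma~\ref{lem:PHR_convergence}, the total-variation convergence of an aperiodic p.h.r.\ chain to its invariant distribution from every initial point. Since the lemma is explicitly attributed to the references~\cite{meyn_tweedie,hernandez_lasserre}, the cleanest route is simply to cite the ergodic theorem for Harris chains there; but if one wishes to sketch the argument, here is the plan I would follow.

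First I would reduce everything to the classical countable-state intuition via a \emph{splitting/regeneration} argument. The key structural fact for a $\phi$-irreducible aperiodic chain is the existence of a \emph{small set}: a set $C$ with $P_\Psi(C)>0$, an integer $m_0$, a constant $\beta>0$, and a probability measure $\nu$ such that $\m{P}^{m_0}(A\,|\,s)\geq \beta\,\nu(A)$ for all $s\in C$ and all $A\in\Borel$. Aperiodicity lets one take $m_0=1$ after passing to a skeleton (and Lemma~\ref{lem:PHR_cond}(ii) guarantees that skeleton convergence suffices, since aperiodicity plus p.h.r.\ of the chain is inherited appropriately). On the split chain one obtains a genuine atom $\alpha$ that is hit infinitely often (Harris recurrence) and with finite expected return time (positivity, after normalizing the invariant measure to a probability measure).

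The second step is a \emph{coupling}: run two copies of the split chain, one started at the arbitrary point $s$ and one started in stationarity $P_\Psi$, independently until both simultaneously occupy the atom $\alpha$; from that coupling time $T$ onward, feed them the same randomness so they coincide. The standard coupling inequality then gives
\begin{equation*}
d_{TV}\big(\m{P}^n(\cdot\,|\,s),\,P_\Psi\big)\;\leq\;\Prob(T>n).
\end{equation*}
So the whole lemma comes down to showing $\Prob(T>n)\to 0$, i.e.\ that the coupling time is a.s.\ finite. For this I would invoke positive Harris recurrence of the product (split) chain: the pair chain is itself irreducible and recurrent because the atom forces regeneration, aperiodicity rules out the parity obstruction that would otherwise keep the two copies out of phase, and positivity gives finite mean return times; hence the pair hits $\alpha\times\alpha$ a.s., so $T<\infty$ a.s.\ and $\Prob(T>n)\to 0$ by dominated convergence. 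Finally I would transfer the conclusion back from the split chain to the original chain (the marginal dynamics agree) and, if a skeleton was used, lift skeleton convergence to full convergence using aperiodicity together with Lemma~\ref{lem:PHR_convergence} applied along residue classes — or, more directly, just note that the standard reference already states the result in the form quoted.

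The main obstacle is the aperiodicity bookkeeping: without it the two coupled copies can be permanently locked into different cyclic classes and $T=\infty$, so one must genuinely use aperiodicity to make the product chain irreducible/recurrent and to ensure the small-set minorization holds at a single time step (or uniformly along all large times) rather than only along a sublattice. Everything else — the minorization lemma, the splitting construction, the coupling inequality, and the finiteness of mean return times from positivity — is standard Meyn–Tweedie machinery, which is exactly why the cleanest "proof" here is a pointer to~\cite{meyn_tweedie}.
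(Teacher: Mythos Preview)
Your proposal is correct. The paper does not prove this lemma at all: it is listed among ``the following results are taken from~\cite{meyn_tweedie} and~\cite{hernandez_lasserre}'' and is simply quoted as a known fact. Your sketch of the splitting/coupling argument is exactly the standard Meyn--Tweedie proof behind that citation, so you have supplied what the paper deliberately omits; there is no discrepancy to discuss.
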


\begin{lemma}[\it Strong law of large numbers (SLLN)]\label{lem:SLLN}
If $P_\Psi$ is an ergodic invariant distribution for the Markov chain $\{\Psi_n\}_{n=1}^\infty$ with kernel $\m{P}$, then for any measurable function $\eta:\mf{F}\mapsto\RealF$ satisfying $\Expt|\eta(\Psi)|<\infty$
and $P_\Psi$-a.a. initial point $s\in \mf{F}$,
\begin{equation*}
\lim_{n\rightarrow\infty}\frac{1}{n} \sum_{k=1}^n \eta(\Psi_k) =
\Expt\eta(\Psi) \qquad  \m{P}_s\text{\rm-a.s.}
\end{equation*}
Furthermore, if the chain is p.h.r. then the above holds for
any $s\in \mf{F}$.
\end{lemma}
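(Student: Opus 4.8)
The plan is to reduce the statement to the classical Birkhoff pointwise ergodic theorem and then to propagate the conclusion from $P_\Psi$-almost-all initial points to \emph{all} initial points using Harris recurrence. I would first treat the stationary case. Starting the chain from $\Psi_1\sim P_\Psi$ makes $\{\Psi_n\}_{n\geq1}$ a stationary process on the path space $\mf{F}^\infty$, on which the coordinate shift is measure preserving. The crucial point is that ergodicity of $P_\Psi$ in the Markov sense adopted here (every invariant set $A\in\Borel$ satisfies $P_\Psi(A)\in\{0,1\}$) is equivalent to ergodicity of the shift: via the Markov property, a shift-invariant path event pulls back to an invariant subset of $\mf{F}$, and conversely an invariant subset of $\mf{F}$ generates a shift-invariant cylinder event. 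Birkhoff's ergodic theorem, applied to the integrable functional $\psi^\infty\mapsto\eta(\psi_1)$ on path space, then gives
\[
\frac1n\sum_{k=1}^n\eta(\Psi_k)\;\longrightarrow\;\Expt\eta(\Psi)\qquad\Prob_{P_\Psi}\text{-a.s.},
\]
where $\Prob_{P_\Psi}\dfn\int_{\mf{F}}\m{P}_s\,dP_\Psi(s)$ denotes the stationary path measure.

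Next I would disintegrate this conclusion. Let $A\subseteq\mf{F}^\infty$ be the measurable set of paths along which the Cesàro averages converge to $\Expt\eta(\Psi)$. From $\Prob_{P_\Psi}(A)=1$ together with $\Prob_{P_\Psi}=\int_{\mf{F}}\m{P}_s\,dP_\Psi(s)$ we obtain $\m{P}_s(A)=1$ for $P_\Psi$-a.a. $s\in\mf{F}$, which is exactly the first assertion of the lemma (by Lemma~\ref{lem:irreducible_to_recurrent}, $P_\Psi$ is in fact the unique invariant distribution, so there is no ambiguity).

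For the ``furthermore'' part, set $G\dfn\{s\in\mf{F}:\m{P}_s(A)=1\}$, so that $P_\Psi(G)=1$. Fix an arbitrary initial point $s_0\in\mf{F}$. Since the chain is Harris recurrent and $P_\Psi(G)>0$, the set $G$ is reached from $s_0$ (indeed infinitely often) $\m{P}_{s_0}$-a.s., hence the hitting time $T\dfn\inf\{n:\Psi_n\in G\}$ is $\m{P}_{s_0}$-a.s. finite. By the strong Markov property, conditionally on $\mathcal{F}_T$ the shifted process $\{\Psi_{T+j}\}_{j\geq0}$ is a copy of the chain with initial point $\Psi_T\in G$, whence $\tfrac1m\sum_{j=1}^m\eta(\Psi_{T+j})\to\Expt\eta(\Psi)$ $\m{P}_{s_0}$-a.s. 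It remains to remove the time shift: writing $\sum_{k=1}^n\eta(\Psi_k)=\sum_{k=1}^{T}\eta(\Psi_k)+\sum_{j=1}^{n-T}\eta(\Psi_{T+j})$, the first sum is finite ($T<\infty$ a.s. and $\eta$ is real-valued) hence contributes $o(n)$, while dividing the second by $n$ and using $(n-T)/n\to1$ shows that the full average converges to $\Expt\eta(\Psi)$. This yields the claim for every $s_0\in\mf{F}$.

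The step I expect to be the main obstacle is the ergodicity-transfer equivalence in the stationary case --- rigorously matching shift-invariant path events with invariant subsets of $\mf{F}$ --- together with the attendant measure-theoretic care (measurability of $G$, validity of the disintegration in the stated generality); the Harris-recurrence extension is then routine regeneration-style bookkeeping. Because all of this is classical, in the paper itself I would simply cite~\cite{meyn_tweedie} and~\cite{hernandez_lasserre}, where the result is established at the claimed level of generality.
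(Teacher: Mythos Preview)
Your proof sketch is correct and standard --- Birkhoff applied to the stationary path measure, disintegrated over initial points, then extended to all initial points via a hitting-time/strong Markov argument under Harris recurrence. The paper does not prove this lemma at all; it is stated among the preliminary results explicitly ``taken from~\cite{meyn_tweedie} and~\cite{hernandez_lasserre}'', exactly as you anticipated in your final paragraph.
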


\subsection{Iterated Function Systems}\label{subsec:IFS}
Let $\mf{F}$ be a measurable space,
$\omega:\RealF\times\mf{F}\mapsto \mf{F}$ a measurable
function\footnote{$\RealF\,$ is equipped with the usual Borel
$\sigma$-algebra, and $\RealF\times\mf{F}$ is equipped with the
corresponding product $\sigma$-algebra.}, and write
$\omega_y(\cdot)\dfn\omega(y,\cdot)$ for any $y\in\RealF$. Let
$\{Y_n\}_{n=1}^\infty$ be an i.i.d. sequence of real-valued r.v.'s. An
\textit{Iterated Function system (IFS)}
$\,\{S_n(s)\}_{n=1}^\infty$ is a stochastic process over $\mf{F}$,
defined by\footnote{We call the process itself an IFS. In the
literature sometimes $\omega_{\sst y}$ is the IFS and the process
is defined separately.}
\begin{equation}\label{eq:IFS_def}
S_1 = s\in\mf{F}\,,\quad S_{n+1}(s) =
\omega_{\sst{Y}_n}\circ\omega_{\sst{Y}_{n-1}}\circ\cdots\circ\omega_{\sst{Y}_1}(s)\vspace{-0.12cm}
\end{equation}
A \textit{Reversed IFS (RIFS)} $\{\wt{S}_n(s)\}_{n=1}^\infty$ is a
stochastic process over $\mf{F}$, obtained by a reversed order
composition:
\begin{equation}\label{eq:RIFS_def}
\wt{S}_1 = s\in\mf{F} \,,\quad \wt{S}_{n+1}(s) =
\omega_{\sst{Y}_1}\circ\omega_{\sst{Y}_{2}}\circ\cdots\circ\omega_{\sst{Y}_n}(s)\vspace{-0.1cm}
\end{equation}
We say that the (R)IFS is \textit{generated} by the (R)IFS
\textit{kernel} $\omega_{\sst y}(\cdot)$, \textit{controlled} by
the sequence $\{Y_n\}_{n=1}^\infty$, and $s$ is its initial point.
Note that an IFS is a Markov chain over the state space $\mf{F}$,
and in fact a large class of Markov chains can be represented by a
suitable IFS~\cite{Kifer}. In contrast, an RIFS is not a Markov
chain but it is however useful in the analysis of the
corresponding IFS,\footnote{The idea is that it is relatively
simple to prove (under suitable contraction conditions) that the
RIFS converges to a unique random fixed point a.s., and since the
IFS and the RIFS have the same marginal distribution, the
distribution of that fixed point must be the unique stationary
distribution of the IFS.} see e.g.~\cite{diaconis99,Steinsaltz99,stenflo2001}. However, in what
follows the RIFS will turn out to have an independent
significance.

A function $\xi:[0,1]\mapsto [0,1]$ is called a (generally
nonlinear) \textit{contraction} if it is nonnegative,
$\cap$-convex, and $\xi(x)<x$ for any $x\in(0,1]$.
\begin{lemma}\label{lem:contraction_profile}
For any contraction $\xi(\cdot)$
\begin{equation*}
r(n) \dfn \sup_{x\in[0,1]}\xi^{(n)}(x)\,,\quad
\lim_{n\rightarrow\infty} r(n) = 0
\end{equation*}
where $\xi^{(n)}$ is the $n$-fold iteration of $\,\xi$. The
sequence $r(n)$ is called the {\em decay profile} of $\,\xi$.
\end{lemma}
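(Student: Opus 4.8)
The plan is to reduce the claim to an elementary monotone-sequence / fixed-point argument on the real line. The key point is that, after a single step, all iterates take values in a fixed interval $[0,r(1)]$ with $r(1)<1$ on which $\xi$ is continuous, and on that interval the running maximum $g(t):=\max_{y\in[0,t]}\xi(y)$ is a continuous non-decreasing map lying strictly below the diagonal on $(0,1)$. The sequence $r(n)$ will be shown to satisfy $r(n+1)\le g(r(n))$, so it decreases to a limit $L$ which, by continuity, obeys $L\le g(L)$; since $g$ has no positive fixed point, $L=0$.

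First I would collect the basic regularity of a contraction $\xi$. By $\cap$-convexity and $\xi(1)\ge 0$, we have $\xi(\lambda)\ge(1-\lambda)\xi(0)$ for $\lambda\in(0,1)$; combined with $\xi(\lambda)<\lambda$ and letting $\lambda\to 0^+$, this forces $\xi(0)\le 0$, hence $\xi(0)=0$ by nonnegativity. A finite concave function is continuous on the interior of its domain, so $\xi$ is continuous on $(0,1)$, and $0\le\xi(x)<x$ near $0$ makes it continuous at $0$ as well; thus $\xi$ is continuous on $[0,1)$. Finally, concavity through the origin (i.e.\ $\xi(0)=0$) makes $x\mapsto\xi(x)/x$ non-increasing on $(0,1]$, so $\xi(x)\le 2\xi(\tfrac12)$ for $x\ge\tfrac12$ while $\xi(x)<\tfrac12$ for $x<\tfrac12$; since $\xi(\tfrac12)<\tfrac12$ this yields $r(1)=\sup_{x\in[0,1]}\xi(x)\le\max\{\tfrac12,\,2\xi(\tfrac12)\}<1$, so all iterates live in $[0,r(1)]\subset[0,1)$ where $\xi$ is genuinely continuous.

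Next I set $g(t):=\max_{y\in[0,t]}\xi(y)$ for $t\in[0,1)$ (a genuine maximum, $\xi$ being continuous on the compact set $[0,t]$). Then $g$ is non-decreasing, $g(0)=0$, and $g$ is continuous (a running maximum of a continuous function is continuous). Since $\xi(y)\le y$ on $[0,1]$ we get $g(t)\le t$, and in fact $g(t)<t$ for $t\in(0,1)$: if the maximizer $y^\ast\in[0,t]$ is positive then $g(t)=\xi(y^\ast)<y^\ast\le t$, and if $y^\ast=0$ then $g(t)=0<t$. Now for every $n$ and every $x\in[0,1]$ one has $0\le\xi^{(n)}(x)\le r(n)$, hence $r(n+1)=\sup_x\xi\big(\xi^{(n)}(x)\big)\le\max_{y\in[0,r(n)]}\xi(y)=g(r(n))$; thus $r(n+1)\le g(r(n))\le r(n)$, so $r(n)$ is non-increasing, stays $\le r(1)<1$ for all $n$ (which legitimizes every maximum used), and converges to some $L\in[0,r(1)]$. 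If $L>0$ then $L\in(0,1)$, and letting $n\to\infty$ in $r(n+1)\le g(r(n))$ with $g$ continuous at $L$ gives $L\le g(L)<L$, a contradiction. Hence $L=0$, i.e.\ $r(n)\to 0$.

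The only step requiring care is the regularity in the second paragraph, in particular the bound $r(1)<1$: this rules out a generic concave $[0,1]\to[0,1]$ map creeping up toward $1$ (and its possible downward jump at the right endpoint), and confines the entire dynamics to the region where $\xi$ is continuous and $g$ is well defined. Once that is in place, the passage to the scalar inequality $r(n+1)\le g(r(n))$ with $g$ below the diagonal is immediate, and the convergence $r(n)\to 0$ is routine.
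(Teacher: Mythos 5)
Your proof is correct, and it takes a somewhat different route from the paper's. The paper exploits concavity more directly: it takes a maximizer $x^\ast$ of $\xi$, notes that $\xi$ is nondecreasing on $[0,x^\ast]$ and that $\xi(x)\le\xi(x^\ast)\le x^\ast$ for all $x$, and hence bounds $r(n)\le \xi^{(n-1)}(x^\ast)$; it then shows the single orbit $s_n=\xi^{(n)}(x^\ast)$ decreases to a fixed point of $\xi$, which must be $0$. You instead majorize the sequence $r(n)$ itself by iterates of the running maximum $g(t)=\max_{[0,t]}\xi$, obtaining the scalar recursion $r(n+1)\le g(r(n))$ with $g$ continuous, nondecreasing, and strictly below the diagonal on $(0,1)$. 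Both arguments reduce in the end to ``a monotone sequence converges to a fixed point, and the only fixed point is $0$,'' but yours uses concavity only to secure continuity of $\xi$ on $[0,1)$ and the a priori bound $r(1)<1$ (the latter being the one step that genuinely needs care, and which you handle correctly via the monotonicity of $\xi(x)/x$); after that the argument would go through for any continuous sub-diagonal map whose first iterate stays away from $1$, so it is marginally more general. The paper's version is shorter because the argmax $x^\ast$ of a concave function does the work of your $g$ for free, at the cost of invoking the monotonicity of $\xi^{(n)}$ on $[0,x^\ast]$, which your approach avoids. Both are complete; your explicit verification of $\xi(0)=0$, of continuity at $0$, and of $r(1)<1$ is in fact slightly more careful than the paper's treatment of the same points.
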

\begin{proof}
See Appendix \ref{app:lemmas}.
\end{proof}

\begin{example}
The function $\xi(x)= rx\,$ is a (linear) contraction for $0<r<1$, with an exponential decay profile $r(n) = r^n$.
\end{example}

\begin{example}
The function $\xi(x)= x-\alpha x^\beta$ is a contraction for
$\alpha<\frac{1}{\beta}$ and $\beta>1$, with a polynomial decay
profile $r(n) = \bigo\left(n^{\frac{1}{1-\beta}}\right)$.
\end{example}

In what follows, a measurable and surjective function
$\psi:\mf{F}\mapsto[0,1]$ is called a \textit{length function}. We
now state some useful convergence Lemmas for (R)IFS.
\begin{lemma}\label{lem:IFS_convergence}
Consider the IFS defined in (\ref{eq:IFS_def}), and suppose there
exist a length function $\psi(\cdot)$ and a contraction
$\xi(\cdot)$ with a decay profile $r(n)$, so that
\begin{equation}\label{eq:contraction_IFS_cond}
\Expt\big{[}\psi(\omega_{\sst{Y_1}}(s))\big{]} \,\leq\;
\xi(\psi(s))\,,\quad \forall s\in\mf{F}
\end{equation}
Then for any $s\in\mf{F}$ and any $\eps>0$
\begin{equation*}
\Prob\big{(}\psi(S_n(s)) > \eps\big{)} \leq  \eps^{-1}r(n)
\end{equation*}
\end{lemma}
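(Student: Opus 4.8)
The plan is to reduce the statement to a purely deterministic estimate via Markov's inequality, and then to obtain that estimate from a one-step contraction recursion for the expected length. Since $\psi$ is a length function, $\psi(S_n(s))$ is a nonnegative (indeed $[0,1]$-valued) random variable, so Markov's inequality gives
\[
\Prob\big(\psi(S_n(s)) > \eps\big) \;\le\; \eps^{-1}\,\Expt\big[\psi(S_n(s))\big]\qquad\text{for every }\eps>0 .
\]
Hence it suffices to prove the deterministic bound $\Expt[\psi(S_n(s))]\le r(n)$; all the probabilistic content of the lemma is absorbed into this single elementary step.

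To control these expectations I would first establish the recursion $\Expt[\psi(S_{n+1}(s))]\le\xi\big(\Expt[\psi(S_n(s))]\big)$. By the definition of the IFS, $S_{n+1}(s)=\omega_{Y_n}(S_n(s))$, whereas $S_n(s)$ is a measurable function of $Y_1,\dots,Y_{n-1}$ only and is therefore independent of $Y_n$, since $\{Y_k\}$ is i.i.d. Conditioning on $S_n(s)$ and using this independence together with $Y_n\sim Y_1$, I may invoke hypothesis~(\ref{eq:contraction_IFS_cond}) at the (frozen) point $s'$ to obtain $\Expt[\psi(S_{n+1}(s))\mid S_n(s)=s']\le\xi(\psi(s'))$, i.e.\ $\Expt[\psi(S_{n+1}(s))\mid S_n(s)]\le\xi(\psi(S_n(s)))$ almost surely. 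Taking expectations and then applying Jensen's inequality — legitimate because $\xi$ is $\cap$-convex and $\psi(S_n(s))$ is valued in the domain $[0,1]$ of $\xi$ — yields $\Expt[\psi(S_{n+1}(s))]\le\Expt[\xi(\psi(S_n(s)))]\le\xi\big(\Expt[\psi(S_n(s))]\big)$.

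It then remains to unwind this recursion from the initial value $\Expt[\psi(S_1(s))]=\psi(s)\in[0,1]$ down to $\Expt[\psi(S_n(s))]\le r(n)$. Iterating a bound of the form $a_{k+1}\le\xi(a_k)$ into an iterated composition requires $\xi$ to be monotone nondecreasing on the range of values the sequence actually visits; I would justify this by noting that a nonnegative $\cap$-convex $\xi$ with $\xi(x)<x$ on $(0,1]$ necessarily has $\xi(0)=0$, hence is nondecreasing on an initial sub-interval of $[0,1]$ that already contains the entire range of $\xi$ (its maximum being attained strictly to the right of its own value, again because $\xi(x)<x$). Since after the first step all subsequent iterates lie in that sub-interval and the sequence of expectations is nonincreasing, the composition is legitimate; iterating it together with the trivial bound $\xi^{(m)}(a)\le\sup_{x\in[0,1]}\xi^{(m)}(x)$ valid for any $a\in[0,1]$ delivers $\Expt[\psi(S_n(s))]\le r(n)$, which combined with the Markov step proves the lemma (and tends to $0$ by Lemma~\ref{lem:contraction_profile}). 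I expect this last step — verifying that the pointwise contraction inequality may indeed be composed, i.e.\ the monotonicity of $\xi$ on the range it actually explores — to be the main (and essentially the only) obstacle; everything else is routine.
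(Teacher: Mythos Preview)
Your proof is correct and follows essentially the same route as the paper's: Markov's inequality, then condition on the past and apply the one-step hypothesis, then Jensen via the $\cap$-convexity of $\xi$, then iterate and bound by $r(n)$. The only difference is that you spell out the monotonicity of $\xi$ needed to compose the recursion (which the paper's step ``recursive application of the preceding transitions'' leaves implicit); your justification via the structure established in Lemma~\ref{lem:contraction_profile} is exactly the right way to close that gap.
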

\begin{proof}
See Appendix \ref{app:lemmas}.
\end{proof}

In the sequel, we consider an IFS over the space $\mf{F}_{c}$ of
all c.d.f. functions over the open unit
interval\footnote{$\mf{F}_{c}$ is associated with the topology of
pointwise convergence, and the corresponding Borel
$\sigma$-algebra.}, i.e., all monotone non-decreasing functions
$h:\ui\mapsto\ui$ for which ${\rm conv}(\range(h))=\ui$.
Furthermore, we define the following family of length functions
over $\mf{F}_{c}$:
\begin{equation}\label{eq:def_lenght_func}
\psi_{\sst{\lambda}}(h) \dfn \int_0^1 \lambda(h(x))dx\,, \quad
h\in\mf{F}_{c}
\end{equation}
where $\lambda:[0,1]\mapsto[0,1]$ is surjective, $\cap$-convex and
symmetric about $\frac{1}{2}$.

For any $h:\RealF\mapsto\RealF$ and $s,t\in\RealF$, define
\begin{equation}\label{eq:lip_operator}
D_{s,t}(h) \dfn \frac{|h(s)-h(t)|}{|s-t|}\;,\quad D_s(h)
\dfn \limsup_{t\rightarrow s}D_{s,t}(h)
\end{equation}
$D_{s,t}(\cdot)$ and $D_s(\cdot)$ are called \textit{global} and
\textit{local Lipschitz operators} respectively.

\begin{lemma}\label{lem:contraction1}
Consider the RIFS in (\ref{eq:RIFS_def}) over some interval
$\mf{F}\subseteq\RealF$, and suppose the following conditions hold
for some $q>0$:
\begin{equation}\label{eq:contraction_cond}
r \dfn \sup_{s\neq t\in\mf{F}}\Expt
\left[D_{s,t}(\omega_{\sst{Y_1}})\right]^q < 1
\end{equation}
Then for any $\eps>0$
\begin{equation*}
\Prob\,\left(\left|\wt{S}_n(s)-\wt{S}_n(t)\right| >
\eps\right)\leq \eps^{-q}|s-t|^qr^n \qquad s,t\in\mf{F}
\end{equation*}
\end{lemma}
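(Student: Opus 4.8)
The plan is to control the growth of $|\wt{S}_n(s)-\wt{S}_n(t)|$ by peeling off one iterate at a time and exploiting the $q$-th moment contraction hypothesis \eqref{eq:contraction_cond}, together with the independence of the controlling sequence $\{Y_n\}$. Recall that the RIFS is built by composing in reversed order, so $\wt{S}_{n+1}(s) = \omega_{\sst{Y}_1}\circ\omega_{\sst{Y}_{2}}\circ\cdots\circ\omega_{\sst{Y}_n}(s)$. Write $u = \omega_{\sst{Y}_n}(s)$ and $v = \omega_{\sst{Y}_n}(t)$ for the innermost images, and let $T(\cdot) = \omega_{\sst{Y}_1}\circ\cdots\circ\omega_{\sst{Y}_{n-1}}(\cdot)$ be the outer composition, which depends only on $Y_1,\ldots,Y_{n-1}$ and is therefore independent of $(u,v)$. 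Then
\begin{equation*}
|\wt{S}_n(s)-\wt{S}_n(t)| = |T(u) - T(v)| = D_{u,v}(T)\cdot|u-v|,
\end{equation*}
where we interpret $D_{u,v}(T)=0$ if $u=v$. Now I would argue by induction on $n$ that
\begin{equation*}
\Expt\left[\,|\wt{S}_n(s)-\wt{S}_n(t)|^q\,\right] \le |s-t|^q r^n \qquad \text{for all } s,t\in\mf{F},
\end{equation*}
after which the claimed tail bound follows immediately from Markov's inequality applied to the nonnegative random variable $|\wt{S}_n(s)-\wt{S}_n(t)|^q$.

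For the inductive step, condition on $Y_n$ (equivalently, fix $u$ and $v$): the outer map $T$ is then an RIFS of length $n-1$ controlled by $Y_1,\ldots,Y_{n-1}$, so by the induction hypothesis $\Expt\big[|T(u)-T(v)|^q \mid Y_n\big] \le |u-v|^q r^{n-1}$. Taking expectations over $Y_n$ gives
\begin{equation*}
\Expt\left[\,|\wt{S}_n(s)-\wt{S}_n(t)|^q\,\right] \le r^{n-1}\,\Expt\big[\,|u-v|^q\,\big] = r^{n-1}\,\Expt\big[\,|\omega_{\sst{Y}_n}(s)-\omega_{\sst{Y}_n}(t)|^q\,\big].
\end{equation*}
Since $Y_n$ has the same distribution as $Y_1$, the remaining expectation equals $\Expt\big[D_{s,t}(\omega_{\sst{Y}_1})\big]^q\cdot|s-t|^q$ — wait, more precisely $\Expt\big[D_{s,t}(\omega_{\sst{Y}_1})^q\big]\cdot|s-t|^q$ — which is bounded by $r\,|s-t|^q$ by the definition of $r$ in \eqref{eq:contraction_cond}. (The base case $n=1$ is trivial since $\wt{S}_1(s)=s$, giving the bound $|s-t|^q r^0$; one needs $n\ge 1$, and for $n=1$ the statement reads $|s-t|^q$, consistent.) Combining yields $\Expt[\,|\wt{S}_n(s)-\wt{S}_n(t)|^q\,] \le |s-t|^q r^n$, completing the induction.

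The one genuine subtlety — and the step I expect to require the most care — is the conditioning/independence bookkeeping: in the RIFS the \emph{outermost} functions carry the \emph{lowest} indices, so conditioning on $Y_n$ leaves an outer map governed by the \emph{remaining} variables $Y_1,\dots,Y_{n-1}$, which is precisely an $(n-1)$-step RIFS with the \emph{same kernel}. This is exactly the structural reason the induction closes for the RIFS (and, incidentally, why the analogous naive induction is more awkward for the forward IFS). I would also note that $D_{s,t}(\omega_{\sst Y_1})$ is a genuine (measurable, nonnegative) random variable so that the supremum defining $r$ and the expectations above are well posed; this measurability is inherited from the joint measurability of $\omega$. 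With these points in place, the bound $\Prob(|\wt{S}_n(s)-\wt{S}_n(t)|>\eps) = \Prob(|\wt{S}_n(s)-\wt{S}_n(t)|^q>\eps^q) \le \eps^{-q}\,\Expt[\,|\wt{S}_n(s)-\wt{S}_n(t)|^q\,] \le \eps^{-q}|s-t|^q r^n$ is exactly the assertion of the lemma.
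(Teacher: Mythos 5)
Your proof is correct and follows essentially the same route as the paper's: both establish the moment bound $\Expt|\wt{S}_n(s)-\wt{S}_n(t)|^q \le |s-t|^q r^n$ by an $n$-fold application of the one-step contraction \eqref{eq:contraction_cond} via conditioning and independence, and then conclude with Markov's inequality. The only cosmetic difference is that the paper peels off the \emph{outermost} map $\omega_{\sst Y_1}$ (conditioning on $Y_2^n$) and iterates inward, whereas you peel off the innermost map $\omega_{\sst Y_n}$ and invoke the induction hypothesis on the outer $(n-1)$-step composition; both orderings close because the $Y_i$ are i.i.d.
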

\begin{proof}
See Appendix \ref{app:lemmas}.
\end{proof}

\begin{lemma}[\it From~\cite{Steinsaltz99}]\label{lem:contraction2}
Consider the RIFS in (\ref{eq:RIFS_def}) over the interval
$\mf{F}=\ui$. Let $\rho:\ui\mapsto[1,\infty)$ be a continuous
function, and define
\begin{align*}
J(s;t) \dfn \sup\left\{\rho({\rm conv}\{s,t\})\right\}\,,\quad
K_s \dfn \Expt\left[\,J(s;\omega_{\sst{Y_1}}(s))\right]\,,\quad \Psi(x,z,\alpha) \dfn \frac{K_s+K_t}{1-r} + 2J(s;t)
\end{align*}
If
\begin{equation*}
r\dfn
\sup_{s\in\mf{F}}\Expt\left[\frac{\rho(\omega_{\sst{Y_1}}(s))}{\rho(s)}\,D_s(\omega_{\sst{Y_1}})\right]
< 1,
\end{equation*}
then for any $s,t\in\ui$ and any $\eps>0$
\begin{align*}
\Prob\,\left(\left|\wt{S}_n(s)-\wt{S}_n(t)\right| >
\eps\right) \leq \eps^{-\sst{1}}\Psi(s,t,r)\cdot r^n
\end{align*}
\end{lemma}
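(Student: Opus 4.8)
The plan is to follow the weighted‑metric contraction argument of \cite{Steinsaltz99}. The idea is to measure distances on $\ui$ not with the Euclidean metric but with the one‑dimensional Riemannian‑type metric induced by the weight $\rho$, namely $d_\rho(s,t)\dfn\left|\int_s^t\rho(u)\,du\right|$. Since $\rho\geq 1$ this is a genuine metric satisfying $|s-t|\leq d_\rho(s,t)\leq J(s;t)\,|s-t|\leq J(s;t)$, so $d_\rho$‑closeness implies Euclidean closeness, and any bound on $\Expt\,d_\rho(\wt S_n(s),\wt S_n(t))$ immediately gives the statement via Markov's inequality, $\Prob(|\wt S_n(s)-\wt S_n(t)|>\eps)\leq\eps^{-1}\Expt\,d_\rho(\wt S_n(s),\wt S_n(t))$. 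Thus the whole problem reduces to controlling the $d_\rho$‑diameter of the images of $s$ and $t$ under the composed maps.

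The crux is a one‑step expected contraction in $d_\rho$: for all $a,b\in\ui$,
$\Expt\big{[}d_\rho(\omega_{\sst{Y_1}}(a),\omega_{\sst{Y_1}}(b))\big{]}\leq r\cdot d_\rho(a,b)$, modulo a boundary correction discussed below. Heuristically, partition ${\rm conv}\{a,b\}$ finely; by the triangle inequality for $d_\rho$ and continuity of $\rho$ one gets $d_\rho(\omega_{\sst{Y_1}}(a),\omega_{\sst{Y_1}}(b))\leq\sum_i\rho(\omega_{\sst{Y_1}}(u_i))\,|\omega_{\sst{Y_1}}(u_{i+1})-\omega_{\sst{Y_1}}(u_i)|$ up to vanishing terms; replacing each increment by its local stretch $D_{u_i}(\omega_{\sst{Y_1}})\,|u_{i+1}-u_i|$ — legitimized by selecting the partition along a Vitali cover adapted to the $\limsup$ in the definition of $D_s$ — and writing $\rho(\omega_{\sst{Y_1}}(u_i))=\tfrac{\rho(\omega_{\sst{Y_1}}(u_i))}{\rho(u_i)}\rho(u_i)$, one passes to the limit to obtain $d_\rho(\omega_{\sst{Y_1}}(a),\omega_{\sst{Y_1}}(b))\leq\int_a^b\frac{\rho(\omega_{\sst{Y_1}}(u))}{\rho(u)}D_u(\omega_{\sst{Y_1}})\,\rho(u)\,du$. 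Taking expectations, using Tonelli to exchange $\Expt$ with $\int_a^b$, and bounding the integrand by $r\,\rho(u)$ via the definition of $r$ gives the contraction.

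Granting the one‑step bound, one telescopes using the reversed order in \eqref{eq:RIFS_def}: write $\wt S_{n+1}(\cdot)=\omega_{\sst{Y_1}}\circ(\omega_{\sst{Y_2}}\circ\cdots\circ\omega_{\sst{Y_n}})(\cdot)$, condition on $Y_2,\ldots,Y_n$ so that the outer map $\omega_{\sst{Y_1}}$ is independent of the inner composition, apply the one‑step bound, and then use that $\{Y_k\}$ is i.i.d. so that the inner composition, evaluated at $s$ and $t$, has the same joint law as $(\wt S_n(s),\wt S_n(t))$. Iterating from the base case $\wt S_1={\rm id}$ yields $\Expt\,d_\rho(\wt S_n(s),\wt S_n(t))\leq d_\rho(s,t)\,r^{n-1}$, which combined with Markov already has the claimed form up to the precise constant; sharpening the telescoping so that the per‑step boundary corrections are summed geometrically, $\sum_{k\geq0}r^k$, is exactly what turns $d_\rho(s,t)$ into $\Psi(s,t,r)=\frac{K_s+K_t}{1-r}+2J(s;t)$, with $J(s;t)$ dominating the initial term and $K_s=\Expt\,J(s;\omega_{\sst{Y_1}}(s))$ bounding the expected damage of one step toward the boundary.

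The main obstacle is making the one‑step contraction rigorous when $\omega_{\sst{Y_1}}$ is merely measurable — not, say, absolutely continuous — so that the Vitali step may leave a Lebesgue‑null exceptional set on which a jump or singular part of $\omega_{\sst{Y_1}}$ contributes, and when $\rho$ blows up near $\{0,1\}$, so that a trajectory carrying the segment close to an endpoint inflates $d_\rho$. Controlling both effects simultaneously — and showing the resulting corrections are summable in $n$ against $r^n$ — is precisely the technical content of \cite{Steinsaltz99}; assembling those estimates gives $\Expt\,d_\rho(\wt S_n(s),\wt S_n(t))\leq\Psi(s,t,r)\,r^n$, and the lemma follows by Markov's inequality as above.
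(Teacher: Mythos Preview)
The paper does not supply its own proof of this lemma; it is quoted verbatim as a result from \cite{Steinsaltz99} and used as a black box in the error-probability analysis of Section~\ref{sec:error_memoryless}. So there is no ``paper's proof'' to compare against.

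Your sketch is a faithful outline of Steinsaltz's argument: introduce the weighted length $d_\rho(s,t)=\left|\int_s^t\rho(u)\,du\right|$, use $\rho\geq 1$ to dominate Euclidean distance, establish a one-step expected contraction $\Expt\,d_\rho(\omega_{Y_1}(a),\omega_{Y_1}(b))\leq r\cdot d_\rho(a,b)$ via the local Lipschitz bound, telescope through the reversed composition using independence, and finish with Markov. You also correctly flag the two genuine technical issues---the measurability-only assumption on $\omega_y$ (so the passage from increments to $D_s(\omega_y)$ needs Vitali-type care) and the possible blow-up of $\rho$ near $\{0,1\}$ (which is exactly what produces the $K_s,K_t$ boundary terms in $\Psi$). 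Your account of how $\Psi$ arises is admittedly hand-wavy (``sharpening the telescoping so that the per-step boundary corrections are summed geometrically''), but that is precisely the content of \cite{Steinsaltz99}, and since the present paper treats the lemma as an import, there is nothing more to match here.
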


\section{Posterior Matching}\label{sec:scheme}
In this section, we introduce the idea of posterior matching and develop the corresponding framework. In Subsection~\ref{subsec:pm_prin}, a new fundamental principle for optimal communication with feedback is presented. This principle is applied in Subsection~\ref{subsec:pm_scheme}, to devise a general transmission scheme suitable for any given input/channel pair $(P_X,P_{Y|X})$,\footnote{For instance, $P_X$ may be selected to be capacity achieving for $P_{Y|X}$, possibly under some desirable input constraints.}. This scheme will later be shown (in Section~\ref{sec:main_result_memoryless}) to achieve any rate below the corresponding mutual information $I(X;Y)$, under general conditions. A recursive representation of the scheme in a continuous alphabet setting is developed, where the recursion rule is given as a simple function of the input/channel pair $(P_X,P_{Y|X})$. A common framework for discrete, continuous and mixed alphabets is introduced in Subsection~\ref{subsec:norm_chan}, and a corresponding unified recursive representation is provided. Several illustrative examples are discussed throughout the section, where in each the corresponding scheme is explicitly derived. In the special cases of the AWGN channel with a Gaussian input, and the BSC with a uniform input, it is demonstrated how the scheme reduces to the Schalkwijk-Kailath and Horstein schemes, respectively.

\subsection{The Basic Principle}\label{subsec:pm_prin}
Suppose the receiver has observed the output sequence $Y^n$, induced by a message point $\Theta_0$ and an arbitrary transmission scheme used so far. The receiver has possibly gained some information regarding the value of $\Theta_0$ via $Y^n$, but what is the information it is still missing? We argue that a natural candidate is any r.v. $U$ with the following properties:
\begin{enumerate}[(I)]
\item \textit{$U$ is statistically independent of $Y^n$}. \label{prop:PM_indp}
\item \textit{The message point $\Theta_0$ can be a.s. uniquely recovered from $(U,Y^n)$}. \label{prop:PM_invert}
\end{enumerate}
Intuitively, the first requirement guarantees that $U$ represents ``new information'' not yet observed by the receiver, while the second requirement makes sure this information is ``relevant'' in terms of describing the message point. Following this line of thought, we suggest a simple principle for generating the next channel input: \vspace{3pt}

\textit{The transmission function $g_{n+1}$ should be selected so that $X_{n+1}$ is $P_X$-distributed, and is a fixed function\footnote{By \textit{fixed} we mean that the function cannot depend on the outputs $y^n$, so that $X_{n+1}$ is still independent of $Y^n$.} of some r.v. $U$ satisfying properties ({\rm \ref{prop:PM_indp}}) and ({\rm \ref{prop:PM_invert})}}.

\vspace{3pt}
That way, the transmitter attempts to convey the missing information to the receiver, while at the same time satisfying the input constraints encapsulated in $P_X$\footnote{The extra degree of freedom in the form of a deterministic function is in fact significant only when $P_X$ has a discrete part, in which case a quantization of $U$ may void property (\ref{prop:PM_invert}).}. We call this the \textit{posterior matching principle} for reasons that will become clear immediately. Note that any transmission scheme adhering to the posterior matching principle, satisfies
\begin{equation}\label{eq:PM_MI}
I(\Theta_0;Y_{n+1}|Y^n) = I(\Theta_0,Y^n; Y_{n+1})  - I(Y_{n+1}; Y^n) = I(X_{n+1};Y_{n+1})  - I(Y_{n+1}; Y^n) = I(X;Y)
\end{equation}
The second equality follows from the memorylessness of the channel and the fact that $X_{n+1}$ is a function of $(\Theta_0,Y^n)$. The last equality holds since $X_{n+1}\sim P_X$, and since $Y_{n+1}$ is independent of $Y^n$, where the latter is implied by property (\ref{prop:PM_indp}) together with the memorylessness of the channel. Loosely speaking, a transmission scheme satisfying the posterior matching principle therefore conveys, on each channel use, ``new information'' pertaining to the message point that is equal to the associated one-shot mutual information. This is intuitively appealing, and gives some idea as to why such a scheme may be good. However, this property does not prove nor directly implies anything regarding achievability. It merely indicates that we have done ``information lossless'' processing when converting the one-shot channel into an $n$-shot channel, an obvious necessary condition. In fact, note we did not use property (\ref{prop:PM_invert}), which turns out to be important\footnote{One can easily come up with useless schemes for which only property (\ref{prop:PM_indp}) holds. A simple example is repetition: Transmit the binary representation of $\Theta_0$ bit by bit over a BSC, independent of the feedback.}.

The rest of this paper is dedicated to the translation of the posterior matching principle into a viable transmission scheme, and to its analysis. As we shall see shortly, there are infinitely many transmission functions that satisfy the posterior matching principle. There is however one baseline scheme which is simple to express and analyze.

\subsection{The Posterior Matching Scheme}\label{subsec:pm_scheme}
\begin{theorem}[\it Posterior Matching Scheme]\label{thrm:post_match_scheme}
The following transmission scheme satisfies the posterior matching principle for any $n$:
\begin{equation}\label{eq:gn_func}
g_{n+1}(\theta,y^n) = F_X^{-1}\circ F_{\Theta_0|Y^n}\,(\theta|y^n)
\end{equation}
Based on the above transmission functions, the input to the channel is a sequence of r.v.'s given by
\begin{equation}\label{eq:gn_funcX}
X_{n+1} = F_X^{-1}\circ F_{\Theta_0|Y^n}\,(\Theta_0|Y^n)
\end{equation}
\end{theorem}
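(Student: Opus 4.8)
The plan is to verify directly that the scheme defined in \eqref{eq:gn_func} satisfies both defining properties of the posterior matching principle: that $X_{n+1}$ has the prescribed marginal distribution $P_X$, and that $X_{n+1}$ is a fixed (output-independent) function of some r.v.\ $U$ enjoying properties \eqref{prop:PM_indp} and \eqref{prop:PM_invert}. I will take $U = F_{\Theta_0|Y^n}(\Theta_0|Y^n)$ as the natural candidate for the ``missing information'' random variable, so that $X_{n+1} = F_X^{-1}(U)$, which is manifestly a fixed function of $U$ (it does not depend on $y^n$).

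The first step is to identify the conditional law of $U$ given $Y^n = y^n$. Conditioned on this event, $\Theta_0 \sim P_{\Theta_0|Y^n}(\cdot|y^n)$, and $F_{\Theta_0|Y^n}(\cdot|y^n)$ is exactly the c.d.f.\ of that conditional distribution. Since $\Theta_0$ is a proper r.v.\ to begin with (uniform on $\ui$), its posterior is proper for $P_{Y^n}$-a.a.\ $y^n$, so part~\eqref{item:match2} of Lemma~\ref{lem:matching_trans} applies and gives $F_{\Theta_0|Y^n}(\Theta_0|y^n) \sim \m{U}$ conditioned on $Y^n = y^n$. (In the general case where the posterior may carry a discrete part one invokes the full form of part~\eqref{item:match2}, absorbing the extra independent dither; for the baseline continuous-alphabet statement the proper case suffices.) Because this conditional law is $\m{U}$ \emph{regardless} of $y^n$, the r.v.\ $U$ is independent of $Y^n$, establishing property~\eqref{prop:PM_indp}. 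Next, applying part~\eqref{item:match1} of Lemma~\ref{lem:matching_trans} with the independent uniform $U$ yields $X_{n+1} = F_X^{-1}(U) \sim P_X$, which is the required input distribution.

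It remains to check property~\eqref{prop:PM_invert}, that $\Theta_0$ can be a.s.\ recovered from $(U, Y^n)$. Given $y^n$, we must invert $\theta \mapsto F_{\Theta_0|Y^n}(\theta|y^n)$. This map is non-decreasing but in general only right-continuous and possibly flat on intervals of zero posterior mass; nonetheless $\Theta_0$ lies a.s.\ in the support of its own posterior, and on that support the c.d.f.\ is strictly increasing except across atoms, which have already been handled by the dithering device in property~\eqref{item:match2}. Hence $\Theta_0 = F_{\Theta_0|Y^n}^{-1}(U)$ (using the inverse-c.d.f.\ convention of Subsection~\ref{subsec:notations}) holds almost surely, giving the desired recoverability. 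Combining the three steps, the scheme \eqref{eq:gn_func} meets the posterior matching principle.

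The main obstacle is not conceptual but bookkeeping: carefully handling the possible discrete part of the posterior $P_{\Theta_0|Y^n}$ so that the inversion in property~\eqref{prop:PM_invert} is genuinely a.s.\ valid, and making sure the ``for $P_{Y^n}$-a.a.\ $y^n$'' qualifiers propagate correctly through the conditioning. Everything else is a direct application of Lemma~\ref{lem:matching_trans}.
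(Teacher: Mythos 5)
Your main line — take $U=F_{\Theta_0|Y^n}(\Theta_0|Y^n)$, get conditional uniformity from Lemma \ref{lem:matching_trans} claim (\ref{item:match2}), deduce independence of $Y^n$ because the conditional law is the same for every $y^n$, shape by $F_X^{-1}$ via claim (\ref{item:match1}), and obtain recoverability from injectivity of the posterior c.d.f. on the support of the posterior — is exactly the paper's argument. The one place where your proposal does not close is the properness of $P_{\Theta_0|Y^n}(\cdot|y^n)$, and the resolution you sketch is not the right one. The theorem is stated for general input/channel pairs, including discrete ones such as the BSC (the headline application), so you cannot retreat to ``the baseline continuous-alphabet statement.'' Nor can you ``absorb an independent dither'' via the full form of claim (\ref{item:match2}): the scheme (\ref{eq:gn_func}) is a deterministic function of $(\theta,y^n)$, and introducing external randomization would change the transmission functions being analyzed. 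If the posterior genuinely had atoms, the deterministic map would not produce a uniform $U$ and the theorem as stated would fail.

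The missing idea is that the posterior of the \emph{message point} is in fact a.s.\ proper for every $n$, even when $P_X$ is discrete — discreteness of the channel input does not discretize the posterior of $\Theta_0$. The paper proves this by induction: if $P_{\Theta_0|Y^k}(\cdot|y^k)$ failed to be proper on a set of outputs of positive $P_{Y^k}$-probability, then on that set $P_{\Theta_0|Y^k}(\cdot|y^k)\not\ll P_{\Theta_0}=\m{U}$, which forces $I(\Theta_0;Y^k)=\infty$; but the induction hypothesis gives $X_n\sim P_X$ for $n\le k$, and memorylessness yields $I(\Theta_0;Y^k)\le\sum_{n=1}^k I(X_n;Y_n)=k\,I(X;Y)<\infty$, a contradiction. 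With properness established, claim (\ref{item:match2}) applies in its simple form and your atom-handling caveats in the recoverability step become moot; the rest of your proof then goes through as written.
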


\begin{proof}
Assume $P_{\Theta_0|Y^n}(\cdot|y^n)$ is proper for any $y^n\in\m{Y}^n$. Then Lemma \ref{lem:matching_trans} claim (\ref{item:match2}) implies that $F_{\Theta_0|Y^n}\,(\Theta_0|y^n)\sim\m{U}$, and since this holds
for all $y^n$ then $F_{\Theta_0|Y^n}\,(\Theta_0|Y^n)\sim\m{U}$ and is statistically independent of $Y^n$. It is easy to see that for any $y^n$, the mapping $F_{\Theta_0|Y^n}\,(\cdot|y^n)$ is injective when its domain is restricted to $\supp\left(P_{\Theta_0|Y^n}\,(\cdot |y^n)\right)$, thus $\Theta_0$ can be a.s. uniquely recovered from $(F_{\Theta_0|Y^n}\,(\Theta_0|Y^n),Y^n)$. Hence, we conclude that $F_{\Theta_0|Y^n}\,(\Theta_0|y^n)$ satisfies properties (\ref{prop:PM_indp}) and (\ref{prop:PM_invert}) required by the posterior matching principle . By Lemma \ref{lem:matching_trans} claim (\ref{item:match1}), applying the inverse c.d.f. $F_X^{-1}$ merely shapes the uniform distribution into the distribution $P_X$. Therefore, $X_{n+1}$ is $P_X$-distributed and since it is also a deterministic function of $F_{\Theta_0|Y^n}\,(\Theta_0|Y^n)$, the posterior matching principle is satisfied. See Appendix \ref{app:lemmas} to eliminate the properness assumption.
\end{proof}

Following the above, it is now easy to derive a plethora of schemes satisfying the posterior matching principle.
\begin{corollary}\label{cor:eq_schemes}
Let $\{\mu_n\}_{n=1}^\infty$ be a sequences of u.p.f's, and let $\{\varsigma_n:\ui\mapsto\ui\}_{n=1}^\infty$  be a sequence of measurable bijective functions. The transmission scheme given by
\begin{equation*}
g_{n+1}(\theta,y^n) = F_X^{-1}\circ \mu_n \circ P_{\Theta_0|Y^n}\,\left(\varsigma_n^{-1}\left(\left(0,\varsigma_n\left(\theta\right)\right]\right)|y^n\right)
\end{equation*}
satisfies the posterior matching principle for any $n$. In particular, a scheme obtained by fixing $\mu_n=\mu$ and $\varsigma_n$ to be the identity function\footnote{In fact, letting $\varsigma_n$ be any sequence of monotonically increasing functions results in the same scheme. This fact is used in the error probability analysis on Section~\ref{sec:error_memoryless}, to obtain tighter bounds.} for all $n$, is called a {\em $\mu$-variant}. The transmission scheme corresponding to a $\mu$-variant is thus given by
\begin{equation}\label{eq:PM_mu_variants}
g_{n+1}(\theta,y^n) = F_X^{-1}\circ \mu \circ F_{\Theta_0|Y^n}\,(\theta|y^n)
\end{equation}
Finally, the baseline scheme (\ref{eq:gn_func}) is recovered by setting $\mu$ to be the identity function.
\end{corollary}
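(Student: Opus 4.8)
The plan is to verify directly that the proposed scheme satisfies the posterior matching principle, by exhibiting the r.v.\ playing the role of the ``missing information'' and checking properties (\ref{prop:PM_indp}) and (\ref{prop:PM_invert}), in close analogy with the proof of Theorem~\ref{thrm:post_match_scheme}. Fix $n$. The first step is a relabeling: since $\varsigma_n$ is a fixed (i.e., $y^n$-independent) measurable bijection of $\ui$ onto itself, $W_n\dfn\varsigma_n(\Theta_0)$ is again a bona fide message point, and for every $\theta\in\ui$ the set $\varsigma_n^{-1}((0,\varsigma_n(\theta)])$ equals $\{\vartheta\in\ui:\varsigma_n(\vartheta)\leq\varsigma_n(\theta)\}$ (using $\varsigma_n(\vartheta)>0$ throughout). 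Consequently, for every $y^n\in\m{Y}^n$,
\begin{equation*}
P_{\Theta_0|Y^n}\left(\varsigma_n^{-1}((0,\varsigma_n(\theta)])\,|\,y^n\right)=\Prob\left(W_n\leq\varsigma_n(\theta)\,|\,Y^n=y^n\right)=F_{W_n|Y^n}(\varsigma_n(\theta)\,|\,y^n),
\end{equation*}
so that the channel input becomes $X_{n+1}=F_X^{-1}(U_n)$ with $U_n\dfn\mu_n\left(F_{W_n|Y^n}(W_n\,|\,Y^n)\right)$. The task is thereby reduced to showing that $U_n$ satisfies (\ref{prop:PM_indp})--(\ref{prop:PM_invert}).

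From here the argument parallels the proof of Theorem~\ref{thrm:post_match_scheme}. Assuming first (as there) that the relevant conditional laws are proper, Lemma~\ref{lem:matching_trans}(\ref{item:match2}) applied conditionally on $Y^n=y^n$ gives $F_{W_n|Y^n}(W_n\,|\,y^n)\sim\m{U}$; since this holds for every $y^n$, the quantity $F_{W_n|Y^n}(W_n\,|\,Y^n)$ is $\m{U}$-distributed and independent of $Y^n$. Because $\mu_n$ is a fixed u.p.f., $U_n=\mu_n(F_{W_n|Y^n}(W_n\,|\,Y^n))$ is still $\m{U}$-distributed and independent of $Y^n$, which is property (\ref{prop:PM_indp}). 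For property (\ref{prop:PM_invert}), observe that given $Y^n=y^n$ one recovers $F_{W_n|Y^n}(W_n\,|\,y^n)=\mu_n^{-1}(U_n)$ (as $\mu_n$ is bijective), then recovers $W_n$ because $F_{W_n|Y^n}(\cdot\,|\,y^n)$ is a.s.\ injective on $\supp\left(P_{W_n|Y^n}(\cdot|y^n)\right)$ exactly as in the baseline proof, and finally recovers $\Theta_0=\varsigma_n^{-1}(W_n)$. Thus $U_n$ satisfies both properties, and since $X_{n+1}=F_X^{-1}(U_n)$ is a fixed function of $U_n$ that is $P_X$-distributed by Lemma~\ref{lem:matching_trans}(\ref{item:match1}), the posterior matching principle holds. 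The properness assumption is then lifted in the same manner as in Theorem~\ref{thrm:post_match_scheme} (see Appendix~\ref{app:lemmas}), invoking the full statement of Lemma~\ref{lem:matching_trans}(\ref{item:match2}).

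The asserted special cases fall out by substitution. Taking $\varsigma_n$ to be the identity gives $\varsigma_n^{-1}((0,\varsigma_n(\theta)])=(0,\theta]$ and hence $P_{\Theta_0|Y^n}((0,\theta]\,|\,y^n)=F_{\Theta_0|Y^n}(\theta\,|\,y^n)$ (since $\Theta_0\in\ui$ a.s.), which yields the $\mu$-variant \eqref{eq:PM_mu_variants}; more generally, any monotone increasing $\varsigma_n$ leaves $\varsigma_n^{-1}((0,\varsigma_n(\theta)])=(0,\theta]$ unchanged and therefore produces exactly the same scheme. Setting $\mu$ equal to the identity as well recovers the baseline scheme \eqref{eq:gn_func}.

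I expect the main friction to be measure-theoretic bookkeeping rather than anything structural: one must check that the first display holds for a merely measurable bijection $\varsigma_n$ (so that $W_n$, conditioned on $Y^n=y^n$, is genuinely a real r.v.\ whose c.d.f.\ is the $\varsigma_n$-pushforward of the posterior), and one must be careful that applying $\varsigma_n$ need not preserve absolute continuity, so that the removal of the properness assumption cannot simply quote the ``proper'' special case of Lemma~\ref{lem:matching_trans} but must go through its general form, precisely as in the proof of Theorem~\ref{thrm:post_match_scheme}.
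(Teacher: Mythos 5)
Your proposal is correct, and it is essentially the paper's (omitted) argument: the corollary is stated as an immediate consequence of Theorem~\ref{thrm:post_match_scheme}, and your verification --- relabel the message point as $W_n=\varsigma_n(\Theta_0)$, rerun the theorem's proof to get a uniform r.v.\ independent of $Y^n$, and observe that composing with the fixed bijective u.p.f.\ $\mu_n$ preserves both uniformity/independence and a.s.\ recoverability of $\Theta_0$ --- is exactly the intended one. The treatment of the special cases and the caveats about properness and measurability are likewise consistent with how the paper handles them for the baseline scheme.
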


We note that the different schemes described above have a similar flavor. Loosely speaking, the message point is described each time at a resolution determined by the current uncertainty at the receiver, by somehow stretching and redistributing the posterior probability mass so that it matches the desired input distribution (we will later see that the ``stretching rate'' corresponds to the mutual information). This interpretation explains the posterior matching moniker. From this point forward we mostly limit our discussion to the baseline scheme described by (\ref{eq:gn_func}) or (\ref{eq:gn_funcX}), which is henceforth called the \textit{posterior matching scheme}. The \textit{$\mu$-variants} (\ref{eq:PM_mu_variants}) of the scheme will be discussed in more detail on Section~\ref{sec:extensions}-\ref{subsec:mu_var}.

As it turns out, the posterior matching scheme may sometimes admit a simple recursive form.
\begin{theorem}[\it Recursive representation I]\label{thrm:recursion} If $P_{XY}$ is proper, then the posterior matching scheme (\ref{eq:gn_func}) is also given by
\begin{equation}\label{eq:gn_rec}
g_1(\theta) = F_X^{-1}(\theta)\,,\quad g_{n+1}(\theta|y^n) = \Big{(}F_X^{-1}\circ F_{X|Y}(\cdot|y_n)\Big{)}\circ g_n(\theta|y^{n-1})
\end{equation}
Moreover, the corresponding sequence of input/output pairs
$\left\{(X_n,Y_n)\right\}_{n=1}^\infty$ constitute a Markov chain
over a state space $\isupp(X,Y)\subseteq\RealF^2$, with an
invariant distribution $P_{XY}$, and satisfy the recursion rule
\begin{equation}\label{eq:gn_recX}
X_1 = F_X^{-1}(\Theta_0)\,,\quad  X_{n+1} = F_X^{-1}\circ F_{X|Y}(X_n|Y_n)
\end{equation}
\end{theorem}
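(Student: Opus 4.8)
The plan is to establish the recursive form first and then deduce the Markov structure as a consequence. To get \eqref{eq:gn_rec}, I would argue by induction on $n$. The base case $g_1(\theta)=F_X^{-1}(\theta)$ is immediate since before any transmission the posterior of $\Theta_0$ is just $\m{U}$, so $F_{\Theta_0}=\mathrm{id}$ and \eqref{eq:gn_func} collapses to $F_X^{-1}$. For the inductive step the key identity to prove is
\begin{equation*}
F_{\Theta_0|Y^n}(\theta|y^n) = F_{X|Y}(\cdot|y_n)\circ g_n(\theta|y^{n-1}) ,
\end{equation*}
i.e.\ the posterior c.d.f.\ of the message point, after one more channel use, is obtained from the previous one by pushing $g_n$ forward and then applying the scalar inverse-channel c.d.f.; composing with $F_X^{-1}$ on the left then yields \eqref{eq:gn_rec} via the definition \eqref{eq:gn_func}. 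To prove this identity I would use that $X_n=g_n(\Theta_0|Y^{n-1})$ is, conditioned on $Y^{n-1}=y^{n-1}$, a monotone (hence measure-isomorphic) reparametrization of $\Theta_0$ by Lemma~\ref{lem:matching_trans} and the properness of $P_{XY}$ (which guarantees all the relevant conditional c.d.f.'s are continuous and strictly increasing on the supports, so the inverses behave). Concretely, conditioning on $Y^{n-1}=y^{n-1}$: $\Theta_0$ and $Y^n$ are related through the bijection $g_n(\cdot|y^{n-1})$ and the one-shot channel $P_{Y|X}$ applied to $X_n$; hence $F_{\Theta_0|Y^n}(\theta|y^n)=\Prob(\Theta_0\le\theta\mid Y^n=y^n)=\Prob(X_n\le g_n(\theta|y^{n-1})\mid Y^n=y^n)=\Prob(X_n\le g_n(\theta|y^{n-1})\mid Y_n=y_n)$, the last step because given $Y^{n-1}=y^{n-1}$ the pair $(X_n,Y_n)$ is a memoryless channel use with $X_n\sim P_X$ independent of $Y^{n-1}$ (this is exactly the posterior matching property established in Theorem~\ref{thrm:post_match_scheme}), and this last probability is $F_{X|Y}(g_n(\theta|y^{n-1})|y_n)$.

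Once \eqref{eq:gn_rec} is in hand, \eqref{eq:gn_recX} follows by evaluating at $\theta=\Theta_0$ and using $X_n=g_n(\Theta_0|Y^{n-1})$, which gives $X_{n+1}=F_X^{-1}\circ F_{X|Y}(X_n|Y_n)$ directly. For the Markov chain claim: from \eqref{eq:gn_recX} we see $X_{n+1}$ is a deterministic function of $(X_n,Y_n)$, and $Y_{n+1}$ depends on the past only through $X_{n+1}$ by the memorylessness \eqref{def:memoryless_channel}; hence $(X_{n+1},Y_{n+1})$ depends on $(X^n,Y^n)$ only through $(X_n,Y_n)$, so $\{(X_n,Y_n)\}$ is a Markov chain. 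The state space is $\isupp(X,Y)$ because $X_n\sim P_X$ so $X_n\in\supp(X)$ a.s., $Y_n\sim P_Y$ from the induced output distribution, and properness together with the recursion keeps the chain in the interior of the support (one should check the map $x\mapsto F_X^{-1}\circ F_{X|Y}(x|y)$ sends $\supp(X)$ into itself and that the joint law stays in $\isupp(X,Y)$). That $P_{XY}$ is invariant is the cleanest part: if $(X_n,Y_n)\sim P_{XY}$ then $X_n\sim P_X$, so $F_{X|Y}(X_n|Y_n)\sim\m{U}$ independent of $Y_n$ by Lemma~\ref{lem:matching_trans}(\ref{item:match2}) (properness), hence $X_{n+1}=F_X^{-1}(\text{uniform})\sim P_X$ by Lemma~\ref{lem:matching_trans}(\ref{item:match1}), and moreover $X_{n+1}$ is independent of $Y_n$; feeding $X_{n+1}\sim P_X$ into the channel produces $Y_{n+1}$ with $(X_{n+1},Y_{n+1})\sim P_{XY}$.

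I expect the main obstacle to be the rigorous handling of the c.d.f./inverse-c.d.f.\ manipulations in the inductive step, in particular justifying that the composition identity for the posterior c.d.f.\ holds \emph{pointwise} for ($P_{Y^n}$-almost) all $y^n$ rather than merely in distribution, and keeping track of where properness of $P_{XY}$ is actually needed (it is used to ensure $F_{X|Y}(\cdot|y)$ and $F_{\Theta_0|Y^n}(\cdot|y^n)$ are genuine continuous c.d.f.'s so that $F_X^{-1}$ inverts them cleanly, and to invoke Lemma~\ref{lem:matching_trans}(\ref{item:match2}) in its ``proper'' form without the $\Theta$-correction term). A secondary technical point is verifying the claimed state space: one must check that $(X_1,Y_1)=(F_X^{-1}(\Theta_0),Y_1)$ already lies in $\isupp(X,Y)$ and that the recursion preserves this, which requires a small argument about the interior of the support being invariant under the maps $x\mapsto F_X^{-1}\circ F_{X|Y}(x|y)$. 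Everything else is a straightforward composition of the two parts of Lemma~\ref{lem:matching_trans} with the memorylessness assumption.
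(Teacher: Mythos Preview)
Your proposal is correct and follows essentially the same route as the paper: the same key identity $F_{\Theta_0|Y^n}(\theta|y^n)=F_{X|Y}(g_n(\theta|y^{n-1})|y_n)$ is derived via monotonicity of $g_n(\cdot|y^{n-1})$, memorylessness, and the independence $X_n\perp Y^{n-1}$, and the Markov/invariance parts match as well. The one place you are working harder than necessary is the state space: the paper explicitly notes that taking $\isupp(X,Y)$ rather than $\supp(X,Y)$ is artificial (the boundary has $P_{XY}$-measure zero under properness) and is done only for later convenience, so no separate invariance-of-the-interior argument is required here.
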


\begin{proof}
The initialization $g_1(\theta) = F_X^{-1}(\theta)$ results
immediately from (\ref{eq:gn_func}), recalling that $\Theta_0$ is
uniform over the unit interval. To prove the recursion relation,
we notice that since $P_{XY}$ is proper then the transmission
functions $g_n(\theta,y^{n-1})$ are continuous when restricted to the support of
the posterior, and strictly increasing in $\theta$ for any fixed
$y^{n-1}$. Therefore, we have the following set of equalities:
{\allowdisplaybreaks
\begin{align}\label{eq:rec_derivation}
\nonumber F_{\sst{\Theta_0}|\sst{Y}^n}(\theta|\,y^n) &=
\Prob(\Theta_0 \leq \theta|\,Y^n = y^n) \stackrel{(\rm a)}{=} \Prob(g_n(\Theta_0,y^{n-1})\leq g_n(\theta,y^{n-1})|\,Y^n=y^n)
\\
\nonumber & = \Prob(X_n\leq g_n(\theta,y^{n-1})|\,Y^n=y^n) \stackrel{(\rm b)}{=} \Prob(X_n\leq
g_n(\theta,y^{n-1})|\,Y_n=y_n)
\\
&= F_{X|Y}(g_n(\theta,y^{n-1})|\,y_n)
\end{align}}
where in (a) we used the continuity and monotonicity of
the transmission functions, and in (b) we used the
facts that the channel is memoryless and that by construction
$X_n$ is statistically independent of $Y^{n-1}$, which also imply
that $Y^n$ is an i.i.d. sequence. The recursive rule
(\ref{eq:gn_rec}) now results immediately by combining
(\ref{eq:gn_func}) and (\ref{eq:rec_derivation}).

Now, using (\ref{eq:gn_funcX}) we obtain
\begin{equation*}
X_{n+1} = F_X^{-1}\circ F_{\sst{\Theta_0}|\sst{Y}^n}(\Theta_0|\,Y^n) = F_X^{-1}\circ
F_{X|Y}(g_n(\Theta_0,Y^{n-1})|\,Y_n) = F_X^{-1}\circ F_{X|Y}(X_n|\,Y_n)
\end{equation*}
yielding relation (\ref{eq:gn_recX}). Since $Y_n$ is generated from $X_n$ via a memoryless channel,
the Markovity of $\left\{(X_n,Y_n)\right\}_{n=1}^\infty$ is
established. The distribution $P_{XY}$ is invariant since by
construction $(X_n,Y_n)\sim P_{XY}$ implies $X_{n+1}\sim P_X$, and
then $Y_{n+1}$ is generated via the memoryless channel $P_{Y|X}$.
Taking the state space to be $\isupp(X,Y)$ is artificial here
since $P_{XY}\big{(}\supp(X,Y)\setminus \isupp(X,Y)\big{)} = 0$,
and is done for reasons of mathematical convenience to avoid
having trivial invariant distributions (this is not true when
$P_{XY}$ is not proper). Note that the chain \textit{emulates} the
``correct'' input marginal and the ``correct'' joint (i.i.d.)
output distribution; this interpretation is further discussed in Section~\ref{sec:discussion}.
\end{proof}

In the sequel, we refer to the function $F_X^{-1}\circ F_{X|Y}$
appearing in the recursive representation as the \textit{posterior
matching kernel}. Let us now turn to consider several examples,
which are frequently revisited throughout the paper.
\begin{example}[\it AWGN channel]\label{ex:AWGN}
Let $P_{Y|X}$ be an AWGN channel with noise variance $\rm N$, and
let us set a Gaussian input distribution $X\sim\mathcal{N}(0,{\rm
P})$, which is capacity achieving for an input power constraint
$\rm P$. We now derive the posterior matching scheme in this case,
and show it reduces to the Schalkwijk-Kailath scheme. Let
$\SNR\dfn\frac{\rm P}{\rm N}$. Standard manipulations yield the
following posterior distribution
\begin{equation}\label{eq:gauss_post}
X|Y=y\;\sim\;\m{N}\left(\frac{\SNR}{1+\SNR}\cdot
y\;,\;\frac{1}{1+\SNR}\cdot P\right)
\end{equation}
The joint p.d.f. $f_{XY}$ is Gaussian and hence proper, so the
recursive representation of Theorem \ref{thrm:recursion} is valid.
By definition, the corresponding posterior matching kernel satisfies
\begin{equation}\label{eq:gauss1}
F_X^{-1}\circ F_{X|Y}(x|y) = \{z\,:\, F_X(z) = F_{X|Y}(x|y)\}
\end{equation}
However, from Gaussianity and (\ref{eq:gauss_post}) we know that
\begin{equation}\label{eq:gauss2}
F_{X|Y}(x|y) =
F_X\left(\sqrt{1+\SNR}\left(x-\frac{\SNR}{1+\SNR}\cdot
y\right)\right)
\end{equation}
Combining (\ref{eq:gauss1}) and (\ref{eq:gauss2}), the posterior
matching kernel for the AWGN channel setting is given by
\begin{equation}\label{eq:AWGN_kernel}
F_X^{-1}\circ F_{X|Y}(x|y)
=\sqrt{1+\SNR}\left(x-\frac{\SNR}{1+\SNR}\cdot y\right)
\end{equation}
and hence the posterior matching scheme is given by
\begin{equation}\label{eq:AWGN_scheme}
X_1=F_X^{-1}(\Theta_0)\,,\;\; X_{n+1} =
\sqrt{1+\SNR}\left(X_n-\frac{\SNR}{1+\SNR}\,Y_n\right)
\end{equation}
From the above we see that at time $n+1$, the transmitter sends
the error term pertaining to the MMSE estimate of $X_n$ from
$Y_n$, scaled to match the permissible input power $P$. In fact,
it can be verified either by directly or using the equivalence
stated in Theorem \ref{thrm:recursion} that $X_{n+1}$ is the scaled
MMSE term of $X_n$ given the entire output sequence $Y^n$.
Therefore, the posterior matching scheme in this case is an
infinite-horizon, variable-rate variant of the Schalkwijk-Kailath
scheme. This variant is in fact even somewhat simpler than the
original scheme~\cite{Schalkwijk2}, since the initial matching step of the random message
point makes transmission start at a steady-state. The fundamental difference between the posterior matching principle and the Schalkwijk-Kailath ``parameter estimation'' approach in a non-Gaussian setting, is now evident. According to Schalkwijk-Kailath one should transmit a scaled linear MMSE term given past observations, which is \textit{uncorrelated} with these observations but \textit{not independent} of them as dictated by
the posterior matching principle; the two notions thus coincide only in the AWGN case. In fact, it can be shown that following the Schalkwijk-Kailath approach when the additive noise is not Gaussian results in achieving only the corresponding "Gaussian equivalent" capacity, see Example \ref{ex:AWGN_mis}.

\end{example}

\begin{example}[\it BSC]\label{ex:BSC}
Let $P_{Y|X}$ be a BSC with crossover probability $p$, and set a capacity achieving input distribution $X\sim{\rm Bernoulli}\left(\frac{1}{2}\right)$,
i.e., $f_X(x)=\frac{1}{2}\left(\delta(x)+\delta(x-1)\right)$. We now derive the posterior matching scheme for this setting, and
show it reduces to the Horstein scheme~\cite{horstein}. The conditions of Theorem \ref{thrm:recursion} are not satisfied since
the input distribution is discrete, and we therefore use the original non-recursive representation (\ref{eq:gn_func}) for now. It is
easy to see that the matching step $F_X^{-1}$ acts as a quantizer above/below $\frac{1}{2}$, and so we get
\begin{equation*}
X_{n+1} = F_X^{-1}\circ F_{\sst\Theta_0|Y^n}(\Theta_0|Y^n) = \left
\{\begin{array}{cc} 0 & \Theta_0<{\rm
median}\{f_{\sst\Theta_0|Y^n}(\theta|Y^n)\}
\\ 1 & o.w.
\end{array} \right.
\end{equation*}
which is precisely the Horstein scheme. The posterior matching
principle is evident in this case, since slicing the posterior
distribution at its median results in an input $X_{n+1}\sim{\rm Bernoulli}\left(\frac{1}{2}\right)$ given any possible output $Y^n=y^n$, and is
hence independent of $Y^n$ and ${\rm Bernoulli}(\frac{1}{2})$-distributed. We return to the BSC example later in this section, after we develop the necessary tools to provide an alternative (and more useful) recursive representation for the Horstein scheme.
\end{example}

\begin{example}[\it Uniform Input/Noise]\label{ex:uniform}
Let $ P_{Y|X}$ be an additive noise channel with noise uniformly
distributed over the unit interval, and set the input
$X\sim\m{U}$, i.e., uniform over the unit interval as well. Let us
derive the posterior matching scheme in this case. It is easy to
verify that the inverse channel's p.d.f. is given by
\begin{equation*}
f_{X|Y}(x|y) = \left\{\begin{array}{lc} y^{-1}\ind_{(0,y)}(x) &
y\in(0,1]
\\ (2-y)^{-1}\ind_{(y-1,1)}(x) & y\in(1,2) \end{array}\right.
\end{equation*}
Since the conditions of Theorem \ref{thrm:recursion} are
satisfied, we can use the recursive representation. We note that
since the input distribution is $\m{U}$, the matching step is
trivial and the posterior matching kernel is given by
\begin{equation}\label{eq:uniform_kernel}
F_X^{-1}\circ F_{X|Y}(x|y) = F_{X|Y}(x|y) = \left\{\begin{array}{lc} \frac{x}{y}\cdot\ind_{(0,y)}(x)+\ind_{[y,\infty)}(x) & y\in(0,1]
\\ \frac{x-y+1}{2-y}\cdot\ind_{(y-1,1)}(x)+\ind_{[1,\infty)}(x) &
y\in(1,2)
\end{array}\right.
\end{equation}
and therefore the posterior matching scheme is given by
\begin{equation}\label{eq:uniform_scheme}
X_1 = \Theta_0\,,\; X_{n+1} =
\frac{X_n}{Y_n}\cdot\ind_{(0,1]}(Y_n) +
\frac{X_n-Y_n+1}{2-Y_n}\cdot\ind_{(1,2)}(Y_n)
\end{equation}

The above has in fact a very simple interpretation. The desired
input distribution is uniform, so we start by transmitting the
message point $X_1=\Theta_0$. Then, given $Y_1$ we determine the
range of inputs that could have generated this output value, and
find an affine transformation that stretches this range to fill
the entire unit interval. Applying this transformation to $X_1$
generates $X_2$. We now determine the range of possible inputs
given $Y_2$, and apply the corresponding affine transformation to
$X_2$, and so on. This is intuitively appealing since what we do
in each iteration is just \textit{zoom-in} on the remaining
uncertainty region for $\Theta_0$. Since the posterior
distribution is always uniform, this zooming-in is linear.

The posterior distribution induced by this transmission strategy
is uniform in an ever shrinking sequence of intervals. Therefore,
a zero-error variable-rate decoding rule would be to simply decode
at time $n$ the (random) maximal interval $J_n$ within which the posterior
is uniform. The size of that interval is
\begin{equation*}
|J_n| = \prod_{k\in J}Y_k \prod_{k\not\in J}(2-Y_k)
\end{equation*}
where $J = \left\{k\,:\,1\leq k\leq n\,,Y_k<1\right\}$. Denoting the
channel noise sequence by $Z_n\sim P_Z$, the corresponding rate is
{\allowdisplaybreaks
\begin{align*}
R_n &= -\frac{1}{n}\,\log|J_n| = \frac{1}{n}\sum_{k\in J}
\log{\frac{1}{Y_k}} + \frac{1}{n}\sum_{k\not\in
J}\log\frac{1}{2-Y_k} = \frac{1}{n}\sum_{k=1}^n \log \frac{f_{X|Y}(X_k|Y_k)}{f_X(X_k)}
\\
&= \frac{1}{n}\sum_{k=1}^n \log \frac{ f_{Z}(Z_k)}{f_Y(Y_k)}
\quad\tendsto{n}{\infty} \quad \Expt\log{f_Z(Z)}-\Expt\log{f_Y(Y)} = I(X;Y) = \frac{1}{2}\log{e} \qquad \text{a.s.}
\end{align*}}
where we have used the SLLN for the i.i.d. sequences $Z^n,Y^n$.
Therefore, in this simple case we were able to directly show that
the posterior matching scheme, in conjunction with a simple
variable rate decoding rule, achieves the mutual information with
zero error probability. In the sequel, the achievability of the
mutual information and the tradeoff between rate, error
probability and transmission period obtained by the posterior
matching scheme are derived for a general setting. We then revisit
this example and provide the same results as above from this more
general viewpoint.
\end{example}

\begin{example}[\it Exponential Input/Noise]\label{ex:exp}
Consider an additive noise channel $ P_{Y|X}$ with $\sim{\rm
Exponential(1)}$ noise, and set the input $X\sim{\rm
Exponential(1)}$ as well. This selection is not claimed to be capacity achieving under any reasonable input constraints, yet it is instructive to study due to the simplicity of the resulting scheme. We will return to the exponential noise channel in Example \ref{ex:exp_noise_mean} after developing the necessary tools, and analyze it using the capacity achieving distribution under an input mean constraint.

It is easy to verify that for the above simple selection, the input given the output is uniformly distributed, i.e., the inverse channel p.d.f./c.d.f. are given by
\begin{equation*}
f_{X|Y}(x|y) = \frac{1}{y}\cdot\ind_{(0,y)}(x)\,,\quad F_{X|Y}(x|y) = \frac{x}{y}\cdot\ind_{(0,y)}(x) +
\ind_{[y,\infty)}(x)
\end{equation*}
The input's inverse c.d.f. is given by
\begin{equation*}
F_X^{-1}(s) = \ln\left(\frac{1}{1-s}\right)
\end{equation*}
Therefore, the posterior matching kernel is given by
\begin{equation}\label{eq:exp_kernel}
F_X^{-1}\circ F_{X|Y}(x|y) = \ln\left(\frac{y}{y-x}\right)
\end{equation}
and the posterior matching scheme in this case is simply given by
\begin{equation}\label{eq:exp_scheme}
X_1 = \ln\left(\frac{1}{1-\Theta_0}\right)\,,\quad X_{n+1} =
\ln\left(\frac{Y_n}{Y_n-X_n}\right)
\end{equation}
\end{example}
\vspace{0.5cm}
\begin{figure}
  \begin{center}
    \leavevmode
    \epsfig{file=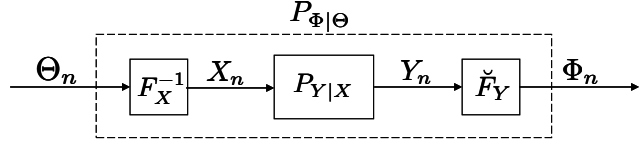, scale = 0.5}
    \caption{The normalized channel $P_{\Phi|\Theta}$}
    \label{fig:norm_chan}
  \end{center}
\end{figure}

\subsection{The Normalized Channel}\label{subsec:norm_chan}

The recursive representation provided in Theorem \ref{thrm:recursion}
is inapplicable in many interesting cases, including DMCs in
particular. In order to treat discrete, continuous and mixed
alphabet inputs/channels within a common framework, we define for any
input/channel pair $(P_X, P_{Y|X})$ a corresponding
\textit{normalized channel} $P_{\Phi|\Theta}$ with $\ui$ as a
common input/output alphabet, and a uniform input distribution
$\Theta\sim\m{U}$. The normalized channel is obtained by viewing
the matching operator $F^{-1}_X(\cdot)$ as part of the original
channel, and applying the output c.d.f. operator $F_Y(\cdot)$ to
the channel's output, with the technical exception that whenever
$F_Y(\cdot)$ has a jump discontinuity the output is randomly
selected uniformly over the jump span.\footnote{The output mapping is of a lesser importance, and is introduced mainly to provide a common framework.} This is depicted in Figure
\ref{fig:norm_chan}, where $\breve{F}_Y(\cdot)$ stands for the
aforementioned possibly random mapping. This construction is most
simply formalized by
\begin{equation}\label{def:formal_norm_chan}
P_{Y|\Theta}(\cdot|\theta) =
P_{Y|X}(\cdot|F_X^{-1}(\theta)),\;\;\Phi = F_Y(Y) -P_Y(Y)\cdot
\Lambda
\end{equation}
where $\Theta\sim\m{U}$, and $\Lambda\sim\m{U}$ is statistically independent of $(\Theta,Y)$.

\begin{lemma}[\it Normalized Channel Properties]\label{lem:norm_chan_prop} Let
$(P_\Theta,P_{\Phi|\Theta})$ be the normalized input/channel pair
corresponding to the pair $(P_X, P_{Y|X})$. The following
properties are satisfied:
\begin{enumerate}[(i)]
\item $\Phi\sim\m{U}$, i.e., $P_{\Phi|\Theta}$ preserves the uniform distribution over the unit interval. \label{claim:norm1}

\item The mutual information is preserved, i.e., \label{claim:norm2}
\begin{equation*}
I(\Theta;\Phi)=I(X;Y)
\end{equation*}

\item The joint distribution $P_{\Theta\Phi}$ is proper.\label{claim:norm3}

\item The normalized kernel $F_{\Theta|\Phi}(\theta|\phi)$ is continuous in $\theta$ for $P_\Phi$-a.a. $\phi\in\ui$. \label{claim:norm4}
\end{enumerate}
\end{lemma}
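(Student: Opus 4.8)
The plan is to verify the four properties in sequence, each following fairly directly from Lemma~\ref{lem:matching_trans} and the definitions in~(\ref{def:formal_norm_chan}), with the mutual information identity being the one requiring a genuine argument.

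\textbf{Property (\ref{claim:norm1}).} First I would observe that since $\Theta\sim\m{U}$, claim~(\ref{item:match1}) of Lemma~\ref{lem:matching_trans} gives $X = F_X^{-1}(\Theta)\sim P_X$, hence $Y\sim P_Y$. Now apply claim~(\ref{item:match2}) of the same Lemma with the roles played by $Y$ (in place of $X$) and $\Lambda$ (in place of $\Theta$), which are independent by construction: $F_Y(Y)-\Lambda\cdot P_Y(Y)\sim\m{U}$. That is exactly $\Phi\sim\m{U}$. Note this also covers the degenerate jump-span case automatically.

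\textbf{Property (\ref{claim:norm3}) and (\ref{claim:norm2}).} I would do properness before the mutual information, since the former is used implicitly. Properness of $P_{\Theta\Phi}$ means every scalar conditional is a.s.\ proper: $P_{\Phi|\Theta}(\cdot|\theta)$ is proper for $P_\Theta$-a.a.\ $\theta$, and $P_{\Theta|\Phi}(\cdot|\phi)$ is proper for $P_\Phi$-a.a.\ $\phi$. The first holds because, conditioned on $\Theta=\theta$, the output $\Phi = F_Y(Y)-P_Y(Y)\Lambda$ is a mixture over $Y\sim P_{Y|X}(\cdot|F_X^{-1}(\theta))$ of uniform pieces spread over jump spans (for mass points of $Y$) together with the absolutely-continuous image $F_Y(Y)$ of the continuous part — so no atoms survive; the randomization $\breve F_Y$ is precisely what kills the discrete part. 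The second (properness in $\theta$ given $\phi$) is the reverse direction and follows the same way since the construction is symmetric in how it treats the input matching $F_X^{-1}$. For the mutual information: the output map $Y\mapsto\Phi$ is invertible in distribution in the sense that $I(\Theta;\Phi)=I(\Theta;Y)$ — applying a fixed (possibly randomized-independently) function that is injective on $\supp(Y)$ cannot change mutual information, and $\breve F_Y$ restricted to $\supp(Y)$ is injective up to the independent randomization $\Lambda$, which carries no information about $\Theta$. Then $I(\Theta;Y)=I(X;Y)$ because $X=F_X^{-1}(\Theta)$ is a deterministic bijection between $\supp(\Theta)$ (up to null sets) and $\supp(X)$, so $\Theta\leftrightarrow X\leftrightarrow Y$ with the first link invertible. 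Chaining the two equalities gives claim~(\ref{claim:norm2}).

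\textbf{Property (\ref{claim:norm4}).} This is really a restatement of properness from claim~(\ref{claim:norm3}): $F_{\Theta|\Phi}(\cdot|\phi)$ is continuous in $\theta$ exactly when $P_{\Theta|\Phi}(\cdot|\phi)$ has no atoms, i.e.\ is proper, which holds for $P_\Phi$-a.a.\ $\phi$ by claim~(\ref{claim:norm3}). So I would simply cite claim~(\ref{claim:norm3}) here.

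\textbf{Main obstacle.} The delicate point is the mutual-information invariance under the output randomization $\breve F_Y$: one must argue carefully that introducing the auxiliary uniform $\Lambda$, independent of everything, together with the deterministic $F_Y$, gives a map that is information-lossless with respect to $\Theta$. The cleanest route is to show $\Theta\leftrightarrow Y\leftrightarrow\Phi$ and $\Theta\leftrightarrow\Phi\leftrightarrow Y$ are both Markov chains — the first because $\Phi$ is a function of $(Y,\Lambda)$ with $\Lambda\perp(\Theta,Y)$, the second because $Y$ can be recovered a.s.\ from $\Phi$ alone (the map $y\mapsto F_Y(y)$ is injective on $\supp(Y)$, and the jump-span randomization is reversible in the sense that $Y$ is a deterministic function of $\Phi$ on a probability-one set) — so by the data-processing inequality applied both ways $I(\Theta;\Phi)=I(\Theta;Y)$. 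The rest is bookkeeping with Lemma~\ref{lem:matching_trans}.
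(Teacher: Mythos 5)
Your treatment of claims (i), (ii) and (iv) is essentially sound: (i) is exactly the paper's argument, (ii) via data processing applied in both directions matches the paper's (terse) proof, and (iv) is indeed read off from (iii). One slip in (ii): $F_X^{-1}$ is \emph{not} a bijection up to null sets when $P_X$ has atoms (e.g. the BSC, where it collapses each of two intervals to a point), so for $I(\Theta;Y)=I(X;Y)$ you should instead argue that $X$ is a deterministic function of $\Theta$ (giving $I(X;Y)\leq I(\Theta;Y)$) and that $\Theta - X - Y$ is Markov (giving the reverse inequality); the invertibility you invoke is the one direction that can fail.

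The genuine gap is in claim (iii). Your assertion that ``no atoms survive; the randomization $\breve{F}_Y$ is precisely what kills the discrete part'' is false as stated: the jump span assigned to a mass point $y_1$ of the \emph{conditional} law $P_{Y|X}(\cdot|x)$ has length $P_Y(y_1)$, i.e.\ the mass under the \emph{marginal}. If $y_1$ is an atom of $P_{Y|X}(\cdot|x)$ but not of $P_Y$, the span is empty and $\Phi$ inherits an atom at $F_Y(y_1)$; likewise the continuous part of $P_{Y|X}(\cdot|x)$ can be carried by $F_Y$ onto a Lebesgue-null set if it is concentrated where $F_Y$ is flat, so calling $F_Y(Y)$ ``absolutely continuous'' is an unproved assertion. (The identity channel with a continuous input is the extreme case: every conditional output law becomes a point mass after normalization.) The paper's Appendix proof closes exactly this hole by invoking $I(X;Y)<\infty$ --- part of the definition of an input/channel pair --- to conclude $P_{Y|X}(\cdot|x)\ll P_Y$ for $P_X$-a.a.\ $x$, which guarantees that every conditional atom sits on a genuine jump span of $F_Y$ and that the continuous part admits a density after applying $F_Y$. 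Your proposal never uses finiteness of the mutual information, so as written it would ``prove'' the claim for pairs where it fails. Separately, your appeal to symmetry for properness of $P_{\Theta|\Phi}(\cdot|\phi)$ does not work: the input is matched by a deterministic inverse c.d.f.\ while the output is randomized, so the two directions are treated quite differently. The correct step is that $\Theta\sim\m{U}$ proper together with $P_{\Phi|\Theta}(\cdot|\theta)$ a.s.\ proper makes $P_{\Theta\Phi}$ absolutely continuous on $\ui^2$, from which the reverse conditionals are a.s.\ proper.
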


\begin{proof}
\begin{enumerate}[(i)]
\item By Lemma \ref{lem:matching_trans} claim (\ref{item:match1})
we have $F_X^{-1}(\Theta)\sim P_X$, and so $Y\sim P_Y$ in
(\ref{def:formal_norm_chan}). The result now follows from Lemma
\ref{lem:matching_trans} claim (\ref{item:match2}).

\item An easy exercise using the relations in Figure
\ref{fig:norm_chan}, and noting that $X,Y$ are always uniquely
recoverable from $\Theta,\Phi$ respectively.

\item See Appendix \ref{app:lemmas}.

\item Follows easily from (\ref{claim:norm3}).
\end{enumerate}

\end{proof}

The posterior matching scheme over the normalized channel with a
uniform input, is given by
\begin{equation*}
\bar{g}_{n+1}(\theta,\phi^n) = F_{\Theta_0|\Phi^n}\,(\theta|\phi^n)
\end{equation*}
The properties of the normalized channel allows for a unified
recursive representation of the above scheme via the inverse
normalized channel $P_{\Theta|\Phi}$ corresponding to $( P_\Theta,
P_{\Phi|\Theta}) = (\m{U}, P_{\Phi|\Theta})$, i.e., in terms of the \textit{normalized posterior matching kernel} $F_{\Theta|\Phi}$.
\begin{theorem}[\it Recursive representation II]\label{thrm:recursion2}
The posterior matching scheme for the normalized channel is given by the recursive relation:
\begin{equation}\label{eq:gn_rec_norm}
\bar{g}_1(\theta) = \theta ,\;\; \bar{g}_{n+1}(\theta|\phi^n) = F_{\Theta|\Phi}(\cdot|\phi_n)\circ \bar{g}_n(\theta|\phi^{n-1})
\end{equation}
The corresponding sequence of input/output pairs $\left\{(\Theta_n,\Phi_n)\right\}_{n=1}^\infty$ constitutes a Markov chain over a state space
$\isupp(\theta,\Phi)\subseteq\ui^2$, with an invariant distribution $ P_{\Theta\Phi}$, and satisfy the recursion rule
\begin{equation}\label{eq:gn_rec_normX}
\Theta_1 = \Theta_0\,,\quad \Theta_{n+1} = F_{\Theta|\Phi} (\Theta_n|\Phi_n)
\end{equation}
Furthermore, (\ref{eq:gn_rec_normX}) is equivalent to the posterior matching scheme (\ref{eq:gn_funcX}) in the sense that the distribution of the sequence $\left\{F_X^{-1}(\Theta_n),F_Y^{-1}(\Phi_n)\right\}_{n=1}^\infty$ coincides with the distribution of the sequence $\{(X_n,Y_n)\}_{n=1}^\infty$.
\end{theorem}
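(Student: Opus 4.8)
The plan is to establish the recursive representation II by specializing the already-proven Recursive representation I (Theorem~\ref{thrm:recursion}) to the normalized input/channel pair. The key observation is that $(P_\Theta, P_{\Phi|\Theta}) = (\m{U}, P_{\Phi|\Theta})$ is itself a legitimate input/channel pair, and by Lemma~\ref{lem:norm_chan_prop}(\ref{claim:norm3}) the joint distribution $P_{\Theta\Phi}$ is proper, so the hypothesis of Theorem~\ref{thrm:recursion} is met for this pair. Applying (\ref{eq:gn_rec}) with $X$ replaced by $\Theta$ and $Y$ replaced by $\Phi$, and noting that $F_\Theta^{-1}$ is the identity on $\ui$ since $\Theta\sim\m{U}$, we immediately get $\bar g_1(\theta)=\theta$ and $\bar g_{n+1}(\theta|\phi^n) = F_{\Theta|\Phi}(\cdot|\phi_n)\circ \bar g_n(\theta|\phi^{n-1})$, which is (\ref{eq:gn_rec_norm}). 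Likewise (\ref{eq:gn_recX}) specializes to $\Theta_1=\Theta_0$, $\Theta_{n+1}=F_{\Theta|\Phi}(\Theta_n|\Phi_n)$, giving (\ref{eq:gn_rec_normX}), and the Markov chain / invariant distribution claims follow verbatim from the corresponding part of Theorem~\ref{thrm:recursion}. One subtlety to check: Theorem~\ref{thrm:recursion} is stated for the scheme (\ref{eq:gn_func}) applied to a genuine memoryless channel, and here the ``channel'' $P_{\Phi|\Theta}$ involves the extra independent randomization $\Lambda$ in (\ref{def:formal_norm_chan}); but $\Lambda$ is independent of everything and merely serves to make $\Phi$ proper, so $P_{\Phi|\Theta}$ is still a bona fide memoryless channel in the sense of (\ref{def:memoryless_channel}) and the argument goes through unchanged.

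The remaining and genuinely new content is the final \emph{equivalence} assertion: that the law of $\{F_X^{-1}(\Theta_n), F_Y^{-1}(\Phi_n)\}_{n\ge1}$ coincides with the law of $\{(X_n,Y_n)\}_{n\ge1}$ produced by the original posterior matching scheme (\ref{eq:gn_funcX}). Here I would argue by induction on $n$, carrying the joint inductive hypothesis that $(X^n,Y^n)$ and $(F_X^{-1}(\Theta^n), F_Y^{-1}(\Phi^n))$ have the same distribution, \emph{and} moreover that $\Theta_0$ (equivalently the driving message point) together with $\Phi^n$ reconstructs $Y^n$ in the same way that $\Theta_0$ and $Y^n$ relate in the original scheme. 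The base case $n=1$ is Lemma~\ref{lem:matching_trans}: $X_1 = F_X^{-1}(\Theta_0) = F_X^{-1}(\Theta_1)$ by definition, and then $Y_1$ given $X_1$ is drawn from $P_{Y|X}(\cdot|X_1)$, while $\Phi_1 = F_Y(Y_1) - P_Y(Y_1)\Lambda_1$ so that $F_Y^{-1}(\Phi_1)$ recovers $Y_1$ a.s.\ (again by Lemma~\ref{lem:matching_trans}\eqref{item:match2} applied conditionally, since $Y_1\sim P_Y$). For the inductive step, I would use that in both representations the next input is a deterministic function of the message point and the past outputs: in the normalized chain $\Theta_{n+1} = F_{\Theta|\Phi}(\Theta_n|\Phi_n) = \bar g_{n+1}(\Theta_0|\Phi^n) = F_{\Theta_0|\Phi^n}(\Theta_0|\Phi^n)$ by (\ref{eq:gn_rec_norm}) and the definition of $\bar g_{n+1}$, and by the normalized analogue of the derivation (\ref{eq:rec_derivation}) this posterior c.d.f.\ matches $F_{\Theta_0|Y^n}$ under the coupling $Y^k = F_Y^{-1}(\Phi_k)$. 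Then $F_X^{-1}(\Theta_{n+1}) = F_X^{-1}\circ F_{\Theta_0|Y^n}(\Theta_0|Y^n) = X_{n+1}$ in distribution, and feeding this through the same channel $P_{Y|X}$ plus the independent $\breve F_Y$ mapping preserves the joint law at stage $n+1$.

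The main obstacle is the bookkeeping in this equivalence proof: one must be careful that the randomization $\Lambda$ does not leak information, i.e.\ that $\Theta_{n+1}$ (which depends on $\Phi^n$, hence on $\Lambda^n$) still has the right conditional law given $\Phi^n$, and that $F_Y^{-1}(\Phi_k) = Y_k$ holds \emph{jointly} and almost surely rather than just marginally. The clean way to handle this is to exhibit an explicit coupling: start from a single message point $\Theta_0\sim\m{U}$ and an i.i.d.\ sequence $\{\Lambda_k\}$ independent of everything, run the original scheme to get $\{(X_k,Y_k)\}$, set $\Phi_k \dfn F_Y(Y_k) - P_Y(Y_k)\Lambda_k$ and $\Theta_k \dfn F_{\Theta_0|Y^{k-1}}(\Theta_0|Y^{k-1})$; then verify by Lemma~\ref{lem:matching_trans} that $F_X^{-1}(\Theta_k) = X_k$ and $F_Y^{-1}(\Phi_k) = Y_k$ a.s., and that $\{(\Theta_k,\Phi_k)\}$ obeys the recursion (\ref{eq:gn_rec_normX}) with $(\Theta_k,\Phi_k)\sim P_{\Theta\Phi}$ at stationarity --- which identifies this process, in law, with the normalized posterior matching chain. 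I would relegate the properness argument for $P_{\Theta\Phi}$ and the measure-theoretic edge cases (jumps of $F_Y$, the a.s.\ invertibility $F_Y^{-1}(F_Y(Y) - P_Y(Y)\Lambda) = Y$) to the appendix, as is done for the analogous steps elsewhere in the paper.
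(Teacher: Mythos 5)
Your proposal is correct and follows essentially the same route as the paper: invoke Lemma~\ref{lem:norm_chan_prop} claim~(\ref{claim:norm3}) to get properness of $P_{\Theta\Phi}$, apply Theorem~\ref{thrm:recursion} to the normalized pair (with $F_\Theta^{-1}$ the identity), and observe that the final equivalence holds by the very construction of the normalized channel. The paper dismisses that last equivalence as ``by definition,'' whereas you spell out the coupling through $\Lambda$ and the a.s.\ identity $F_Y^{-1}(F_Y(Y)-P_Y(Y)\Lambda)=Y$ --- a more careful rendering of the same idea, not a different argument.
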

\begin{proof}
By Lemma \ref{lem:norm_chan_prop} the joint distribution $P_{\Theta\Phi}$ is proper, hence Theorem \ref{thrm:recursion} is applicable and the recursive representations and Markovity follow immediately. Once again, taking the state space to be $\isupp(\Theta,\Phi)$ and not $\supp(\Theta,\Phi)$ is
artificial and is done for reasons of mathematical convenience, to avoid having the trivial invariant distributions $P_0\times
P_{\Phi|\Theta}(\cdot|0)$ and $P_1\times P_{\Phi|\Theta}(\cdot|1)$, where $P_0(0)=1,P_1(1)=1$. The distribution $P_{\Theta\Phi}$ is invariant by construction, and the equivalence to the original scheme is by definition.
\end{proof}

In the sequel, an initial point for the aforementioned Markov
chain will be given by a fixed value $\theta_0\in\ui$ of the
message point only\footnote{This is an abuse of notations, since
an initial point is properly given by a pair
$(\theta_1,\phi_1)$. However, it can be justified since
$\Theta_1=\Theta_0$ and $\Phi_1$ is generated via a memoryless
channel. Hence, any statement that holds for a.a/all initial
points $(\theta_1,\phi_1)$ also holds in particular for a.a./all
$\theta_0$.}. Notice also that the Theorem above reveals an
interesting fact: Whenever $F_X^{-1}$ is not injective, the sequence of input/output pairs pertaining to the
original posterior matching scheme (\ref{eq:gn_funcX}) is a \textit{hidden
Markov process}. In particular, this is true for the BSC and the Horstein scheme.

\vspace{9pt}\noindent{\bf Example \ref{ex:BSC} (BSC, continued)}.
The normalized channel's p.d.f. corresponding to a BSC with
crossover probability $p$ and a ${\rm Bernoulli}\left(\frac{1}{2}\right)$ input
distribution is given by $f_{\Phi|\Theta}(\phi|\theta)=2(1-p)$ when $\theta,\phi$ are either both smaller or both larger than $\frac{1}{2}$, and $f_{\Phi|\Theta}(\phi|\theta)=2p$ otherwise. Following Theorem \ref{thrm:recursion2} and simple manipulations, the
corresponding normalized posterior matching kernel is given by
\begin{align}\label{eq:BSC_kernel}
F_{\Theta|\Phi}(\theta|\phi) = \left\{\hspace{-4pt}
    \begin{array}{lc}
    2(1-p)\theta & \theta\in(0,\frac{1}{2}), \phi\in(0,\frac{1}{2}) \\
    2p\theta+(1-2p) & \theta\in[\frac{1}{2},1), \phi\in(0,\frac{1}{2})\\
    2p\theta & \theta\in(0,\frac{1}{2}), \phi\in[\frac{1}{2},1) \\
    2(1-p)\theta -(1-2p) & \theta\in[\frac{1}{2},1),\phi\in[\frac{1}{2},1)\\
    \end{array}\right.
\end{align}
and for a fixed $\phi$ is supported on two functions of $\theta$, depending on whether
$\phi\,{\scriptstyle\lessgtr}\,\frac{1}{2}$ which corresponds
to $y=0,1$ in the original discrete setting, see Figure
\ref{fig:BSC}. Therefore, the posterior matching scheme (which is
equivalent to the Horstein scheme in this case) is given by the
following recursive representation:
\begin{equation*}
\Theta_1 = \Theta_0 \,,\qquad \Theta_{n+1} \hspace{-1pt}=\hspace{-1pt} \left\{\hspace{-4pt}
    \begin{array}{ll}
    2(1-p)\Theta_n & \Theta_n\in(0,\frac{1}{2}), \Phi_n\in(0,\frac{1}{2}) \\
    2p\Theta_n+(1-2p) & \Theta_n\in[\frac{1}{2},1), \Phi_n\in(0,\frac{1}{2})\\
    2p\Theta_n & \Theta_n\in(0,\frac{1}{2}), \Phi_n\in[\frac{1}{2},1) \\
    2(1-p)\Theta_n -(1-2p) & \Theta_n\in[\frac{1}{2},1),\Phi_n\in[\frac{1}{2},1)\\
    \end{array}\right.
\end{equation*}
The hidden Markov process describing the original Horstein scheme is recovered from the above by setting
\begin{equation*}
X_k=F_X^{-1}(\Theta_k) = \ind_{[\frac{1}{2},1)}(\Theta_k)\,,\qquad Y_k=F_X^{-1}(\Phi_k) =
\ind_{[\frac{1}{2},1)}(\Phi_k)
\end{equation*}
\begin{figure}
  \begin{center}
    \leavevmode
    \epsfig{file=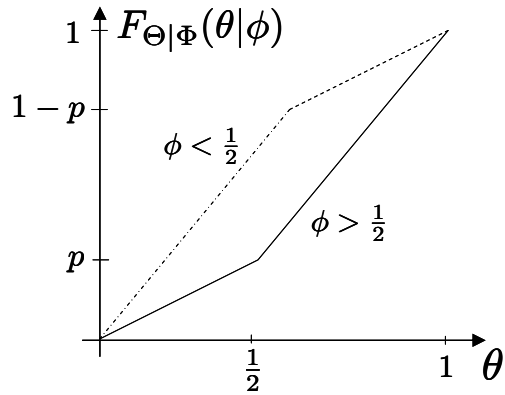, scale = 0.5}
    \caption{The BSC normalized posterior matching kernel}
    \label{fig:BSC}
  \end{center}
\end{figure}

\begin{example}[\it The binary erasure channel (BEC)]
The binary erasure channel is defined over the input alphabet $\m{X} = \{0,1\}$ and the output alphabet $\m{Y} = \{0,1,2\}$. Given any input, the output is equal to that input with probability $p$, and equal to $2$ with probability $1-p$. Using the capacity achieving distribution $P_X = {\rm Bernoulli}\left(\frac{1}{2}\right)$, it is easy to see from the non-recursive representation (\ref{eq:gn_func}) that the posterior matching scheme in this case is exactly the simple repetition rule -- transmit the first bit of $\Theta_0$ until it is correctly received, then continue to the next bit and so on. This scheme clearly achieves the capacity $1-p$. The recursive representation w.r.t. the normalized channel is very simple and intuitive here as well. The normalized posterior matching kernel is supported on three functions -- the identity function corresponding to the erasure output $2$, and the functions $2\theta,2\theta-1$ that correspond to the outputs $0,1$ respectively.
\end{example}

\begin{example}[\it General DMC]\label{ex:gen_DMC}
The case where $P_{Y|X}$ is a DMC and $P_X$ is a corresponding discrete input distribution is a simple extension of the BSC/BEC settings. The normalized posterior matching kernel is supported over a finite number of $|\m{Y}|$ continuous functions, which are all quasi-affine relative to a fixed partition of the unit interval into subintervals corresponding to the input distribution. Precisely, for any $x\in\m{X}$ the normalized posterior matching kernel evaluated at $\theta=F_X(x)$ is given by
\begin{equation}\label{eq:DMC_kernel}
F_{\Theta|\Phi}(F_X(x)|\phi) = F_{X|Y}(x|F_Y^{-1}(\phi))
\end{equation}
and by a linear interpolation in between these points. Hence, the corresponding kernel slopes are given by $\frac{P_{X|Y}(x|y)}{P_X(x)}$.
\end{example}

\begin{example}[\it Exponential noise, input mean constraint]\label{ex:exp_noise_mean}
Consider an additive noise channel $P_{Y|X}$ with $\sim{{\rm Exponential}(b)}$ noise, but now instead of arbitrarily assuming an exponential input distribution as in Example \ref{ex:exp}, let us impose an input mean constraint $(x,a)$ , i.e.,
\begin{align*}
\lim_{n\rightarrow\infty}n^{-1}\sum_{k=1}^n X_k \leq
a\quad\text{a.s.}
\end{align*}
The capacity achieving distribution under this input constraint was determined in~\cite{verdu_exp96} to be a \textit{mixture} of a deterministic distribution and an exponential distribution, with the following generalized p.d.f.:
\begin{equation*}
f_X(x) = \frac{b}{a+b}\,\delta(x) + \frac{a}{(a+b)^2}\exp\left(-\frac{x}{a+b}\right)
\end{equation*}
Under this input distribution the output is $Y\sim{\rm Exponential}(a+b)$, and the capacity can be expressed in closed form
\begin{equation*}
C = I(X;Y) = \log\left(1+\frac{a}{b}\right)
\end{equation*}
in a remarkable resemblance to the AWGN channel with an input power constraint. Interestingly, in this case the posterior matching scheme can also be written in closed form, and as stated later, also achieves the channel capacity under the input mean constraint.

To derive the scheme, we must resort to the normalized representation since the input distribution is not proper. The input's inverse c.d.f. and the output's c.d.f. are given by
\begin{equation*}
F_X^{-1}(\theta) = (a+b)\ln\left(\frac{a}{(a+b)(1-\theta)}\right)\cdot\ind_{[\frac{b}{a+b},1)}(\theta)\,,\quad F_Y(y) = 1-\exp\left(-\frac{y}{a+b}\right)
\end{equation*}
Using the normalized representation and practicing some algebra, we find that the normalized posterior matching kernel is given by
\begin{equation}\label{eq:exp_mean_kernel}
F_{\Theta|\Phi}(\theta|\phi) = (1-\phi)^{\frac{a}{b}}\left(\frac{a+b}{b}\cdot\theta\cdot \ind_{(0,\frac{b}{a+b})}(\theta) +\left(\frac{a}{a+b}\cdot\frac{1}{1-\theta}\right)^{\frac{a}{b}}\ind_{[\frac{b}{a+b},1-\frac{a(1-\phi)}{a+b})}(\theta)\right) + \ind_{(1-\frac{a(1-\phi)}{a+b},\infty)}(\theta)
\end{equation}
Thus the posterior matching scheme in this case is given by
\begin{equation}\label{eq:exp_mean_scheme}
\Theta_1 = \Theta_0\,,\quad \Theta_{n+1} = \left\{\begin{array}{cc}\frac{a+b}{b}\cdot\Theta_n\cdot (1-\Phi_n)^{\frac{a}{b}}& \Theta_n \leq \frac{b}{a+b} \\ \left(\frac{a}{a+b}\cdot \frac{1-\Phi_n}{1-\Theta_n}\right)^{\frac{a}{b}} & \Theta_n > \frac{b}{a+b}\end{array}\right.
\end{equation}
where the original channel's input/output pairs are given by
\begin{equation*}
X_n = (a+b)\ln\left(\frac{a}{(a+b)(1-\Theta_n)}\right)\ind_{[\frac{b}{a+b},1)}(\Theta_n)\,,\quad Y_n = (a+b)\ln\frac{1}{1-\Phi_n}
\end{equation*}
and constitute a hidden Markov process. Note that since we have $\Theta_n\in(0,1-\frac{a(1-\Phi_n)}{a+b})$ a.s., then $\Theta_{n+1}\in\ui$ a.s. and we need not worry about the rest of the thresholds appearing in (\ref{eq:exp_mean_kernel}).
\end{example}

\section{Regularity Conditions for Input/Channel Pairs}\label{sec:fam}
In Section~\ref{sec:main_result_memoryless}, we prove the optimality of the posterior matching scheme. However, to that end we first need to introduce several regularity conditions, and define some well behaved families of input/channel pairs.

For any fixed $\phi\in\ui$, define $\theta^-_\phi$ and $\theta^+_\phi$ to be the unique solutions of
\begin{equation*}
F_{\Theta|\Phi}(\theta^-_\phi|\phi) = \frac{1}{2} F_{\Theta|\Phi}(\theta|\phi)\,,\qquad
1-F_{\Theta|\Phi}(\theta^+_\phi|\phi) = \frac{1}{2}\left(1-F_{\Theta|\Phi}(\theta|\phi)\right)
\end{equation*}
respectively. For any $\eps>0$, define the \textit{left-$\eps$-measure} $\ssc{-}P^\eps_{\Phi|\Theta}(\cdot|\theta)$ of $P_{\Phi|\Theta}(\cdot|\theta)$ to have a density $\ssc{-}f^\eps_{\Phi|\Theta}$ given by
\begin{equation*}
\ssc{-}f^\eps_{\Phi|\Theta}(\phi|\theta)\dfn\inf_{\xi\in
J_{\st\eps}^-(\phi,\theta)} f_{\Phi|\Theta}(\phi|\xi),
\end{equation*}
where the interval $J_\eps^-(\phi,\theta)$ is defined to be
\begin{equation}\label{eq:eps_channel}
J_\eps^-(\phi,\theta) \dfn
\big{(}\max(\theta^-_\phi,\theta-\eps),\theta\big{)}
\end{equation}
Note that the left-$\eps$-measure is not a probability distribution since in general
$\ssc{-}P^\eps_{\Phi|\Theta}(\ui|\theta)<1$. Similarly define \textit{right-$\eps$-measure}
$\ssc{+}P^\eps_{\Phi|\Theta}(\cdot|\theta)$ of $P_{\Phi|\Theta}(\cdot|\theta)$ to have a density
$\ssc{+}f^\eps_{\Phi|\Theta}$ given by
\begin{equation*}
\ssc{+}f^\eps_{\Phi|\Theta}(\phi|\theta)\dfn\inf_{\xi\in
J_{\st\eps}^+(\phi,\theta)} f_{\Phi|\Theta}(\phi|\xi)
\end{equation*}
where the interval $J_\eps^+(\phi,\theta)$ is defined to be
\begin{equation*}
J_\eps^+(\phi,\theta) \dfn \big{(}\theta,\min(\theta+\eps,\theta^+_\phi)\big{)}
\end{equation*}
Note that ${\displaystyle\lim_{\eps\rightarrow 0}}\ssc{-}f^\eps_{\Phi|\Theta} =
{\displaystyle\lim_{\eps\rightarrow 0}}\ssc{+}f^\eps_{\Phi|\Theta} = f_{\Phi|\Theta}$ a.e. over $\ui^2$. Following these definitions, an input/channel pair $(P_X, P_{Y|X})$ is said to be \textit{regular}, if the corresponding normalized channel satisfies
\begin{equation*}
\inf_{\eps>0}\left[D(P_{\Phi|\Theta}\,\|\,\ssc{-}P^\eps_{\Phi|\Theta}\,|\,P_\Theta)+D(P_{\Phi|\Theta}\,\|\,\ssc{+}P^\eps_{\Phi|\Theta}\,|\,P_\Theta)\right]
<\infty
\end{equation*}
Loosely speaking, the regularity property guarantees that the \textit{sensitivity} of the channel law $ P_{Y|X}$ to input perturbations is not too high, or is at least attenuated by a proper selection of the input distribution $ P_X$. Regularity is satisfied in many interesting cases, as demonstrated in the following Lemma.
\begin{lemma}\label{lem:reg_cond}
Each of the following conditions implies that the input/channel pair
$( P_X, P_{Y|X})$  is regular:
\begin{enumerate}[(i)]
\item $h(\Theta,\Phi)$ is finite, $\isupp(\Theta,\Phi)$ is convex in the $\theta$-direction, and $f_{\Theta\Phi}$ is bounded away from zero over $\isupp(\Theta,\Phi)$. \label{cond:reg_bounded_away}

\item $P_{XY}$ is proper, $\;\isupp(X,Y)$ is convex in the $x$-direction, $f_X$ is bounded, and $f_{X|Y}$ has a uniformly bounded max-to-min ratio, i.e.,
\begin{equation*}
\sup_{y\in\isupp(Y)}\left(\frac{\sup_{x\in\isupp(X|Y=y)}f_{X|Y}(x|y)}{\inf_{x\in\isupp(X|Y=y)}f_{X|Y}(x|y)}\right)
< \infty
\end{equation*}\label{cond:reg_min_max}

\item $P_{XY}$ is proper, and $f_{X|Y}(x|y)$ is unimodal with a regular tail and a bounded variance, uniformly over $y\in\isupp{(Y)}$.\label{cond:reg_unimodal}

\item $P_{Y|X}$ is a DMC with nonzero transition probabilities.\label{cond:reg_dmc}

\end{enumerate}
\end{lemma}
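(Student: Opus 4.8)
The plan is to prove part (i) directly from the definition of regularity, and then to reduce each of (ii)--(iv) to the hypotheses of (i), with (iii) --- the genuinely delicate case --- requiring a separate tail estimate. The starting point is that by Lemma~\ref{lem:norm_chan_prop} both $\Theta$ and $\Phi$ are $\m{U}$-distributed, so $f_{\Theta\Phi}(\theta,\phi)=f_{\Phi|\Theta}(\phi|\theta)$ and $\int_0^1\!\!\int_0^1 f_{\Phi|\Theta}(\phi|\theta)\log f_{\Phi|\Theta}(\phi|\theta)\,d\phi\,d\theta=-h(\Theta,\Phi)=I(\Theta;\Phi)=I(X;Y)$, which is finite whenever $h(\Theta,\Phi)$ is. Since each summand in the definition of regularity equals this quantity minus $\int_0^1\!\!\int_0^1 f_{\Phi|\Theta}(\phi|\theta)\log \ssc{-}f^\eps_{\Phi|\Theta}(\phi|\theta)\,d\phi\,d\theta$ (and the analogue with $\ssc{+}$), and each summand is nonnegative, it suffices to exhibit, for some $\eps>0$, a lower bound on $\ssc{\pm}f^\eps_{\Phi|\Theta}(\phi|\theta)$ whose negative logarithm is integrable against $f_{\Phi|\Theta}(\phi|\theta)\,d\phi\,d\theta$.

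For (i), convexity of $\isupp(\Theta,\Phi)$ in the $\theta$-direction means that for each $\phi$ the slice $I_\phi=\{\theta:(\theta,\phi)\in\isupp(\Theta,\Phi)\}$ is an interval on which $f_{\Phi|\Theta}(\phi|\cdot)\geq c$ for a fixed $c>0$. I would then check that for $\theta\in I_\phi$ every $\xi\in J_\eps^-(\phi,\theta)=\big(\max(\theta^-_\phi,\theta-\eps),\theta\big)$ lies in $I_\phi$: indeed $\xi>\theta^-_\phi$, and $\theta^-_\phi$ itself is in the support of $P_{\Theta|\Phi}(\cdot|\phi)$ because $F_{\Theta|\Phi}(\theta^-_\phi|\phi)=\frac{1}{2}F_{\Theta|\Phi}(\theta|\phi)>0$, so $\xi>\inf I_\phi$, while $\xi<\theta\leq\sup I_\phi$. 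Hence $\ssc{-}f^\eps_{\Phi|\Theta}(\phi|\theta)\geq c$, and symmetrically $\ssc{+}f^\eps_{\Phi|\Theta}(\phi|\theta)\geq c$, so $\log\big(f_{\Phi|\Theta}/\ssc{\pm}f^\eps_{\Phi|\Theta}\big)\leq\log\big(f_{\Phi|\Theta}/c\big)$ pointwise and each summand is at most $I(X;Y)+\log(1/c)<\infty$. The role of the $\theta^\pm_\phi$-truncation is precisely to keep the perturbation windows away from the boundary of the support, where $f_{\Theta\Phi}$ need not be bounded below.

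Case (iv) is then immediate: for a DMC with nonzero transition probabilities the normalized kernel of Example~\ref{ex:gen_DMC} shows that $f_{\Phi|\Theta}$ is piecewise constant on a finite rectangular grid and strictly positive on all of $\ui^2$, so $\isupp(\Theta,\Phi)=\ui^2$ is convex, $f_{\Theta\Phi}$ is bounded away from zero, and $h(\Theta,\Phi)=-I(X;Y)\geq-\log|\m{X}|$ is finite; these are the hypotheses of (i). For (ii), properness of $P_{XY}$ lets me identify $\Theta=F_X(X)$ and $\Phi=F_Y(Y)$ and write $f_{\Phi|\Theta}(\phi|\theta)=f_{X|Y}(x|y)/f_X(x)$ with $x=F_X^{-1}(\theta)$, $y=F_Y^{-1}(\phi)$; monotonicity of $F_X$ transports convexity in the $x$-direction to convexity in the $\theta$-direction, while boundedness of $f_X$ together with the uniformly bounded max-to-min ratio of $f_{X|Y}$ control the normalized density and its left/right $\eps$-perturbations (the max-to-min ratio bounding $f_{X|Y}(x|y)/f_{X|Y}(x'|y)$, and hence, up to the ratio of nearby values of $f_X$, the ratio $f_{\Phi|\Theta}(\phi|\theta)/\ssc{\pm}f^\eps_{\Phi|\Theta}(\phi|\theta)$), so that (ii) again reduces to the estimate of (i).

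The remaining case (iii) is where the real work lies, since under unimodality with a regular tail (e.g., the AWGN setting of Example~\ref{ex:AWGN}) the density $f_{\Phi|\Theta}$ does vanish in the tails, so (i) is unavailable and the two divergence integrals must be estimated directly. Here the $\theta^\pm_\phi$-truncation does essential work: $\xi\geq\theta^-_\phi$ forces $F_{X|Y}(x'|y)\geq\frac{1}{2}F_{X|Y}(x|y)$ for the perturbed point $x'=F_X^{-1}(\xi)$, so the perturbation window never reaches into the deep left tail. Feeding this into the regular-tail inequality $c_0 f_{X|Y}^{\alpha_0}\leq\min(F_{X|Y},1-F_{X|Y})\leq c_1 f_{X|Y}^{\alpha_1}$ (applied to $f_{X|Y}(\cdot|y)$ uniformly in $y$) bounds the ratio $f_{\Phi|\Theta}(\phi|\theta)/\ssc{\pm}f^\eps_{\Phi|\Theta}(\phi|\theta)$ by a fixed power of $f_{X|Y}(x|y)$ on the mode side of $x$ and by a polynomially or exponentially growing function of $|x|$ in the tail, while unimodality lets me split the $\theta$-integral into a ``near the mode'' part (treated as in (i)) and two monotone tail parts; the uniformly bounded variance of $f_{X|Y}(\cdot|y)$ finally makes the tail bound integrable, since logarithms of polynomials and linear functions of $|x|^a$ with $a$ matched to the tail decay rate all have finite expectation under a distribution with a regular tail and finite variance. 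The main obstacle, and the step demanding the most care, is exactly this uniform-in-$y$ conversion of the qualitative regular-tail hypothesis into a quantitative bound on $f_{\Phi|\Theta}/\ssc{\pm}f^\eps_{\Phi|\Theta}$ that survives the double integration.
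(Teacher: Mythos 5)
Your proposal follows essentially the same route as the paper: (i) is proved by the same lower bound $\ssc{-}f^\eps_{\Phi|\Theta}\geq\inf f_{\Theta\Phi}$ (valid because convexity in the $\theta$-direction keeps the windows $J^\pm_\eps$ inside the support slice), (iv) is reduced to (i), (ii) is handled via $f_{\Phi|\Theta}=f_{X|Y}/f_X$ together with $\sup f_X$, $h(X)$ and the max-to-min ratio, and (iii) is the paper's Lemma \ref{lem:aux_div} in outline, with the $\tfrac{1}{2}F$ truncation keeping the perturbation window out of the deep tail. The only slight difference is where the bounded variance enters in (iii): the paper uses it (via Chebyshev) to lower-bound the density at a fixed quantile and thus control the region near the mode, with the tail handled entirely by the regular-tail inequality and $F(e(x))=\tfrac{1}{2}F(x)$, whereas you invoke it to integrate the tail bound --- both work.
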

\begin{proof}
See Appendix \ref{app:misc}.
\end{proof}

For an input/channnel pair $( P_X, P_{Y|X})$, define the following set of properties:
\begin{enumerate}[\bf(\textsl{A}1)]
\item \textit{$(P_X, P_{Y|X})$ is regular.} \label{cond:omega_reg}
\item \textit{The invariant distribution $P_{\Theta\Phi}$ for the Markov chain $\{(\Theta_n,\Phi_n)\}_{n=1}^\infty$, is ergodic.}\label{cond:omega_inv}
\item \label{cond:omega_fixedp} \textit{$F_{\Theta|\Phi}$ is \textit{fixed-point free}, i.e., for any $\theta\in\ui$.}
\begin{equation}\label{eq:univ_fixed}
\Prob\left(F_{\Theta|\Phi}(\theta|\Phi)=\theta\right) < 1.
\end{equation}
\item \label{cond:omega_cap_ach}\textit{$P_X$ achieves the unconstrained capacity over $P_{Y|X}$, i.e., $I(X;Y) = C(P_{Y|X})$}.\footnote{Since an input/channel pair has finite mutual information, (\textsl{B}\ref{cond:omega_cap_ach}) implies that $C(P_{Y|X})<\infty$. The unconstrained capacity is finite for discrete input and/or output channels, but can be finite under other input alphabet constraints (e.g., an amplitude constraint).}
\end{enumerate}

The following is easily observed.
\begin{lemma}\label{lem:egr_implies_fixed_free}
(\textsl{A}\ref{cond:omega_inv}) $\Rightarrow$ (\textsl{A}\ref{cond:omega_fixedp}).
\end{lemma}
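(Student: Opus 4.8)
The plan is to argue by contrapositive: I would assume that (\textsl{A}\ref{cond:omega_fixedp}) fails and derive the failure of (\textsl{A}\ref{cond:omega_inv}). So suppose $F_{\Theta|\Phi}$ is \emph{not} fixed-point free, meaning there exists some $\theta^\ast\in\ui$ with $\Prob\left(F_{\Theta|\Phi}(\theta^\ast|\Phi)=\theta^\ast\right)=1$. By the recursion (\ref{eq:gn_rec_normX}), if we start the Markov chain $\{(\Theta_n,\Phi_n)\}$ at $\Theta_1=\theta^\ast$, then regardless of $\Phi_1$ we get $\Theta_2=F_{\Theta|\Phi}(\theta^\ast|\Phi_1)=\theta^\ast$ almost surely, and inductively $\Theta_n=\theta^\ast$ for all $n$ a.s. Thus the ``vertical fiber'' $A\dfn\{\theta^\ast\}\times\ui$ (intersected with the state space $\isupp(\Theta,\Phi)$) is an invariant set for the chain in the sense of Subsection~\ref{subsec:markov_chains}: from any state in $A$ the chain remains in $A$ with probability one.

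The next step is to observe that $P_{\Theta\Phi}(A)=0$, since $P_{\Theta\Phi}$ is proper by Lemma~\ref{lem:norm_chan_prop} claim (\ref{claim:norm3}), so in particular its $\Theta$-marginal (which is $\m{U}$) assigns zero mass to the singleton $\{\theta^\ast\}$. Hence $A$ is an invariant set with $P_{\Theta\Phi}(A)=0$. For $P_{\Theta\Phi}$ to be \emph{ergodic}, every invariant set must have measure $0$ or $1$ — which $A$ does satisfy — so this alone is not yet a contradiction. To actually contradict (\textsl{A}\ref{cond:omega_inv}) I need a second invariant distribution: the point mass dynamics above show that $P_{\theta^\ast}\times P_{\Phi|\Theta}(\cdot|\theta^\ast)$, where $P_{\theta^\ast}(\{\theta^\ast\})=1$, is itself an invariant distribution for the chain (the argument is exactly the one used in the proof of Theorem~\ref{thrm:recursion2} for the trivial distributions at $0$ and $1$). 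This degenerate invariant distribution is singular with respect to $P_{\Theta\Phi}$. Now I invoke Lemma~\ref{lem:irreducible_to_recurrent}: if the chain were irreducible it would have a \emph{unique} invariant distribution; since it has two mutually singular ones, it cannot be irreducible, and an ergodic invariant distribution forces a strong form of indecomposability. More directly: an ergodic invariant distribution $P_{\Theta\Phi}$ for which there also exists a disjoint invariant region carrying its own invariant law is impossible, because one can build from these two laws a nontrivial invariant set of intermediate $P_{\Theta\Phi}$-measure — concretely, the complement of the basin of $P_{\Theta\Phi}$ under the SLLN (Lemma~\ref{lem:SLLN}) has positive $P_{\theta^\ast}$-mass, hence the orbit closures separate, contradicting ergodicity.

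I expect the main obstacle to be making the last contradiction airtight rather than hand-wavy: ergodicity of $P_{\Theta\Phi}$ is defined via invariant \emph{sets} of the chain, and I need to exhibit a single invariant set $B$ with $0<P_{\Theta\Phi}(B)<1$. The cleanest route is probably to take $B$ to be the set of states $(\theta,\phi)$ from which the chain's empirical averages converge to the $P_{\Theta\Phi}$-averages (a $P_{\Theta\Phi}$-a.s.\ set by Lemma~\ref{lem:SLLN}, hence of measure $1$, and invariant); then note that the fixed state $(\theta^\ast,\phi)$ lies outside $B$ for a.a.\ $\phi$ because its empirical averages are deterministic functions of $\theta^\ast$ and do not match the $P_{\Theta\Phi}$-averages (here one uses that $P_{\Theta\Phi}$ is non-degenerate, e.g.\ has $\m{U}$ as $\Theta$-marginal). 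This shows the complement of $B$ is a nonempty invariant set; if it had $P_{\Theta\Phi}$-measure $0$ we are consistent, but then uniqueness from Lemma~\ref{lem:irreducible_to_recurrent} together with the existence of the second invariant law $P_{\theta^\ast}\times P_{\Phi|\Theta}(\cdot|\theta^\ast)$ gives the contradiction with either ergodicity or with irreducibility. Tidying which of these two mutually exclusive horns to push on — the measure-zero-invariant-set horn versus the two-invariant-laws horn — is the only delicate bookkeeping; everything else is immediate from the recursion and the already-established properness of $P_{\Theta\Phi}$.
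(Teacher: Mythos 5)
Your reduction to the contrapositive and the observation that a fixed point $\theta^\ast$ makes the fiber $\{\theta^\ast\}\times\ui$ invariant are fine, but the route you then take cannot be closed. The existence of a second invariant distribution $P_{\theta^\ast}\times P_{\Phi|\Theta}(\cdot|\theta^\ast)$, mutually singular with $P_{\Theta\Phi}$ and supported on a $P_{\Theta\Phi}$-null invariant set, is \emph{not} in contradiction with ergodicity of $P_{\Theta\Phi}$: ergodicity as defined in Subsection~\ref{subsec:markov_chains} only requires every invariant set to have $P_{\Theta\Phi}$-measure $0$ or $1$, and both your fiber $A$ and the complement of your SLLN basin $B$ have measure $0$. (A three-state chain with one absorbing state and a disjoint two-state recurrent class has two mutually singular invariant laws, each of them ergodic.) Likewise, Lemma~\ref{lem:irreducible_to_recurrent} yields uniqueness of the invariant distribution only under irreducibility, which is not part of property (\textsl{A}\ref{cond:omega_inv}), so the ``two-invariant-laws horn'' gives you a failure of irreducibility, not of ergodicity. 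The ``delicate bookkeeping'' you flag at the end is therefore not bookkeeping at all — no invariant set of \emph{intermediate} measure is produced anywhere in your argument.

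The missing idea is monotonicity of the kernel in $\theta$: since $F_{\Theta|\Phi}(\cdot|\phi)$ is a c.d.f., $\theta<\theta^\ast$ implies $F_{\Theta|\Phi}(\theta|\Phi)\leq F_{\Theta|\Phi}(\theta^\ast|\Phi)=\theta^\ast$ a.s., so the fixed point is a barrier and the \emph{half-space} $(0,\theta^\ast)\times\ui$ is an invariant set. Its $P_{\Theta\Phi}$-measure is exactly $\theta^\ast$ because $\Theta\sim\m{U}$, and $0<\theta^\ast<1$ directly violates ergodicity. This is precisely the argument the paper gives (the proof of Lemma~\ref{lem:egr_implies_fixed_free} points to the proof of Lemma~\ref{lem:fixed_point_no_rate}, whose last sentence exhibits $(0,\theta_f)\times\ui$ as an invariant set of measure $\theta_f$). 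Replacing your fiber $\{\theta^\ast\}\times\ui$ by the sub-level set $(0,\theta^\ast)\times\ui$ turns your proposal into a correct, one-line proof.
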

\begin{proof}
See proof of Lemma \ref{lem:fixed_point_no_rate}.
\end{proof}

Let $\Omega_A$ be the family of all input/channel pairs satisfying properties (\textsl{A}\ref{cond:omega_reg}) and (\textsl{A}\ref{cond:omega_inv}). Let $\Omega_B$ be the family of all input/channel pairs satisfying properties (\textsl{A}\ref{cond:omega_reg}), (\textsl{A}\ref{cond:omega_fixedp}) and (\textsl{A}\ref{cond:omega_cap_ach}). In the sequel, we show that for members in $\Omega_A\cup \Omega_B$ the corresponding posterior matching scheme achieves the mutual information. However, while Lemma \ref{lem:reg_cond} provides means to verify the regularity Property (\textsl{A}\ref{cond:omega_reg}), and Properties (\textsl{A}\ref{cond:omega_fixedp}) and (\textsl{A}\ref{cond:omega_cap_ach}) are easy to check, the ergodicity property (\textsl{A}\ref{cond:omega_inv}) may be difficult to verify in general. Therefore, we introduce the following more tractable property:

\begin{enumerate}[\bf(\textsl{A}1)]
\setcounter{enumi}{4}

\item \textit{$f_{XY}$ is bounded and continuous over $\,\isupp(X,Y)$, where the latter is connected and convex in the $y$-direction.}\label{cond:omegac_proper}

\end{enumerate}
We now show that (\textsl{A}\ref{cond:omega_fixedp}) and (\textsl{A}\ref{cond:omegac_proper}) together imply a stronger version of (\textsl{A}\ref{cond:omega_inv}). In fact, to that end a weaker version of (\textsl{A}\ref{cond:omega_fixedp}) is sufficient, which we state (for convenience) in terms of the non-normalized kernel:
\begin{enumerate}[\bf(\textsl{A}1${}^*$)]
\setcounter{enumi}{2}
\item \label{cond:omegac_fixedp} \textit{For any $x\in\isupp (X)$ there exists $y\in\isupp(Y)$, such that $F^{-1}_X\circ F_{X|Y}(x|y)\neq x$.}
\end{enumerate}

\begin{lemma}\label{lem:Omegac_in_Omega}
(\textsl{A}\ref{cond:omegac_fixedp}${}^*$) $\wedge$ (\textsl{A}\ref{cond:omegac_proper}) $\Rightarrow$ $\{(\Theta_n,\Phi_n)\}_{n=1}^\infty$ is p.h.r. and aperiodic $\Rightarrow$  (\textsl{A}\ref{cond:omega_inv}).
\end{lemma}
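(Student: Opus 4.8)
\emph{The second implication is immediate.} If $\{(\Theta_n,\Phi_n)\}_{n=1}^\infty$ is positive Harris recurrent then in particular it is irreducible and positive, so by Lemma~\ref{lem:irreducible_to_recurrent} it has a \emph{unique} invariant distribution, which is necessarily ergodic (if $A$ were an invariant set with $0<P(A)<1$, conditioning on $A$ and on its complement would yield two distinct invariant distributions). Since $P_{\Theta\Phi}$ is invariant by Theorem~\ref{thrm:recursion2}, it is that unique distribution, hence ergodic, which is exactly (\textsl{A}\ref{cond:omega_inv}). Aperiodicity plays no role in this step; it is recorded because it is needed elsewhere (e.g.\ Lemma~\ref{lem:PHR_convergence}). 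The whole difficulty is therefore the first implication.

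\emph{Reduction to the normalized marginal chain.} Under (\textsl{A}\ref{cond:omegac_proper}), $f_{XY}$ is a bounded Lebesgue density, so $P_{XY}$ is proper; Theorem~\ref{thrm:recursion} then applies and, via Lemma~\ref{lem:norm_chan_prop}, $\{(\Theta_n,\Phi_n)\}_{n=1}^\infty$ is a genuine Markov chain on $\isupp(\Theta,\Phi)\subseteq\ui^2$ with invariant distribution $P_{\Theta\Phi}$, the latter being proper with density $f_{\Theta\Phi}(\theta,\phi)=f_{\Phi|\Theta}(\phi|\theta)$ which, transporting (\textsl{A}\ref{cond:omegac_proper}) through the monotone maps $F_X,F_Y$, is bounded and continuous on a support that inherits connectedness and convexity in the coordinate directions; likewise (\textsl{A}\ref{cond:omegac_fixedp}${}^*$) becomes: for every $\theta\in\isupp(\Theta)$ there is $\phi$ with $F_{\Theta|\Phi}(\theta|\phi)\neq\theta$. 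Since, given $\Theta_n$, the output $\Phi_n$ is a memoryless channel observation conditionally independent of the past, the coordinate process $\{\Theta_n\}_{n=1}^\infty$ is itself a Markov chain on $\ui$ with invariant distribution $\m{U}$; and as the pair's dynamics from the second step on are driven entirely by the $\Theta$-coordinate (with $\Phi$ re-sampled from the channel), it suffices to prove that $\{\Theta_n\}$ is $\m{U}$-irreducible, positive Harris recurrent and aperiodic, the corresponding statements for $\{(\Theta_n,\Phi_n)\}$ then following routinely because $\Phi_n$ is obtained from $\Theta_n$ by a memoryless channel and $P_{\Theta\Phi}$ projects onto $\m{U}$.

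\emph{The plan for $\{\Theta_n\}$.} Write $\m{Q}(\cdot|\theta)$ for the law of $F_{\Theta|\Phi}(\theta|\Phi)$ with $\Phi\sim P_{\Phi|\Theta}(\cdot|\theta)$, a pushforward of a measure with a strictly positive density on the interval $\isupp(\Phi|\Theta=\theta)$. First I would establish a \emph{local spreading} estimate: picking $\phi_0$ with $F_{\Theta|\Phi}(\theta|\phi_0)\neq\theta$ by the fixed-point-free property and noting, via joint continuity of $F_{\Theta|\Phi}$, that this persists on a $\phi$-neighborhood, one shows that for every compact $C\subseteq\isupp(\Theta)$ there exist $m_0\in\NaturalF$, $\delta>0$ and a nonzero measure $\nu$ absolutely continuous w.r.t.\ Lebesgue with $\m{Q}^{m_0}(\cdot|\theta)\geq\delta\,\nu(\cdot)$ for all $\theta\in C$ --- i.e.\ the dynamics genuinely smear an initial point onto an interval of positive length. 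Second, a \emph{patching} step: because $\isupp(\Theta)$ is a connected interval on whose interior $f_{\Theta\Phi}>0$ and each point admits an escaping output, one can steer the chain from any point into any prescribed subinterval by a finite sequence of small displacements. Together these give $\m{U}$-irreducibility, whence Lemma~\ref{lem:irreducible_to_recurrent} yields positive recurrence and uniqueness of the invariant distribution $\m{U}$. To upgrade to positive Harris recurrence I would exhibit a reachable small set (a compact interior interval, by the minorization above) together with absolute continuity of a suitable skeleton kernel $\m{Q}^{m}(\cdot|\theta)$ w.r.t.\ the full-support invariant distribution $\m{U}$, so that Lemma~\ref{lem:PHR_cond} applies (part~(i) to the $m$-skeleton, part~(ii) back to the chain); aperiodicity follows by taking the minorization simultaneously at $m_0$ and $m_0+1$, which is possible since an escaping move can be padded by a further small excursion.

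\emph{Main obstacle.} Everything downstream of $\m{U}$-irreducibility is routine Meyn--Tweedie bookkeeping; the real work is controlling the law of the multi-step transitions. A continuous --- even Lipschitz --- map $\phi\mapsto F_{\Theta|\Phi}(\theta|\phi)$ may push the absolutely continuous law of $\Phi$ onto a measure carried by a Lebesgue-null, Cantor-like set, or onto one with atoms if it is constant on a set of positive measure; ruling this out, and ensuring that trajectories cannot drift toward the deleted boundary $\{\theta=0,1\}$ (where $F_{\Theta|\Phi}$ \emph{does} have fixed points) and thereby escape every interior compact, is precisely what the combination of boundedness, continuity and connectedness in (\textsl{A}\ref{cond:omegac_proper}) together with the weak fixed-point-free condition (\textsl{A}\ref{cond:omegac_fixedp}${}^*$) is engineered to deliver, and is where the argument must be carried out with care.
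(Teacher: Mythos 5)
Your handling of the second implication is correct and matches the paper's one-line argument (p.h.r.\ gives a unique invariant distribution, which is therefore ergodic). The problem is the first implication, where what you give is a plan rather than a proof: the entire load is carried by the ``local spreading'' minorization $\m{Q}^{m_0}(\cdot|\theta)\geq\delta\,\nu(\cdot)$ uniformly on compacts, and you explicitly defer its proof, conceding yourself that the pushforward of the output law under $\phi\mapsto F_{\Theta|\Phi}(\theta|\phi)$ could a priori be atomic or carried by a Lebesgue-null set. That deferred step is precisely the content of the lemma; without it you obtain neither $\m{U}$-irreducibility, nor a small set for Harris recurrence, nor the consecutive-time minorizations you invoke for aperiodicity. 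As written, this is a genuine gap.

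It is worth contrasting this with how the paper avoids needing any uniform minorization. For irreducibility it argues at the level of supports: letting $\pi(A)$ denote the set of points reachable from $A$ in one step, joint continuity of $F_{\Theta|\Phi}$ (from (\textsl{A}\ref{cond:omegac_proper})) makes $\pi(\{\theta_0\})$ an interval, and the identity $\Expt\big{(}F_{\Theta|\Phi}(\theta_0|\Phi)\big{)}=\theta_0$ combined with (\textsl{A}\ref{cond:omegac_fixedp}${}^*$) forces $\theta_0$ to be an \emph{interior} point of $\pi(\{\theta_0\})$; iterating, $\pi^{(n)}(\{\theta_0\})$ is a nested expanding sequence of intervals whose union is an open interval that must equal $\ui$, since a finite endpoint would, by continuity, still admit an escaping output and hence could be pushed further. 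Irreducibility then follows from $\m{U}\ll P_{\Theta_n|\Theta_0}$ on $\pi^{(n)}(\{\theta_0\})$, and positive recurrence plus uniqueness of the invariant distribution from Lemma~\ref{lem:irreducible_to_recurrent}. Positive Harris recurrence is obtained not from a small-set construction but from the soft condition~(i) of Lemma~\ref{lem:PHR_cond} applied to the $2$-skeleton, for which $\m{P}^2(\cdot|(\theta,\phi))\ll P_{\Theta\Phi}$, with condition~(ii) transferring p.h.r.\ back to the chain; aperiodicity also falls out of the expansion property (the $n$-step support eventually contains every open subset of $\isupp(\Theta,\Phi)$, so distinct cyclic classes would each have to be of full measure). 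If you wish to complete your own route, you would still need essentially this support-expansion argument to justify your minorization, so you may as well adopt the paper's weaker absolute-continuity formulation, which is all Lemma~\ref{lem:PHR_cond} requires. Your worry about drift toward the boundary $\{0,1\}$ is also misplaced in this framework: Harris recurrence is delivered by the kernel's absolute continuity together with the already-constructed invariant distribution, not by a Lyapunov/drift estimate.
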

\begin{proof}
For the first implication, see Appendix \ref{app:pointwise}. The second implication is immediate since p.h.r. implies in particular a unique invariant distribution, which is hence ergodic.
\end{proof}
Following that, let us define $\Omega_C$ to be the family of all input/channel pairs $(P_X, P_{Y|X})$ satisfying properties (\textsl{A}\ref{cond:omega_reg}), (\textsl{A}\ref{cond:omegac_fixedp}${}^*$) and (\textsl{A}\ref{cond:omegac_proper}).

\begin{corollary}
$\Omega_C\subset\Omega_A$.
\end{corollary}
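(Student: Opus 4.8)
The plan is to read the inclusion off directly from the definitions of the two families together with Lemma~\ref{lem:Omegac_in_Omega}, which has already carried out the only nontrivial work. Recall that $(P_X,P_{Y|X})\in\Omega_A$ means exactly that the pair is regular, i.e., satisfies (\textsl{A}\ref{cond:omega_reg}), and that the invariant distribution $P_{\Theta\Phi}$ of the Markov chain $\{(\Theta_n,\Phi_n)\}_{n=1}^\infty$ from Theorem~\ref{thrm:recursion2} is ergodic, i.e., satisfies (\textsl{A}\ref{cond:omega_inv}); whereas $(P_X,P_{Y|X})\in\Omega_C$ means it satisfies (\textsl{A}\ref{cond:omega_reg}), the weak fixed-point-freeness (\textsl{A}\ref{cond:omegac_fixedp}${}^*$), and the boundedness/continuity/convexity condition (\textsl{A}\ref{cond:omegac_proper}).

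First I would note that the first requirement for membership in $\Omega_A$, namely (\textsl{A}\ref{cond:omega_reg}), is part of the definition of $\Omega_C$ and so holds for free. For the second requirement, I would invoke Lemma~\ref{lem:Omegac_in_Omega}: since any pair in $\Omega_C$ satisfies both (\textsl{A}\ref{cond:omegac_fixedp}${}^*$) and (\textsl{A}\ref{cond:omegac_proper}), the lemma yields that $\{(\Theta_n,\Phi_n)\}_{n=1}^\infty$ is p.h.r. and aperiodic, hence has a unique invariant distribution, which is therefore ergodic — that is, (\textsl{A}\ref{cond:omega_inv}) holds. Combining the two observations, every pair in $\Omega_C$ lies in $\Omega_A$, so $\Omega_C\subseteq\Omega_A$. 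To obtain the strict inclusion signalled by ``$\subset$'', I would point out that (\textsl{A}\ref{cond:omegac_proper}) forces $P_{XY}$ to be proper (a bounded continuous density admits no Dirac component), while $\Omega_A$ contains non-proper pairs — for instance the BSC with a ${\rm Bernoulli}(\frac{1}{2})$ input, which is regular by Lemma~\ref{lem:reg_cond}(\ref{cond:reg_dmc}) and whose invariant distribution is ergodic — so that $\Omega_A\setminus\Omega_C\neq\emptyset$.

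I do not expect any real obstacle here: Lemma~\ref{lem:Omegac_in_Omega} packages the substantive step (the positive-Harris-recurrence argument, deferred to the appendix), so the corollary is pure bookkeeping over the definitions of (\textsl{A}\ref{cond:omega_reg})--(\textsl{A}\ref{cond:omegac_proper}). The only point requiring mild attention is matching the three conditions defining $\Omega_C$ correctly against the two conditions defining $\Omega_A$ through the implication chain in Lemma~\ref{lem:Omegac_in_Omega}, and — if strictness is intended — exhibiting the explicit witness in $\Omega_A\setminus\Omega_C$.
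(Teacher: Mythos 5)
Your proof is correct and is essentially the paper's (implicit) argument: the corollary follows immediately from Lemma~\ref{lem:Omegac_in_Omega}, since (\textsl{A}\ref{cond:omega_reg}) is common to both definitions and (\textsl{A}\ref{cond:omegac_fixedp}${}^*$) $\wedge$ (\textsl{A}\ref{cond:omegac_proper}) yields (\textsl{A}\ref{cond:omega_inv}). Your extra witness for strictness is fine, with the one small caveat that the BSC/${\rm Bernoulli}(\tfrac{1}{2})$ pair is only guaranteed to lie in $\Omega_A$ for crossover probabilities outside the countable exceptional set $S$ where (\textsl{B}\ref{cond:omega_dm_Q}) fails, so you should fix some $p\notin S$.
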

Turning to the discrete case, let $(P_X,P_{Y|X})$  be an input/DMC pair. Without loss of generality, we will assume throughout that $\min_{x\in\m{X}}P_X(x)>0$, as otherwise the unused input can be removed. Define the following set of properties:
\begin{enumerate}[\bf(\textsl{B}1)]
\item \label{cond:omega_dm_nzI} $\displaystyle{\min_{x\in\m{X},y\in\m{Y}}}P_{Y|X}(y|x)>0$.
\item \label{cond:omega_dm_dom} At least one of the following holds:
    \begin{enumerate}[(i)]
    \item There exists some $y\in\m{Y}$ with $P_Y(y)>0$, such that either $P_X\dom P_{X|Y}(\cdot|y)$ or $P_{X|Y}(\cdot|y)\dom P_X$.
    \item There exist some $y_0,y_1\in\m{Y}$ with $P_Y(y_0)>0,P_Y(y_1)>0$, such that $P_{X|Y}(\cdot|y_0)\dom P_{X|Y}(\cdot|y_1)$.
    \end{enumerate}
\item \label{cond:omega_dm_Q} $\forall x\in\m{X}$, $\exists y_0,y_1\in\m{Y}$ s.t. $0>\frac{\beta_0}{\beta_1}\not\in\RatF$, where\footnote{$\RatF$ is the set of rational numbers. Note that there always exists a pair for which $\frac{\beta_0}{\beta_1}<0$, but the quotient is not necessarily irrational.}
    \begin{equation*}
    \beta_i\dfn \log\left(\frac{P_{X|Y}(x|y_i)}{P_X(x)}\right),\quad i\in\{0,1\}.
    \end{equation*}

\end{enumerate}
\begin{lemma}\label{lem:dmc_prop}
Let $(P_X,P_{Y|X})$ be an input/DMC pair. Then:
\begin{enumerate}[(i)]
\item \label{claim:dmc1}(\textsl{B}\ref{cond:omega_dm_nzI}) $\Rightarrow$ (\textsl{A}\ref{cond:omega_reg}).
\item \label{claim:dmc2}(\textsl{B}\ref{cond:omega_dm_dom}) $\Rightarrow$ (\textsl{A}\ref{cond:omega_fixedp}).
\item \label{claim:dmc3}(\textsl{B}\ref{cond:omega_dm_nzI}) $\wedge$ (\textsl{B}\ref{cond:omega_dm_Q}) $\wedge$ (\textsl{A}\ref{cond:omega_fixedp}) $\Rightarrow$ (\textsl{A}\ref{cond:omega_inv}).
\item \label{claim:dmc4}$|\m{X}|=2$ $\Rightarrow$ (\textsl{B}\ref{cond:omega_dm_dom}).
\item \label{claim:dmc5}(\textsl{B}\ref{cond:omega_dm_nzI}) $\wedge$ $I(X;Y)>0$ $\Rightarrow$ there exists an equivalent pair $(P_{X^*}, P_{Y^*|X^*})$ satisfying (\textsl{B}\ref{cond:omega_dm_nzI}) $\wedge$ (\textsl{B}\ref{cond:omega_dm_dom}).
\item \label{claim:dmc6}For any $\eps>0$ there exists $P_X'$, such that $d_{TV}(P_X,P_X') < \eps$, and $(P_X', P_{Y|X})$ is an input/DMC pair satisfying (\textsl{B}\ref{cond:omega_dm_Q}).
\end{enumerate}
\end{lemma}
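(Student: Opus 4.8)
The plan is to treat the six claims in three groups: (i) is a one-line reduction; (ii), (iv), (v) are elementary facts about stochastic dominance and the piecewise-affine shape of the DMC kernel; (vi) is a genericity argument; and (iii) is the substantial claim. First, recall from Example~\ref{ex:gen_DMC} the shape of the normalized kernel. Ordering $\m{X}=\{0,\dots,|\m{X}|-1\}$ by the input labels, for $\phi$ in the output subinterval coding $y$ the kernel $F_{\Theta|\Phi}(\cdot|\phi)$, which I write $\omega_y$, is the continuous, strictly increasing, piecewise-affine bijection of $[0,1]$ that fixes $0$ and $1$, has $\omega_y(F_X(x))=F_{X|Y}(x|y)$, and has slope $P_{X|Y}(x|y)/P_X(x)$ on the $x$-th input interval; by Theorem~\ref{thrm:recursion2} the $\Theta$-chain evolves as $\Theta_{n+1}=\omega_{Y_n}(\Theta_n)$ with $\Prob(Y_n=y\mid\Theta_n)=P_{Y|X}(y\mid F_X^{-1}(\Theta_n))$. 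Under (\textsl{B}\ref{cond:omega_dm_nzI}) every $P_{X|Y}(\cdot|y)$ with $P_Y(y)>0$ is strictly positive, so each interior node value $F_{X|Y}(x|y)$ lies in $\ui$ and every branch $\omega_y$ is applied with positive probability from every state. I will repeatedly use the trivial observation that a continuous function that is affine on each cell of a fixed partition of $[0,1]$, vanishes at $0$ and $1$, and is positive at every interior node, is positive throughout $\ui$. Claim (i) is then immediate: (\textsl{B}\ref{cond:omega_dm_nzI}) is verbatim hypothesis~(iv) of Lemma~\ref{lem:reg_cond}.

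For (ii), fix $\theta\in\ui$. If $P_X\dom P_{X|Y}(\cdot|y^*)$ for some $y^*$ with $P_Y(y^*)>0$, apply the observation to $\omega_{y^*}(\cdot)-(\cdot)$: it vanishes at $0$ and $1$ and, by the definition of $\dom$ together with $F_{X|Y}(x|y^*)\in\ui$ at interior nodes, is positive there; hence $\omega_{y^*}(\theta)\ne\theta$ and $\Prob(F_{\Theta|\Phi}(\theta|\Phi)=\theta)\le 1-P_Y(y^*)<1$. The case $P_{X|Y}(\cdot|y^*)\dom P_X$ is the mirror image, and under (\textsl{B}\ref{cond:omega_dm_dom})(ii) applying the observation to $\omega_{y_1}-\omega_{y_0}$ shows $\theta$ cannot be fixed by both branches; in all cases (\textsl{A}\ref{cond:omega_fixedp}) follows. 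For (iv), a binary input makes $P_X$ and each $P_{X|Y}(\cdot|y)$ two-point laws on $\{0,1\}$ pinned down by their mass at $0$, and two such laws are $\dom$-comparable iff these masses differ; if they all coincide then averaging forces $X\perp Y$ (the degenerate $I(X;Y)=0$ case, which I set aside), so otherwise either some $P_{X|Y}(\cdot|y)\ne P_X$ or two conditionals differ, giving (\textsl{B}\ref{cond:omega_dm_dom})(i) or (ii). For (v), input/output relabelings leave (\textsl{B}\ref{cond:omega_dm_nzI}) (the multiset of transition values) unchanged, so it suffices to manufacture a $\dom$-relation by relabeling inputs: as $I(X;Y)>0$, pick $y^*$ with $P_Y(y^*)>0$ and $P_{X|Y}(\cdot|y^*)\ne P_X$ (both strictly positive), and relabel so that the $x$ with $P_X(x)<P_{X|Y}(x|y^*)$ come first, then the ties, then the rest; every partial sum $\sum_{x\le k}(P_X(x)-P_{X|Y}(x|y^*))$, $1\le k\le|\m{X}|-1$, is then strictly negative, i.e.\ $P_X\dom P_{X|Y}(\cdot|y^*)$ in the relabeled pair, so it satisfies (\textsl{B}\ref{cond:omega_dm_nzI})$\wedge$(\textsl{B}\ref{cond:omega_dm_dom}). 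For (vi), write $P'_Y(y)=\sum_x P'_X(x)P_{Y|X}(y\mid x)$, so the quantities in (\textsl{B}\ref{cond:omega_dm_Q}) are $\beta_i=\log\!\bigl(P_{Y|X}(y_i\mid x)/P'_Y(y_i)\bigr)$, real-analytic in $P'_X$ over the open simplex; the ratios $P_{Y|X}(y\mid x)/P'_Y(y)$ average to $1$ with weights $P'_Y(y)$, so whenever $P_{Y|X}(\cdot\mid x)\ne P'_Y$ there are $y_0,y_1$ with $\beta_0<0<\beta_1$. The set of $P'_X$ for which (\textsl{B}\ref{cond:omega_dm_Q}) fails lies in a countable union — over $x$ of $\{P'_X:P_{Y|X}(\cdot\mid x)=P'_Y\}$, and over $x$, opposite-sign pairs $(y_0,y_1)$, and $q\in\RatF$ of $\{P'_X:\beta_0=q\beta_1\}$ — of zero sets of real-analytic functions, each of which is proper (hence Lebesgue-null) unless $P_{Y|X}$ is so degenerate that (\textsl{B}\ref{cond:omega_dm_Q}) is unattainable for any input (e.g.\ all rows equal, the trivial $C(P_{Y|X})=0$ case, which I set aside). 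A countable union of Lebesgue-null sets cannot cover a neighborhood of $P_X$, so a suitable $P'_X$ with $d_{TV}(P_X,P'_X)<\eps$ exists.

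Finally (iii). By (i) we have (\textsl{A}\ref{cond:omega_reg}), and by Theorem~\ref{thrm:recursion2} the chain $\{(\Theta_n,\Phi_n)\}$ has invariant law $P_{\Theta\Phi}$, whose density under (\textsl{B}\ref{cond:omega_dm_nzI}) is a bounded step function bounded away from zero on $\isupp(\Theta,\Phi)=\ui\times\ui$, hence equivalent to Lebesgue measure there. By Lemma~\ref{lem:irreducible_to_recurrent} it therefore suffices to prove that the chain is $P_{\Theta\Phi}$-irreducible, for then its invariant law is unique and ergodic — which is exactly (\textsl{A}\ref{cond:omega_inv}). Since $\Phi_n$ given $\Theta_n$ has a density bounded below on $\ui$ (again (\textsl{B}\ref{cond:omega_dm_nzI})), this reduces to showing that the $\Theta$-chain enters every sub-interval of $\ui$, from every starting point, with positive probability in finitely many steps. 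Two ingredients drive this. First, (\textsl{B}\ref{cond:omega_dm_nzI})$\wedge$(\textsl{A}\ref{cond:omega_fixedp}) rules out an absorbing state: every branch acts with positive probability, so $\theta^*$ could be absorbing only if $\omega_y(\theta^*)=\theta^*$ for all $y$, contradicting (\textsl{A}\ref{cond:omega_fixedp}). Second, within each input interval the dynamics include a contracting branch of slope $e^{\beta_0}<1$ and an expanding branch of slope $e^{\beta_1}>1$ with $\beta_0/\beta_1\notin\RatF$ (this is (\textsl{B}\ref{cond:omega_dm_Q})); composing admissible branches, while bookkeeping which interval the state currently occupies, yields attainable multiplicative factors of the form $e^{m\beta_0+n\beta_1}$, $m,n\in\NaturalF$, a set dense in $(0,\infty)$ by the irrationality of $\beta_0/\beta_1$ and the opposite signs, so the forward orbit of every $\theta_0$ is dense in $\ui$. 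Upgrading dense orbits to positive hitting probabilities for sub-intervals — using continuity of the $\omega_y$, positivity from (\textsl{B}\ref{cond:omega_dm_nzI}), and the absence of traps — then follows by a routine iterated-function-system argument in the spirit of Lemma~\ref{lem:contraction2}. I expect this last upgrade, together with the combinatorial bookkeeping of which input interval the state occupies (the available contracting/expanding slopes and the interval boundaries both change with the current input symbol), to be the main obstacle, and the place where the irrationality hypothesis (\textsl{B}\ref{cond:omega_dm_Q}) is indispensable — for exceptional, ``dyadic-like'' channels the reachable set is a genuine lattice and ergodicity can fail, which is the classical obstruction behind the Horstein scheme's long resistance to analysis at non-generic crossover probabilities.
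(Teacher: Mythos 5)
Your treatment of claims (\ref{claim:dmc1}), (\ref{claim:dmc2}), (\ref{claim:dmc4}) and (\ref{claim:dmc5}) is correct and essentially identical to the paper's: (\ref{claim:dmc5}) is the same sort-the-differences permutation as Lemma \ref{lem:perm} (your partial sums should run over $k\le|\m{X}|-2$ rather than $|\m{X}|-1$, but that is cosmetic). For (\ref{claim:dmc6}) you take a different, and arguably more careful, route than the paper (a countable union of proper real-analytic varieties is Lebesgue-null, hence cannot cover a total-variation ball, versus the paper's bare cardinality claim); both treatments quietly exclude degenerate channels for which (\textsl{B}\ref{cond:omega_dm_Q}) is unattainable, and yours still owes a check that each variety $\{\beta_0=q\beta_1\}$ is proper, but the idea is sound.

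The genuine gap is claim (\ref{claim:dmc3}). Your plan is to establish $P_{\Theta\Phi}$-irreducibility and then invoke Lemma \ref{lem:irreducible_to_recurrent}. This cannot work for a DMC: given $\Theta_0=\theta_0$, the state $\Theta_n$ is a deterministic function of $(\theta_0,Y^{n-1})$ and $Y^{n-1}$ takes at most $|\m{Y}|^{n-1}$ values, so $P_{\Theta_n|\Theta_0}(\cdot|\theta_0)$ is finitely supported and the whole forward orbit $O_{\theta_0}$ is countable. The set $\big(\ui\setminus O_{\theta_0}\big)\times\ui$ then has full $P_{\Theta\Phi}$-measure yet is reached with probability zero, so the chain is \emph{never} $P_{\Theta\Phi}$-irreducible, and no upgrade from "dense orbits" to "positive hitting probability for every subinterval" can repair this, since irreducibility requires reaching every positive-measure set, not every interval. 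Ergodicity of $P_{\Theta\Phi}$ is strictly weaker than irreducibility and must be proved directly; the paper does so by showing that the $\theta$-section $A$ of any invariant set satisfies $F_{\Theta|Y}(A|y)=A$ up to $\m{U}$-null sets for every $y$, and then exploits the exact linearity of the kernel on the leftmost input cell $(0,P_X(0))$ — applying the contracting branch $n_0$ times \emph{before} the expanding branch $n_1$ times, with $2^{n_0\beta_0+n_1\beta_1}\le 1$, so that all intermediate images stay inside that cell — together with the density of $\{2^{n_0\beta_0+n_1\beta_1}\}$ and the Lebesgue differentiation theorem to force $\m{U}(A)\in\{0,1\}$, finishing with the map $\overline{F}(\theta)=\max_y F_{\Theta|Y}(\theta|y)$ to propagate this from $(0,P_X(0))$ to all of $\ui$. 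Your own composition step ("attainable multiplicative factors $e^{m\beta_0+n\beta_1}$") founders on exactly the cell-crossing bookkeeping you flag as the main obstacle; the ordering trick above is how the paper sidesteps it. So the approach to (\ref{claim:dmc3}) needs to be replaced, not merely completed.
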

\begin{proof}
Claim (\ref{claim:dmc1}) follows immediately from condition (\ref{cond:reg_dmc}) of Lemma \ref{lem:reg_cond}. Claim (\ref{claim:dmc4}) holds since any two nonidentical binary distributions can be ordered by dominance. For the remaining claims, see Appendix \ref{app:lemmas}.
\end{proof}

\begin{remark}
The equivalent pair in Lemma \ref{lem:dmc_prop}, claim (\ref{claim:dmc5}), is obtained via an input permutation only, which is given explicitly in the proof and can be simply computed.
\end{remark}

\section{Achieving the Mutual Information}\label{sec:main_result_memoryless}
Our main theorem is presented in Subsection~\ref{subsec:main_res}, establishing the achievability of the mutual information via posterior matching for a large family of input/channel pairs. The examples of Section~\ref{sec:scheme} are then revisited, and the applicability of the theorem is verified in each. Subsection~\ref{subsec:main_proof} is dedicated to the proof of the Theorem.

\subsection{Main Result}\label{subsec:main_res}
\begin{theorem}[\it Achievability]\label{thrm:achv}
Consider an input/channel pair $(P_X, P_{Y|X})\in \Omega_A\cup\Omega_B$ (resp. $\Omega_C$). The corresponding posterior matching scheme with a fixed/variable rate optimal decoding rule, achieves (resp. pointwise achieves) any rate $R<I(X;Y)$ over the channel $P_{Y|X}$. Furthermore, if $(P_X, P_{Y|X})\in \Omega_A$ (resp. $\Omega_C$), then $R$ is achieved (resp. pointwise achieved) within an input constraint $(\eta,\Expt\eta(X))$, for any measurable $\eta:\m{X}\mapsto\RealF$ satisfying $\Expt|\eta(X)|<\infty$.
\end{theorem}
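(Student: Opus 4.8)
\emph{The plan} is to reduce all four assertions to a single posterior-concentration statement, prove it with the contraction machinery of Subsection~\ref{subsec:IFS}, and then read off the decoding and input-constraint claims.

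\emph{Reduction.} By Theorem~\ref{thrm:recursion2} it suffices to work over the normalized channel, where $P_{\Theta\Phi}$ is proper (Lemma~\ref{lem:norm_chan_prop}), the posterior c.d.f.\ of $\Theta_0$ given $\Phi^n$ is the composition $h_{n+1}=F_{\Theta|\Phi}(\cdot|\Phi_n)\circ\cdots\circ F_{\Theta|\Phi}(\cdot|\Phi_1)$ (with $h_1$ the identity), and $\{(\Theta_n,\Phi_n)\}$ is the Markov chain (\ref{eq:gn_rec_normX}) with invariant distribution $P_{\Theta\Phi}$. An optimal variable-rate rule with target $\delta_n\to0$ automatically has $p_e(n)\le\delta_n$, and an optimal fixed-rate rule has vanishing error once a short interval about $\Theta_0$ carries almost all the posterior mass. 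Hence, fixing $R<R'<I(X;Y)$, everything follows from the claim: \emph{the posterior mass of $[\Theta_0-2^{-nR'},\Theta_0+2^{-nR'}]$ tends to $1$} --- in probability on $\Omega_A\cup\Omega_B$, and for every fixed $\theta_0$ on $\Omega_C$.

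\emph{The exponent.} Differentiating the $n$-fold composition, the posterior density at the true point is the trajectory product $f_{\Theta_0|\Phi^n}(\Theta_0|\Phi^n)=\prod_{k=1}^n f_{\Theta|\Phi}(\Theta_k|\Phi_k)$, so $\tfrac1n\log f_{\Theta_0|\Phi^n}(\Theta_0|\Phi^n)=\tfrac1n\sum_{k=1}^n\log f_{\Theta|\Phi}(\Theta_k|\Phi_k)$. Since $\Expt\log f_{\Theta|\Phi}(\Theta|\Phi)=I(\Theta;\Phi)=I(X;Y)$ is finite and, $t\log t$ being bounded below on $[0,1]$, also absolutely integrable, Lemma~\ref{lem:SLLN} gives convergence of this average to $I(X;Y)$: $P_\Theta$-a.s.\ when $P_{\Theta\Phi}$ is ergodic (property~(\textsl{A}\ref{cond:omega_inv}), hence on $\Omega_A$), and for \emph{every} initial $\theta_0$ when the chain is p.h.r., which holds on $\Omega_C$ by Lemma~\ref{lem:Omegac_in_Omega}. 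Applying the same lemma to $k\mapsto\eta(F_X^{-1}(\Theta_k))=\eta(X_k)$, with $\Expt|\eta(X)|<\infty$, yields (\ref{def:input_constraint}) --- a.s.\ on $\Omega_A$, pointwise on $\Omega_C$ --- which is exactly the input-constraint clause.

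\emph{From the density to the interval --- the main difficulty.} Knowing the posterior density at $\Theta_0$ grows like $2^{nI(X;Y)}$ does not by itself bound the shortest mass-$(1-\delta_n)$ interval; one needs uniform control of $h_{n+1}$ over an entire $2^{-nR'}$-neighbourhood of $\Theta_0$. This is where regularity (property~(\textsl{A}\ref{cond:omega_reg})) and the (R)IFS estimates enter, via either of two equivalent routes. (a) For fixed $n$ the i.i.d.\ controls $\Phi_1,\dots,\Phi_n$ may be reversed, so $h_{n+1}$ has the law of the reversed composition $\wt S_{n+1}$ of (\ref{eq:RIFS_def}) with kernel $\omega_\phi=F_{\Theta|\Phi}(\cdot|\phi)$; Lemmas~\ref{lem:contraction1}--\ref{lem:contraction2} then bound $\Prob(|\wt S_{n+1}(s)-\wt S_{n+1}(t)|>\eps)$, regularity being precisely what keeps the Lipschitz expectations finite while forcing the geometric rate to match $I(X;Y)$. (b) Equivalently, view $\{h_n\}$ as an IFS on $\mf{F}_c$ with update $h\mapsto\omega_\Phi\circ h$, $\Phi\sim\m{U}$, and a length function $\psi_\lambda$ as in (\ref{eq:def_lenght_func}); a one-step estimate $\Expt_\Phi[\psi_\lambda(\omega_\Phi\circ h)]\le\xi(\psi_\lambda(h))$ for a contraction $\xi$ with decay profile $\approx2^{-nI(X;Y)}$, fed into Lemma~\ref{lem:IFS_convergence}, gives $\Prob(\psi_\lambda(h_n)>\eps)\le\eps^{-1}r(n)$, which (choosing $\delta_n\to0$ slowly) translates into the required concentration. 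I expect the crux of the whole proof to be establishing this one-step contraction with exponent exactly $I(X;Y)$ under regularity alone --- ergodicity (on $\Omega_A$) and capacity-achievability (on $\Omega_B$) serving to pin the exponent down. For $\Omega_B$, where ergodicity is unavailable, the one-step inequality in (b) is a per-state estimate that does not need it: property~(\textsl{A}\ref{cond:omega_cap_ach}) identifies the attainable exponent with $C(P_{Y|X})=I(X;Y)$, while the fixed-point-free property~(\textsl{A}\ref{cond:omega_fixedp}) excludes a degenerate regime on which the posterior would stall, through the ``fixed point implies no rate'' argument (Lemma~\ref{lem:fixed_point_no_rate}); alternatively one approximates by $\Omega_A$ pairs --- e.g.\ perturbing $P_X$ as in Lemma~\ref{lem:dmc_prop}(\ref{claim:dmc6}) in the DMC case --- and passes to the limit. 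This gives the rate claim for $\Omega_B$ but, consistently with the statement, not the input-constraint refinement, which stays tied to the ergodic SLLN available only on $\Omega_A$ and $\Omega_C$.
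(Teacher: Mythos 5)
You have correctly identified the main difficulty --- passing from the pointwise estimate $f_{\Theta_0|\Phi^n}(\Theta_0|\Phi^n)\approx 2^{nI(X;Y)}$ to concentration of the posterior mass on a $2^{-nR}$-interval --- but neither of your proposed routes closes it. Route (a) is precisely the RIFS error analysis of Section~\ref{sec:error_memoryless}, and the paper is explicit that this only yields rates up to thresholds $R^\dagger,R^*$ that are \emph{not} known to equal $I(X;Y)$ in general (indeed for the BSC no suitable weight function is exhibited at all); your assertion that regularity ``forc[es] the geometric rate to match $I(X;Y)$'' is unsupported. Route (b) posits a one-step contraction on $\mf{F}_c$ with decay profile $r(n)\approx 2^{-nI(X;Y)}$. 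No such quantitative contraction is established or available: Lemma~\ref{lem:contraction_cond} produces $\xi$ from the fixed-point-free property via a strict Jensen inequality and an upper convex envelope, and yields only $r(n)\to 0$ with no rate whatsoever. That contraction is therefore enough for \emph{zero-rate} achievability (Lemma~\ref{lem:diverge_eps}) but cannot by itself carry any positive rate.

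The missing idea is the paper's decoupling of the exponent from the contraction. Lemma~\ref{lem:diverge_R} shows, using the left/right-$\eps$-measures and the regularity property (\textsl{A}\ref{cond:omega_reg}) to define $I^-_\eps\uparrow I(X;Y)$, that the event ``the $(n,R)$-perturbed trajectory stays within $\eps$ of the true trajectory for all $k\le n$'' has probability at most $2^{-n(I(X;Y)-\delta/2-R)}$: one bounds $\Expt(\Theta_n-\ssc{-}\Theta^{n,R}_n)\le 1$ from below by $\Delta^-_{n,R}$ times the infimum of the posterior density over the gap, which on that event is at least $2^{n(I^-_\eps-\eps)}$ by the SLLN. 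Hence for $R<I(X;Y)$ the perturbation escapes an $\eps$-neighbourhood at some random time $m\le n$; stationarity of the chain then lets one restart the \emph{fixed}-$\eps$ zero-rate argument from $\Theta_m$ and run it for an extra $\alpha n$ steps, where the unquantified decay profile $r(\alpha n)$ suffices (with the uniformity fix of Lemma~\ref{lem:uniform_eps} to avoid a union-bound loss). Your $\Omega_B$ treatment is likewise only a sketch: the paper does not approximate by $\Omega_A$ pairs but reruns Lemmas~\ref{lem:SLLN_memoryless} and~\ref{lem:diverge_R} through the ergodic decomposition, using (\textsl{A}\ref{cond:omega_cap_ach}) to force every ergodic component to have mutual information equal to $C(P_{Y|X})$. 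Your reduction, the SLLN exponent computation, and the input-constraint clause are all correct and match the paper.
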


\noindent{\bf Example \ref{ex:AWGN} (AWGN,
continued)}. $P_{XY}$ is proper (jointly Gaussian), and the inverse channel's p.d.f. $f_{X|Y}(x|y)$ is Gaussian with a variance independent of $y$, hence by Lemma \ref{lem:regular_dist} condition (\ref{cond:exp_tail}) has a regular tail uniformly in $y$. Therefore, by condition (\ref{cond:reg_unimodal}) of Lemma \ref{lem:reg_cond}, the Gaussian input/AWGN channel pair $(P_X,P_{Y|X})$ is regular and Property (\textsl{A}\ref{cond:omega_reg}) is satisfied. It is easy to see that the linear posterior matching kernel (\ref{eq:AWGN_kernel}) is fixed-point free, and so Property (\textsl{A}\ref{cond:omegac_fixedp}$^*$) is satisfied as well. Finally, $f_{XY}$ is continuous and bounded over a $\RealF^2$ support, so Property (\textsl{A}\ref{cond:omegac_proper}) is also satisfied. Therefore $(P_X, P_{Y|X})\in\Omega_C$, and Theorem \ref{thrm:achv} verifies the well known fact that the Schalkwijk-Kailath scheme (pointwise) achieves any rate below the capacity $I(X;Y)=\frac{1}{2}\log\left(1+\SNR\right)$.

\vspace{9pt}\noindent{\bf Example \ref{ex:BSC} (BSC, continued)}. The pair of a ${\rm Bernoulli}\left(\frac{1}{2}\right)$ input $P_X$ and a BSC $P_{Y|X}$ with any nontrivial crossover probability $p\neq 0,1$, satisfies properties (\textsl{A}\ref{cond:omega_cap_ach}) and (\textsl{B}\ref{cond:omega_dm_nzI}). Properties (\textsl{A}\ref{cond:omega_reg}) and (\textsl{A}\ref{cond:omega_fixedp}) follow from claims (\ref{claim:dmc1}), (\ref{claim:dmc2}) and (\ref{claim:dmc4}) of Lemma \ref{lem:dmc_prop}. Hence $(P_X,P_{Y|X})\in\Omega_B$ and Theorem \ref{thrm:achv} implies that the posterior matching scheme, which coincides in this case with the Horstein scheme, indeed achieves the capacity $I(X;Y) = 1-h_b(p)$. \textit{This settles in the affirmative a longstanding conjecture}.

\begin{remark}
In the BSC Example above, it also holds (via Lemma \ref{lem:dmc_prop}, claim (\ref{claim:dmc3})) that $(P_X,P_{Y|X})\in\Omega_A$ for a.a. crossover probabilities $p$, except perhaps for the countable set $S=\{p: \frac{1+\log{p}}{1+\log{(1-p)}}\in\RatF\}$ where property (\textsl{B}\ref{cond:omega_dm_Q}) is not satisfied. In these cases the ergodicity property (\textsl{A}\ref{cond:omega_inv}) is not guaranteed, though this may be an artifact of the proof (see Remark \ref{rem:weaken_irr}). Therefore, although capacity is achieved for any $p$ (via $\Omega_B$), Theorem \ref{thrm:achv} guarantees the empirical distribution of the input sequence $X^n$ to approach $P_X$ only for $p\not\in S$. However, since $P_X$ is the \textit{unique} capacity achieving distribution, this sample-path property of the input sequence holds for $p\in S$ nonetheless (see Remark \ref{rem:non_erg}).
\end{remark}

\begin{remark}
Interestingly, for $p\in S$ the Horstein medians exhibit ``regular behavior'', meaning that any median point can always be returned to in a fixed number of steps. In fact, for the subset of $S$ where $\frac{1+\log{p}}{1+\log{(1-p)}} = -k$ for some positive integer $k\geq 2$, the Horstein scheme can be interpreted as a simple finite-state constrained encoder that precludes subsequences of more than $k$ consecutive $0$'s or $1$'s, together with an insertion mechanism repeating any erroneously received bit $k+1$ times. This fact was identified and utilized in~\cite{Schalkwijk71} to prove achievability in this special case.
\end{remark}

\noindent{\bf Example \ref{ex:uniform} (Uniform input/noise, continued)}. $P_{XY}$ is proper with a bounded p.d.f. over the convex support $\isupp(X,Y)=\ui\times(0,2)$, the marginal p.d.f. $f_X$ is bounded, and the inverse channel's p.d.f. is uniform hence has a bounded max-to-min ratio. Therefore, condition (\ref{cond:reg_min_max}) of Lemma \ref{lem:reg_cond} holds, and properties (\textsl{A}\ref{cond:omega_reg}) and (\textsl{A}\ref{cond:omegac_proper}) are satisfied. It is readily verified that the kernel (\ref{eq:uniform_kernel}) is fixed-point free, and so property (\textsl{A}\ref{cond:omegac_fixedp}$^*$) is satisfied as well. Therefore $(P_X, P_{Y|X})\in\Omega_C$, and Theorem \ref{thrm:achv} reverifies that the simple posterior matching scheme (\ref{eq:uniform_scheme}) pointwise achieves the mutual information $I(X;Y) = \frac{1}{2}\log{e}$, as previously established by direct calculation. In fact, we have already seen that (variable-rate) zero-error decoding is possible in this case, and in the next section we arrive at the same conclusion from a different angle.

\vspace{9pt}\noindent{\bf Example \ref{ex:exp} (Exponential input/noise, continued)}.
$P_{XY}$ is proper with a bounded p.d.f. over the convex support $\isupp(X,Y)=\RealF^+\times\RealF^+$, the marginal p.d.f. $f_X$ is bounded, and the inverse channel's p.d.f. is uniform hence has a bounded max-to-min ratio. Therefore, condition (\ref{cond:reg_min_max}) of Lemma \ref{lem:reg_cond} holds, and properties (\textsl{A}\ref{cond:omega_reg}) and (\textsl{A}\ref{cond:omegac_proper}) are satisfied. It is readily verified that the kernel (\ref{eq:exp_kernel}) is fixed-point free, and so property (\textsl{A}\ref{cond:omegac_fixedp}$^*$) is satisfied as well. Therefore $(P_X, P_{Y|X})\in\Omega_C$, and so by Theorem \ref{thrm:achv} the posterior matching scheme (\ref{eq:exp_scheme}) pointwise achieves the mutual information, which is this case is $I(X;Y) \approx 0.8327$.

\vspace{9pt}\noindent{\bf Example \ref{ex:gen_DMC} (General DMC, continued)}. It has already been demonstrated that the posterior matching scheme achieves the capacity of the BSC. We now show that the same holds true for a general DMC, up to some minor resolvable technicalities. Let $P_{Y|X}$ be a DMC with nonzero transition probabilities, and set $P_X$ to be capacity achieving (unconstrained). Hence properties (\textsl{B}\ref{cond:omega_dm_nzI}) and (\textsl{A}\ref{cond:omega_cap_ach}) are satisfied, and by Lemma \ref{lem:dmc_prop}, claim (\ref{claim:dmc1}), property (\textsl{A}\ref{cond:omega_reg}) holds as well. The corresponding posterior matching scheme in this case is equivalent to a generalized Horstein scheme, which was conjectured to achieve the unconstrained capacity when there are no fixed points, namely when property (\textsl{A}\ref{cond:omega_fixedp}) is satisfied \cite[Section 4.6]{Horstein-report}. Since in this case $(P_X,P_{Y|X})\in\Omega_B$, Theorem \ref{thrm:achv} verifies that this conjecture indeed holds. Moreover, the restriction of not having fixed points is in fact superfluous, since by Lemma \ref{lem:dmc_prop}, claim (\ref{claim:dmc5}), there always exists an equivalent input/DMC pair (obtained simply by an input permutation) for which the posterior matching scheme is capacity achieving. This scheme can be easily translated into an equivalent optimal scheme for the original channel $P_{Y|X}$, which is in fact one of the many $\mu$-variants satisfying the posterior matching principle mentioned in Corollary \ref{cor:eq_schemes}, where the u.p.f. $\mu$ plays the role of the input permutation. This observation is further discussed and generalized in Section~\ref{sec:extensions}-\ref{subsec:mu_var}.

More generally, let $P_X$ be any input distribution for $P_{Y|X}$, e.g. capacity achieving under some input constraints. If the associated kernel is fixed-point free ((\textsl{A}\ref{cond:omega_fixedp}) holds) and (\textsl{B}\ref{cond:omega_dm_Q}) is satisfied, then by Lemma \ref{lem:dmc_prop}, claim (\ref{claim:dmc3}), we have that (\textsl{A}\ref{cond:omega_inv}) holds as well. This implies $(P_X,P_{Y|X})\in\Omega_A$, and hence by Theorem \ref{thrm:achv} the associated posterior matching scheme achieves rates up to the corresponding mutual information $I(X;Y)$, within any input constraints encapsulated in $P_X$. Again, the fixed-point requirement is superfluous, and achievability within the same input constraints can be guaranteed via a posterior matching scheme for an equivalent channel (or the corresponding $\mu$-variant), for which the kernel is fixed-point free.

It is worth noting that requiring property (\textsl{B}\ref{cond:omega_dm_Q}) to hold is practically nonrestrictive. For any fixed alphabet sizes $|\m{X}|,|\m{Y}|$, there is only a countable number of input/channel pairs that fail to satisfy this property. Moreover, even if $(P_X,P_{Y|X})$ does not satisfy (\textsl{B}\ref{cond:omega_dm_Q}), then by Lemma \ref{lem:dmc_prop}, claim (\ref{claim:dmc6}), we can find an input distribution $P_X'$ arbitrarily close (in total variation) to $P_X$, such that (\textsl{B}\ref{cond:omega_dm_Q}) does hold for $(P_X',P_{Y|X})$. Hence, the posterior matching scheme (or a suitable variant, if there are fixed points) for $(P_X',P_{Y|X})$ achieves rates arbitrarily close to $I(X;Y)$ while maintaining any input constraint encapsulated in $P_X$ arbitrarily well.

\begin{remark}
For input/DMC pairs such that $(P_X,P_{Y|X})\in\Omega_B$ but where (\textsl{B}\ref{cond:omega_dm_Q}) does not hold, ergodicity is not guaranteed (see also Remark \ref{rem:weaken_irr}). Therefore, although the (unconstrained) capacity is achieved, the empirical distribution of the input sequence $X^n$ will not necessarily approach $P_X$, unless $P_X$ is the \textit{unique} capacity achieving distribution for $P_{Y|X}$ (see Remark \ref{rem:non_erg}).
\end{remark}
\begin{remark}\label{rem:zero_transp}
The nonzero DMC transition probabilities restriction (\textsl{B}\ref{cond:omega_dm_nzI}) is mainly intended to guarantee that the regularity property (\textsl{A}\ref{cond:omega_reg}) is satisfied (although this property holds under somewhat more general conditions, e.g., for the BEC.). However, regularity can be defined in a less restricting fashion so that this restriction could be removed. Roughly speaking, this can be done by redefining the left-$\eps$-measure and right-$\eps$-measure of Section~\ref{sec:fam} so that the neighborhoods over which the infimum is taken shrink near some finite collection of points in $\ui$, and not only near the endpoints, thereby allowing ``holes'' in the conditional densities. For simplicity of exposition, this extension was left out.
\end{remark}

\noindent{\bf Example \ref{ex:exp_noise_mean} (Exponential noise with an input mean constraint, continued)}. This example is not immediately covered by the Lemmas developed. However, studying the input/channel pair $(P_{Y|\Theta},P_\Theta)$ (namely, the normalized pair but without the artificial output transformation), we see that $P_{\Theta Y}$ satisfies property (\textsl{A}\ref{cond:omegac_proper}), and the corresponding posterior matching kernel (which is easily derived from (\ref{eq:exp_mean_kernel})) is fixed-point free, hence property (\textsl{A}\ref{cond:omegac_fixedp}$^*$) is also satisfied. Proving that this is a regular pair is straightforward but requires some work. Loosely speaking, it stems from the fact that $f_{Y|\Theta}(y|\theta)$ is monotonically decreasing in $y$ for any fixed $\theta$, and has a one-sided regular tail. Therefore, the posterior matching scheme (\ref{eq:exp_mean_scheme}) pointwise achieves any rate below the mean-constrained capacity $I(X;Y)=\log{(1+\frac{a}{b})}$.

\subsection{Proof of Theorem \ref{thrm:achv}}\label{subsec:main_proof}
Let us start by providing a rough outline of the proof. First, we show
that zero rate is achievable, i.e., any fixed interval around the
message point accumulates a posterior probability mass that tends
to one. This is done by noting that the time evolution of the
posterior c.d.f. $F_{\Theta_0|\Phi^n}$ can be represented
by an IFS over the space $\mf{F}_{c}$, generated by the inverse
channel's c.d.f. via function composition, and controlled by the
channel outputs. Showing that the inverse channel's c.d.f. is
contractive on the average (Lemma \ref{lem:contraction_cond}), we
conclude that the posterior c.d.f. tends to a unit step function
about $\Theta_0$ (Lemma \ref{lem:diverge_eps}) which verifies
zero-rate achievability. For positive rates, we use the SLLN for
Markov chains to show that the posterior p.d.f. at the message
point is $\approx 2^{nI(X;Y)}$ (Lemma \ref{lem:SLLN_memoryless}).
Loosely speaking, a point that cannot be distinguished from
$\Theta_0$ must induce, from the receiver's perspective, about
the same input sequence as does the true message point. Since
the normalized inputs are just the posterior c.d.f. sequence
evaluated at the message point, this means that such points will
also have about the same c.d.f. sequence as $\Theta_0$ does, hence
also will have a posterior p.d.f. $\approx 2^{nI(X;Y)}$. But that
is only possible within an interval no larger than $\approx
2^{-nI(X;Y)}$ around $\Theta_0$, since the posterior p.d.f.
integrates to unity. Thus, points that cannot be distinguished
from $\Theta_0$ must be $2^{-nI(X;Y)}$ close to it. This is more
of a converse, but essentially the same ideas can be applied
(Lemma \ref{lem:diverge_R}) to show that for any $R<I(X;Y)$, a
$2^{-nR}$ neighborhood of the message point accumulates (with high
probability) a posterior probability mass exceeding some fixed
$\eps>0$ at some point during the first $n$ channel uses. This
essentially reduces the problem to the zero-rate setting, which
was already solved.

We begin by establishing the required technical Lemmas.
\begin{lemma}\label{lem:contraction_cond}
Let $(P_X, P_{Y|X})$ satisfy property (\textsl{A}\ref{cond:omega_fixedp}). Then there exist a contraction $\xi(\cdot)$ and a length function
$\psi_{\sst{\lambda}}(\cdot)$ as in (\ref{eq:def_lenght_func}) over $\mf{F}_{c}$, such that for any
$h\in\mf{F}_{c}$
\begin{equation}\label{eq:contraction_prop}
\Expt\Big{(}\psi_{\sst{\lambda}}\big{[}F_{\Theta|\Phi}(\cdot\,|\Phi)\circ
h\big{]}\Big{)} \,\leq
\;\xi\big{(}\psi_{\sst{\lambda}}(h\,)\,\big{)}
\end{equation}
\end{lemma}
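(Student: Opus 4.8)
The plan is to reduce the functional inequality \eqref{eq:contraction_prop} to a \emph{pointwise} contraction statement about a single scalar argument. First I would observe that for a length function of the form \eqref{eq:def_lenght_func}, by Fubini,
\begin{equation*}
\Expt\Big{(}\psi_{\sst{\lambda}}\big{[}F_{\Theta|\Phi}(\cdot|\Phi)\circ h\big{]}\Big{)} = \int_0^1 \Expt\Big{[}\lambda\big{(}F_{\Theta|\Phi}(h(x)|\Phi)\big{)}\Big{]}\,dx,
\end{equation*}
so it suffices to find a single contraction $\xi_0$ and a suitable $\lambda$ such that $\Expt\big{[}\lambda(F_{\Theta|\Phi}(\theta|\Phi))\big{]}\le \xi_0(\lambda(\theta))$ for every $\theta\in\ui$; then the inequality for $\psi_{\sst\lambda}$ follows by Jensen's inequality applied to the $\cap$-convex $\xi_0$ (moving the outer $x$-integral inside $\xi_0$), possibly after passing to a slightly dilated contraction $\xi\ge\xi_0$ to absorb the Jensen slack. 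The natural choice is $\lambda(\theta)=\min(\theta,1-\theta)$ up to the symmetrization/convexity bookkeeping required by the definition after \eqref{eq:def_lenght_func}; with this choice $\lambda(\theta)$ measures how close $\theta$ is to the boundary $\{0,1\}$, and the claim becomes: applying the normalized kernel and averaging over the channel output moves $\theta$ closer to the boundary, on average, by a definite factor.

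Next I would make the key computation explicit using the threshold points $\theta^-_\phi,\theta^+_\phi$ introduced in Section~\ref{sec:fam}. Because $F_{\Theta|\Phi}(\cdot|\phi)$ is a c.d.f.\ on $\ui$ that is continuous in $\theta$ for $P_\Phi$-a.a.\ $\phi$ (Lemma \ref{lem:norm_chan_prop}\eqref{claim:norm4}), the quantity $\min\big{(}F_{\Theta|\Phi}(\theta|\Phi),1-F_{\Theta|\Phi}(\theta|\Phi)\big{)}$ is exactly what the left- and right-$\eps$-measures were designed to control: the expected value of $F_{\Theta|\Phi}(\theta|\Phi)$ minus $\theta$ (its ``drift'') is governed by the mass of $P_{\Phi|\Theta}(\cdot|\xi)$ for $\xi$ near $\theta$, and the $\eps$-measures provide a lower bound on how much this mass spreads. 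The fixed-point-free property (\textsl{A}\ref{cond:omega_fixedp}) enters precisely here: it guarantees that for every $\theta\in(0,1)$ the inequality $\Expt[F_{\Theta|\Phi}(\theta|\Phi)]\ne\theta$ is \emph{strict} on a set of outputs of positive probability, which, combined with the fact that $F_{\Theta|\Phi}$ is a genuine bijection-valued kernel (so no mass can escape to or be created at the endpoints), yields $\Expt[\lambda(F_{\Theta|\Phi}(\theta|\Phi))]<\lambda(\theta)$ for each fixed $\theta\in(0,1)$. One then has to upgrade this ``for each $\theta$'' strictness to a uniform contraction $\xi_0$ dominating it simultaneously for all $\theta$; I would do this by taking
\begin{equation*}
\xi_0(x)\dfn \sup_{\theta:\,\lambda(\theta)\le x}\Expt\big{[}\lambda(F_{\Theta|\Phi}(\theta|\Phi))\big{]},
\end{equation*}
checking that this is nonnegative, $\cap$-convex (as a sup of the relevant concave-in-$x$ envelopes, or after taking the concave hull), satisfies $\xi_0(x)<x$ for $x\in(0,1]$, and then replacing $\xi_0$ by its least concave majorant if needed so it qualifies as a contraction in the paper's sense; Lemma \ref{lem:contraction_profile} then supplies its decay profile.

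The main obstacle I expect is the passage from pointwise strictness to \emph{uniform} strictness, i.e.\ proving $\xi_0(x)<x$ for all $x\in(0,1]$ and not merely $\xi_0(x)\le x$. The danger is that although each individual $\theta$ contracts, the contraction factor could degrade to $1$ as $\theta\to 0$ or $\theta\to1$ (or along some other sequence), which would make $\xi_0$ merely sub-identity with a possible tangency, not a true contraction. Here I would lean on the regularity hypothesis only indirectly (it is not assumed in this lemma), and instead exploit structural features: near the boundary, $\theta^-_\phi$ and $\theta^+_\phi$ force $F_{\Theta|\Phi}(\theta|\phi)$ to ``overshoot'' toward the interior unless the kernel is degenerate there, and (\textsl{A}\ref{cond:omega_fixedp}) precludes such degeneracy uniformly by a compactness argument on the (closed) set of $\theta$ together with continuity of $\theta\mapsto\Expt[\lambda(F_{\Theta|\Phi}(\theta|\Phi))]$; any accumulation point of a hypothetical non-contracting sequence would itself be a fixed point in the averaged sense, contradicting (\textsl{A}\ref{cond:omega_fixedp}). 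Handling the endpoints $\theta\in\{0,1\}$ (where $\lambda=0$ and both sides vanish) and the a.s.-exceptional outputs where continuity of the kernel fails are routine, absorbed into the a.e.\ statements already available from Lemma \ref{lem:norm_chan_prop}. The routine verification of $\cap$-convexity of the final $\xi$ and of the length function $\psi_{\sst\lambda}$ fitting the template \eqref{eq:def_lenght_func} I would relegate to a short appendix computation.
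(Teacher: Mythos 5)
Your overall strategy is the same as the paper's: reduce \eqref{eq:contraction_prop} to the pointwise statement $\Expt\big[\lambda(F_{\Theta|\Phi}(\theta|\Phi))\big]\le\xi(\lambda(\theta))$, get strictness at each $\theta$ from the fixed-point-free property via Jensen, build $\xi$ as a (concave envelope of a) supremum, and finish with Jensen to pull the $x$-integral inside $\xi$. That is exactly how the paper proceeds. However, there is one concrete flaw and two glossed points worth flagging.

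The flaw is your choice $\lambda(\theta)=\min(\theta,1-\theta)$. The pointwise strict inequality $\Expt[\lambda(F_{\Theta|\Phi}(\theta|\Phi))]<\lambda(\theta)$ rests on a \emph{strict} Jensen inequality: since $\Expt[F_{\Theta|\Phi}(\theta|\Phi)]=\theta$ exactly (note your phrasing ``$\Expt[F_{\Theta|\Phi}(\theta|\Phi)]\ne\theta$'' is wrong — the mean is always $\theta$; what (\textsl{A}\ref{cond:omega_fixedp}) gives is that the random variable $F_{\Theta|\Phi}(\theta|\Phi)$ is not a.s.\ constant), strictness requires $\lambda$ to be \emph{strictly} $\cap$-convex on the range of that random variable. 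With the piecewise-linear $\min(\theta,1-\theta)$, whenever $F_{\Theta|\Phi}(\theta|\Phi)$ is supported entirely in $[0,\tfrac12]$ or entirely in $[\tfrac12,1]$ (which certainly happens for $\theta$ near the endpoints for many kernels) you get equality, and the contraction fails. The paper fixes a \emph{strictly} $\cap$-convex symmetric $\lambda$ from the outset; your ``symmetrization/convexity bookkeeping'' hedge does not identify strict concavity as the load-bearing requirement. Relatedly, your appeal to $\theta^\pm_\phi$ and the left/right-$\eps$-measures is a red herring: those objects play no role in this lemma (they enter only through the regularity property in Lemma \ref{lem:diverge_R}); the entire mechanism here is the martingale identity plus strict Jensen.

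The second glossed point is the passage to the concave majorant. Showing that the least concave majorant $\xi$ of your $\xi_0$ still satisfies $\xi(x)<x$ on $(0,1]$ is not routine: the danger is precisely at $x\to 0$, where a chord from $(\theta_0,\xi_0(\theta_0))$ with $\theta_0\to 0$ could be tangent to the identity. The paper handles this with a dedicated argument (the monotone gap function $\delta(\theta)=\inf_{\phi\in[\theta,1]}(\phi-\xi^*(\phi))$ and the convex-combination estimate), and continuity of $\theta\mapsto\Expt[\lambda(F_{\Theta|\Phi}(\theta|\Phi))]$ itself requires a bounded-convergence argument using Lemma \ref{lem:norm_chan_prop} claim (\ref{claim:norm4}). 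Your compactness heuristic points in the right direction but would need to be carried out at this level of care to constitute a proof.
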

\begin{proof}
See Appendix \ref{app:lemmas}.
\end{proof}

Define the stochastic process $\{\bar{G}_n\}_{n=1}^\infty$,
\begin{equation*}
\bar{G}_n(\cdot) \dfn \bar{g}_n(\cdot,\Phi^{n-1})
\end{equation*}
Since $\bar{g}_n(\theta,\phi^{n-1})=F_{\Theta_0|\Phi^{n-1}}(\theta|\phi^{n-1})$,
$\bar{G}_n$ is the posterior c.d.f. of the message point
after observing the i.i.d. output sequence $\Phi^{n-1}$, and is a
r.v. taking values in the c.d.f. space $\mf{F}_{c}$. Moreover, by
(\ref{eq:gn_rec_norm}) we have that
\begin{equation}\label{eq:Gn_evol}
\bar{G}_{n+1}=F_{\Theta|\Phi}(\cdot|\Phi_n)\circ \bar{G}_n
\end{equation}
and therefore $\{\bar{G}_n\}_{n=1}^\infty$ is an IFS over
$\mf{F}_{c}$, generated by the normalized posterior matching kernel
$F_{\Theta|\Phi}(\cdot|\phi)$ (via function composition) and
controlled by the outputs $\{\Phi_n\}_{n=1}^\infty$. Since the
message point is uniform, the IFS initializes at
$\bar{G}_1(\theta) = \theta\ind_{(0,1)}(\theta)+\ind_{[1,\infty)}(\theta)$ (the uniform c.d.f.).
Recall that the normalized kernel is continuous in $\theta$ for $P_\Phi$-a.a. $\phi$ (Lemma \ref{lem:norm_chan_prop}, claim (\ref{claim:norm4})), hence $\bar{G}_n$ is a.s. continuous.

We find it convenient to define the \textit{$\delta$-positive trajectory}
$\{\ssc{+}\Theta^{\delta}_k\}_{k=1}^\infty$ and
\textit{$\delta$-negative trajectory}
$\{\ssc{-}\Theta^{\delta}_k\}_{k=1}^\infty$, as follows:
\begin{align}\label{eq:trajectory}
\nonumber &\ssc{+}\Theta^{\delta}_k \dfn
\bar{G}_k(\Theta_0+\Delta^+_\delta) \,,\qquad \Delta^+_\delta = \min\left(\delta,\frac{1-\Theta_0}{2}\right)\\
& \ssc{-}\Theta^{\delta}_k \dfn
\bar{G}_k(\Theta_0-\Delta^-_\delta) \,,\qquad \Delta^-_\delta =
\min\left(\delta,\frac{\Theta_0}{2}\right)
\end{align}
These trajectories are essentially the posterior c.d.f. evaluated
after $k$ steps at a $\delta$ perturbation from $\Theta_0$ (up to
edge issues), or alternatively the induced normalized input
sequence for such a perturbation from the point of view of the
receiver. The true normalized input sequence, which corresponds to
the c.d.f. evaluated at the message point itself, is
$\Theta_k=\bar{G}_k(\Theta_0)$.

The next Lemma shows that for a zero rate, the trajectories
diverge towards the boundaries of the unit interval with
probability approaching one, hence our scheme has a vanishing
error probability in this special case.
\begin{lemma}\label{lem:diverge_eps}
Let $(P_X, P_{Y|X})$ satisfy property (\textsl{A}\ref{cond:omega_fixedp}). Then for any $\eps>0,\delta>0$,
\begin{align*}
\Prob\big{(}\ssc{-}\Theta^{\delta}_n>\eps\big{)} = \bigo\left(\sqrt[8]{r(n)}\right)\,,\qquad
\Prob\big{(}\ssc{+}\Theta^{\delta}_n<1-\eps\big{)} = \bigo\left(\sqrt[8]{r(n)}\right)
\end{align*}
where $r(n)$ is the decay profile of the contraction
$\xi(\cdot)$ from Lemma \ref{lem:contraction_cond}.
\end{lemma}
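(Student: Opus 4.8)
The plan is to run the whole argument through the c.d.f.-valued IFS $\{\bar G_n\}$ of \refeq{eq:Gn_evol}: Lemma~\ref{lem:contraction_cond} forces the posterior c.d.f.\ to collapse onto a step function, and a Bayes-type identity pins that step to $\Theta_0$.

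First I would record the concentration of the posterior. Since $\{\bar G_n\}$ is the IFS generated by $F_{\Theta|\Phi}(\cdot|\phi)$ and controlled by $\{\Phi_n\}$, which is i.i.d.\ uniform (the normalized channel preserves $\m{U}$ and each $\Theta_n$ is independent of $\Phi^{n-1}$, cf.\ Theorem~\ref{thrm:recursion2}), and since Lemma~\ref{lem:contraction_cond} supplies a length function $\psi_{\sst{\lambda}}$ and a contraction $\xi$ obeying \refeq{eq:contraction_prop}, Lemma~\ref{lem:IFS_convergence} yields $\Prob\big(\psi_{\sst{\lambda}}(\bar G_n)>t\big)\le t^{-1}r(n)$ for every $t>0$. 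For $\gamma\in(0,\tfrac12]$ set $T_n^\gamma\dfn\{x\in\ui:\gamma<\bar G_n(x)<1-\gamma\}$, which is an interval because $\bar G_n$ is a.s.\ continuous and non-decreasing. Since $\lambda$ is $\cap$-convex and symmetric about $\tfrac12$, we have $\lambda(u)\ge\lambda(\gamma)$ for $u\in[\gamma,1-\gamma]$, and $\cap$-convexity with $\lambda(0)=0$ forces $\lambda(\gamma)\ge2\gamma$; hence $|T_n^\gamma|\,\lambda(\gamma)\le\psi_{\sst{\lambda}}(\bar G_n)$, so $\Prob\big(|T_n^\gamma|\ge s\big)\le r(n)/(2s\gamma)$.

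Then I would localize $\Theta_0$ inside $T_n^\gamma$ — the one step that is not routine, since $\Theta_0$ and $\bar G_n$ are strongly dependent. The way around this is the identity $\bar G_n=F_{\Theta_0|\Phi^{n-1}}(\cdot|\Phi^{n-1})$: conditioned on $\Phi^{n-1}$ the message point is a fresh draw from the continuous c.d.f.\ $\bar G_n$, whereas $T_n^\gamma$ is $\Phi^{n-1}$-measurable, so $\Prob(\Theta_0\notin T_n^\gamma\mid\Phi^{n-1})=2\gamma$ a.s.\ and hence $\Prob(\Theta_0\notin T_n^\gamma)=2\gamma$. Fixing $\beta\in(0,\tfrac12)$, on $\{\Theta_0\in(\beta,1-\beta)\}$ one has $\Delta^{\pm}_\delta\ge\delta'\dfn\min(\delta,\beta/2)$, and a short deterministic check shows that whenever $\Theta_0\in(\beta,1-\beta)$, $\Theta_0\in T_n^\gamma$ and $|T_n^\gamma|<\delta'$ all hold, the left endpoint of $T_n^\gamma$ lies strictly between $\Theta_0-\delta'$ and $\Theta_0$, hence strictly to the right of $\Theta_0-\Delta^-_\delta$, so that monotonicity and continuity of $\bar G_n$ give $\ssc{-}\Theta^{\delta}_n=\bar G_n(\Theta_0-\Delta^-_\delta)\le\gamma$; the mirror argument gives $\ssc{+}\Theta^{\delta}_n\ge1-\gamma$.

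Finally I would assemble and optimize. Taking complements in the inclusion above,
\begin{equation*}
\Prob\big(\ssc{-}\Theta^{\delta}_n>\gamma\big)\le\Prob\big(\Theta_0\notin(\beta,1-\beta)\big)+\Prob\big(|T_n^\gamma|\ge\delta'\big)+\Prob\big(\Theta_0\notin T_n^\gamma\big)\le2\beta+\frac{r(n)}{2\delta'\gamma}+2\gamma,
\end{equation*}
and the identical bound holds for $\Prob\big(\ssc{+}\Theta^{\delta}_n<1-\gamma\big)$. Choosing $\beta=\gamma=r(n)^{1/8}$ — so that $\gamma\le\eps$ and $\delta'=\beta/2$ for all large $n$ — the middle term is $r(n)^{3/4}$ and the bound reads $4\,r(n)^{1/8}+r(n)^{3/4}=\bigo(\sqrt[8]{r(n)})$; since $\{\ssc{-}\Theta^{\delta}_n>\eps\}\subseteq\{\ssc{-}\Theta^{\delta}_n>r(n)^{1/8}\}$ for large $n$, and symmetrically on the positive side, the claim follows. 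The hard part is the localization step; everything else is bookkeeping with Lemmas~\ref{lem:contraction_cond} and~\ref{lem:IFS_convergence} and the elementary geometry of monotone functions. (Balancing $4t+r(n)/t^2$ at $t\asymp r(n)^{1/3}$ actually gives the sharper exponent $\tfrac13$, but $\tfrac18$ is all that is needed downstream.)
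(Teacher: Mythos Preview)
Your argument is correct and cleaner than the paper's. Both proofs start from the same contraction estimate $\Prob(\psi_{\sst\lambda}(\bar G_n)>t)\le t^{-1}r(n)$ and both ultimately exploit the fact that $\bar G_n(\Theta_0)=\Theta_n\sim\m{U}$ (equivalently, that $\Theta_0\mid\Phi^{n-1}$ has c.d.f.\ $\bar G_n$), but they process it differently. The paper introduces the median point $\Theta^*_n$ of $\bar G_n$, first shows $\bar G_n(\Theta^*_n\pm\delta)$ is close to $0,1$, then controls the integral $\int_0^{\Theta_0}\bar G_n+\int_{\Theta_0}^1(1-\bar G_n)$ via a chain of inequalities that compare $\Theta_0$ to $\Theta^*_n$, and finally passes to the trajectory through a conditional quantile set $\Pi_\nu$. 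You bypass the median entirely: the level set $T_n^\gamma$ plays the role of ``where the posterior mass lives,'' its Lebesgue measure is bounded straight from $\psi_{\sst\lambda}$ via $\lambda(\gamma)\ge 2\gamma$, and the localization $\Prob(\Theta_0\notin T_n^\gamma)=2\gamma$ drops out of the posterior identity in one line. What this buys you is a shorter proof with fewer auxiliary objects and, as you note, a sharper exponent $r(n)^{1/3}$ if one balances optimally. The paper's median-based route is perhaps more transparent about \emph{where} the posterior concentrates (at $\Theta^*_n\approx\Theta_0$) before tying it to the trajectory, but your approach gets to the conclusion more economically.
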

\begin{proof}
Let $\psi_{\sst{\lambda}}$ and $\xi$ be the length function and
contraction from Lemma \ref{lem:contraction_cond} corresponding to
the pair $(P_X,P_{Y|X})$, and let $r(n)$ be the decay profile of
$\xi$. By the contraction property (\ref{eq:contraction_prop}) and
Lemma \ref{lem:IFS_convergence}, we immediately have that for any
$\nu>0$
\begin{equation}\label{eq:length_tends_to_zero}
\Prob\left(\psi_{\sst{\lambda}}(\bar{G}_n)
> \nu\right) \leq \nu^{-1}\,r(n)
\end{equation}
Define the (random) \textit{median point} of $\bar{G}_n$:
\begin{equation*}
\Theta^*_n \dfn \inf\left\{\theta\in\ui : \bar{G}_n(\theta) \geq \frac{1}{2}\right\}
\end{equation*}
Since $\bar{G}_n$ is a.s. continuous, $\bar{G}_n(\Theta^*_n) = \frac{1}{2}$ is a.s. satisfied.
Using the symmetry of the function $\lambda(\cdot)$, we can write
\begin{equation}\label{eq:length_Gn}
\psi_{\sst{\lambda}}(\bar{G}_n) =
\int_0^{\Theta^*_n}\lambda(\bar{G}_n(\theta))d\theta +
\int_{\Theta^*_n}^1\lambda(1-\bar{G}_n(\theta))d\theta \;\quad \text{a.s.}
\end{equation}
and then:
\begin{equation*}
\Prob\left(\bar{G}_n(\Theta^*_n-\delta)> \nu\right) \stackrel{(\rm a)}{\leq} \Prob\left(\lambda\left(\bar{G}_n(\Theta^*_n-\delta)\right)
> \nu\right) \stackrel{(\rm b)}{\leq} \Prob\left(\int_{\Theta^*_n-\delta}^{\Theta^*_n}\lambda\left(\bar{G}_n(\theta)\right)d\theta
> \nu\delta\right)\stackrel{(\rm c)}{\leq} \Prob(\psi_{\sst\lambda}(\bar{G}_n)>\nu\delta)
\end{equation*}
where (a) holds since $\lambda(\theta)>\theta$ for any $\theta\in
(0,\frac{1}{2})$, in (b) we use the monotonicity of
$\bar{G}_n$, and (c) follows from (\ref{eq:length_Gn}). Using
(\ref{eq:length_tends_to_zero}) this leads to
\begin{equation}\label{eq:Gn_left}
\Prob\left(\bar{G}_n(\Theta^*_n-\delta) > \nu\right) \leq
\frac{1}{\nu\delta}\,r(n)
\end{equation}
and similarly
\begin{equation}\label{eq:Gn_right}
\Prob\left(\bar{G}_n(\Theta^*_n+\delta) < 1-\nu\right) \leq
\frac{1}{\nu\delta}\,r(n)
\end{equation}
Now set any $\eta\in(0,\frac{1}{2})$, and write
{\allowdisplaybreaks
\begin{align}\label{eq:long1}
\nonumber\Prob&\left(\int_0^{\Theta_0}\bar{G}_n(\theta)d\theta+\int_{\Theta_0}^1\left(1-\bar{G}_n(\theta)\right)d\theta
> \nu\right)
\\
\nonumber&\stackrel{(\rm a)}{\leq}
\Prob\left(\left\{\int_0^{\Theta^*_n}\bar{G}_n(\theta)d\theta+\int_{\Theta^*_n}^1\left(1-\bar{G}_n(\theta)\right)d\theta
> \frac{\nu}{2}\right\}\cup \left\{|\Theta^*_n-\Theta_0| > \frac{\nu}{2}\right\}\right)
\\
\nonumber&\stackrel{(\rm b)}{\leq}
\Prob\left(\psi_{\sst\lambda}(\bar{G}_n) > \frac{\nu}{2}\right) +
\Prob\left(|\Theta^*_n-\Theta_0| > \frac{\nu}{2}\right) \stackrel{(\rm c)}{\leq}
\frac{2}{\nu}\,r(n) + \Prob\left(|\Theta^*_n-\Theta_0| > \frac{\nu}{2}\right)
\\
\nonumber&\stackrel{(\rm d)}{=} \frac{2}{\nu}\,r(n) +
\Prob\left(\left\{\bar{G}_n(\Theta_0) > \bar{G}_n(\Theta^*_n+\frac{\nu}{2})\right\}\cup\left\{\bar{G}_n(\Theta_0)
< \bar{G}_n(\Theta^*_n-\frac{\nu}{2})\right\}\right)
\\
\nonumber&\stackrel{(\rm e)}{\leq} \frac{2}{\nu}\,r(n) +
\Prob\left(\bar{G}_n(\Theta_0)\not\in(\eta,1-\eta)\right)+
\Prob\left(\bar{G}_n(\Theta^*_n-\frac{\nu}{2})>\eta\right) +
\Prob\left(\bar{G}_n(\Theta^*_n+\frac{\nu}{2})<1-\eta\right)
\\
&\stackrel{(\rm f)}{\leq} \frac{2}{\nu}\,r(n) + 2\eta
+\frac{4}{\nu\eta}\,r(n)
\end{align}}
where in (a) we use the fact that integrals differ only over the
interval between $\Theta_0$ and $\Theta^*_n$ and the integrands
are bounded by unity, in (b) we use the union bound, and then
(\ref{eq:length_tends_to_zero}) by noting that applying
$\lambda(\cdot)$ can only increase the integrands, in (c) we use
(\ref{eq:length_tends_to_zero}) and (d) holds by the continuity
and monotonicity of $\bar{G}_n$. These properties are applied
again together with the union bound in (e), and the inequality
holds for any $\eta\in(0,\frac{1}{2})$. Finally in (f) we use
(\ref{eq:Gn_left}--\ref{eq:Gn_right}) the fact that
$\bar{G}_n(\Theta_0)=\Theta_n$ is uniformly distributed over the
unit interval. Choosing $\eta = \sqrt{r(n)}\;$ we get
\begin{align*}
\Prob\left(\int_0^{\Theta_0}\bar{G}_n(\theta)d\theta+\hspace{-3pt}\int_{\Theta_0}^1(1-\bar{G}_n(\theta))d\theta
> \nu\right) \hspace{-2pt}\leq c\nu^{-1}\sqrt{r(n)}
\end{align*}
for $c>0$. The same bound clearly holds separately for each
of the two integrals above. Define the set
\begin{equation*}
\Pi_\nu \dfn \left\{\theta\in(0,1) : \Prob\left(\int_0^{\Theta_0}\bar{G}_n(\theta)d\theta\ > \nu
\mid \Theta_0 = \theta\right)\geq c\nu^{-1}\sqrt[4]{r(n)}\right\}
\end{equation*}
Then
\begin{equation*}
\Prob\left(\int_0^{\Theta_0}\bar{G}_n(\theta)d\theta > \nu\right)
= \Expt\Prob\left(\int_0^{\Theta_0}\bar{G}_n(\theta)d\theta\
> \nu \mid \Theta_0\right) \geq \Prob(\Theta_0\in\Pi_\nu)\cdot c\nu^{-1}\sqrt[4]{r(n)}
\end{equation*}
and we get $\Prob(\Theta_0\in\Pi_\nu) \leq \sqrt[4]{r(n)}$. Let us
now set $\nu_n=\sqrt[8]{r(n)}$, and suppose $n$ is large enough so
that $\nu_n<\frac{\eps\delta}{2}$. Recalling the definition of the
negative trajectory $\ssc{-}\Theta_n^{\delta}$, we have
{\allowdisplaybreaks
\begin{align*}
\Prob\left(\ssc{-}\Theta_n^{\delta} > \eps\right)
&=\Expt\Prob\left(\ssc{-}\Theta_n^{\delta} > \eps \mid
\Theta_0\right) \leq \int_0^1
\Prob\left(\int_0^{\Theta_0}\bar{G}_n(\theta)d\theta\
> \frac{\eps}{2}\cdot\min\{\delta,\theta\} \,\Big{|}\, \Theta_0 =
\theta\right)d\theta
\\
&\leq \int_0^1
\Prob\left(\int_0^{\Theta_0}\bar{G}_n(\theta)d\theta\
> \min\{\nu_n,\frac{\eps\theta}{2}\} \,\Big{|}\, \Theta_0 =
\theta\right)d\theta
\\
&\leq \Prob(\Theta_0\in\Pi_{\nu_n})  +
\int_0^{2\nu_n\eps^{-1}}d\theta + \int_{2\nu_n\eps^{-1}}^1
c\nu_n^{-1}\sqrt[4]{r(n)}d\theta
\\
&\leq \sqrt[4]{r(n)}+ 2\nu_n\eps^{-1} + c\nu_n^{-1}\sqrt[4]{r(n)}
= \bigo\left(\sqrt[8]{r(n)}\right)
\end{align*}}
The result for $\Prob\left(\ssc{+}\Theta_n^{\delta} <
1-\eps\right)$ is proved via the exact same arguments.
\end{proof}

\begin{lemma}\label{lem:SLLN_memoryless}
Let $(P_X, P_{Y|X})$ satisfy property (\textsl{A}\ref{cond:omega_inv}). Then the posterior p.d.f. evaluated at the message point satisfies
\begin{equation}\label{eq:density_at_theta}
\lim_{n\rightarrow \infty}\frac{1}{n}\log
f_{\Theta_0|\Phi^n}(\Theta_0|\Phi^n) = I(X;Y) \qquad \text{\rm
a.s.}
\end{equation}
\end{lemma}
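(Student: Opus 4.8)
The plan is to reduce the statement to the strong law of large numbers for the Markov chain $\{(\Theta_n,\Phi_n)\}_{n=1}^\infty$. First I would establish an exact product formula for the posterior density at the message point. Working over the normalized channel (Theorem \ref{thrm:recursion2}) and using that $P_{\Theta\Phi}$ is proper (Lemma \ref{lem:norm_chan_prop}, claim (\ref{claim:norm3})), so that each $P_{\Phi|\Theta}(\cdot|\theta)$ has a density $f_{\Phi|\Theta}(\cdot|\theta)$, I would argue --- exactly as in the proof of Theorem \ref{thrm:recursion} --- that $\Theta_n$ is independent of $\Phi^{n-1}$, hence $\{\Phi_n\}$ is i.i.d.\ $\m{U}$ and $f_{\Phi^n}(\phi^n)\equiv 1$, while conditionally on $\Theta_0=\theta$ and $\Phi^{k-1}=\phi^{k-1}$ the normalized input equals the deterministic value $\bar{g}_k(\theta|\phi^{k-1})$ with $\Phi_k$ its channel output. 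Bayes' rule then gives, as the natural (jointly measurable) version of the posterior density,
\begin{equation*}
f_{\Theta_0|\Phi^n}(\theta|\phi^n) = \prod_{k=1}^n f_{\Phi|\Theta}(\phi_k|\bar{g}_k(\theta|\phi^{k-1})),
\end{equation*}
and substituting $\theta=\Theta_0$, $\phi^n=\Phi^n$ together with $\bar{g}_k(\Theta_0|\Phi^{k-1})=\bar{G}_k(\Theta_0)=\Theta_k$ yields the a.s.\ identity
\begin{equation*}
\frac{1}{n}\log f_{\Theta_0|\Phi^n}(\Theta_0|\Phi^n) = \frac{1}{n}\sum_{k=1}^n \log f_{\Phi|\Theta}(\Phi_k|\Theta_k).
\end{equation*}

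Next I would recognize the right-hand side as the empirical average of $\eta(\theta,\phi)\dfn\log f_{\Phi|\Theta}(\phi|\theta)$ along the Markov chain $\{(\Theta_n,\Phi_n)\}_{n=1}^\infty$, which by property (\textsl{A}\ref{cond:omega_inv}) has the ergodic invariant distribution $P_{\Theta\Phi}$ and, since $\Theta_1=\Theta_0\sim\m{U}=P_\Theta$ and $\Phi_1$ is drawn through $P_{\Phi|\Theta}$, is started from $P_{\Theta\Phi}$ itself. Lemma \ref{lem:SLLN} then gives convergence of the average to $\Expt\eta(\Theta,\Phi)$ for $P_{\Theta\Phi}$-almost all initial points, i.e.\ almost surely, provided $\Expt|\eta(\Theta,\Phi)|<\infty$. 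To identify the limit I would use that $\Theta\sim\m{U}$ and $\Phi\sim\m{U}$ imply $f_{\Phi|\Theta}(\phi|\theta)=f_{\Theta\Phi}(\theta,\phi)$, so that
\begin{equation*}
\Expt\eta(\Theta,\Phi) = \Expt\log f_{\Theta\Phi}(\Theta,\Phi) = -h(\Theta,\Phi) = I(\Theta;\Phi) = I(X;Y),
\end{equation*}
the last equality being Lemma \ref{lem:norm_chan_prop}, claim (\ref{claim:norm2}); this is precisely (\ref{eq:density_at_theta}).

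For the integrability hypothesis I would split $\eta=\eta^+-\eta^-$. The negative part is supported on $\{f_{\Theta\Phi}<1\}\subseteq\ui^2$, a set of Lebesgue measure at most $1$, so
\begin{equation*}
\Expt\eta^- = \int_{\{f_{\Theta\Phi}<1\}} f_{\Theta\Phi}(\theta,\phi)\log\tfrac{1}{f_{\Theta\Phi}(\theta,\phi)}\,d\theta\,d\phi \leq \sup_{0<u<1} u\log\tfrac{1}{u} = e^{-1},
\end{equation*}
and combining with $\Expt\eta(\Theta,\Phi)=I(X;Y)<\infty$ (finite since $(P_X,P_{Y|X})$ is an input/channel pair) gives $\Expt\eta^+=\Expt\eta+\Expt\eta^-<\infty$, hence $\Expt|\eta(\Theta,\Phi)|<\infty$.

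I expect the only real obstacle to be the product identity and, within it, the bookkeeping needed to evaluate an a.e.-defined posterior density at the random argument $\Theta_0$; fixing the canonical Bayes version of the density (equivalently, differentiating the composition $\bar{G}_{n+1}=F_{\Theta|\Phi}(\cdot|\Phi_n)\circ\bar{G}_n$ via the chain rule, valid a.e.\ since the inner map is monotone and absolutely continuous, and then invoking Fubini --- the conditional law of $\Theta_0$ given $\Phi^n$ is absolutely continuous, so $\Theta_0$ a.s.\ avoids the relevant Lebesgue-null set) disposes of it. Everything else is a direct application of Lemma \ref{lem:SLLN} and Lemma \ref{lem:norm_chan_prop}.
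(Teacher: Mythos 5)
Your proposal is correct and follows essentially the same route as the paper: the Bayes-rule product expansion using $f_{\Phi_n|\Phi^{n-1}}\equiv 1$, the identification of $\frac{1}{n}\log f_{\Theta_0|\Phi^n}(\Theta_0|\Phi^n)$ with the empirical average of $\log f_{\Phi|\Theta}(\Phi_k|\Theta_k)$ along the chain, and the SLLN for Markov chains under the ergodicity property (\textsl{A}\ref{cond:omega_inv}), with the limit $I(\Theta;\Phi)=I(X;Y)$. Your explicit verification of the integrability hypothesis $\Expt|\log f_{\Phi|\Theta}(\Phi|\Theta)|<\infty$ (via the bound on the negative part and finiteness of the mutual information) is a detail the paper leaves implicit, and it is done correctly.
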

\begin{proof}
Since the p.d.f.'s involved are all proper, we can use Bayes law
to obtain the following recursion rule:
\begin{equation}\label{eq:posterior_expansion}
f_{\Theta_0|\Phi^n}(\theta|\phi^n) = \frac{f_{\Phi_n|\Theta_0,\Phi^{n-1}}(\phi_n\,|\,\theta,\phi^{n-1})}{f_{\Phi_n|\Phi^{n-1}}(\phi_n\,|\,\phi^{n-1})}\,f_{\Theta_0|\Phi^{n-1}}(\theta|\phi^{n-1})
= f_{\Phi|\Theta}(\phi_n\,|\,\bar{g}_n(\theta,\phi^{n-1}))\cdot
f_{\Theta_0|\Phi^{n-1}}(\theta|\phi^{n-1})
\end{equation}
where in the second equality we have used the memoryless channel
property and the fact that the output sequence $\Phi^\infty$ is an
i.i.d. sequence with marginal $\m{U}$. Applying the recursion rule
$n$ times, taking a logarithm and evaluating at the message point,
we obtain
\begin{equation*}
\frac{1}{n}\log f_{\Theta_0|\Phi^n}(\Theta_0|\Phi^n) = \frac{1}{n}\sum_{k=1}^n\log
f_{\Phi|\Theta}\left(\Phi_k\,|\,g_k\left(\Theta_0,\Phi^{k-1}\right)\right)
=\frac{1}{n}\sum_{k=1}^n\log f_{\Phi|\Theta}(\Phi_k\,|\,\Theta_k)
\end{equation*}
Now by property (\textsl{A}\ref{cond:omega_inv}) the invariant
distribution $P_{\Theta\Phi}$ is ergodic, and so we can use the
SLLN for Markov chains (Lemma \ref{lem:SLLN}) which
asserts in this case that for $P_\Theta$-a.a. $\theta_0\in\ui$
\begin{equation*}
\lim_{n\rightarrow\infty}\frac{1}{n}\log f_{\Theta_0|\Phi^n}(\theta_0|\Phi^n) = \Expt\left(\log
\frac{f_{\Phi|\Theta}(\Phi|\Theta)}{f_\Phi(\Phi)}\right) = I(\Theta;\Phi) = I(X;Y) \qquad \m{P}_{\theta_0}\text{-\rm a.s.}
\end{equation*}
Since $\Theta_0\sim P_\Theta$, (\ref{eq:density_at_theta}) is established.
\end{proof}

For short, let us now define the \textit{$(n,R)$-positive trajectory} $\ssc{+}\Theta^{n,R}_k$ and the \textit{$(n,R)$-negative trajectory} $\ssc{-}\Theta^{n,R}_k$ as the corresponding trajectories in (\ref{eq:trajectory}) with $\delta=2^{-nR}$. Accordingly, we also write $\Delta^+_{n,R},\Delta^-_{n,R}$ in lieu of $\Delta^+_\delta,\Delta^-_\delta$ respectively. The following Lemma uses the SLLN to demonstrate how, for rates lower than the mutual information, these two trajectories eventually move away from some small and essentially fixed neighborhood of the input,
with probability approaching one. This is achieved by essentially proving a more subtle version of Lemma \ref{lem:SLLN_memoryless}, showing that it roughly holds at the vicinity of the message point.
\begin{lemma}\label{lem:diverge_R}
Let $(P_X, P_{Y|X})$ satisfy properties (\textsl{A}\ref{cond:omega_reg}) and (\textsl{A}\ref{cond:omega_inv}). Then for any rate $R<I(X;Y)$ there exists $\eps>0$  small enough such that
\begin{align}\label{eq:prob1}
\nonumber
&\lim_{n\rightarrow\infty}\Prob\left(\;\bigcap_{k=1}^n\left\{\Theta_k-\ssc{-}\Theta^{n,R}_k
< \min\Big{(}\eps,\frac{\Theta_k}{2}\,\Big{)}\right\}
 \right) = 0
\\
&\lim_{n\rightarrow\infty}\Prob\left(\;\bigcap_{k=1}^n\left\{\ssc{+}\Theta^{n,R}_k-\Theta_k
< \min\Big{(}\eps,\frac{1-\Theta_k}{2}\,\Big{)}\right\}\right)
= 0
\end{align}
\end{lemma}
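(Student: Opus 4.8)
The plan is to prove the two statements in \refeq{eq:prob1} by the refined-SLLN strategy outlined just before the lemma: roughly, a point that stays $\eps$-close to $\Theta_k$ for all $k\le n$ would have to accumulate a posterior p.d.f.\ of order $2^{nI(X;Y)}$ around it, which is impossible for a $2^{-nR}$-wide neighbourhood once $R<I(X;Y)$ and the posterior integrates to one. I will do the negative-trajectory case; the positive case is identical by the left/right symmetry built into the $\eps$-measures and the trajectory definitions \refeq{eq:trajectory}. Throughout I work in the normalized channel, so $\Theta_k=\bar G_k(\Theta_0)$ is the true normalized input and $\ssc{-}\Theta^{n,R}_k=\bar G_k(\Theta_0-\Delta^-_{n,R})$ is the perturbed one, and I use the posterior recursion \refeq{eq:posterior_expansion}.

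\textbf{Step 1: a lower bound on the posterior mass of the left-neighbourhood.}
Fix $R<I(X;Y)$ and a small $\eps>0$ to be chosen. On the event $\m{B}_n\dfn\bigcap_{k=1}^n\{\Theta_k-\ssc{-}\Theta^{n,R}_k<\min(\eps,\Theta_k/2)\}$, the perturbed point lies, at every step $k\le n$, inside the interval $\big(\max(\Theta^{-}_{\Phi_k},\Theta_k-\eps),\Theta_k\big)=J^{-}_\eps(\Phi_k,\Theta_k)$ appearing in \refeq{eq:eps_channel}. Hence along this event each density factor in \refeq{eq:posterior_expansion}, evaluated at the perturbed argument $\bar g_k(\Theta_0-\Delta^-_{n,R},\Phi^{k-1})$, is at least $\ssc{-}f^\eps_{\Phi|\Theta}(\Phi_k\,|\,\Theta_k)$, the left-$\eps$-measure density. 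Multiplying over $k=1,\dots,n$ and using that the message-point interval $[\Theta_0-\Delta^-_{n,R},\Theta_0]$ has prior length $\Delta^-_{n,R}\ge\min(2^{-nR},\Theta_0/2)$, I get, on $\m{B}_n$,
\begin{equation*}
P_{\Theta_0|\Phi^n}\big{(}[\Theta_0-\Delta^-_{n,R},\Theta_0]\,\big{|}\,\Phi^n\big{)} \;\ge\; \Delta^-_{n,R}\,\min_{\theta\in[\Theta_0-\Delta^-_{n,R},\Theta_0]} f_{\Theta_0|\Phi^n}(\theta\,|\,\Phi^n) \;\ge\; \Delta^-_{n,R}\prod_{k=1}^n \ssc{-}f^\eps_{\Phi|\Theta}(\Phi_k\,|\,\Theta_k).
\end{equation*}
Since this probability is at most $1$, on $\m{B}_n$ we must have $\frac1n\sum_{k=1}^n\log\ssc{-}f^\eps_{\Phi|\Theta}(\Phi_k|\Theta_k)\le R+o(1)$ (the $o(1)$ absorbing the $\frac1n\log\Delta^-_{n,R}$ edge term, which is $-R+o(1)$ except on a set of $\Theta_0$'s of vanishing measure, handled exactly as in Lemma \ref{lem:diverge_eps}).

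\textbf{Step 2: the SLLN forces the empirical log-density up to $I(X;Y)$.}
By property (\textsl{A}\ref{cond:omega_inv}) the chain $\{(\Theta_n,\Phi_n)\}$ has the ergodic invariant distribution $P_{\Theta\Phi}$, so Lemma \ref{lem:SLLN} applied to $\eta(\theta,\phi)=\log\ssc{-}f^\eps_{\Phi|\Theta}(\phi|\theta)$ gives, for $P_\Theta$-a.a.\ initial $\theta_0$ and $\m{P}_{\theta_0}$-a.s.,
\begin{equation*}
\lim_{n\to\infty}\frac1n\sum_{k=1}^n\log\ssc{-}f^\eps_{\Phi|\Theta}(\Phi_k\,|\,\Theta_k) \;=\; \Expt\log\ssc{-}f^\eps_{\Phi|\Theta}(\Phi\,|\,\Theta) \;=\; I(X;Y)-D\big{(}P_{\Phi|\Theta}\,\|\,\ssc{-}P^\eps_{\Phi|\Theta}\,|\,P_\Theta\big{)} - 0,
\end{equation*}
where I have used $\Expt\log f_{\Phi|\Theta}(\Phi|\Theta)=I(\Theta;\Phi)+\Expt\log f_\Phi(\Phi)=I(X;Y)$ (as $f_\Phi\equiv1$) and the definition of the conditional relative entropy; integrability of $\eta$ is exactly what regularity (\textsl{A}\ref{cond:omega_reg}) provides, since $D(P_{\Phi|\Theta}\|\ssc{-}P^\eps_{\Phi|\Theta}|P_\Theta)<\infty$. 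By regularity we may pick $\eps>0$ small enough that $D\big{(}P_{\Phi|\Theta}\,\|\,\ssc{-}P^\eps_{\Phi|\Theta}\,|\,P_\Theta\big{)}<I(X;Y)-R$, so the a.s.\ limit above strictly exceeds $R$. Combining with Step 1: on $\m{B}_n$ the empirical average is $\le R+o(1)$, yet a.s.\ it converges to something $>R$; hence $\Prob(\m{B}_n)\to0$, which is the first line of \refeq{eq:prob1}. The positive-trajectory bound follows by replacing $\ssc{-}f^\eps$ with $\ssc{+}f^\eps$ and repeating verbatim.

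\textbf{Main obstacle.}
The delicate point is Step 1 — making the pointwise lower bound on $f_{\Theta_0|\Phi^n}$ over the \emph{whole} perturbation interval rigorous. The densities $f_{\Theta_0|\Phi^n}(\theta|\Phi^n)$ must be bounded below by the product of $\eps$-measure densities \emph{simultaneously for all} $\theta\in[\Theta_0-\Delta^-_{n,R},\Theta_0]$, not just at one endpoint; this is where the infimum in the definition of $\ssc{-}f^\eps_{\Phi|\Theta}$ over the sliding interval $J^-_\eps(\phi,\theta)$ is exactly engineered to help, and one must check that on $\m{B}_n$ the image points $\bar g_k(\theta,\Phi^{k-1})$ for \emph{every} such $\theta$ stay in $J^-_\eps(\Phi_k,\Theta_k)$ — which holds by monotonicity of $\bar g_k$ in $\theta$ since $\theta\le\Theta_0$ implies $\bar g_k(\theta,\cdot)\le\Theta_k$, and the lower envelope is controlled by the $\m{B}_n$ event itself. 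A secondary nuisance is the $\frac1n\log\Delta^-_{n,R}$ edge term when $\Theta_0$ is within $2^{-nR}$ of $0$; this is disposed of by the same $\Pi_\nu$-type measure argument used at the end of the proof of Lemma \ref{lem:diverge_eps}, restricting to $\Theta_0>2\cdot2^{-nR'}$ for some $R<R'<I(X;Y)$.
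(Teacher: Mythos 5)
Your proposal is correct and follows essentially the same route as the paper: lower-bound the posterior mass of the $2^{-nR}$ left-neighbourhood by $\Delta^-_{n,R}\prod_k \ssc{-}f^\eps_{\Phi|\Theta}(\Phi_k|\Theta_k)$ on the bad event, invoke regularity plus monotone convergence to make $I_\eps^-=\Expt\log\ssc{-}f^\eps_{\Phi|\Theta}(\Phi|\Theta)$ exceed $R$, and let the ergodic SLLN force a contradiction (the paper phrases this via an expectation/Jensen bound giving $\Prob(E_{n,\eps}\cap T_{n,\eps})\le 2^{-n\delta/2-1}$, whereas you argue pointwise, which is equivalent). The only nits are that membership of the perturbed trajectory in $J^-_\eps(\Phi_k,\Theta_k)$ at step $k$ actually requires the event at step $k+1$ (the paper's shift $\widetilde{E}_{n,\eps}\Rightarrow E_{n-1,\eps}$), and that making $D(P_{\Phi|\Theta}\|\ssc{-}P^\eps_{\Phi|\Theta}|P_\Theta)$ small needs the Levi monotone-convergence step, not regularity alone — both inconsequential.
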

\begin{proof}
We prove the first assertion of (\ref{eq:prob1}), the second
assertion follows through essentially the same way. Let $\delta>0$ be such that $R<I(X;Y)-\delta$. Let
$\ssc{-}P^\eps_{\Phi|\Theta}(\cdot|\theta)$ be the
left-$\eps$-measure corresponding to
$P_{\Phi|\Theta}(\cdot|\theta)$, as defined in Section~\ref{sec:fam}. Define:
\begin{equation}\label{eq:eps_inf}
I_{\eps}^- \dfn \Expt\log \ssc{-}f^\eps_{\Phi|\Theta}(\Phi|\Theta) =\dint{\supp\left(\Theta,\Phi\right)} f_{\Phi|\Theta}(\phi|\theta)\log\ssc{-}f^\eps_{\Phi|\Theta}(\phi|\theta)\,d\theta d\phi
\end{equation}
We have that
\begin{equation*}
0\leq I(X;Y)-I_{\eps}^- = I(\Theta;\Phi)-I_{\eps}^- =
D(P_{\Phi|\Theta}\,\|\,\ssc{-}P^\eps_{\Phi|\Theta}\,|\,P_\Theta)
\end{equation*}
and since by property (\textsl{A}\ref{cond:omega_reg}) the input/channel is
regular then
$\inf_{\eps>0}D(P_{\Phi|\Theta}\,\|\,\ssc{-}P^\eps_{\Phi|\Theta}\,|\,P_\Theta)<\infty$.
hence for any $\eps$ small enough
\begin{equation*}
-\infty < I_{\eps}^- \leq I(X;Y)
\end{equation*}
We have therefore established that the function
$f_{\Phi|\Theta}\log \ssc{-}f^\eps_{\Phi|\Theta}$ is finitely integrable
for any $\eps>0$ small enough, and converges to
$f_{\Phi|\Theta}\log f_{\Phi|\Theta}$ a.e in a monotonically
nondecreasing fashion, as $\eps\rightarrow 0$. Applying Levi's
monotone convergence Theorem~\cite{kolmogorov_fomin}, we can
exchange the order of the limit and the integration to obtain
\begin{equation*}
\lim_{\eps\rightarrow 0}I_{\eps}^- = I(X;Y)
\end{equation*}
Let us set $\eps$ hereinafter so that
\begin{equation*}
I_{\eps}^- -\eps > I(X;Y) -\frac{\delta}{2}
\end{equation*}
Since $I_{\eps}^-$ is finite we can once again apply (using
property (\textsl{A}\ref{cond:omega_inv})) the SLLN for Markov chains
(Lemma \ref{lem:SLLN}) to obtain
\begin{equation}\label{eq:SLLN_substochastic}
\lim_{n\rightarrow\infty}\frac{1}{n}\sum_{k=1}^n \log
\ssc{-}f^\eps_{\Phi|\Theta}(\Phi_k|\Theta_k) =
I_{\eps}^-\qquad \text{a.s.}
\end{equation}
The above intuitive relation roughly means that if the receiver, when considering the likelihood of a wrong message point, obtains an induced input sequence which is always $\eps$-close to the true input sequence, then the posterior p.d.f. at this wrong message point will be close to that of the true message point given in Lemma \ref{lem:SLLN_memoryless}.

Define the following two sequences of events:
\begin{equation}\label{eq:event_seq}
E_{n,\eps} \dfn \bigcap_{k=1}^n\left\{\ssc{-}\Theta^{n,R}_k\in
J_{\st\eps}^-(\Theta_k,\Phi_k)\right\}\,,\quad \widetilde{E}_{n,\eps}\dfn\bigcap_{k=1}^n\left\{\Theta_k-\ssc{-}\Theta^{n,R}_k
< \min\Big{(}\eps,\frac{\Theta_k}{2}\,\Big{)}\right\}
\end{equation}
where the neighborhood $J_{\st\eps}^-$ is defined in
(\ref{eq:eps_channel}). Let us now show that
${\displaystyle\lim_{n\rightarrow \infty}}\Prob(E_{n,\eps})=0$.
This fact will then be shown to imply
${\displaystyle\lim_{n\rightarrow
\infty}}\Prob(\widetilde{E}_{n,\eps})=0$, which is precisely the
first assertion in (\ref{eq:prob1}). Define the following sequence of events:
\begin{equation*}
T_{n,\eps} \dfn \left\{\Theta_0 > 2^{-nR}\right\}
\cap \left\{\frac{1}{n}\sum_{k=1}^n \log
\ssc{-}f^\eps_{\Phi|\Theta}(\Phi_k|\Theta_k) \geq I(X;Y)
-\frac{\delta}{2}\right\}
\end{equation*}
Using (\ref{eq:SLLN_substochastic}) and the fact that the message
point is uniform over the unit interval, it is immediately clear
that $\Prob(T_{n,\eps}) \rightarrow 1$. For short, define the
random interval $J_{n,R}=(\Theta_0-\Delta^-_{n,R},\Theta_0)$, and
consider the following chain of inequalities:
{\allowdisplaybreaks
\begin{align*}
0 &\geq \log\Expt\left(\Theta_n-\ssc{-}\Theta^{n,R}_n\right) =
\log\Expt\left(\bar{G}_n(\Theta_0)-\bar{G}_n(\Theta_0-\Delta^-_{n,R}\,)\right)
\geq \log\Expt\left(\Delta^-_{n,R}\cdot\inf_{\theta\in
J_{n,R}}\left(f_{\Theta_0|\Phi^{n-1}}\left(\theta\,|\,\Phi^{n-1}\right)\right)\right)
\\
&\geq \log\left(\Expt\left(\Delta^-_{n,R}\cdot\inf_{\theta\in
J_{n,R}}f_{\Theta_0|\Phi^{n-1}}\left(\theta\,|\,\Phi^{n-1}\right)\,\Big{|}\,E_{n,\eps}\cap
T_{n,\eps}\right)\cdot\Prob(E_{n,\eps}\cap T_{n,\eps})\right)
\\
&\stackrel{({\rm a})}{\geq}
\Expt(\log\Delta^-_{n,R}\,|\,E_{n,\eps}\cap
T_{n,\eps})+\Expt\left(\log\inf_{\theta\in J_{n,R}}\prod_{k=1}^n
f_{\Phi|\Theta}(\Phi_k|\,\bar{G}_k(\theta))\,\Big{|}\,E_{n,\eps}\cap
T_{n,\eps}\right) + \log\Prob(E_{n,\eps}\cap T_{n,\eps})
\\
&\stackrel{({\rm b})}{\geq} -nR - 1 +
\Expt\left(\,\sum_{k=1}^n\log\inf_{\xi\in
(\ssc{-}\Theta^{n,R}_k,\Theta_k)}f_{\Phi|\Theta}(\Phi_k|\,\xi)\,\Big{|}\,E_{n,\eps}\cap
T_{n,\eps}\right) + \log\Prob(E_{n,\eps}\cap T_{n,\eps})
\\
&\stackrel{({\rm c})}{\geq} -nR - 1 +
\Expt\left(\,\sum_{k=1}^n\log
\ssc{-}f^\eps_{\Phi|\Theta}(\Phi_k|\Theta_k)\,\Big{|}\,E_{n,\eps}\cap
T_{n,\eps}\right) + \log\Prob(E_{n,\eps}\cap T_{n,\eps})
\\
&\stackrel{({\rm d})}{\geq} -nR - 1 + n\big{(}I(X;Y)
-\frac{\delta}{2}\big{)} + \log\Prob(E_{n,\eps}\cap T_{n,\eps})
\end{align*}}
In (a) we use Jensen's inequality and the expansion of
the posterior p.d.f. given in (\ref{eq:posterior_expansion}), in
(b) we use the definition and monotonicity of $\bar{G}_k$, (c)
holds due to $E_{n,\eps}$ and (d) due to $T_{n,\eps}$. Therefore,
\begin{equation}\label{eq:prob_bound1}
\Prob(E_{n,\eps}\cap T_{n,\eps}) \leq
2^{-n(I(X;Y)-\frac{\delta}{2}-R)-1} \leq 2^{-n\frac{\delta}{2}-1}
\;\;\tendsto{n}{\infty}\;\;0
\end{equation}
where the last inequality holds since $R<I(X;Y)-\delta$. Now,
since $\Prob(T_{n,\eps})\rightarrow 1$, then for any $\eta>0$ we
have $\Prob(T_{n,\eps}) > 1-\eta$ for $n$ large enough. Using that
and (\ref{eq:prob_bound1}) we bound $\Prob(E_{n,\eps})$ simply as
follows:
\begin{equation}\label{eq:bound_En}
\Prob(E_{n,\eps})\leq \Prob(E_{n,\eps}\cap T_{n,\eps}) +
\left(1-\Prob(T_{n,\eps})\right) \leq 2^{-n\frac{\delta}{2}-1} +
\eta \leq 2\eta
\end{equation}
where the last two inequalities are true for $n$ large enough.
Since (\ref{eq:bound_En}) holds for any $\eta>0$, we conclude that
$\Prob(E_{n,\eps})\rightarrow 0$, as desired.

To finalize the proof, note that $\widetilde{E}_{n,\eps}$ implies
that for any $1\leq k\leq n-1$
\begin{equation*}
\Theta_k-\ssc{-}\Theta^{n,R}_k < \eps,\quad \ssc{-}\Theta^{n,R}_k  > \frac{\Theta_k}{2},
\end{equation*}
and the rightmost inequality implies
\begin{equation*}
F_{\Theta|\Phi}\left(\ssc{-}\Theta^{n,R}_{k-1}\big{|}\Phi_{k-1}\right)
> \frac{1}{2}F_{\Theta|\Phi}\left(\Theta_{k-1}|\Phi_{k-1}\right)
\end{equation*}
The above constraints imply that for any $1\leq k\leq n-2$
\begin{equation*}
\ssc{-}\Theta^{n,R}_k\in J_\eps^-(\Phi_k,\Theta_k)
\end{equation*}
establishing the implication $\widetilde{E}_{n,\eps}\Rightarrow
E_{n-1,\eps}$. Consequently, $\Prob(\widetilde{E}_{n,\eps}) \leq
\Prob(E_{n-1,\eps})$, thus
${\displaystyle\lim_{n\rightarrow\infty}}\Prob(\widetilde{E}_{n,\eps})
= 0$.
\end{proof}

We are now finally in a position to prove Theorem \ref{thrm:achv} for the family $\Omega_A$. Loosely speaking, we build on the simple fact that since the chain is stationary by construction, one can imagine transmission to have started at any time $m$ with a message point $\Theta_m$ replacing
$\Theta_0$. With some abuse of notations, we define
\begin{equation*}
\ssc{-}\Theta_{k}^\eps(\Theta_m) \dfn F_{\Theta|\Phi}(\cdot|\Phi_{m+k-1})\circ\cdots\circ F_{\Theta|\Phi}(\cdot|\Phi_{m+1})\circ F_{\Theta|\Phi}\left(\Theta_m-\min\left(\eps,\frac{\Theta_m}{2}\right) \mid \Phi_m\right)
\end{equation*}
Namely, the $\eps$-negative trajectory when starting at time $m$ from $\Theta_m$. Note that in particular we have $\ssc{-}\Theta_k^\eps(\Theta_0) = \ssc{-}\Theta_k^\eps$. Since the chain is stationary, the distribution of $\ssc{-}\Theta_{k}^\eps(\Theta_m)$ is independent of $m$. The corresponding positive trajectory $\ssc{+}\Theta_{k}^\eps(\Theta_m)$ can be defined in the same manner.

Now recall the event $\widetilde{E}_{n,\eps}$ defined in (\ref{eq:event_seq}), which by Lemma \ref{lem:diverge_R} satisfies $\Prob(\widetilde{E}_{n,\eps})\rightarrow 0$ for any $\eps>0$ small enough. Note that the complementary event $\widetilde{E}_{n,\eps}^c$ implies that at some time $m\leq n$, the $(n,R)$-negative trajectory $\ssc{-}\Theta_m^{n,R}$ is below the $\eps$-neighborhood of $\Theta_m$, namely $\ssc{-}\Theta_m^{n,R}\leq \Theta_m-\min(\eps,\frac{\Theta_m}{2})$ for some $m$. Using the monotonicity of the transmission functions, this in turn implies that the $(n,R)$-negative trajectory at time $n$ lies below the corresponding $\eps$-negative trajectory starting from $\Theta_m$, namely $\ssc{-}\Theta_n^{n,R}\leq \ssc{-}\Theta_{n-m}^\eps(\Theta_m)$. Thus we conclude that $\Prob\left(\ssc{-}\Theta_n^{n,R}>\max\left(\ssc{-}\Theta_{n-m}^\eps(\Theta_m)\right)\right)\rightarrow 0$ for any fixed $\eps>0$ small enough, where the maximum is taken over $1\leq m\leq n$.

Fixing any $\alpha>0$, we now show that $\ssc{-}\Theta_{(1+\alpha) n}^{n,R}\rightarrow 0$ in probability:
{\allowdisplaybreaks
\begin{align}\label{eq:final_proof}
\nonumber\Prob(\ssc{-}\Theta_{(1+\alpha) n}^{n,R}>\delta) &\leq \Prob\left(\ssc{-}\Theta_n^{n,R}>\max_{1\leq m\leq n}\ssc{-}\Theta_{n-m}^\eps(\Theta_m)\right)
+ \Prob\left(\hspace{-2pt}\left\{\ssc{-}\Theta_{(1+\alpha) n}^{n,R}>\delta\}\right\}\cap\{\ssc{-}\Theta_n^{n,R}\leq \hspace{-2pt}\max_{1\leq m\leq n}\ssc{-}\Theta_{n-m}^\eps(\Theta_m)\}\hspace{-2pt}\right)
\\
\nonumber&\leq \lito(1) + \Prob\left(\max_{1\leq m\leq n}\ssc{-}\Theta_{(1+\alpha) n-m}^\eps(\Theta_m) > \delta\right) \stackrel{\rm (a)}{\leq} \lito(1) + \sum_{m=1}^n \Prob\left(\ssc{-}\Theta_{(1+\alpha) n-m}^\eps(\Theta_m) > \delta\right)
\\
&\stackrel{\rm (b)}{=}  \lito(1) + \sum_{m=1}^n \Prob\left(\ssc{-}\Theta_{(1+\alpha) n-m}^\eps(\Theta_0) > \delta\right) \stackrel{\rm (c)}{=} \lito(1) + \delta^{-1}\bigo(n\sqrt[8]{r(\alpha n)})
\end{align}}
In (a) we used the union bound, in (b) the fact that the chain is stationary, and Lemma \ref{lem:diverge_eps} was invoked in (c) where we recall that (\textsl{A}\ref{cond:omega_inv}) $\Rightarrow$ (\textsl{A}\ref{cond:omega_fixedp}) by Lemma \ref{lem:egr_implies_fixed_free}. Therefore, if $\sqrt[8]{r(n)} = \lito(n^{-1})$ then $\ssc{-}\Theta_{(1+\alpha) n}^{n,R}\rightarrow 0 $ in probability is established. However, for the more general statement we note that this mild constraint\footnote{An exponentially decaying  $r(n)$ can in fact be guaranteed by requiring the normalized posterior matching kernel to be fixed-point free in a somewhat stronger sense than that implied by property (\textsl{A}\ref{cond:omega_inv}), which also holds in particular in all the examples considered in this paper.} is in fact superfluous. This stems from the fact that the union bound in (a) is very loose since the trajectories are all controlled by the same output sequence, and from the uniformity in the initial point in Lemma \ref{lem:IFS_convergence}. In Appendix \ref{app:lemmas}, Lemma \ref{lem:uniform_eps}, we show that in fact
\begin{equation*}
\Prob\left(\max_{1\leq m\leq n} \ssc{-}\Theta_{(1+\alpha)n-m}^\eps(\Theta_m) > \delta\right) = \bigo(\sqrt[8]{r(\alpha n)})
\end{equation*}
According to (\ref{eq:final_proof}), this in turn implies that $\ssc{-}\Theta_{(1+\alpha) n}^{n,R}\rightarrow 0$ in probability without the additional constraint on the decay profile.

The same derivation applies to the positive trajectory, resulting in $\ssc{+}\Theta_{(1+\alpha) n}^{n,R}\rightarrow 1$ in probability. Therefore, for any $\delta>0$,
\begin{equation*}
\Prob\left(\ssc{+}\Theta_{(1+\alpha) n}^{n,R}-\ssc{-}\Theta_{(1+\alpha) n}^{n,R} < 1-2\delta\right) \leq \Prob\left(\ssc{-}\Theta_{(1+\alpha) n}^{n,R} > \delta\right)+\Prob\left(\ssc{+}\Theta_{(1+\alpha) n}^{n,R} < 1-\delta\right) = \lito(1)
\end{equation*}
and so the posterior probability mass within a $2^{-nR}$ symmetric neighborhood of $\Theta_0$ (up to edge issues) after $(1+\alpha) n$ iterations, approaches one in probability as $n\rightarrow\infty$. We can therefore find a sequence $\delta_n\rightarrow 0$ such that the probability this mass exceeds $1-\delta_n$ tends to zero. Using the optimal variable rate decoding rule and setting the target error probability to $p_e(n)=\delta_n$ we immediately have that $\Prob(R_n<(1+\alpha)^{-1}R) \rightarrow 0$. This holds for any $R<I(X;Y)$, and since $\alpha>0$ can be arbitrarily small, any rate below the mutual information is achievable.

To prove achievability using the optimal fixed rate decoding rule, note that any variable-rate rule achieving some rate $R>0$ induces a fixed-rate rule achieving an arbitrarily close rate $R-\eps$, by extending the variable-sized decoded interval into a larger one of a fixed size $2^{-n(R-\eps)}$ whenever the former is smaller, and declaring an error otherwise. Therefore, any rate $R<I(X;Y)$ is achievable using the optimal fixed rate decoding rule. The fact that the input constraint is satisfied follows immediately from the SLLN since the marginal invariant distribution for the input is $P_X$. This concludes the achievability proof for the family $\Omega_A$.

Extending the proof to the family $\Omega_B$ requires reproving Lemma \ref{lem:SLLN_memoryless} and a variation of Lemma \ref{lem:diverge_R}, where the ergodicity property (\textsl{A}\ref{cond:omega_inv}) is replaced with the maximality property (\textsl{A}\ref{cond:omega_cap_ach}). This is done via the ergodic decomposition~\cite{hernandez_lasserre} for the associated stationary Markov chain. The proof appears in Appendix \ref{app:lemmas}, Lemma \ref{lem:diverge_R_B}. Achievability for the family $\Omega_C$ has already been established, since $\Omega_C\subset \Omega_A$. The stronger pointwise achievability statement for $\Omega_C$ is obtained via p.h.r. properties of the associated Markov chain, by essentially showing that Lemmas \ref{lem:diverge_eps}, \ref{lem:SLLN_memoryless} and \ref{lem:diverge_R} hold given any fixed message point. The proof appears in Appendix \ref{app:pointwise}, Lemma \ref{lem:pointwise_ext}.

\begin{remark}\label{rem:non_erg}
For $(P_X,P_{Y|X})\in\Omega_B\setminus\Omega_A$, although the unconstrained capacity $C(P_{Y|X})$ is achieved, there is no guarantee on the sample path behavior of the input, which may generally differ from the expected behavior dictated by $P_X$, and depend on the ergodic component the chain lies in. However, if $P_X$ is the \textit{unique} input distribution\footnote{Uniqueness of the capacity achieving distribution for $P_{Y|X}$ does not generally imply the same for the corresponding normalized channel $P_{\Phi|\Theta}$. For example, the normalized channel for a BSC/${\rm Bernoulli}\left(\frac{1}{2}\right)$ pair, has an uncountably infinite number of capacity achieving distributions.} such that $I(X;Y) = C(P_{Y|X})$, then the sample path behavior will nevertheless follow $P_X$ independent of the ergodic component. This is made precise in Appendix \ref{app:lemmas}, Lemma \ref{lem:diverge_R_B}.
\end{remark}

\section{Error Probability Analysis}\label{sec:error_memoryless}
In this section, we provide two sufficient conditions on the target error probability facilitating the achievability of a given rate using the corresponding optimal variable rate decoding rule. The approach here is substantially different from that of the previous subsection, and the derivations are much simpler. However, the obtained result is applicable only to rates below some thresholds $R^*,R^\dagger$. Unfortunately, it is currently unknown under what conditions do these thresholds equal the mutual information, rendering the previous section indispensable.

Loosely speaking, the basic idea is the following. After having observed $\Phi^n$, say the receiver has some estimate $\widehat{\theta}_{n+1}$ for the next input $\Theta_{n+1}$. Then $(\widehat{\theta}_{n+1},\Phi^n)$ correspond to a unique estimate $\widehat{\theta}_0$ of the message point which is recovered by \textit{reversing} the transmission scheme, i.e., running a RIFS over $(0,1)$ generated by the kernel
$\omega_\phi(\cdot)\dfn F_{\Theta|\Phi}^{-1}(\cdot|\phi)$ (the functional inverse of the normalized posterior matching kernel), controlled by the output sequence $\Phi^n$, and initialized at $\widehat{\theta}_{n+1}$. In practice however, the receiver decodes an interval and therefore to attain a specific target error probability $p_e(n)$, one can tentatively decode a subinterval of $\ui$ in which $\Theta_{n+1}$ lies with probability $1-p_e(n)$, which since $\Theta_{n+1}\sim\m{U}$, is any interval of length $1-p_e(n)$. The endpoints of this interval are then ``rolled back" via the RIFS to recover the decoded interval w.r.t. the message point $\Theta_0$. The target error probability decay which facilitates the achievability of a given rate is determined by the convergence rate of the RIFS, which also corresponds to the maximal information rate supported by this analysis.

This general principle relating rate and error probability to the convergence properties of the corresponding RIFS, facilitates the
use of any RIFS contraction condition for convergence. The only limitation stems from the fact that $\omega_\phi(\cdot)$
generating the RIFS is an inverse c.d.f. over the unit interval and hence never globally contractive, so only contraction on the
average conditions can be used. The Theorems appearing below make use of the principle above in conjunction with the contraction
Lemmas mentioned in Section~\ref{sec:prelim}-\ref{subsec:IFS}, to obtain two different expressions tying error probabilities, rate and
transmission period. The discussion above is made precise in the course of the proofs.

Denote the family of all continuous functions $\rho:\ui\mapsto[1,\infty)$ by $\m{C}$.
\begin{theorem}\label{thrm:memoryless_err1}
Let $(P_X, P_{Y|X})$ be an input/channel pair,
$(P_\Theta,P_{\Phi|\Theta})$ the corresponding normalized pair,
and let $\omega_\phi(\cdot)\dfn F_{\Theta|\Phi}^{-1}(\cdot|\phi)$.
For any $\rho\in\m{C}$, define
\begin{align*}
R^\dagger(\rho) \dfn -\log\sup_{s\in(0,1)}
\Expt\left\{
\frac{\rho(\omega_{\Phi}(s))}{\rho(s)}D_s\big{(}\omega_{\Phi}\big{)}\right\}
\end{align*}
where $D_s(\cdot)$ is defined in (\ref{eq:lip_operator}), and let
\begin{align*}
R^{\dagger} \dfn \sup_{\rho\in\m{C}}R^\dagger(\rho)
\end{align*}
If $R^\dagger>0$, then the posterior matching scheme with an optimal variable rate decoding rule achieves any rate $R<R^\dagger$, by setting the target error probability to satisfy
$p_e(n)\rightarrow 0$ under the constraint
\begin{equation}\label{eq:error_exp1}
\Psi\left((1-\alpha)p_e(n),1-\alpha
p_e(n),2^{-R^\dagger({\rho})}\right)
=\lito\left(2^{\,n(R^\dagger({\rho})-R)}\right)
\end{equation}
for some $\alpha\in\ui$, and some $\rho\in\m{C}$ such that $R^\dagger(\rho)>
R$, where $\Psi$ is defined in Lemma \ref{lem:contraction2}.
\end{theorem}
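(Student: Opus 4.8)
The plan is to invert the transmission recursion at the decoder. By (\ref{eq:gn_rec_normX}) the normalized inputs satisfy $\Theta_1=\Theta_0$ and $\Theta_{k+1}=F_{\Theta|\Phi}(\Theta_k|\Phi_k)$, so with $\omega_\phi\dfn F_{\Theta|\Phi}^{-1}(\cdot|\phi)$ we obtain $\Theta_0=\omega_{\Phi_1}\circ\omega_{\Phi_2}\circ\cdots\circ\omega_{\Phi_n}(\Theta_{n+1})=\wt S_{n+1}(\Theta_{n+1})$, where $\{\wt S_k\}$ is exactly the RIFS of (\ref{eq:RIFS_def}) generated by the kernel $\omega_\phi$ and controlled by the i.i.d. output sequence $\{\Phi_k\}$ --- the very object to which Lemma \ref{lem:contraction2} applies over $\mf{F}=\ui$. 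Since $\Theta_{n+1}=\bar G_{n+1}(\Theta_0)=F_{\Theta_0|\Phi^n}(\Theta_0|\Phi^n)$ and the posterior is proper (Lemma \ref{lem:norm_chan_prop}), Lemma \ref{lem:matching_trans} claim (\ref{item:match2}) gives that $\Theta_{n+1}$ is uniform on $\ui$ conditionally on $\Phi^n$. Hence, fixing any $\alpha\in\ui$, the open interval $\widehat J_n\dfn\big((1-\alpha)p_e(n),\,1-\alpha p_e(n)\big)$ has length $1-p_e(n)$ and contains $\Theta_{n+1}$ with conditional probability $1-p_e(n)$; rolling its endpoints back through the nondecreasing map $\wt S_{n+1}$ produces an interval $\Delta_n\ni\Theta_0$ of posterior mass at least $1-p_e(n)$, since the normalized kernel is invertible on its support so that $\Theta_0\in\Delta_n\iff\Theta_{n+1}\in\widehat J_n$ up to a null event.

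Granting this, the \emph{optimal variable-rate} decoding rule with target error probability $\delta_n=p_e(n)$ decodes a minimal-length interval of posterior mass at least $1-\delta_n$, hence one that is no longer than the candidate $\Delta_n$ and whose error probability is at most $\delta_n=p_e(n)\to 0$; this is the error half of (\ref{def:achievability}). For the rate half, $\{R_n<R\}\subseteq\{|\Delta_n|>2^{-nR}\}$, so it suffices to show
\begin{equation*}
\Prob\big(\,|\wt S_{n+1}(1-\alpha p_e(n))-\wt S_{n+1}((1-\alpha)p_e(n))|>2^{-nR}\,\big)\longrightarrow 0 .
\end{equation*}

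Now invoke Lemma \ref{lem:contraction2}. Given $R<R^\dagger$, choose $\rho\in\m{C}$ with $R^\dagger(\rho)>R$, which exists by definition of $R^\dagger$ as a supremum. For this $\rho$ the contraction parameter $r$ of Lemma \ref{lem:contraction2} is precisely $\sup_{s\in\ui}\Expt\{\tfrac{\rho(\omega_\Phi(s))}{\rho(s)}D_s(\omega_\Phi)\}=2^{-R^\dagger(\rho)}$, which is strictly below $1$ because $R^\dagger(\rho)>R\ge 0$, so the hypothesis of the lemma holds. Applying it with $s=(1-\alpha)p_e(n)$, $t=1-\alpha p_e(n)$ and $\eps=2^{-nR}$ yields
\begin{equation*}
\Prob\big(\,|\wt S_{n+1}(s)-\wt S_{n+1}(t)|>2^{-nR}\,\big)\;\le\;2^{nR}\,\Psi\big((1-\alpha)p_e(n),\,1-\alpha p_e(n),\,2^{-R^\dagger(\rho)}\big)\,\big(2^{-R^\dagger(\rho)}\big)^{n+1},
\end{equation*}
whose right-hand side equals $2^{-R^\dagger(\rho)}\,\Psi(\cdots)\,2^{-n(R^\dagger(\rho)-R)}$ and therefore vanishes exactly under condition (\ref{eq:error_exp1}). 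Together with $p_e(n)\to 0$ this gives achievability of $R$; since $\rho$ ranges over $\m{C}$ and $R<R^\dagger$ is arbitrary, every rate below $R^\dagger$ is achieved.

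The main obstacle is the reversal step of the first paragraph: one must verify rigorously that $\bar G_{n+1}$ and $\wt S_{n+1}$ are genuine functional inverses on the relevant sets, so that pulling the uniform-mass interval $\widehat J_n$ back through $\wt S_{n+1}$ really yields an interval around $\Theta_0$ carrying posterior mass (essentially) $1-p_e(n)$, and so that the length identity used above is valid. This rests on the properness and $\theta$-continuity of the normalized kernel (Lemma \ref{lem:norm_chan_prop}) to exclude flat parts and jumps, together with routine care about open versus closed intervals and edge effects near $0$ and $1$; everything after that is a direct substitution into Lemma \ref{lem:contraction2} and bookkeeping of exponents.
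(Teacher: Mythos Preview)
Your proposal is correct and follows essentially the same approach as the paper: invert the posterior matching recursion via the RIFS $\wt S_n$ generated by $\omega_\phi=F_{\Theta|\Phi}^{-1}(\cdot|\phi)$, pick a fixed subinterval $((1-\alpha)p_e(n),1-\alpha p_e(n))$ for $\Theta_{n+1}\sim\m{U}$, roll its endpoints back to obtain the decoded interval for $\Theta_0$, and then apply Lemma~\ref{lem:contraction2} with $r=2^{-R^\dagger(\rho)}$ and $\eps=2^{-nR}$ to control $\Prob(R_n<R)$. Your explicit comparison to the optimal variable-rate rule and your flagged caveat about the invertibility of $\bar G_{n+1}$ are minor elaborations, but the argument is the same in substance.
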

\begin{proof}
Let $\wt{S}_n(s)$ be the RIFS  generated by $\omega_\phi(\cdot)\dfn F_{\Theta|\Phi}^{-1}(\cdot|\phi)$ and the control sequence $\{\Phi_k\}_{k=1}^\infty$, initialized at $s\in\ui$. Select a fixed interval $J_1 = (s,t)\subseteq\ui$ as the decoded interval w.r.t. $\Theta_{n+1}$. Since
$\Theta_{n+1}\sim\m{U}$, we have that
\begin{equation*}
\Prob(\Theta_{n+1}\in J_1) = |J_1|
\end{equation*}
Define the corresponding interval at the origin to be
\begin{equation*}
J_n \dfn (\wt{S}_n(s),\wt{S}_n(t))
\end{equation*}
and set it to be the decoded interval, i.e., $\Delta_n(\Phi^n)=J_n$. Note that the endpoints of
$J_n$ are r.v.'s. Since $F_{\Theta|\Phi}^{-1}(\cdot|\phi)$ is invertible for any $\phi$,
the interval $J_n$ corresponds to $\Theta_1$, namely,
\begin{equation*}
\Prob(\Theta_1\in J_n) = \Expt\Prob(\Theta_1\in
J_n\,|\, \Phi^n) = \Expt\Prob(\Theta_{n+1}\in
J_1\,|\, \Phi^n)
= \Prob(\Theta_{n+1}\in J_1)
= |J_1|
\end{equation*}
and then in particular (recall that $\Theta_0=\Theta_1$)
\begin{equation*}
p_e(n) = \Prob(\Theta_0\not\in \Delta(\Phi^n)) = 1-|J_1|
\end{equation*}
For a variable rate decoding rule, the target error probability is set in advance. Therefore, given $p_e(n)$ the length of the interval $J_1$ is constrained to be $|J_1| = 1-p_e(n)$,
and so without loss of generality we can parameterize the endpoints of $J_1$ by
\begin{equation*}
(s,t) = ((1-\alpha)p_e(n),1-\alpha p_e(n))
\end{equation*}
for some $\alpha\in(0,1)$.

Now let $\rho\in\m{C}$, and define
\begin{equation*}
r(\rho) \dfn \sup_{s\in(0,1)} \Expt\left\{
\frac{\rho(\omega_{\Phi}(s))}{\rho(s)}D_s\big{(}\omega_{\Phi}\big{)}\right\}
\end{equation*}
Note that the expectation above is taken w.r.t. $\Phi\sim\m{U}$. Using Lemma \ref{lem:contraction2}, if $r(\rho)<1$ then
\begin{equation*}
\Prob\left(\left|J_{1}\right|>\eps\right)
=\Prob\left(\left|\wt{S}_n(s)-\wt{S}_n(t)\right|>\eps\right)
\leq
\eps^{-1}\Psi(s,t,r(\rho))\cdot r^n(\rho)
\end{equation*}
To find the probability that the decoded interval is larger than
$2^{-nR}$, we substitute $\eps=2^{-nR}$ and obtain
\begin{equation*}
\Prob(R_n<R) = \Prob\left(\left|J_{1}\right|>2^{-nR}\right)
\leq
2^{nR}\cdot \Psi((1-\alpha)p_e(n),1-\alpha p_e(n),r(\rho))\cdot
2^{n\log{r(\rho)}}
\end{equation*}
Following the above and defining $R^\dagger(\rho)=-\log r(\rho)$,
a sufficient condition for $\Prob(R_n<R)\rightarrow 0$ for
$R<R^\dagger(\rho)$ is given by (\ref{eq:error_exp1}). The proof
is concluded by taking the supremum over $\rho\in\m{C}$, and noting that if no $\rho$ results in a
contraction then $R^\dagger\leq 0$.
\end{proof}

Theorem \ref{thrm:memoryless_err1} is very general in the sense of
not imposing any constrains on the input/channel pair. It is
however rather difficult to identify a \textit{weight function}
$\rho$ that will result in $R^\dagger(\rho)>0$. Our next error
probability result is less general (e.g., does not apply to
discrete alphabets), yet is much easier to work with. Although it
also involves an optimization step over a set of functions, it is
usually easier to find a function which results in a positive rate
as the examples that follow demonstrate.

The basic idea is similar only now we essentially work with the
original chain and so the RIFS evolves over $\isupp(X)$, generated
by the kernel $\omega_y(\cdot)\dfn F_{X|Y}^{-1}(\cdot|y)\circ F_X$
(the functional inverse of the posterior matching kernel), and
controlled by the i.i.d. output sequence $\{Y_k\}_{k=1}^\infty$.
To state the result we need some definitions first. Let $P_X$ be
some input distribution and let $\rho:\isupp(X)\mapsto (a,b)$ be
differentiable and monotonically increasing ($a,b$ may be
infinite). The family of all such functions $\rho$ for which
$f_{\rho(X)}$ is bounded is denoted by $\m{F}(X)$. Furthermore,
for a proper r.v. $X$ with a support over a (possibly infinite)
interval, we define the \textit{tail function}
$\m{T}_X:\RealF^+\mapsto[0,1]$ to be
\begin{equation*}
\m{T}_X(\ell) \dfn 1-\sup\big{\{}P_X\big{(}(x,x+\ell)\big{)}\,:\,x\in\RealF\big{\}}
\end{equation*}
Namely, $\m{T}_X(\ell)$ is the minimal probability that can be assigned by $P_X$ outside an open interval of length $\ell$.

\begin{theorem}\label{thrm:memoryless_err2}
Let $(P_X,P_{Y|X})$ be an input/channel pair with $f_{XY}$ continuous
over $\isupp(X,Y)$, and let $\omega_y(\cdot)\dfn
F_{X|Y}^{-1}(\cdot|y)\circ F_X$. For any $\rho\in\m{F}(X)$, define
\begin{equation*}
R^*(\rho)\dfn\hspace{-3pt}\lim_{q\rightarrow
0^+}\hspace{-2pt}\inf_{\stackrel{{\scriptstyle s,t\in\range(\rho)}}{s\neq t}}
\hspace{-4pt}\left(- q^{-1}\log\Expt \left[D_{s,t}(\rho\circ
\omega_{\sst{Y}}\circ\rho^{-1})\right]^q\right)
\end{equation*}
and let
\begin{equation*}
R^* \dfn \sup_{\rho\in\m{F}(X)} R^*(\rho)
\end{equation*}
The following statements hold:
\begin{enumerate}[(i)]
\item \label{stat:achv} The posterior matching scheme with an
optimal variable rate decoding rule achieves any rate $R<R^*$, by
setting the target error probability to satisfy $p_e(n)\rightarrow
0$ under the constraint
\begin{equation}\label{eq:target_err1}
p_e(n) = \m{T}_{\rho(X)}\left(\lito\left(2^{n(R^*(\rho)-R)}\right)\right)
\end{equation}
for some $\rho\in\m{F}(X)$ satisfying $R^*(\rho)> R$.

\item \label{stat:zero_err} If $|\range(\rho)|<\infty$ then any
rate $R<R^*(\rho)$ can be achieved with zero error
probability.\footnote{This is not a standard zero-error achievability claim, since the rate is generally random. If a fixed rate must be guaranteed, then the error probability will be equal to the probability of "outage", i.e., the probability that the variable decoding rate falls below the rate threshold.}

\item \label{stat:seperable} If it is possible to write
$\rho\circ\omega_y\circ\rho^{-1}(s) = u(s)v(y)+q(y)$, then
\begin{equation*}
R^*(\rho) = -\Expt\log{|v(Y)|}
-\log\sup_{s\in\range(\rho)}\left|u'(s)\right|
\end{equation*}
whenever the right-hand-side exists.
\end{enumerate}

\end{theorem}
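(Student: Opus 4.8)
The plan is to set up, as in the proof of Theorem \ref{thrm:memoryless_err1}, a RIFS that ``rolls back'' a decoded interval from time $n+1$ to the origin, but now working over $\isupp(X)$ rather than over $\ui$. Concretely, let $\omega_y(\cdot)\dfn F_{X|Y}^{-1}(\cdot|y)\circ F_X$ be the functional inverse of the posterior matching kernel, and let $\wt S_n(s)$ be the RIFS generated by $\omega_y$ controlled by the i.i.d. output sequence $\{Y_k\}$. By the equivalence in Theorem \ref{thrm:recursion2} (via the normalized channel), rolling back an interval of $X$-values at time $n+1$ through this RIFS produces the decoded interval w.r.t. $X_1=F_X^{-1}(\Theta_0)$, and since $F_X^{-1}$ is monotone this translates into a decoded interval for $\Theta_0$; the target error probability $p_e(n)$ is the $P_X$-probability that $X_{n+1}$ falls outside the initial interval, so to make $p_e(n)$ as small as possible for a given initial length we choose that interval to be the one achieving the supremum in the definition of $\m{T}_X$. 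Conjugating by a monotone differentiable $\rho\in\m{F}(X)$ sends the RIFS to one over $\range(\rho)$ generated by $\rho\circ\omega_y\circ\rho^{-1}$; the point of the conjugation is that $f_{\rho(X)}$ is bounded, so a small $\rho$-length interval has small $P_{\rho(X)}$-measure in a \emph{uniformly} controlled way, which is exactly what is needed to relate interval length to $p_e(n)$ through $\m{T}_{\rho(X)}$.

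For statement (\ref{stat:achv}), I would apply Lemma \ref{lem:contraction1} to the conjugated RIFS: the quantity $r=\sup_{s\neq t}\Expt[D_{s,t}(\rho\circ\omega_Y\circ\rho^{-1})]^q$ is exactly what appears there, and $R^*(\rho)$ is $-\frac1q\log r$ optimized in the limit $q\to0^+$, so for $q$ small enough $r<1$ and Lemma \ref{lem:contraction1} gives $\Prob(|J_n|>\eps)\leq \eps^{-q}|s-t|^q r^n$. Substituting $\eps=2^{-nR}$ and $|J_n|$ being the decoded interval length, one gets $\Prob(R_n<R)\leq 2^{nqR}|s-t|^q 2^{-nq R^*(\rho)+o(nq)}\to 0$ whenever $R<R^*(\rho)$, provided the initial $\rho$-interval length $|s-t|$ is not shrinking too fast — and this is where the constraint (\ref{eq:target_err1}) comes from, since $p_e(n)=\m{T}_{\rho(X)}(|s-t|)$ forces $|s-t|$ to be the appropriate $\lito\bigl(2^{n(R^*(\rho)-R)}\bigr)$; taking the supremum over $\rho\in\m{F}(X)$ then yields achievability of any $R<R^*$. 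Statement (\ref{stat:zero_err}) follows because when $|\range(\rho)|<\infty$ the whole interval $\range(\rho)$ can be taken as the initial interval, giving $p_e(n)=0$ identically while $|J_n|=|\wt S_n(\min\range\rho)-\wt S_n(\max\range\rho)|\to 0$ in probability at the rate from Lemma \ref{lem:contraction1}, so $\Prob(R_n<R)\to0$ for any $R<R^*(\rho)$.

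Statement (\ref{stat:seperable}) is a direct computation: if $\rho\circ\omega_y\circ\rho^{-1}(s)=u(s)v(y)+q(y)$ then $D_{s,t}(\rho\circ\omega_y\circ\rho^{-1}) = |v(y)|\cdot|u(s)-u(t)|/|s-t|$, and as $s,t$ range over $\range(\rho)$ the second factor is bounded by $\sup_s|u'(s)|$; since $Y$ is i.i.d.\ and the dependence on $Y$ factors out, $\Expt[D_{s,t}]^q = \Expt|v(Y)|^q \cdot (\cdots)^q$, and $-\frac1q\log$ of this, in the limit $q\to0^+$, separates (by the elementary fact that $\frac1q\log\Expt|v(Y)|^q\to\Expt\log|v(Y)|$ as $q\to0^+$) into $-\Expt\log|v(Y)| - \log\sup_{s\in\range(\rho)}|u'(s)|$. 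The main obstacle I anticipate is not any single inequality but the careful bookkeeping in statement (\ref{stat:achv}): one must justify the interchange of the $q\to0^+$ limit with the infimum over $s,t$ in the definition of $R^*(\rho)$ (so that a single small $q$ simultaneously gives $r<1$ and $-\frac1q\log r$ close to $R^*(\rho)$), handle the edge/measurability issues in passing from a decoded interval for $X_{n+1}$ back through the possibly non-injective $F_X^{-1}$ to an interval for $\Theta_0$ — here the normalized-channel equivalence of Theorem \ref{thrm:recursion2} does the real work — and verify that the continuity of $f_{XY}$ over $\isupp(X,Y)$ is what makes $\omega_y=F_{X|Y}^{-1}(\cdot|y)\circ F_X$ well-defined and continuous so that $D_s(\cdot)$ and the RIFS machinery apply.
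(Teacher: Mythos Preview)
Your proposal is correct and follows essentially the same route as the paper: set up the RIFS over $\isupp(X)$ via the inverse kernel $\omega_y$, apply Lemma \ref{lem:contraction1} to the $\rho$-conjugated system, use boundedness of $f_{\rho(X)}$ to pass from interval length to $P_{\rho(X)}$-measure, and read off the constraint on $p_e(n)$ via the tail function; statements (\ref{stat:zero_err}) and (\ref{stat:seperable}) are handled exactly as you describe. Two minor remarks: the interchange of $q\to 0^+$ and the infimum that worries you is resolved in the paper simply by showing via Jensen that $q\mapsto -q^{-1}\log r_q$ is nonincreasing, so the limit is actually a supremum and any small $q$ works; and the continuity hypothesis on $f_{XY}$ makes $F_X$ strictly increasing on its support, so $F_X^{-1}$ is in fact injective and you need not invoke the normalized-channel machinery --- the paper works directly with the original chain here, introducing a ``$\rho$-normalized channel'' only to formalize the conjugation step.
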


\begin{proof}
We first prove the three statements in the special case where $P_X$ has a support over a (possibly infinite) interval, and considering only the identity function $\rho_1:\isupp(X)\mapsto\isupp(X)$ over the support, i.e., discussing the achievability of $R^*(\rho_1)$ exclusively. We therefore implicitly assume here that $\rho_1\in\m{F}(X)$. Let $\wt{S}_n(s)$ be the RIFS generated by $\omega_y(\cdot)\dfn F_{X|Y}^{-1}(\cdot|y)\circ F_X$ and the control sequence $\{Y_k\}_{k=1}^\infty$, which evolves over the space $\isupp(X)$. Select a fixed interval $J_1 = (s,t)\subseteq\isupp(X)$
as the decoded interval w.r.t. $X_{n+1}$. Since $X_{n+1}\sim P_X$ we have that
\begin{equation*}
\Prob(X_{n+1}\in J_1) = P_X\big{(}J_1\big{)}
\end{equation*}
Define the corresponding interval at the origin to be
\begin{equation*}
J_n \dfn (\wt{S}_n(s),\wt{S}_n(t))
\end{equation*}
and following the same lines as in the proof of the preceding Theorem, $J_n$ is set to be the decoded interval w.r.t. $X_1 = F_X(\Theta_0)$, and so the decoded interval for $\Theta_0$ is set to be $\Delta_n(Y^n) = F_X(J_n)$. Thus,
\begin{equation*}
p_e(n) = \Prob(\Theta_0\not\in F_X(J_n))  = \Prob(X_1\not\in J_n) = 1-
P_X\big{(}J_1\big{)}
\end{equation*}

For any $q>0$ define
\begin{equation*}
r_q \dfn \sup_{s\neq
t\in\isupp(X)}\Expt\left[D_{s,t}(\omega_{\sst{Y}})\right]^q
\end{equation*}
Using Jensen's inequality we have that for any $0<q\leq p$
\begin{equation}\label{r_qp_bound}
r_q = \sup_{s\neq
t}\Expt\left[D_{s,t}(\omega_{\sst{Y}})\right]^q = \sup_{s\neq
t}\Expt\left[D_{s,t}(\omega_{\sst{Y}})\right]^{p\frac{q}{p}}
\leq
\sup_{s\neq
t}\left(\Expt\left[D_{s,t}(\omega_{\sst{Y}})\right]^{p}\right)^\frac{q}{p}
= \left(\sup_{s\neq
t}\Expt\left[D_{s,t}(\omega_{\sst{Y}})\right]^{p}\right)^\frac{q}{p} = \left(r_{p}\right)^\frac{q}{p}
\end{equation}
Now suppose there exists some $q^*>0$ so that $r_{q^*}<1$. Using
(\ref{r_qp_bound}) we conclude that $r_q<1$ for any $0<q\leq q^*$,
and using Lemma \ref{lem:contraction1} we have that for any
$0<q\leq q^*$ and any $\eps>0$
\begin{equation*}
\Prob(|\wt{S}_n(s)-\wt{S}_n(t)|>\eps) \leq
\eps^{-q}|s-t|^qr_q^n
\end{equation*}
and thus
\begin{equation*}
\Prob(R_n<R)  = \Prob\left(
P_X\big{(}(\wt{S}_n(s),\wt{S}_n(t))\big{)}>2^{-nR}\right)
\leq
\Prob(M\cdot|J_n|>2^{-nR}) \leq
M^{-q}2^{nRq}|J_1|^qr_q^n
\end{equation*}
where $M\dfn\sup f_X(x)$. A sufficient condition for
$\,\Prob(R_n<R)\rightarrow 0$ is given by
\begin{equation*}
|J_1| = \lito\left(2^{n(q^{-1}\log{r_q^{-1}}-R)}\right)
\end{equation*}
Since the above depends only on the length of $J_1$, we
can optimize over its position to obtain $p_e(n) = 1-
P_X\big{(}J_1\big{)} = \m{T}_X(|J_1|)$, or arbitrarily close to that.
Therefore, any rate $R<q^{-1}\log{r_q^{-1}}$ is achievable by
setting $p_e(n)\rightarrow 0$ under the constraint
\begin{equation*}
p_e(n) = \m{T}_X(|J_1|) =
\m{T}_X\left(\lito\left(2^{n(q^{-1}\log{r_q^{-1}}-R)}\right)\right)
\end{equation*}

We would now like to maximize the term $q^{-1}\log{r_q^{-1}}$ over
the selection of $0<q\leq q^*$. Using (\ref{r_qp_bound}) we obtain
\begin{equation*}
q^{-1}\log{(r_q)^{-1}} \geq q^{-1}\log{(r_p)^{-\frac{q}{p}}} \geq
p^{-1}\log{(r_p)^{-1}}
\end{equation*}
and so $q^{-1}\log{r_q^{-1}}$ is nonincreasing with $q$, thus
\begin{align*}
\sup_{0<q\leq q^*}q^{-1}\log{r_q^{-1}} = \lim_{q\rightarrow
0^+}q^{-1}\log{r_q^{-1}}  = R^*(\rho_1)
\end{align*}
where $\rho_1$ is the identity function over $\isupp(X)$. From the
discussion above it is easily verified that $R^*(\rho_1)> 0$ iff
$r_{q^*}<1$ for some $q^*>0$. Moreover, if
$|\isupp(X)|=M_0<\infty$ then $\m{T}_X(\ell)=0$ for any
$\ell>M_0$, therefore in this case $p_e(n)=0$ for any $n$ large
enough. Note that since $\rho_1$ is defined only over $\isupp(X)$,
we have that $|\range(\rho_1)|=|\isupp(X)|=M_0$. Thus, statements
(\ref{stat:achv}) and (\ref{stat:zero_err}) are established for an
input distribution with support over an interval, and the specific
selection of the identity function $\rho=\rho_1$.

As for statement (\ref{stat:seperable}), note first that since
$f_{XY}$ is continuous then $\omega_y(s)$ is jointly
differentiable in $y,s$. Suppose that $\omega_y(s)=u(s)v(y)+q(y)$,
and so $u,v,q$ are all differentiable. In this separable case we
have
\begin{align*}
r_q = \sup_{s\neq
t}\Expt\left(|v(Y)|\cdot\frac{|u(t)-u(s)|}{|t-s|}\right)^q
=
\Expt|v(Y)|^q\cdot\sup_{s\neq
t}\left|\frac{u(t)-u(s)}{t-s}\right|^q
=
\Expt|v(Y)|^q\cdot\sup_{s}|u'(s)|^q
\end{align*}
and
\begin{equation*}
q^{-1}\log{r_q^{-1}} = -q^{-1}\log\Expt|v(Y)|^q -\sup_s\log|u'(s)|
\leq  -\Expt\log{|v(Y)|} -\sup_s\log|u'(s)|
\end{equation*}
where we have used Jensen's inequality in the last inequality. We
now show that the limit of the left-hand-side above as $q
\rightarrow 0^+$ in fact attains the right-hand-side bound
(assuming it exists), which is similar to the derivation of the
Shannon entropy as a limit of R\'enyi entropies. Since
$\Expt\log{|v(Y)|}$ is assumed to exist then we have
$\log\Expt|v(Y)|^q\rightarrow 0$ as $q\rightarrow 0^+$, and so to
take the limit we need to use L'Hospital's rule.  To that end, for
any $0<q\leq q^*$
\begin{align*}
\frac{d}{dq}\,\Expt|v(Y)|^q &= \frac{d}{dq}\int f_Y(y)|v(Y)|^q dy
= \int \frac{\partial}{\partial q}f_Y(y)|v(Y)|^q dy
=
\log{e}\cdot\int f_Y(y)|v(Y)|^q\log{|v(Y)|} \,dy
\\
&= \log{e}\cdot\Expt\Big{(}|v(Y)|^q\log{|v(Y)|}\Big{)}
\end{align*}
and thus
\begin{align*}
R^*(\rho_1) &= \lim_{q\rightarrow
0^+}\left(-q^{-1}\log\Expt|v(Y)|^q -\sup_s\log|u'(s)|\right)
=
\lim_{q\rightarrow 0^+}\left(-\frac{d}{dq}\,\log\Expt|v(Y)|^q
\right) -\sup_s\log|u'(s)|
\\
&= \lim_{q\rightarrow
0^+}\left(-\frac{\Expt\big{(}|v(Y)|^q\log{|v(Y)|}\big{)}}{\Expt|v(Y)|^q}\right)
-\sup_s\log|u'(s)|
= -\Expt\log{|v(Y)|} -\sup_s\log|u'(s)|
\end{align*}
Which established statement (\ref{stat:seperable}) in the special case under discussion. The derivations above all hold under the assumption that the right-hand-side above exists.

Treating the general case is now a simple extension. Consider a general input distribution $ P_X$ (with a p.d.f. continuous over its support), and a differentiable and monotonically increasing
function $\rho:\isupp(X)\mapsto (a,b)$. Let us define a \textit{$\rho$-normalized channel} $ P_{Y^\rho|X^\rho}$ by connecting the operator $\rho^{-1}(\cdot)$ to the channel's input.
Let us Consider the posterior matching scheme for the $\rho$-normalized input/channel pair $(P_{\rho(X)},P_{Y^\rho|X^\rho})$. Using the monotonicity of $\rho$, the corresponding input and inverse channel c.d.f's are given by
\begin{equation*}
F_{\rho(X)} = F_X\circ\rho^{-1}\,,\qquad
F_{X^\rho|Y^\rho}(\cdot|y) = F_{X|Y}(\cdot|y)\circ\rho^{-1}
\end{equation*}
The posterior matching kernel is therefore given by
\begin{equation*}
F^{-1}_{\rho(X)}\circ F_{X^\rho|Y^\rho}(\cdot|y) = \rho\circ
\Big{(}F^{-1}_X\circ F_{X|Y}(\cdot|y)\Big{)}\circ\rho^{-1}
\end{equation*}
and the corresponding RIFS kernel is the functional inverse of the above, i.e.,
\begin{equation*}
\Big{(}F^{-1}_{\rho(X)}\circ
F_{X^\rho|Y^\rho}(\cdot|y)\Big{)}^{-1} = \rho\circ
\Big{(}F^{-1}_{X|Y}(\cdot|y)\circ F_X\Big{)}\circ\rho^{-1}
=
\rho\circ\omega_y\circ\rho^{-1}
\end{equation*}
Now, using the monotonicity of $\rho$ it is readily verified that the input/channel pairs $(P_X,P_{Y|X})$ and $(P_{\rho(X)},P_{Y^\rho|X^\rho})$ correspond to the same normalized channel. Hence, the corresponding posterior matching schemes are equivalent, in the sense that $\{(X_k,Y_k)\}_{k=1}^\infty$ and $\{(\rho^{-1}(X^\rho_k),Y^\rho_k)\}_{k=1}^\infty$ have the same joint distribution. Therefore, the preceding analysis holds for the input/channel pair $(P_{\rho(X)}, P_{Y^\rho|X^\rho})$, and the result follows immediately.
\end{proof}

Loosely speaking, the optimization step in both Theorems has a
similar task -- changing the scale by which distances are measured
so that the RIFS kernel appears contractive. In Theorem
\ref{thrm:memoryless_err1}, the weight functions multiply the
local slope of the RIFS. In Theorem \ref{thrm:memoryless_err2} the
approach is in a sense complementing, since the functions are applied
to the RIFS kernel itself, thereby shaping the slopes directly.
These functions will therefore be referred to as \textit{shaping
functions}.

\vspace{9pt}\noindent{\bf Example \ref{ex:AWGN} (AWGN, continued)}. Returning to the AWGN channel setting with a Gaussian input, we can now determine the tradeoff between rate, error
probability and transmission period obtained by the Schalkwijk-Kailath scheme. Inverting the kernel (\ref{eq:AWGN_kernel}) we obtain the RIFS kernel
\begin{equation*}
\omega_y(s)= F_{X|Y}^{-1}(F_X(s)|y) =
\frac{s}{\sqrt{1+\SNR}}+\frac{\SNR}{1+\SNR}\;y
\end{equation*}
Setting the identity shaping function $\rho_1(s) = s$, the
condition of Theorem \ref{thrm:memoryless_err2} statement
(\ref{stat:seperable}) holds and so
\begin{equation*}
R^*(\rho_1) = -\log\sup_{s\in\RealF}
\left|\frac{d}{ds}\left(\frac{s}{\sqrt{1+\SNR}}\right)\right| =
\frac{1}{2}\log (1+\SNR) =C
\end{equation*}
so in this case $R^* = C$, and statement (\ref{stat:achv})
reconfirms that the Schalkwijk-Kailath scheme achieves capacity.
Using standard bounds for the Gaussian distribution, the Gaussian
tail function (for the input distribution) satisfies
\begin{equation*}
\m{T}_X(\ell) = \bigo\left(e^{-\frac{\ell^2}{8P}}\right)
\end{equation*}
Plugging the above into (\ref{eq:target_err1}), we find that a
rate $R<C$ is achievable by setting the target error probability
to
\begin{equation*}
-\log p_e(n)  =
-\log\m{T}_X\left(\lito\left(2^{n(R^*-R)}\right)\right) =
\lito\left(2^{2n(C-R)}\right)
\end{equation*}
recovering the well known double-exponential behavior. Note that since the interval contraction factor in this case is independent of the output sequence, the variable-rate decoding rule is in fact fixed-rate, hence the same double-exponential performance is obtained using a fixed-rate decoding rule.

We mention here the well known fact that for the AWGN channel, the error probability can be made to decay as a higher order exponential in the block length, via adaptations of the Schalkwijk-Kailath scheme~\cite{Kramer69,Gallager08}. These adaptations exploit the discreteness of the message set especially at the last stages of transmission, and are not directly applicable within our framework, since we define error probability in terms of intervals and not discrete messages. They can only be applied to the equivalent standard scheme obtained via Lemma \ref{lem:setting_eqv}.

\vspace{9pt}\noindent{\bf Example \ref{ex:BSC} (BSC, continued)}.
The conditions of Theorem \ref{thrm:memoryless_err2} are not
satisfied in the BSC setting, and we resort to Theorem
\ref{thrm:memoryless_err1}. Inverting the posterior matching
kernel (\ref{eq:BSC_kernel}) pertaining to the corresponding
normalized channel, we obtain the RIFS kernel
\begin{equation*}
\omega_{\phi}(s) = F_{\Theta|\Phi}^{-1}(s|\phi) = \left\{
    \begin{array}{ll}
    \frac{s}{2(1-p)} & s\in (0,1-p),\phi\in(0,\frac{1}{2})\\ \frac{s-(1-2p)}{2p} & s\in[1-p,1),\phi\in(0,\frac{1}{2})
    \\
    \frac{s}{2p} & s\in(0,p),\phi\in[\frac{1}{2},1) \\ \frac{s+(1-2p)}{2(1-p)} & s\in[p,1),\phi\in[\frac{1}{2},1)
    \end{array}\right.
\end{equation*}
and
\begin{equation*}
D_s\left(\omega_{\phi}\right) = \left\{
    \begin{array}{ll}
    \frac{1}{2(1-p)} & s\in (0,1-p),\phi\in(0,\frac{1}{2})\\ \frac{1}{2p} & s\in[1-p,1),\phi\in(0,\frac{1}{2})
    \\
    \frac{1}{2p} & s\in(0,p),\phi\in[\frac{1}{2},1) \\ \frac{1}{2(1-p)} & s\in[p,1),\phi\in[\frac{1}{2},1)
    \end{array}\right.
\end{equation*}
Using a constant weight function (i.e., no weights) does not work
in this case, since the average of slopes for (say) $s\in(0,p)$,
is
\begin{equation*}
\Expt D_s(\omega_\Phi) =
\frac{1}{2}\left(\frac{1}{2p}+\frac{1}{2(1-p)}\right) \geq 1
\end{equation*}
In fact, any bounded weight function will result in the same
problem for $s>0$ small enough, which suggests that the weight
function should diverge to infinity as $s\rightarrow 0$. Setting
$\rho(s)=s^{-\beta}$ for $\beta>1$ is a good selection for
$s\in(0,p)$ since in that case
\begin{equation*}
\Expt \left(\frac{\rho(\omega_{\Phi}(s))}{\rho(s)} \cdot
D_s(\omega_\Phi)\right) =
\frac{1}{2}\left((2p)^{\beta-1}+(2(1-p))^{\beta-1}\right),
\end{equation*}
which can be made smaller than unity by properly selecting
$\beta$. Setting $\rho$ symmetric around $\frac{1}{2}$
duplicates the above to $s\in(1-p,1)$. However, this selection
(and some variants) do not seem to work in the range
$s\in(p,\frac{1}{2})$, for which $\beta\leq 1$ is required.
Finding a weight function $\rho$ for which $R^\dagger(\rho)>0$ (if
exists at all) seems to be a difficult task, which we were unable
to accomplish thus far.

\vspace{9pt}\noindent{\bf Example \ref{ex:uniform} (Uniform
input/noise, continued)}. We have already seen that achieving the
mutual information with zero error decoding is possible in the
uniform noise/input setting. Let us now derive this fact via
Theorem \ref{thrm:memoryless_err2}. The output p.d.f. is given by
\begin{equation*}
f_Y(y) = y\ind_{(0,1]}(y)+(2-y)\ind_{(1,2)}(y)
\end{equation*}
The RIFS kernel is obtained by inverting the posterior matching
kernel (\ref{eq:uniform_kernel}), which yields
\begin{equation*}
\omega_y(s) =F^{-1}_{X|Y}(F_X(s)|y)
=
s\left(y\ind_{(0,1]}(y)+(2-y)\ind_{(1,2)}(y)\right) +
(y-1)\ind_{(1,2)}(y)
= sf_Y(y) + (y-1)\ind_{(1,2)}(y)
\end{equation*}
Using the identity shaping function $\rho_1$ again (but now
restricted to $\isupp(X)=\ui$), the condition of statement
(\ref{stat:seperable}) holds and therefore
\begin{align*}
R^*(\rho_1) &= -\Expt\log f_Y(Y) - \sup_{s\in(0,1)}\log{1} = h(Y)
= I(X;Y)
\end{align*}
and we have $R^*=R^*(\rho_1) = I(X;Y)$, thereby verifying once
again that the mutual information is achievable. Since
$\range(\rho_1) = \ui$ is bounded, statement (\ref{stat:zero_err})
reconfirms that variable-rate zero error decoding is possible.

\vspace{9pt}\noindent{\bf Example \ref{ex:exp} (Exponential
input/noise, continued)}. Let us return to the additive noise
channel with an exponentially distributed noise and input. We have
already seen that the posterior matching scheme
(\ref{eq:exp_scheme}) achieves the mutual information, which in
this case is $I(X;Y) \approx 0.8327$. The p.d.f. of the
corresponding output is
\begin{equation*}
f_Y(y) = ye^{-y}\ind_{(0,\infty)}(y)
\end{equation*}
It is easily verified that $F_{X|Y}^{-1}(s|\,y) = sy$, and so the
RIFS kernel is given by
\begin{equation*}
\omega_y(s)=F^{-1}_{X|Y}(F_X(s)|y) = y(1-e^{-s})
\end{equation*}
Now, using Theorem \ref{thrm:memoryless_err2} with the identity
shaping function $\rho_1$ restricted to $\isupp(X)=(0,\infty)$,
the condition of statement (\ref{stat:seperable}) holds and
therefore
\begin{equation*}
R^*(\rho_1) =  -\Expt\log{Y} -
\log\sup_{s\in(0,\infty)}\left|\frac{d}{ds}(1-e^{-s})\right| =
-\Expt\log{Y}
\approx -0.61 < 0
\end{equation*}
Thus, the identity function is not a good choice in this case, and
we must look for a different shaping function. Let us set
$\rho_2(s) = s^{-\frac{1}{2}}$, which results in a p.d.f. and
c.d.f.
\begin{equation*}
f_{\rho_2(X)}(s) = \frac{1}{2s^3}\,\exp\left(-s^{-2}\right)\qquad
F_{\rho_2(X)}(s) = \exp\left(-s^{-2}\right)
\end{equation*}
and
\begin{equation*}
\rho_2\circ\omega_y\circ\rho_2^{-1}(s) =
\left[y\left(1-\exp\left(-s^{-2}\right)\right)\right]^{-\frac{1}{2}}
\end{equation*}
Since $f_{\rho_2(X)}$ is bounded and the above again satisfies the
condition of statement (\ref{stat:seperable}), we obtain
{\allowdisplaybreaks
\begin{align*}
R^*(\rho_2) &=  \frac{1}{2}\,\Expt\log{Y} \hspace{-3pt}-
\log\hspace{-3pt}\sup_{s\in(0,\infty)}\hspace{-2pt}\left|\frac{d}{ds}\left[\left(1-\exp\left(-s^{-2}\right)\right)\right]^{-\frac{1}{2}}\right|
\\
&=\frac{1}{2}\,\Expt\log{Y}+\inf_{s\in[0,\infty)}\left(\frac{\log{e}}{s^2}+3\log{s}+\frac{3}{2}\log(1-\exp(-s^{-2}))\right)
= \frac{1}{2}\,\Expt\log{Y} \approx 0.305
\end{align*}}
where the infimum above is attained as $s\rightarrow \infty$. The
tail function of $ P_{\rho_2(X)}$ is bounded by
\begin{equation*}
\m{T}_{\rho_2(X)}(\ell) \leq 1-\exp\left(-\ell^{-2}\right)
\leq \ell^{-2}
\end{equation*}
Thus, any rate $R<R^*(\rho_2)\approx 0.305$ is achieved by the
posterior matching scheme \ref{ex:exp} using a variable decoding
rule if the target error probability is set to
\begin{equation*}
p_e(n) = \frac{1}{\lito\left(2^{2n(R^*(\rho_2)-R)}\right)}
\end{equation*}
and so the following error exponent is achievable:
\begin{equation*}
\lim_{n\rightarrow \infty}\frac{1}{n}\log \frac{1}{p_e(n)} =
2(R^*(\rho_2)-R) \approx 0.61-2R
\end{equation*}
Although we know from Theorem \ref{thrm:achv} that any rate up to
the mutual information is achieved in this case, $\rho_2(\cdot)$
is the best shaping function we have found, and so our error
analysis is valid only up to the rate $R^*(\rho_2)\approx 0.305
<I(X;Y)$.

\section{Extensions}\label{sec:extensions}
\subsection{The $\mu$-Variants of the Posterior Matching Scheme}\label{subsec:mu_var}
In this subsection we return to discuss the $\mu$-variants (\ref{eq:PM_mu_variants}) of the baseline posterior matching scheme addressed thus far. To understand why these variants are of interest, let us first establish the necessity of a fixed-point free kernel (thereby also proving Lemma \ref{lem:egr_implies_fixed_free}).
\begin{lemma}\label{lem:fixed_point_no_rate}
If (\textsl{A}\ref{cond:omega_fixedp}) does not hold, then (\textsl{A}\ref{cond:omega_inv}) does not hold either and the corresponding scheme cannot achieve any positive rate.
\end{lemma}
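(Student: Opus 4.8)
The plan is to exploit the fact that a failure of (\textsl{A}\ref{cond:omega_fixedp}) freezes one coordinate of the dynamics. Suppose (\textsl{A}\ref{cond:omega_fixedp}) fails, so there is $\theta^*\in\ui$ with $F_{\Theta|\Phi}(\theta^*|\Phi)=\theta^*$ a.s.\ for $\Phi\sim\m{U}$; equivalently $N\dfn\{\phi\in\ui:F_{\Theta|\Phi}(\theta^*|\phi)\neq\theta^*\}$ is $\m{U}$-null. First I would propagate this through the recursive representation of Theorem \ref{thrm:recursion2}: from $\bar{g}_1(\theta)=\theta$ and $\bar{g}_{n+1}(\theta|\phi^n)=F_{\Theta|\Phi}(\cdot|\phi_n)\circ\bar{g}_n(\theta|\phi^{n-1})$, a one-line induction on $n$ — using that $\Phi^\infty$ is i.i.d.\ $\m{U}$, so $\Prob(\Phi_k\in N)=0$ for each $k$ — gives $\bar{g}_n(\theta^*,\Phi^{n-1})=\theta^*$ a.s.\ for every $n$. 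Since $\bar{g}_n(\theta,\phi^{n-1})=F_{\Theta_0|\Phi^{n-1}}(\theta|\phi^{n-1})$ and the posterior is proper (Lemma \ref{lem:norm_chan_prop}), this says $\Prob(\Theta_0\le\theta^*\mid\Phi^n)=\theta^*$ and $\Prob(\Theta_0>\theta^*\mid\Phi^n)=1-\theta^*$ a.s.\ for all $n$: the posterior mass on each side of $\theta^*$ is permanently frozen, so the receiver never learns on which side of $\theta^*$ the message point lies.

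Next I would deduce that no positive rate is achievable (by \emph{any} decoding rule). Suppose the posterior matching scheme together with some decoding rule achieves a rate $R>0$, i.e.\ $\Prob(R_n<R)\to0$ and $p_e(n)\to0$, and write $\Delta_n(\Phi^n)=(a_n,b_n)$ with $a_n,b_n$ measurable in $\Phi^n$. For $F_n\dfn\{R_n\ge R\}\cap\{\Theta_0\in\Delta_n\}$ we have $\Prob(F_n)\ge1-\Prob(R_n<R)-p_e(n)\to1$. Now split $F_n$ according to the position of $(a_n,b_n)$ relative to $\theta^*$. On $\{b_n\le\theta^*\}$, $\Theta_0\in\Delta_n$ forces $\Theta_0\le\theta^*$, so $\Prob(F_n\cap\{b_n\le\theta^*\})\le\Expt[\ind_{\{b_n\le\theta^*\}}\Prob(\Theta_0\le\theta^*\mid\Phi^n)]=\theta^*\Prob(b_n\le\theta^*)$; symmetrically $\Prob(F_n\cap\{a_n\ge\theta^*\})\le(1-\theta^*)\Prob(a_n\ge\theta^*)$; and on $\{a_n<\theta^*<b_n\}$, $R_n\ge R$ gives $b_n-a_n\le2^{-nR}$, hence $\Delta_n\subseteq(\theta^*-2^{-nR},\theta^*+2^{-nR})$, so $\Theta_0\in\Delta_n$ forces $|\Theta_0-\theta^*|<2^{-nR}$, an event of probability $\le2^{1-nR}$ since $\Theta_0\sim\m{U}$. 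As $\{b_n\le\theta^*\}$ and $\{a_n\ge\theta^*\}$ are disjoint, summing gives $\Prob(F_n)\le\max(\theta^*,1-\theta^*)+2^{1-nR}$, so $\limsup_n\Prob(F_n)\le\max(\theta^*,1-\theta^*)<1$, contradicting $\Prob(F_n)\to1$; in fact $\liminf_n p_e(n)\ge\min(\theta^*,1-\theta^*)$.

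Finally I would show that (\textsl{A}\ref{cond:omega_inv}) fails, which is exactly the contrapositive of Lemma \ref{lem:egr_implies_fixed_free}. Run the chain $\{(\Theta_n,\Phi_n)\}$ from the invariant law $P_{\Theta\Phi}$, so each $\Theta_n\sim\m{U}$, each $\Phi_n\sim\m{U}$, and $\Theta_{n+1}=F_{\Theta|\Phi}(\Theta_n|\Phi_n)$. Since $F_{\Theta|\Phi}(\theta^*|\Phi_n)=\theta^*$ a.s.\ and $F_{\Theta|\Phi}(\cdot|\phi)$ is nondecreasing, $\Theta_n\le\theta^*\Rightarrow\Theta_{n+1}\le\theta^*$ and $\Theta_n\ge\theta^*\Rightarrow\Theta_{n+1}\ge\theta^*$ a.s.; as $\Prob(\Theta_n=\theta^*)=0$ this sharpens to $\Theta_n<\theta^*\Rightarrow\Theta_{n+1}<\theta^*$ a.s. Consequently the shift-invariant path-space event $E\dfn\bigcap_{n\ge1}\{\Theta_n<\theta^*\}$ coincides a.s.\ with $\{\Theta_1<\theta^*\}$, hence has probability $\theta^*\in\ui$, so the stationary chain is not ergodic and $P_{\Theta\Phi}$ is not an ergodic invariant distribution (cf.\ \cite{hernandez_lasserre}).

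The induction in the first step and the union bound in the second are routine. The delicate point is the last step: the obvious state-space candidate $\{\theta<\theta^*\}$ need not be invariant in the strict sense of Subsection~\ref{subsec:markov_chains}, because $F_{\Theta|\Phi}(\cdot|\phi)$ is only nondecreasing and may be locally flat; one must instead argue in path space with $E$ — where monotonicity together with the absence of a posterior atom at $\theta^*$ is precisely what is needed — and then invoke the standard equivalence between ergodicity of an invariant measure and triviality of shift-invariant events for the associated stationary chain.
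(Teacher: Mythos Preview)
Your proof is correct and follows the same three-step outline as the paper's: propagate the fixed point through the recursion so that $F_{\Theta_0|\Phi^n}(\theta^*|\Phi^n)=\theta^*$ a.s.\ for all $n$; deduce that no positive rate is achievable; and conclude that $P_{\Theta\Phi}$ is not ergodic.

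The paper is considerably terser on the last two steps. For the rate claim it simply observes that the posterior mass of $(0,\theta^*)$ is frozen at $\theta^*$ and asserts this precludes any positive rate; your case analysis on the position of $\Delta_n(\Phi^n)$ relative to $\theta^*$ supplies the argument the paper leaves implicit. For non-ergodicity the paper just declares $(0,\theta^*)\times\ui$ to be an invariant set of $P_{\Theta\Phi}$-measure $\theta^*\in(0,1)$ and stops. Your concern that strict state-space invariance in the sense of Subsection~\ref{subsec:markov_chains} may fail (since $F_{\Theta|\Phi}(\cdot|\phi)$ is merely nondecreasing and could send some $\theta<\theta^*$ exactly to $\theta^*$) is technically legitimate; the paper's claim is fine under the usual $P_{\Theta\Phi}$-a.e.\ notion of invariance that is what actually matters for ergodicity, but it glosses over this. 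Your path-space argument via the shift-invariant event $E=\bigcap_n\{\Theta_n<\theta^*\}$ sidesteps the issue cleanly and is arguably the more robust formulation, at the cost of invoking the equivalence between ergodicity of the invariant measure and triviality of shift-invariant events for the stationary chain.
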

\begin{proof}
By the assumption in the Lemma, there must exists some fixed-point $\theta_f\in\ui$ such that
\begin{equation*}
\Prob\left(F_{\Theta|\Phi}(\theta_f|\Phi)=\theta_f\right) = 1
\end{equation*}
The posterior c.d.f. $F_{\Theta_0|\Phi^n}(\theta|\phi^n)$ is obtained by an iterated composition of the kernel $P_{\Theta|\Phi}(\theta|\phi)$ controlled by the i.i.d. output sequence $\Phi^n$. Thus, the fixed point at $\theta_f$ induces a fixed point for the posterior c.d.f at $\theta_f$ as well, since
\begin{equation*}
\Prob\left(F_{\Theta_0|\Phi^n}(\theta_f|\Phi^n)=\theta_f\right)  \geq \prod_{k=1}^n\Prob\left(F_{\Theta_0|\Phi}(\theta_f|\Phi_k)=\theta_f\right) = 1
\end{equation*}
This immediately implies that no positive rate can be achieved, since the posterior probability of the interval $(0,\theta_f)$ remains fixed at $\theta_f$. Stated differently, this simply means that the output sequence provides no information regarding whether $\Theta_0<\theta_f$ or not. For practically the same reason, the invariant distribution $P_{\Theta\Phi}$ for the Markov chain $\{(\Theta_n,\Phi_n)\}_{n=1}^\infty$ is not ergodic, since the set $(0,\theta_f)\times \ui$ is invariant yet $0<P_{\Theta\Phi}\left((0,\theta_f)\times \ui\right) = \theta_f < 1$.
\end{proof}

Suppose our kernel has $L$ fixed points, and so following the above the unit interval can be partitioned into a total of $L+1$ corresponding \textit{invariant intervals}. One external way to try and handle the fixed-point problem is to decode a disjoint union of $L+1$ exponentially small intervals (one per invariant interval) in which the message point lies with high probability, and then resolve the remaining ambiguity using some simple non-feedback zero-rate code. This seems reasonable, yet there are two caveats. First, the maximal achievable rate in an invariant interval may generally be smaller than the mutual information, incurring a penalty in rate. Second, the invariant distribution $P_{\Theta\Phi}$ is not ergodic, and it is likely that any encapsulated input constraints will not be satisfied (i.e., not pathwise but only in expectation over invariant intervals). A better idea is to map our message into the invariant interval with the maximal achievable rate, which is always at least as high as the mutual information. This corresponds to a posterior matching scheme with a different input distribution (using only some of the inputs), and resolves the rate problem, but not the input constraint problem. We must therefore look for a different type of solution.

Fortunately, it turns out that the fixed points phenomena is in many cases just an artifact of the specific ordering imposed on the inputs, induced by the selection of the posterior c.d.f. in the posterior matching rule. In many cases, imposing a different ordering can eliminate this artifact altogether. We have already encountered that in the DMC setting (Example \ref{ex:gen_DMC} in Section~\ref{sec:main_result_memoryless}, using Lemma \ref{lem:dmc_prop}), where in the case a fixed point exists, a simple input permutation was shown to be sufficient in order for the posterior matching scheme (matched to the equivalent input/channel pair) to achieve capacity. This permutation can be interpreted as inducing a different order over the inputs, and the scheme for the equivalent pair can be interpreted as a specific $\mu$-variant of the original scheme.

These observations provide motivation to extend the notion of equivalence between input/channel pairs from the discrete case to the general case. Two  input/channel pairs $(P_X, P_{Y|X})$ and $(P_{X^*},P_{Y^*|X^*})$ are said to be \textit{equivalent} if there exist u.p.f's $\mu,\sigma$ such that the corresponding normalized channels satisfy
\begin{equation*}
P_{\Phi^*|\Theta^*}(\cdot|\theta) = P_{\Phi|\Theta}(\sigma(\cdot)|\mu(\theta))
\end{equation*}
for any $\theta\in\ui$. This practically means that the asterisked normalized channel is obtained by applying $\mu$ and $\sigma^{-1}$ to the input and output of the asterisk-free normalized channel, respectively, and in this case we also say that the pair $(P_X, P_{Y|X})$ is \textit{$\mu$-related} to the pair $(P_{X^*},P_{Y^*|X^*})$. Again, equivalent input/channel pairs have the same mutual information. Following this, for every u.p.f. $\mu$ and every set of input/channel pairs $\Gamma$, we define $\mu(\Gamma)$ to be the set of all input/channel pairs to which some pair in $\Gamma$ is $\mu$-related.
The following result follows through immediately from the developments in Sections \ref{sec:scheme} and \ref{sec:main_result_memoryless}, and the discussion above.
\begin{theorem}\label{thrm:mu_variants}
For any input/channel pair $(P_X,P_{Y|X})$ and any u.p.f. $\mu$, the corresponding $\mu$-variant posterior matching scheme (\ref{eq:PM_mu_variants}) has the following properties:
\begin{enumerate}[(i)]
\item It admits a recursive representation w.r.t. the normalized channel, with a kernel $\mu\circ F_{\mu^{-1}(\Theta)|\Phi}(\cdot|\phi)\circ \mu^{-1}$, i.e.,
    \begin{equation*}
        \Theta_1 = \mu(\Theta_0)\,,\quad \Theta_{n+1} = \mu\circ F_{\mu^{-1}(\Theta)|\Phi}(\cdot|\Phi_n)\circ \mu^{-1}(\Theta_n)
    \end{equation*}

\item If $(P_X,P_{Y|X})\in\mu\left(\Omega_A\cup\Omega_B\right)$ (resp. $\mu\left(\Omega_C\right)$), the scheme achieves (resp. pointwise achieves) any rate $R<I(X;Y)$ over the channel $P_{Y|X}$. Furthermore, if $(P_X, P_{Y|X})\in \mu\left(\Omega_A\cup\Omega_C\right)$ then this is achieved within an input constraint $(\eta,\Expt\eta(X))$, for any measurable $\eta:\m{X}\mapsto\RealF$ satisfying $\Expt|\eta(X)|<\infty$.
\end{enumerate}
\end{theorem}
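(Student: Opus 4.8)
The plan is to recognize every $\mu$-variant as a reparametrized \emph{baseline} posterior matching scheme and then to read off both claims from Theorems~\ref{thrm:recursion2} and~\ref{thrm:achv}. Passing to the normalized channel, the $\mu$-variant transmission function of~(\ref{eq:PM_mu_variants}) becomes $\bar g_{n+1}(\theta,\phi^n)=\mu\circ F_{\Theta_0|\Phi^n}(\theta|\phi^n)$, so the normalized input process is $\bar\Theta_n=\mu(\Xi_n)$ with $\Xi_n\dfn F_{\Theta_0|\Phi^{n-1}}(\Theta_0|\Phi^{n-1})$ and $\Xi_1=\Theta_0$. The key step for part~(i) is to show that $\{(\Xi_n,\Phi_n)\}$ is exactly the baseline posterior matching chain of the normalized pair $(\m{U},\wt{Q})$, where $\wt{Q}(\cdot|\theta)\dfn P_{\Phi|\Theta}(\cdot|\mu(\theta))$ (which again has uniform input and, by Lemma~\ref{lem:norm_chan_prop}, uniform output, hence is its own normalized channel). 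Indeed $\Phi_n$ given the past is $P_{\Phi|\Theta}(\cdot|\bar\Theta_n)=\wt{Q}(\cdot|\Xi_n)$ by construction; conditioning on $\Phi^{n-1}$ one has $\Xi_n\sim\m{U}$ by the probability integral transform (Lemma~\ref{lem:matching_trans}), so that $(\mu(\Xi_n),\Phi_n)\sim P_{\Theta\Phi}$; and the posterior-update computation from the proof of Theorem~\ref{thrm:recursion}, now carried out with the continuous, strictly increasing posterior c.d.f.\ (proper by Lemma~\ref{lem:norm_chan_prop}), yields $\Xi_{n+1}=F^{\wt{Q}}_{\Theta|\Phi}(\Xi_n|\Phi_n)=F_{\mu^{-1}(\Theta)|\Phi}(\Xi_n|\Phi_n)$. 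Applying Theorem~\ref{thrm:recursion2} to $(\m{U},\wt{Q})$ and substituting $\Theta_n=\mu(\Xi_n)$ delivers the stated recursion with kernel $\mu\circ F_{\mu^{-1}(\Theta)|\Phi}(\cdot|\phi)\circ\mu^{-1}$, Markovity, and invariance of $P_{\Theta\Phi}$.

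For part~(ii) the crucial observation is that the joint law of $(\Theta_0,\Phi^\infty)$ under the $\mu$-variant for $(P_X,P_{Y|X})$ coincides with the joint law of the message point and normalized outputs under the \emph{baseline} scheme for $(\m{U},\wt{Q})$ --- and with the \emph{same} message point $\Theta_0$, since $\Xi_1=\Theta_0$. By hypothesis $(P_X,P_{Y|X})\in\mu(\Omega_A\cup\Omega_B)$ (resp.\ $\mu(\Omega_C)$), and unwinding the definition of $\mu(\cdot)$ and of equivalence of input/channel pairs, this says precisely that $\wt{Q}$ agrees, up to an output relabelling by some u.p.f.\ $\sigma$, with the normalized channel of a pair in $\Omega_A\cup\Omega_B$ (resp.\ $\Omega_C$). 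An output relabelling by a u.p.f.\ is a harmless invertible recoding of the channel outputs under which rates, error probabilities, and the interval-decoding notions are unaffected. Hence the baseline scheme for that pair, with the optimal fixed/variable rate decoding rule, achieves (resp.\ pointwise achieves) every $R<I(X;Y)$ by Theorem~\ref{thrm:achv}; pushing the decoded intervals back through $\sigma^{-1}$ (and leaving $\Theta_0$ alone) transfers this to the $\mu$-variant. The pointwise statement transfers verbatim because ``pointwise'' refers to the common message point $\Theta_0$. For the input constraint, observe that $X_n=F_X^{-1}(\mu(\Xi_n))$ is a fixed measurable function of the state of the chain $\{(\Xi_n,\Phi_n)\}$, which is ergodic when the underlying pair is in $\Omega_A$ (and p.h.r.\ when in $\Omega_C$), while $X_n\sim P_X$ for every $n$ because the $\mu$-variant obeys the posterior matching principle (Corollary~\ref{cor:eq_schemes}); the SLLN for Markov chains (Lemma~\ref{lem:SLLN}) then gives $n^{-1}\sum_{k=1}^n\eta(X_k)\to\Expt\eta(X)$ as claimed.

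The main obstacle is the bookkeeping of the uniformity-preserving reparametrizations in part~(ii): one must check that conjugation by $\mu$ lands the scheme exactly on a baseline scheme \emph{for a pair in the hypothesized family} (tracking which of $\mu,\mu^{-1},\sigma$ appears where), and --- less obviously --- that the posterior of the \emph{true} message point $\Theta_0$ really does concentrate in a short interval, which could a priori fail since a general u.p.f.\ need not map intervals to intervals. It does not fail here precisely because the baseline chain $\{\Xi_n\}$ for $\wt{Q}$ is initialized at $\Theta_0$ itself, so it is $\Xi_n$ --- not $\bar\Theta_n=\mu(\Xi_n)$ --- that carries the concentrating posterior, and the standard interval-based decoding rule therefore applies unchanged.
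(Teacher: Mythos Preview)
Your proposal is correct and matches the paper's approach, which is essentially a non-proof: the paper simply states that the theorem ``follows through immediately from the developments in Sections~\ref{sec:scheme} and~\ref{sec:main_result_memoryless}, and the discussion above.'' You have faithfully unpacked what that sentence means --- recognizing the $\mu$-variant as the baseline scheme for the conjugated normalized channel $\wt{Q}(\cdot|\xi)=P_{\Phi|\Theta}(\cdot|\mu(\xi))$, reading off the recursive kernel from Theorem~\ref{thrm:recursion2}, and then invoking Theorem~\ref{thrm:achv} via the equivalence encoded in the definition of $\mu(\Gamma)$.

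One small imprecision: you speak of ``pushing the decoded intervals back through $\sigma^{-1}$'', but $\sigma$ acts on the normalized \emph{outputs}, not on the message point, so the posterior of $\Theta_0$ given $\Phi^n$ is literally unchanged by the relabelling (a bijective recoding of the observations carries the same information); there is nothing to push back. Also, as you yourself flag, the $\mu$ vs.\ $\mu^{-1}$ bookkeeping in matching $\wt{Q}$ to the normalized channel of a pair in $\Omega_A\cup\Omega_B$ requires care, but it resolves once one traces through the definition of ``$\mu$-related'' consistently. Your final paragraph correctly isolates the only genuinely non-obvious point: that the interval-based decoding applies to the $\Xi_n$-chain (which starts at $\Theta_0$), not to $\bar\Theta_n=\mu(\Xi_n)$, so monotonicity of $\mu$ is never needed.
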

Theorem \ref{thrm:mu_variants} expands the set of input/channel pairs for which some variant of the posterior matching scheme achieves the mutual information, by allowing different orderings of the inputs to eliminate the fixed point phenomena. For the DMC case, we have already seen that considering $\mu$-variants is sometimes crucial for achieving capacity. Next we describe perhaps a more lucid (although very synthetic) example, making the same point for continuous alphabets.
\begin{example}\label{ex:fixed_points}
Let the memoryless channel $P_{Y|X}$ be defined by the following input to output relation:
\begin{equation*}
Y = X^2 + Z
\end{equation*}
where the noise $Z$ is statistically independent of the input $X$. Suppose that some input constraints are imposed so that the capacity is finite, and also such that the capacity achieving distribution does not have a mass point at zero. Now assume that an input zero mean constraint is additionally imposed. It is easy to see that the capacity achieving distribution $P_X$ is now symmetric around zero, i.e., $P_X((-\infty,0))=P_X((0,\infty)) = \frac{1}{2}$ . It is immediately clear that the output of the channel provides no information regarding the sign of the input, hence the corresponding posterior matching kernel $F_X^{-1}\circ F_{X|Y}(\cdot|y)$ has a fixed point at the origin, and equivalently, the normalized kernel $F_{\Theta|\Phi}(\cdot|\phi)$ has a fixed point at $\theta=\frac{1}{2}$. Thus, by Lemma \ref{lem:fixed_point_no_rate} the scheme cannot attain any positive rate. Intuitively, this stems from the fact that information has been coded in the sign of the input, or the most-significant-bit of the message point, which cannot be recovered. To circumvent this problem we can change the ordering of the input, which is effectively achieved by using one of the $\mu$-variants of the posterior matching scheme. For example, set
\begin{equation*}
\mu(\theta) = \left\{\begin{array}{cc}\theta+\frac{1}{3} & \theta\in(0,\frac{1}{3}]\\ \theta-\frac{1}{3} & \theta\in(\frac{1}{3},\frac{2}{3}] \\ \theta & \theta\in(\frac{2}{3},1) \end{array}\right.
\end{equation*}
and use the corresponding $\mu$-variant scheme. This maintains the same input distribution while breaking the symmetry around $\frac{1}{2}$, and eliminating the fixed point phenomena. This $\mu$-variant scheme can therefore achieve the mutual information, assuming all the other conditions are satisfied.
\end{example}

\subsection{Channel Model Mismatch}\label{subsec:model_mismatch}
In this subsection we discuss the model mismatch case, where the
scheme is designed according to the wrong channel model. We assume
that the transmitter and receiver are both unaware of the
situation, or at least do not take advantage of it. To that end,
for any pair $( P_X, P_{Y|X})\in\Omega_C$ we define a
\textit{mismatch set} $\Omega_C^{\rm mis}( P_X, P_{Y|X})$
consisting of all input/channel pairs $( P_{X^*}, P_{Y^*|X^*})$,
with a corresponding normalized channel $ P_{\Phi^*|\Theta^*}$,
that admit the following properties:
\begin{enumerate}[\bf(\textsl{C}1)]
\item \label{cond:M_reg}\textit{$(P_{X^*}, P_{Y^*|X^*})$ satisfies
(\textsl{A}\ref{cond:omegac_proper}), and} $\displaystyle
\inf_{\eps>0}\left[D(P_{\Phi^*|\Theta^*}\|\ssc{-}P^\eps_{\Phi|\Theta}\,|\,P_\Theta)+D(P_{\Phi^*|\Theta^*}\|\ssc{+}P^\eps_{\Phi|\Theta}\,|\,P_\Theta)\right]
<\infty$.

\item
$D(P_{Y^*|X^*}\|\,P_{Y|X}\mid P_{X^*})<\infty$.\label{cond:M_div}

\item $F_X^{-1}(F_{X|Y}(X^*|Y^*))\sim P_{X^*}$\label{cond:M_inv}

\item \textit{Let $\{Y^*_n\}_{n=1}^\infty$ be the channel output sequence
when the posterior matching scheme for $(P_{X}, P_{Y|X})$ is used
over $P_{Y^*|X^*}$ and initialized with $X_1\sim P_{X^*}$. There
is a contraction $\xi$ and a length function $\psi_\lambda$ over
$\mf{F}_{c}$, such that for every $h\in\mf{F}_{c}$ and
$n\in\NaturalF$,}\label{cond:M_contr}
\begin{equation*}
\sup_{y^{n-1}}\Expt\Big{(}\psi_{\sst{\lambda}}\big{[}F_{X|Y}(\cdot\,|Y^*_n)\circ
F_X^{-1}\circ h\big{]}\,\Big{|}\,Y^{*\,n-1}=y^{n-1}\Big{)}
\leq \xi\big{(}\psi_{\sst{\lambda}}(h\,)\,\big{)}
\end{equation*}

\item \textit{Let $Z=F_X^{-1}(F_{X|Y}(X^*|Y^*))$. For any
$x^*\in\isupp(X^*)$ the set $\isupp(Z|X^*=x^*)$ contains some open
neighborhood of $x^*$.} \label{cond:M_neighb}

\end{enumerate}

The properties above are not too difficult to verify, with the
notable exception of the contraction condition
(\textsl{C}\ref{cond:M_contr}) which is not ``single letter''. This stems
from the fact that the output distribution under mismatch is
generally not i.i.d. Clearly, for any $( P_X,
P_{Y|X})\in\Omega_C$ we have $( P_X,
P_{Y|X})\in\Omega_C^{\rm mis}( P_X, P_{Y|X})$ in
particular. Moreover, if the posterior matching kernels for the
pairs $( P_X, P_{Y|X})$ and $( P_{X^*}, P_{Y^*|X^*})$ happen to
coincide, then we trivially have $( P_{X^*},
P_{Y^*|X^*})\in\Omega_C^{\rm mis}( P_X, P_{Y|X})$ and any
rate $R<I(X^*;Y^*)=I(X;Y)$ is pointwise achievable, hence there is
no rate loss due to mismatch (although satisfaction of input
constraints may be affected, see below). Note that the
initialization step (i.e., transforming the message point into the
first channel input) is in general different even when the kernels
coincide. Nevertheless, identical kernels imply a common input
support and so using a different initialization amounts to a
one-to-one transformation of the message point, which poses no
problem due to pointwise achievability.

The channel model mismatch does incur a rate loss in general, as
quantified in the following Theorem.

\begin{theorem}[\it Mismatch Achievability]\label{thrm:mismatch}
Let $( P_X, P_{Y|X})\in\Omega_C$, and suppose the
corresponding posterior matching scheme (\ref{eq:gn_recX}) is used
over a channel $ P_{Y^*|X^*}$ (unknown on both terminals). If
there exists an input distribution $ P_{X^*}$ such that $(
P_{X^*}, P_{Y^*|X^*})\in\Omega^{\rm mis}_{\sst{C}}( P_X,
P_{Y|X})$, then $P_{X^*}$ is unique and the mismatched scheme with a fixed/variable rate
optimal decoding rule matched to $( P_X, P_{Y|X})$, pointwise
achieves any rate
\begin{equation}\label{eq:mis_rate}
R < I(X^*;Y^*)-\Big{(}D(P_{Y^*|X^*}\|\,P_{Y|X}\big{|}\,P_{X^*})-
D(P_{Y^*}\|\,P_Y)\Big{)}
\end{equation}
within an input constraint $(\eta,\Expt\eta(X^*))$ provided that
$\Expt|\eta(X^*)|<\infty$.
\end{theorem}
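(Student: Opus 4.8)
The plan is to mimic the proof of Theorem~\ref{thrm:achv} for the family $\Omega_C$, but now carried out for the mismatched process. The starting point is the recursive representation: the transmitter runs the posterior matching scheme designed for $(P_X,P_{Y|X})$, so the induced normalized inputs evolve as $\Theta_{n+1}=\mu_n\circ\cdots$, but over the true channel $P_{Y^*|X^*}$. Property (\textsl{C}\ref{cond:M_inv}) is exactly what is needed to make the scheme \emph{self-consistent} under mismatch: it guarantees that if $X_n\sim P_{X^*}$ then $X_{n+1}=F_X^{-1}(F_{X|Y}(X_n|Y_n^*))\sim P_{X^*}$, so the process $\{(X_n,Y_n^*)\}$ initialized with $X_1\sim P_{X^*}$ is stationary, with $(X_n,Y_n^*)\sim P_{X^*Y^*}$ for every $n$ (the output sequence is \emph{not} i.i.d., since the receiver's update rule does not match the true channel — this is the source of all the technical trouble, and the reason (\textsl{C}\ref{cond:M_contr}) has to be stated in a non-single-letter form). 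First I would establish uniqueness of $P_{X^*}$: if two distributions satisfied (\textsl{C}\ref{cond:M_inv}) they would both be invariant for the associated Markov-type evolution, and property (\textsl{C}\ref{cond:M_neighb}) together with (\textsl{C}\ref{cond:M_reg}) should force irreducibility/uniqueness along the lines of Lemma~\ref{lem:Omegac_in_Omega} and Lemma~\ref{lem:irreducible_to_recurrent}.

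Next I would redo the three pillars of the achievability proof. For zero-rate achievability, the posterior c.d.f. $\bar G_n$ still evolves as an IFS $\bar G_{n+1}=F_{\Theta|\Phi}(\cdot|\Phi_n^*)\circ\bar G_n$ generated by the \emph{mismatched} kernel and controlled by $\Phi_n^*$; the contraction estimate of Lemma~\ref{lem:contraction_cond} is replaced by the hypothesis (\textsl{C}\ref{cond:M_contr}), which directly yields $\Expt\psi_\lambda(\bar G_{n+1}\mid \Phi^{*\,n-1})\leq\xi(\psi_\lambda(\bar G_n))$, hence by Lemma~\ref{lem:IFS_convergence} (applied conditionally and then averaged) $\Prob(\psi_\lambda(\bar G_n)>\nu)\leq\nu^{-1}r(n)$. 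With this in hand the entire argument of Lemma~\ref{lem:diverge_eps} goes through verbatim — it only used the IFS contraction bound, continuity/monotonicity of $\bar G_n$, and uniformity of $\Theta_n=\bar G_n(\Theta_0)$ (the latter still holds by (\textsl{C}\ref{cond:M_inv})). For the density-growth step, the Bayes recursion (\ref{eq:posterior_expansion}) must be done with the \emph{design} kernel $f_{\Phi|\Theta}$ but evaluated on the \emph{true} output sequence $\Phi_n^*$, giving
\begin{equation*}
\frac1n\log f_{\Theta_0|\Phi^{*n}}(\Theta_0|\Phi^{*n})=\frac1n\sum_{k=1}^n\log f_{\Phi|\Theta}(\Phi_k^*|\Theta_k).
\end{equation*}
By stationarity and the SLLN for Markov chains (Lemma~\ref{lem:SLLN}), this converges a.s.\ to $\Expt\log f_{\Phi|\Theta}(\Phi^*|\Theta^*)$ (computed under $P_{\Theta^*\Phi^*}=$ normalized $P_{X^*Y^*}$), which after the standard KL-bookkeeping equals $I(X^*;Y^*)-D(P_{Y^*|X^*}\|P_{Y|X}\mid P_{X^*})+D(P_{Y^*}\|P_Y)$ — precisely the mismatched rate $R_{\mathrm{mis}}$ appearing in (\ref{eq:mis_rate}). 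Here properties (\textsl{C}\ref{cond:M_div}) and (\textsl{C}\ref{cond:M_reg}) guarantee the relevant integrals are finite so Lemma~\ref{lem:SLLN} applies and the limit is the claimed constant.

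With $R_{\mathrm{mis}}$ playing the role that $I(X;Y)$ played before, I would then reprove the analogue of Lemma~\ref{lem:diverge_R}: for $R<R_{\mathrm{mis}}$ pick $\delta>0$ with $R<R_{\mathrm{mis}}-\delta$, use the left/right $\eps$-measures (condition (\textsl{C}\ref{cond:M_reg}) replacing regularity so that $I_\eps^-\to I(X;Y)$-type quantities approach $R_{\mathrm{mis}}$ via Levi's monotone convergence), apply the SLLN again to $\frac1n\sum\log{}^-\!f_\Phi^\eps(\Phi_k^*|\Theta_k)$, and run the same Jensen/posterior-expansion chain of inequalities to conclude $\Prob(E_{n,\eps})\to0$ and hence $\Prob(\widetilde E_{n,\eps})\to0$. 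Then the ``restart at time $m$'' device from the $\Omega_A$ proof, combined with Lemma~\ref{lem:uniform_eps} (which again only needs the IFS contraction), gives ${}^-\Theta_{(1+\alpha)n}^{n,R}\to0$ and ${}^+\Theta_{(1+\alpha)n}^{n,R}\to1$ in probability, so a $2^{-nR}$-neighborhood of $\Theta_0$ captures posterior mass $\to1$; the variable/fixed-rate decoding conversion and the SLLN-based verification of the input constraint $(\eta,\Expt\eta(X^*))$ (using $X_n\sim P_{X^*}$ and $\Expt|\eta(X^*)|<\infty$) are identical to before. Finally, pointwise achievability: property (\textsl{C}\ref{cond:M_neighb}), exactly as (\textsl{A}\ref{cond:omegac_fixedp}${}^*$) did in Lemma~\ref{lem:Omegac_in_Omega}, should give p.h.r.\ and aperiodicity of the mismatched chain (one shows the chain can reach any neighborhood from any starting point using that $\isupp(Z|X^*=x^*)$ contains a neighborhood of $x^*$), upgrading all the ``a.a.\ $\theta_0$'' statements to ``all $\theta_0$'' via the p.h.r.\ clause of Lemma~\ref{lem:SLLN}, exactly as in Lemma~\ref{lem:pointwise_ext}.

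The main obstacle is the non-i.i.d.\ nature of $\{Y_n^*\}$. Under mismatch the receiver's posterior update is ``wrong'', so although $\{(X_n,Y_n^*)\}$ is stationary it is only a hidden/derived Markov chain and the clean one-shot contraction of Lemma~\ref{lem:contraction_cond} is simply unavailable; this is why condition (\textsl{C}\ref{cond:M_contr}) must be \emph{assumed} as a uniform-over-$y^{n-1}$ conditional contraction rather than derived, and the one real subtlety in the write-up is checking that Lemmas~\ref{lem:IFS_convergence}, \ref{lem:diverge_eps}, \ref{lem:diverge_R} and \ref{lem:uniform_eps} all continue to go through when ``$\Expt[\cdots]\leq\xi(\cdots)$'' is replaced by ``$\sup_{y^{n-1}}\Expt[\cdots\mid Y^{*\,n-1}=y^{n-1}]\leq\xi(\cdots)$'' — i.e.\ that every place the original proofs used independence of the controlling outputs, a conditioning-then-tower-property argument suffices. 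I expect this to be routine but tedious; the conceptual content is entirely in identifying the correct mismatched rate $R_{\mathrm{mis}}$ via the SLLN, which the decomposition $\Expt\log f_{\Phi|\Theta}(\Phi^*|\Theta^*)=h(\Phi^*)-[h(\Phi^*)-\Expt\log f_{\Phi|\Theta}(\Phi^*|\Theta^*)]$ and the identities $I(X^*;Y^*)=I(\Theta^*;\Phi^*)$, $D(P_{Y^*|X^*}\|P_{Y|X}\mid P_{X^*})=D(P_{\Phi^*|\Theta^*}\|P_{\Phi|\Theta}\mid P_{\Theta})$, $D(P_{Y^*}\|P_Y)=D(P_{\Phi^*}\|P_\Phi)$ make transparent.
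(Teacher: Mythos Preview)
Your proposal is correct and follows essentially the same route as the paper's proof: establish p.h.r.\ of the mismatch chain via (\textsl{C}\ref{cond:M_neighb}) and (\textsl{C}\ref{cond:M_inv}) (giving uniqueness of $P_{X^*}$), compute the a.s.\ growth rate of the design-posterior density along the true outputs via the p.h.r.\ SLLN to identify $R^{\rm mis}$, redo the trajectory-divergence argument using (\textsl{C}\ref{cond:M_reg}) for the $\eps$-measure bounds, and invoke (\textsl{C}\ref{cond:M_contr}) as a drop-in replacement for Lemma~\ref{lem:contraction_cond} to handle the non-i.i.d.\ outputs.

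One slip worth flagging: your claim that ``uniformity of $\Theta_n=\bar G_n(\Theta_0)$ \ldots\ still holds by (\textsl{C}\ref{cond:M_inv})'' is incorrect. Property (\textsl{C}\ref{cond:M_inv}) says $P_{X^*}$ is invariant for the mismatched evolution, so at steady state $X_n\sim P_{X^*}$ and hence $\Theta_n\sim P_{\Theta^*}=F_X(P_{X^*})$, which is uniform only when $P_{X^*}=P_X$. This affects step~(f) in the proof of Lemma~\ref{lem:diverge_eps}, where $\Prob(\bar G_n(\Theta_0)\notin(\eta,1-\eta))=2\eta$ must be replaced by $P_{\Theta^*}\big((0,\eta)\cup(1-\eta,1)\big)$; under the standing regularity assumptions this still vanishes as $\eta\to0$ (e.g.\ at rate $O(\eta)$ if $f_{\Theta^*}$ is bounded), so the argument survives with a harmless constant. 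The paper's proof glosses over this point as well, simply asserting that (\textsl{C}\ref{cond:M_contr}) yields the equivalent of Lemma~\ref{lem:diverge_eps}.
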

\begin{proof}
See Appendix \ref{app:pointwise}.
\end{proof}
The difference between relative entropies in (\ref{eq:mis_rate})
constitutes the penalty in rate due to the mismatch, relative to what could have been achieved for the induced input distribution $P_{X^*}$. Note that this term is always nonnegative due to the convexity of the
relative entropy, and vanishes when there is no mismatch.

For the next example we need the following Lemma. The proof (by
direct calculation) is left out.
\begin{lemma}\label{lem:divergence_gaussian}
Let $U,V$ be a pair of continuous, zero mean, finite variance r.v.'s, and suppose $V$ is Gaussian. Then
\begin{equation*}
D(P_U\|P_V) = h(V)-h(U) + \frac{\log{e}}{2}\left(\frac{\Expt
\,U^2}{\Expt \,V^2}-1\right)
\end{equation*}
\end{lemma}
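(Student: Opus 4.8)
The plan is a direct computation: expand $D(P_U\|P_V)$ against the Gaussian density $f_V$, isolate $h(U)$ and $h(V)$, and let the variance ratio fall out of the quadratic exponent.

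First I would write, using that $V$ is Gaussian so $f_V>0$ on all of $\RealF$ (hence $P_U\ll P_V$ automatically for the continuous r.v. $U$),
\[
D(P_U\|P_V) = \int f_U(x)\log\frac{f_U(x)}{f_V(x)}\,dx = -h(U) - \Expt\big(\log f_V(U)\big),
\]
an identity valid in $[0,\infty]$. A preliminary remark would record that everything here is well defined in the extended reals: by the maximum differential entropy property under a second-moment constraint, $h(U)\le h(V)<\infty$, so the right-hand side is never of the form $\infty-\infty$; and if $h(U)=-\infty$ then both sides equal $+\infty$ and the identity holds trivially.

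Next, substituting the zero-mean Gaussian density $f_V(x) = (2\pi\,\Expt V^2)^{-1/2}\exp(-x^2/(2\,\Expt V^2))$ and recalling that $\log=\log_2$,
\[
-\log f_V(x) = \tfrac12\log\!\big(2\pi\,\Expt V^2\big) + \frac{\log e}{2\,\Expt V^2}\,x^2 .
\]
Taking expectation under $P_U$ — the one place the finite-variance hypothesis is needed, and where the zero mean of $V$ eliminates any term linear in $x$ — gives
\[
-\Expt\big(\log f_V(U)\big) = \tfrac12\log\!\big(2\pi\,\Expt V^2\big) + \frac{\log e}{2}\cdot\frac{\Expt U^2}{\Expt V^2}.
\]

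Finally I would rewrite the constant using $h(V) = \tfrac12\log(2\pi e\,\Expt V^2) = \tfrac12\log(2\pi\,\Expt V^2) + \tfrac12\log e$, so that $\tfrac12\log(2\pi\,\Expt V^2) = h(V) - \tfrac12\log e$; combining the three displays yields
\[
D(P_U\|P_V) = h(V) - h(U) + \frac{\log e}{2}\Big(\frac{\Expt U^2}{\Expt V^2} - 1\Big),
\]
as claimed. There is no genuine obstacle in this argument — it is essentially bookkeeping with the Gaussian density; the only step deserving a sentence of care is the extended-real well-definedness of $D(P_U\|P_V)$ and $h(U)$, which I would dispatch using the full support of $f_V$ together with the second-moment bound $h(U)\le h(V)$ noted above.
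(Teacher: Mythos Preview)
Your proof is correct and is precisely the ``direct calculation'' the paper alludes to (the paper omits the proof entirely, stating only that it follows by direct calculation). Your extra care about extended-real well-definedness via $h(U)\le h(V)$ is a nice touch but not strictly needed for the paper's purposes.
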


\begin{example}[\it Robustness of the Schalkwijk-Kailath scheme]\label{ex:AWGN_mis}
Suppose that the Schalkwijk-Kailath scheme (\ref{eq:AWGN_scheme})
designed for an AWGN channel $ P_{Y|X}$ with noise
$Z\sim\m{N}(0,\rm N)$ and input $X\sim\m{N}(0,\rm P)$, is used
over an AWGN channel with noise variance ${\rm N^*}$. Since the
scheme depends on the channel and input only through the
$\SNR=\frac{\rm P}{\rm N}$, then the scheme's kernel coincides
with the Schalkwijk-Kailath kernel for an input
$X^*\sim\m{N}(0,\rm \frac{ N^*}{N}P)$ over the mismatch channel.
Therefore, following the remark preceding Theorem
\ref{thrm:mismatch}, there is no rate loss, and the input power is
automatically scaled to maintain the same SNR for which the scheme
was designed. This robustness of the Schalkwijk-Kailath scheme to
changes in the Gaussian noise (SNR mismatch) was already mentioned~\cite{weissman_kailath_colq}.

However, Theorem \ref{thrm:mismatch} can be used to demonstrate
how the Schalkwijk-Kailath scheme is robust to more general
perturbations in the noise statistics. Suppose the scheme is used
over a \textit{generally non-Gaussian} additive noise channel $
P_{Y^*|X^*}$ with noise $Z^*$ having zero mean and a variance
${\rm N^*}$. Suppose there exists an input distribution $ P_{X^*}$
such that $( P_{X^*}, P_{Y^*|X^*})\in\Omega_C^{\rm mis}(
P_{X}, P_{Y|X})$. We have $Y=X+Z$ and $Y^*=X^*+Z^*$ for the
original channel and the mismatch channel respectively. Plugging
(\ref{eq:AWGN_scheme}) into the invariance property
(\textsl{C}\ref{cond:M_inv}) and looking at the variance, we have
\begin{equation*}
{\rm P^*} = \Expt\left(\frac{X^*}{\sqrt{1+\SNR}}+\frac{\SNR\cdot
Z^*}{\sqrt{1+\SNR}}\right)^2 = \frac{{\rm P^*}+\SNR^2\cdot {\rm
N^*}}{1+\SNR}
\end{equation*}
which immediately results in ${\rm \SNR^*}\dfn\frac{{\rm
P^*}}{{\rm N^*}} = \SNR$, so the SNR is conserved despite the
mismatch. Now applying Theorem \ref{thrm:mismatch} and some simple
manipulations, we find that the mismatched scheme pointwise
achieves any rate $R$ satisfying
{\allowdisplaybreaks
\begin{align*}
R &< h(Y^*)-h(Z^*)-\left(D(P_{Z^*}\|P_Z)-
D(P_{Y^*}\|P_Y)\right)
\\
& = h(Y^*)-h(Z^*)-\left(\vphantom{\frac{\Expt
(Y^*)^2}{\Expt Y^2}}h(Z)-h(Z^*)-h(Y)+h(Y^*)+\frac{\log
e}{2}\left(\frac{\Expt (Z^*)^2}{\Expt Z^2}-\frac{\Expt
(Y^*)^2}{\Expt Y^2}\right)\right)
\\
&= h(Y)-h(Z) +\frac{\log e}{2}\left(\frac{{\rm
P^*}+{\rm N^*}}{P+N}-\frac{{\rm N^*}}{N}\right)
= I(X;Y) + \frac{\log
e}{2}\cdot\frac{{\rm N^*}}{N}\left(\frac{1+{\rm
\SNR^*}}{1+\SNR}-1\right)
\\
& = I(X;Y) = \frac{1}{2}\log(1+\SNR)
\end{align*}}
where we have used Lemma \ref{lem:divergence_gaussian} in the
first equality. Therefore, the mismatched scheme can
attain any rate below the Gaussian capacity it was designed for,
despite the fact that the noise is not Gaussian, and the input
power is automatically scaled to maintain the same SNR for which
the scheme was designed. Invoking~\cite{ZamirErez2004}, we can now claim that the Schalkwijk-Kailath
scheme is \textit{universal} for communication over a memoryless
additive noise channel (within the mismatch set) with a given
variance and an input power constraint, in the sense of loosing at
most half a bit in rate w.r.t. the channel capacity.
\end{example}

\section{Discussion}\label{sec:discussion}
An explicit feedback transmission scheme tailored to any memoryless channel and any input distribution was developed, based on a novel principle of posterior matching. In particular, this scheme was shown to provide a unified view of the well known Horstein and Schalkwijk-Kailath schemes. The core of the transmission strategy lies in the constantly refined representation of the message point's position \textit{relative} to the uncertainty at the receiver. This is accomplished by evaluating the receiver's posterior c.d.f. at the message point, followed by a technical step of matching this quantity to the channel via an appropriate transformation. A recursive representation of the scheme renders it very simple to implement, as the next channel input is a fixed function of the previous input/output pair only. This function is explicitly given in terms of the channel and the selected input distribution. The posterior matching scheme was shown to achieve the mutual information for pairs of channels and input distributions under very general conditions. This was obtained by proving a concentration result of the posterior p.d.f. around the message point, in conjunction with a contraction result for the posterior c.d.f. over a suitable function space. In particular, achievability was established for discrete memoryless channels, thereby also proving that the Horstein scheme is capacity achieving.

The error probability performance of the scheme was analyzed, by casting the variable-rate decoding process as the evolution of a reversed iterated function system (RIFS), and interpreting the associated contraction factors as information rates. This approach yielded two closed form expressions for the exponential decay of the target error probability which facilitates the achievability of a given rate, then used to provide explicit results in several examples. However, the presented error analysis is preliminary and should be further pursued. First, the obtained expressions require searching for good weight or shaping functions, which in many cases may be a difficult task. In the same vein, it is yet unclear under what conditions the error analysis becomes valid for rates up to the mutual information. Finally, the basic technique is quite general and allows for other RIFS contraction lemmas to be plugged in, possibly to yield improved error expressions.

We have seen that a fixed-point free kernel is a necessary condition for achieving any positive rate. We have also demonstrated how fixed points can sometimes be eliminated by considering an equivalent channel, or a corresponding $\mu$-variant scheme. But can this binary observation be refined? From the error probability analysis of Section~\ref{sec:error_memoryless}, it roughly seems that the ``closer'' the kernel is to having a fixed point, the worst the error performance should be. It would be interesting to quantify this observation, and to characterize the best $\mu$-variant scheme for a given input/channel pair, in terms of minimizing the error probability.

We have derived the rate penalty incurred in a channel model mismatch setting, where a posterior matching scheme devised according to one channel model (and input distribution) is used over a different channel. However, the presence of feedback allows for an adaptive transmission scheme to be used in order to possibly reduce or even eliminate this penalty. When the channel is known to belong to some parametric family, there exist universal feedback transmission schemes that can achieve the capacity of the realized channel if the family is not too rich~\cite{Ooi-Wornell}, and sometimes even attain the optimal error exponent~\cite{Tchamkerten-Telatar}. However, these results involve random coding arguments, and so the associated schemes are neither explicit nor simple. It would therefore be interesting to examine whether an \textit{adaptive} posterior matching scheme, in which the transmitter modifies its strategy online based on channel estimation, can be proven universal for some families of memoryless channels. It seems plausible that if the family is not too rich (e.g., in the sense of~\cite{FederLapidoth98}) then the posterior will have a significant peak only when ``close enough'' to the true channel, and be flat otherwise. Another related avenue of future research is the universal communication problem in an individual/adversarial setting with feedback. For discrete alphabets, it was already demonstrated that the \textit{empirical capacity} relative to a modulo-additive memoryless model can be achieved using a randomized sequential transmission strategy that builds on the Horstein scheme~\cite{empirical_capacity_partI}. It remains to be explored whether this result can be extended to
general alphabets by building on the posterior matching scheme, where the empirical capacity is defined relative to some parametric family of channels.

An extension of the suggested scheme to channels with memory is certainly called for. However, the posterior matching principle needs to be modified to take the channel's memory into account, since it is clear that a transmission independent of previous observations is not always the best option in this case. In hindsight, this part of the principle could have been phrased differently: \textit{The transmission functions should be selected so that the input sequence has the correct marginal distribution, and the output sequence has the correct joint distribution}. In the memoryless case, this is just to say that $X_n\sim P_X$, and $Y^n$ is i.i.d. with the marginal $P_Y$ induced by $(P_X,P_{Y|X})$, which coincides with the original principle. However, when the channel has memory the revised principle seems to lead to the correct generalization. For instance, consider a setting where the channel is Markovian of some order $d$, and the ``designed'' input distribution is selected to be Markovian of order $d$ as well\footnote{By that we mean that $Y_n -  X_{n-d}^nY_{n-d}^{n-1}  -  X^{n-d-1}Y^{n-d-1}$ and $X_n -  X_{n-d}^{n-1}Y_{n-d}^{n-1} -  X^{n-d-1}Y^{n-d-1}$ are Markov chains.}. According to the revised principle, the input to the channel should be generated in such a way that any $d$ consecutive input/output pairs have the correct (designed) distribution\footnote{We interpret ``marginal'' here as pertaining to the degrees of freedom suggested by the designed input distribution.}, and the joint output distribution is the one induced by the designed input distribution and the channel, so the receiver cannot ``tell the difference''. To emulate such a behavior, a $d+1$ order (or higher) kernel is required, since any lower order will result in some deterministic dependence between any $d$ consecutive pairs. This also implies that a $d+1$ dimensional message point is generally required in order to provide the necessary degrees of freedom in terms of randomness. It can be verified that whenever such a procedure is feasible, then under some mild regularity conditions the posterior p.d.f. at the message point is $\approx 2^{I(X^n\rightarrow Y^n)}$, where $I(X^n\rightarrow Y^n)$ is the \textit{directed information} pertaining to the designed input distribution and the channel~\cite{massey90}. This is encouraging, since for channels with feedback the directed information usually plays the same role as mutual information does for channels without feedback~\cite{massey90,tatikonda_phd,kim07,TatikondaMitter09,permuter_fsc_feedback}. Note also that the randomness degrees of freedom argument for a multi-dimensional message point, provides a complementary viewpoint on the more analytic argument as to why the additional dimensions are required in order to attain the capacity of an auto-regressive Gaussian channel via a generalized Schalkwijk-Kailath scheme~\cite{kim_gaussian_feedback_cap}. It is expected that a scheme satisfying the revised principle and its analysis should follow through via a similar approach to that appearing in this paper.

\section*{Acknowledgements}
The authors would like to thank Sergio Verd\'u for suggesting Example \ref{ex:exp_noise_mean}, and Young-Han Kim for some useful comments.

\appendix

\section{Main Proofs}\label{app:lemmas}

\begin{proof}[Proof of Lemma \ref{lem:matching_trans}]
For the first claim, let us find the c.d.f. of $F_X^{-1}(\Theta)$:
\begin{equation*}
\Prob(F_X^{-1}(\Theta)\leq x) =
\Prob(\inf\{z\,:\,F_X(z)>\Theta\}\,\leq x)
\stackrel{(\rm a)}{=}
\Prob(F_X(x)\geq \Theta)=F_X(x)
\end{equation*}
where (a) holds since a c.d.f. is nondecreasing and continuous
from the right, and so the result follows. For the second claim,
define $\Phi=F_X(X)-\Theta\cdot P_X(X)$ and let $\phi\in\ui$ be
such that there exists $x_0\in\supp(X)$ for which $F_X(x_0)=\phi$.
Then
\begin{equation*}
F_\Phi(\phi) \geq \Prob\big{(}F_X(X)\leq F_X(x_0)\big{)} =
\Prob(X\leq x_0)
= F_X(x_0) = \phi
\end{equation*}
and on the other hand
\begin{equation*}
F_\Phi(\phi) \leq \Prob\big{(}F_X(X)-P_X(X)\leq F_X(x_0)\big{)} =
\Prob(X\leq x_0)
= F_X(x_0) = \phi
\end{equation*}
hence $F_\Phi(\phi)=\phi$. If such an $x_0$ does not exists then
there must exist a jump point $x_1$ such that
\begin{equation*}
F_X(x_1)-P_X(x_1)\leq \phi < F_X(x_1) \dfn \phi_1
\end{equation*}
and so
\begin{equation*}
F_\Phi(\phi) =  F_\Phi(\phi_1)-P_\Phi\big{(}(\phi,\phi_1]\big{)}
=\phi_1-\Prob\big{(}X=x_1\,,\Theta\cdot P_X(x_1)\leq
\phi_1-\phi\big{)}
=
\phi_1-P_X(x_1)\cdot\frac{\phi_1-\phi}{P_X(x_1)} = \phi
\end{equation*}
For a proper $X$ there are no mass points hence the simpler result
follows immediately.
\end{proof}

\begin{proof}[Proof of Lemma \ref{lem:setting_eqv}]
Assume we are given a transmission scheme $g_n$ and a decoding rule $\Delta_n$ which are known to achieve a rate $R_0$. For simplicity, we assume the decoding rule is fixed rate, (i.e. $|\Delta(y^n)| = 2^{-nR_0}$ for all $y^n$), since any variable rate rule can be easily mapped into a fixed rate rule that achieves the same rate. It is easy to see that in order to prove the above translates into achievability of some rate $R<R_0$ in the standard framework, it is enough to show we can find a sequence $\Gamma_n=\{\theta_{i,n}\in\ui\}_{i=1}^{\lfloor 2^{nR}\rfloor}$ of message point sets, such that $\theta_{i+1,n}-\theta_{i,n}\geq 2^{-nR_0}$ for any $1\leq i< \lfloor 2^{nR}\rfloor$, and such that we have uniform achievability over $\Gamma_n$, i.e.,
\begin{equation*}
\lim_{n\rightarrow\infty}\max_{\theta\in\Gamma_n}\Prob(\theta\not\in\Delta_n(Y^n)|\Theta_0=\theta) = 0
\end{equation*}

We now show how $\Gamma_n$ can be constructed for any $R<R_0$. Let $p_e(n)$ be the (average) error probability associated with our scheme and the fixed rate $R_0$ decoding rule. Define
\begin{equation*}
A_n = \left\{\theta\in\ui\,:\, \Prob(\Theta_0\not\in\Delta(Y^n)|\Theta_0=\theta) > \sqrt{p_e(n)}\right\}
\end{equation*}
and write
\begin{equation*}
p_e(n) = \int\Prob(\Theta_0\not\in\Delta(Y^n)|\Theta_0=\theta)d\theta
> \sqrt{p_e(n)}\int\ind_{A_n}(\theta)d\theta
\end{equation*}
and so we have that $\int\ind_{A_n}(\theta)d\theta<\sqrt{p_e(n)}$. It is now easy to see that if we want to select $\Gamma_n$ such that $\Gamma_n\cap A_n = \phi$, and also $\theta_{{i+1,n}}-\theta_{i,n} \geq 2^{-nR_0}$, then a sufficient condition is that $\frac{1}{|\Gamma_n|}(1-\sqrt{p_e(n)}-\tau_n)\geq 2^{-nR_0}$ for some positive $\tau_n \rightarrow 0$. This condition can be written as
\begin{equation*}
\frac{1}{n}\log|\Gamma_n| \leq R_0 + \frac{1}{n}\log (1-\sqrt{p_e(n)}-\tau_n) = R_0 + {\rm o}(1)
\end{equation*}
At the same time, we also have by definition
\begin{equation*}
\lim_{n\rightarrow\infty}\max_{\theta\in\Gamma_n}\Prob(\theta\not\in\Delta(Y^n)|\Theta_0=\theta) \leq \lim_{n\rightarrow\infty} \sqrt{p_e(n)} = 0
\end{equation*}
and the proof is concluded.
\end{proof}

\begin{proof}[Proof of Lemma \ref{lem:contraction_profile}]
Since $\xi$ is $\cap$-convex over $[0,1]$, it has a unique maximal value attained at some (not necessarily unique) point $x^*$. Moreover, convexity implies $\xi$ is continuous over $(0,1)$, and since it is nonnegative and upper bounded by $\xi(x)<x$, it is also continuous at $x=0$ and $\xi(0)=0$. Now, define the sequence $s_n=\xi^{(n)}(x^*)$. Since $\xi(x)<x$ the sequence $s_n$ is monotonically decreasing, and since $\xi$ is nonnegative it is also bounded from below. Therefore, $s_n$ converges to a limit $\,s_\infty\in[0,1)$, and we can write
\begin{equation*}
\lim_{n\rightarrow\infty}s_n=s_\infty \,,\qquad
\lim_{n\rightarrow\infty}\xi(s_n) =
\lim_{n\rightarrow\infty}s_{n+1} = s_\infty
\end{equation*}
Since $\xi$ is continuous over $[0,1)$ the above implies that $\xi(s_\infty)=s_\infty$, i.e., $s_\infty$ is a fixed point of $\xi$. Thus, we either have $\xi\equiv 0$ in which case $s_\infty=0$, or $\xi\not\equiv 0$ in which case the only fixed point for $\xi$ is zero and so again $s_\infty=0$. We now note that $\xi(x)\leq \xi(x^*)\leq x^*$ for any $x\in[0,1]$, and also that $\xi$ is nondecreasing over $[0,x^*]$ and hence so is $\xi^{(n)}$. We therefore have
\begin{equation*}
\lim_{n\rightarrow\infty}r(n) =
\lim_{n\rightarrow\infty}\sup_{x\in[0,1]}\xi^{(n)}(x) \leq
\lim_{n\rightarrow\infty}\xi^{(n-1)}(x^*)
=\lim_{n\rightarrow\infty}s_n = 0
\end{equation*}
\end{proof}

\begin{proof}[Proof of Lemma \ref{lem:IFS_convergence}]
For any $\eps>0$,
{\allowdisplaybreaks
\begin{align*}
\Prob\left(\psi(S_n(s))>\eps\right) &\stackrel{\rm (a)}{\leq}
\eps^{-1}\Expt[\psi(S_n(s))]
=
\eps^{-1}\Expt\left(\Expt[\psi(S_n(s))\,|\,Y^{n-1}]\right)
=
\eps^{-1}\Expt\left(\Expt[\psi(\omega_{\sst Y_n}\circ
S_{n-1}(s))\,|\,Y^{n-1}]\right)
\\
&\stackrel{\rm (b)}{\leq}
\eps^{-1}\Expt\,\xi\left(\psi(S_{n-1}(s))\right)
\stackrel{\rm
(c)}{\leq} \eps^{-1}\xi\left(\Expt\psi(S_{n-1}(s))\right)
\leq \cdots\stackrel{\rm (d)}{\leq}
\eps^{-1}\xi^{(n)}\left(\psi(s)\right)
\stackrel{\rm
(e)}{\leq} \eps^{-1}r(n)
\end{align*}}
Markov's inequality was used in (a), the contraction relation
(\ref{eq:contraction_IFS_cond}) in (b) and Jensen's inequality in
(c). Inequality (d) is a recursive application of the preceding
transitions, and the definition of the decay profile was
used in (e).
\end{proof}

\begin{proof}[Proof of Lemma \ref{lem:contraction1}]
For any $\eps>0$,
{\allowdisplaybreaks
\begin{align*}
\Prob(|\wt{S}_n(s)-\wt{S}_n(t)|>\eps) &=
\Prob(|\wt{S}_n(s)-\wt{S}_n(t)|^q>\eps^q)
\stackrel{\rm
(a)}{\leq} \eps^{-q}\Expt|\wt{S}_n(s)-\wt{S}_n(t)|^q
\\
&=
\eps^{-q}\Expt(\Expt(\,|\wt{S}_n(s)-\wt{S}_n(t)|^q\,\big{|}\,Y_2^n))
\\
&= \eps^{-q}\Expt(\Expt(|\omega_{\sst Y_1}\circ\cdots\circ
\omega_{\sst Y_n}(s)-\omega_{\sst Y_1}\circ\cdots\circ
\omega_{\sst Y_n}(t)|^q\big{|}Y_2^n))
\\
&\stackrel{\rm
(b)}{\leq}  \eps^{-q}r\cdot \Expt(|\omega_{\sst
Y_2}\circ\cdots\circ \omega_{\sst Y_n}(s)-\omega_{\sst
Y_2}\circ\cdots\circ \omega_{\sst Y_n}(t)|^q)
\leq \cdots\stackrel{\rm (c)}{\leq} \eps^{-q}r^n|s-t|^q
\end{align*}}
Where in (a) we use Markov's inequality, in (b) we use the
contraction (\ref{eq:contraction_cond}), and (c) is a recursive
application of the preceding transitions.
\end{proof}

\begin{proof}[Proof of Theorem \ref{thrm:post_match_scheme}]
We prove by induction that for any $n\in\NaturalF$, $P_{\Theta_0|Y^n}(\cdot|y^n)$ is proper for $P_{Y^n}$-a.a. $y^n\in\m{Y}^n$, and the rest of the proof remains the same. First, this property is satisfied for $n=0$ since $P_{\Theta_0}$ is proper. Now assume the property holds for any $1\leq n \leq k-1$. By our previous derivations, this implies that $X_n\sim P_X$ for any $1\leq n \leq k$, and thus by the definition of an input/channel pair we have in particular $I(X_n;Y_n)=I(X;Y)<\infty$ for any such $n$. Now suppose the property does not hold for $n=k$. This implies there exists a measurable set $A\subseteq\m{Y}^k$ with $P_{Y^k}(A)>0$ so that $P_{\Theta_0|Y^k}(\cdot|y^k)\not\ll P_{\Theta_0}$ for any $y^k\in A$. Therefore, it must be that $I(\Theta_0;Y^k)=\infty$. However standard manipulations using the fact that the channel is memoryless result in $I(\Theta_0;Y^k)\leq \sum_{n=1}^kI(X_n;Y_n)<\infty$, in contradiction.
\end{proof}

\begin{proof}[Proof of Lemma \ref{lem:norm_chan_prop}, claim (\ref{claim:norm3})]
Since $\Theta\sim\m{U}$, it is enough to show that
$P_{\Phi|\Theta}(\cdot|\theta)$ is proper for $\m{U}$-a.a.
$\theta\in\ui$. Define the discrete part of the output support to
be $\m{Y}_D = \{y\in\supp(Y)\,:\, P_Y(y)>0\}$, which is a
countable set. Define also the set $\mf{Y}_D\dfn
\isupp(\Phi|Y\in\m{Y}_D)$ which is a countable union of disjoint
intervals inside the unit interval, corresponding to the ``jump
spans'' introduced by $F_Y$ over $\m{Y}_D$. Furthermore, for any
$x\in\supp(X)$ define $\m{Y}_{D,x}$ to be the set of mass points
for $P_{Y|X}(\cdot|x)$. Since $I(X;Y)<\infty$, then it must be
that $P_{Y|X}(\cdot|x)\ll P_Y$ for $P_X$-a.a. $x\in\supp(X)$.
Therefore, there exists a set $A\subseteq\supp(X)$ of full measure
$P_X(A)=1$, so that $\m{Y}_{D,x} \subseteq \m{Y}_D$ for any $x\in
A$. Therefore, for any $x\in A$, $P_{Y|X}(\cdot|x)$ restricted to
$\supp(Y)\setdiff \m{Y}_D$ has a proper p.d.f., which implies that
$P_{\Phi|X}(\cdot|x)$ restricted to $\ui\setdiff \mf{Y}_D$
has a proper p.d.f. as well, since $\Phi$ is obtained from $Y$ by
applying a continuous and bounded function. $P_{\Phi|X}(\cdot|x)$
restricted to any one of the countable number of intervals
composing $\mf{Y}_D$ is uniform, hence admits a proper p.d.f.
as well. We therefore conclude that $P_{\Phi|X}(\cdot|x)$ is
proper for any $x\in A$. To conclude, define the set
$B=\{\theta\in\ui\,:\, F_X^{-1}(\theta)\in A\}$, which by Lemma
\ref{lem:matching_trans} is of full measure $\m{U}(B)=1$, and from
the discussion above $P_{\Phi|\Theta}(\cdot|\theta)$ is proper for
any $\theta\in B$.
\end{proof}

\begin{proof}[Proof of Lemma \ref{lem:dmc_prop}, claim (\ref{claim:dmc2})] Suppose there exists some $y_0\in\m{Y}$ so that $P_Y(y_0)>0$ and $P_X\dom P_{X|Y}(\cdot|y_0)$. Define the set $A_0=\{\phi\in\ui\,:\,F_Y^{-1}(\phi)=y_0\}$. For any $x\in\m{X}$ and $\phi\in A_0$, the normalized posterior matching kernel evaluated at $\theta=F_X(x)$ satisfies
\begin{equation*}
F_{\Theta|\Phi}(F_X(x)|\phi) = F_{X|Y}(x|y_0) \geq F_X(x)
\end{equation*}
where the last inequality is due to the dominance assumption above, and is \textit{strict} for $x\in\{0,\ldots,|\m{X}|-2\}$. Moreover, the normalized posterior matching kernel evaluated in between this finite set of points is simply a linear interpolation. Thus, for any $\theta\in\ui$ and any $\phi\in A_0$ we have $F_{\Theta|\Phi}(\theta|\phi)>\theta$, and so
\begin{equation*}
\Prob\big{(}F_{\Theta|\Phi}(\theta|\Phi)=\theta\big{)} \leq
1-P_\Phi(A_0) = 1-P_Y(y_0) <1
\end{equation*}
which implies the fixed-point free property (\textsl{A}\ref{cond:omega_fixedp}). The case where $P_{X|Y}(\cdot|y_0)\dom P_X$ follows by symmetry. The case where $P_{X|Y}(\cdot|y_0)\dom P_X(\cdot|y_1)$ is trivial.
\end{proof}

\begin{proof}[Proof of Lemma \ref{lem:dmc_prop}, claim (\ref{claim:dmc3})] We find it simpler here to consider the normalized input $\Theta$ but the original output $Y$, namely to prove an equivalent claim stating that the invariant distribution $P_{\Theta Y}$ for the Markov chain $(\Theta_n,Y_n)$, is ergodic. To that end, we show that if $S\subseteq\ui\times \m{Y}$ is an invariant set, then $P_{\Theta Y}(S)\in\{0,1\}$. Let us write $S$ as a disjoint union:
\begin{equation*}
S = \bigcup_{y\in\m{Y}} A_y\times \{y\}\,,\quad A_y\subseteq\ui
\end{equation*}

The posterior matching kernel deterministically maps a pair $(\theta,y)$ to the input $\hat{\theta} = F_{\Theta|Y}(\theta|y)$, and then the corresponding output is determined via $P_{Y|\Theta}(\cdot|\hat{\theta})$. Since by (\textsl{B}\ref{cond:omega_dm_nzI}) all transition probabilities are nonzero, then each possible output in $\m{Y}$ is seen with a nonzero probability given any input. Thus, denoting the stochastic kernel of the Markov chain by $\m{P}$, we have that $\m{P}(\cdot|(\theta,y))$ has support on the discrete set $\{F_{\Theta|Y}(\theta|y)\}\times \m{Y}$ for any $(\theta,y)\in S$. Since $S$ is an invariant set, this implies that
\begin{equation*}
S' \dfn \bigcup_{y\in\m{Y}} F_{\Theta|Y}(A_y|y)\times \m{Y} \subseteq S
\end{equation*}
where by $F_{\Theta|Y}(A_y|y)$ we mean the image set of $A_y$ under $F_{\Theta|Y}(\cdot|y)$. This in turn implies that
\begin{equation}\label{eq:A_inclusion}
\bigcup_{y\in\m{Y}}F_{\Theta|Y}(A_y|y) \subseteq \bigcap_{y\in\m{Y}}A_y \dfn A
\end{equation}
Now, defining
\begin{equation*}
\bar{S}\dfn A\times \m{Y}
\end{equation*}
we have that $S'\subseteq \bar{S}\subseteq S$, and hence $\bar{S}$ is also an invariant set. Going through the same derivations as for $S$, the invariance of $\bar{S}$ implies that
\begin{equation}\label{eq:A_incF}
\bigcup_{y\in\m{Y}} F_{\Theta|Y}(A|y) \subseteq A
\end{equation}
and hence
\begin{equation*}
\m{U}(A) \geq \max_{y\in\m{Y}} \m{U}(F_{\Theta|Y}(A|y)) \geq \sum_{y\in\m{Y}} \m{U}(F_{\Theta|Y}(A|y))P_{Y}(y)
= \sum_{y\in\m{Y}} P_{\Theta|Y}(A|y)P_{Y}(y) = P_\Theta(A) = \m{U}(A)
\end{equation*}
To avoid contradiction, it must be that $\m{U}(F_{\Theta|Y}(A|y)) = \m{U}(A)$ for all $y\in\m{Y}$, and together with (\ref{eq:A_incF}) it immediately follows that for all $y\in\m{Y}$
\begin{equation}\label{eq:Ainses}
F_{\Theta|Y}(A|y) = A\setdiff N_y\,,\quad  \m{U}(N_y)=0
\end{equation}
Namely, for any output value, the set $A$ remains the same after applying the posterior matching kernel, up to a $\m{U}$-null set.

Let us now prove the implication
\begin{equation}\label{eq:sbar_implies_s}
\m{U}(A)\in\{0,1\} \;\Rightarrow\; P_{\Theta Y}(S)\in\{0,1\}
\end{equation}
To that end, we show that $0<P_{\Theta Y}(S)<1$ implies $0<\m{U}(A)<1$. The upper bound follows from $\m{U}(A) = P_\Theta(A)=P_{\Theta Y}(\bar{S})\leq P_{\Theta Y}(S)<1$ . For the lower bound, we note that $P_{\Theta Y}(S)>0$ implies there exists at least one $y_0\in\m{Y}$ such that $\m{U}(A_{y_0}) >0$. Recall that for a DMC, the normalized posterior matching kernel for any fixed output is a quasi-affine function with slopes given by $\frac{P_{X|Y}(x|y)}{P_X(x)} = \frac{P_{Y|X}(y|x)}{P_Y(y)}$. Since by (\textsl{B}\ref{cond:omega_dm_nzI}) all the transition probabilities are nonzero, these slopes are all positive, and denote their minimal value by $\alpha > 0$. Therefore, it must be that $\m{U}(F_{\Theta|Y}(A_{y_0}|y_0)) > \alpha\m{U}(A_{y_0})>0$ , which by (\ref{eq:A_inclusion}) implies $\m{U}(A)>0$.

After having established (\ref{eq:sbar_implies_s}), we proceed to show that $\m{U}(A)\in\{0,1\}$ which will verify Property (\textsl{A}\ref{cond:omega_inv}). It is easily observed that if $A$ is an interval, (\ref{eq:Ainses}) holds if and only if the endpoints of the interval are both either fixed points of the kernel or endpoints of $\ui$. For $A$ a finite disjoint union of intervals, (\ref{eq:Ainses}) holds if and only if all non-shared endpoints are both either fixed points of the kernel or endpoints of $\ui$. Hence for such $A$, since we assumed the kernel does not have any fixed points, (\ref{eq:Ainses}) holds if and only if $\m{U}(A)\in\{0,1\}$.

Let us now extend this argument to any $A\in\Borel$. Under (\textsl{B}\ref{cond:omega_dm_Q}), there exist two output symbols $y_0,y_1\in\m{Y}$ such that
\begin{equation*}
0>\frac{\beta_0}{\beta_1}\not\in\RatF,
\end{equation*}
where
\begin{equation*}
\beta_i = \log\left(\frac{P_{X|Y}(0|y_i)}{P_X(0)}\right), \quad i\in\{0,1\}
\end{equation*}
Define the set
\begin{equation*}
B \dfn \left\{b\in\ui : \exists n_0,n_1\in\NaturalF, b=2^{n_0\beta_0+n_1\beta_1}\right\}
\end{equation*}
\begin{lemma}\label{lem:dense}
$B$ is dense in $\ui$.
\end{lemma}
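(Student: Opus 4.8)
Looking at this statement, I need to prove that the set $B = \{b \in (0,1) : \exists n_0, n_1 \in \NaturalF, b = 2^{n_0\beta_0 + n_1\beta_1}\}$ is dense in $(0,1)$, where $\beta_0, \beta_1$ satisfy $0 > \beta_0/\beta_1 \notin \RatF$.

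The plan is to take logarithms and reduce the claim to a classical equidistribution fact. Since $t\mapsto 2^t$ is a homeomorphism of $(-\infty,0)$ onto $\ui$, density of $B$ in $\ui$ is equivalent to density in $(-\infty,0)$ of the set
\begin{equation*}
L \dfn \left\{n_0\beta_0+n_1\beta_1 : n_0,n_1\in\NaturalF\right\}\cap(-\infty,0).
\end{equation*}
Because $\beta_0/\beta_1<0$, exactly one of $\beta_0,\beta_1$ is positive, and since the definition of $B$ is symmetric under the swap $(n_0,\beta_0)\leftrightarrow(n_1,\beta_1)$, I may assume $\beta_1>0>\beta_0$. Set $\alpha\dfn\beta_0/\beta_1$, which is negative and, by hypothesis, irrational, so that $n_0\beta_0+n_1\beta_1=\beta_1\,(n_1+n_0\alpha)$. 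As multiplication by the fixed positive constant $\beta_1$ is a homeomorphism, it suffices to prove that $\{n_1+n_0\alpha:n_0,n_1\in\NaturalF\}$ is dense in $(-\infty,0)$.

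For this I would fix a target $x<0$ and $\eps>0$; shrinking $\eps$ if necessary, I may assume $\eps<|x|$. Write $j\dfn\lceil x\rceil\in\IntegerF$ (so $j\le 0$) and $\theta\dfn j-x\in[0,1)$, whence $x=j-\theta$. Since $-\alpha>0$ is irrational, a standard equidistribution argument (via Weyl's theorem, or elementarily via Dirichlet's approximation theorem) shows that the sequence $\big{(}(-n_0\alpha)\bmod 1\big{)}_{n_0\in\NaturalF}$ meets the $\eps$-neighbourhood of $\theta$ for infinitely many $n_0$; I would then pick one such $n_0$ large enough that $\lfloor -n_0\alpha\rfloor\ge 1-j$, which is possible because $-n_0\alpha\to\infty$. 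Writing $h\dfn(-n_0\alpha)-\lfloor -n_0\alpha\rfloor\in[0,1)$ for the fractional part (so $|h-\theta|<\eps$) and setting $n_1\dfn j+\lfloor -n_0\alpha\rfloor$, one has $n_1\in\NaturalF$ and
\begin{equation*}
n_1+n_0\alpha = j+\lfloor -n_0\alpha\rfloor+n_0\alpha = j-h,
\end{equation*}
so $\left|(n_1+n_0\alpha)-x\right|=|h-\theta|<\eps$; and since this value is within $\eps<|x|$ of $x<0$ it is itself negative, so the corresponding $b=2^{\beta_1(n_1+n_0\alpha)}$ genuinely lies in $\ui$. This would establish density.

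The reductions (both homeomorphism steps and the symmetric relabelling ensuring $\beta_1>0$) are immediate, and the only substantive input is the well-known fact that the orbit of an irrational rotation meets every subinterval of $[0,1)$ infinitely often. I do not anticipate a real obstacle; the one point requiring care is the sign bookkeeping, i.e. arranging the approximant to land in $(-\infty,0)$ rather than $(0,\infty)$ — handled above by choosing $j=\lceil x\rceil\le 0$ and $n_1=j+\lfloor -n_0\alpha\rfloor$ — together with checking that $n_0,n_1$ are genuinely positive integers under the paper's convention for $\NaturalF$, which the largeness condition on $n_0$ secures.
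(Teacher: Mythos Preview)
Your proof is correct and follows essentially the same approach as the paper: both pass to $\log B\subset(-\infty,0)$ and reduce the question to the density of the orbit of an irrational rotation on $[0,1)$. The only cosmetic difference is that the paper works directly with the sequence $b_n=n\beta_0+\lfloor(b-n\beta_0)/\beta_1\rfloor\beta_1$ and shows the residual $r_n=(b-b_n)/\beta_1$ is an irrational rotation with a subsequence tending to zero, whereas you first normalize by $\beta_1$ and then explicitly build $n_0,n_1$ from the fractional-part approximation; the underlying idea and the invoked fact are identical.
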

\begin{proof}
Without loss of generality, we assume $\beta_0<0<\beta_1$. We prove equivalently that the set $\log B$ is dense in $(-\infty,0)$. Let $b\in(-\infty,0)$. Define
\begin{equation*}
b_n \dfn n\beta_0 + \left\lfloor \frac{b-n\beta_0}{\beta_1}\right\rfloor\beta_1, \quad n\in\NaturalF
\end{equation*}
It is easy to see that $\{b_n\}_{n=n'}^\infty\subset \log B$, if $n'$ is taken to be large enough. Let $\{x\}\dfn x - \lfloor x\rfloor$ be the fractional part of $x$. Write:
\begin{equation*}
r_n\dfn \frac{b-b_n}{\beta_1} = \left\{\frac{b}{\beta_1}+n\left(-\frac{\beta_0}{\beta_1}\right)\right\}
\end{equation*}
Since $\frac{\beta_1}{\beta_0}\not\in\RatF$, $r_n$ can be though of as an irrational rotation on the unit circle, hence is dense in $\ui$~\cite{walters}. In particular, this implies that $r_n$ has a subsequence $r_{k_n}\rightarrow 0$, hence $b_{k_n}\rightarrow b$.
\end{proof}

For $\theta\in\ui$, let $A(\theta)\dfn A\cap (0,\theta)$. For brevity, let $p\dfn P_X(0)$. Define $A_{n_0,n_1}$ be the set obtained starting from $A(p)$ and applying $F_{\Theta|Y}(\cdot|y_0)$ $n_0$ times, and then applying $F_{\Theta|Y}(\cdot|y_1)$ $n_1$ times. $F_{\Theta|Y}(\cdot|y_i)$ is linear over $(0,p)$ with a slope $2^{\beta_i}$, hence assuming that $2^{n_0\beta_0+n_1\beta_1}\leq 1$, we have
\begin{equation}\label{eq:dense_approx}
\m{U}(A_{n_0,n_1}) = 2^{n_0\beta_0+n_1\beta_1}\cdot\m{U}(A(p))
\end{equation}
On the other hand, (\ref{eq:Ainses}) together with the aforementioned linearity imply that $A_{n_0,n_1}$ and $A\left(p\cdot 2^{n_0\beta_0+n_1\beta_1}\right)$ are equal up to a $\m{U}$-null set. Combining this with (\ref{eq:dense_approx}) and Lemma \ref{lem:dense}, we find that for any $\theta\in(0,p)$
\begin{equation*}
\m{U}(A(\theta)) = \theta p^{-1}\m{U}(A(p))
\end{equation*}
We note that $\m{U}(A(\theta))$ is the indefinite Lebesgue integral of $\ind_{A(p)}(\theta)$. Invoking the Lebesgue differentiation Theorem~\cite{kolmogorov_fomin}, the  derivative $\frac{d\m{U}(A(\theta))}{d\theta} = p^{-1}\m{U}(A(p))$ must be equal to $\ind_{A(p)}(\theta)$ for a.a. $\theta\in(0,p)$, which implies $p^{-1}\m{U}(A(p))\in\{0,1\}$. Hence $A(p)$ is either of full measure or a null set.

Let us now show that this implies the same for $A=A(1)$. Define the function
\begin{equation*}
\overline{F}(\theta) \dfn \max_{y\in\m{Y}} F_{\Theta|Y}(\theta)
\end{equation*}
Let us establish some properties of $\overline{F}$.
\begin{enumerate}[(a)]
\item \textit{$\overline{F}$ is Lipschitz, monotonically increasing, and maps $(0,1)$ onto $(0,1)$:} Trivial.\label{prop:fbar1}
\item \textit{$\overline{F}^{(n)}(\theta)\rightarrow 1$ monotonically as $n\rightarrow\infty$ for any $\theta\in\ui$:} Observe that
    \begin{equation*}
    \Expt\left(F_{\Theta|Y}(\theta|Y)\right)=\Expt\Prob(\Theta\leq \theta|Y) = \Prob(\Theta\leq \theta) = \theta,
    \end{equation*}
    Hence $\overline{F}(\theta)\geq \theta$  with equality if and only if $\theta$ is a fixed point, which contradicts property (\textsl{A}\ref{cond:omegac_fixedp}). Thus it must hold that $\overline{F}(\theta)>\theta$ for any $\theta\in\ui$, hence $\overline{F}^{(n)}(\theta)$ is increasing with $n$. $\overline{F}\leq 1$ and therefore a limit exists and is at most $1$. $\overline{F}$ is continuous, hence the limit cannot be smaller than $1$ as this will violate $\overline{F}(\theta)>\theta$.\label{prop:fbar2}
\item \textit{$\overline{F}(A) = A$ up to a $\m{U}$-null set:} it is easily observed that
\begin{align*}
\bigcap_{y\in\m{Y}}F_{\Theta|Y}(A|y) \subseteq \overline{F}(A) \subseteq \bigcup_{y\in\m{Y}}F_{\Theta|Y}(A|y)
\end{align*}
The property follows by applying (\ref{eq:Ainses}).\label{prop:fbar3}
\end{enumerate}

Combining (\ref{prop:fbar1}) and (\ref{prop:fbar3}) it follows that for any $n\geq 1$, $\overline{F}^{(n)}(A(p))=A(\overline{F}^{(n)}(p))$ up to a $\m{U}$-null set. Furthermore, since $A(p)$ is either of full measure or null, then property (\ref{prop:fbar1}) implies the same for $\overline{F}^{(n)}(A(p))$, and so either $\m{U}(A(\overline{F}^{(n)}(p)))=0$ for all $n$, or $\m{U}(A(\overline{F}^{(n)}(p))) = \overline{F}^{(n)}(p)$. Using (\ref{prop:fbar2}), we get:
\begin{equation*}
\m{U}(A) = \m{U}\left(\bigcup_{n=1}^\infty A(\overline{F}^{(n)}(p))\right)\in \left\{0,\lim_{n\rightarrow\infty}\overline{F}^{(n)}(p)\right\} = \{0,1\}
\end{equation*}
Hence (\textsl{A}\ref{cond:omega_inv}) holds.

\begin{remark}\label{rem:weaken_irr}
The proof only requires an irrational ratio to be found for $x=0$ (or similarly, for $x=|\m{X}|-1$), hence a weaker version of property (\textsl{B}\ref{cond:omega_dm_Q}) suffices. It is unclear if even this weaker property is required for ergodicity to hold. The proof fails whenever the leftmost interval $(0,P_X(0))$ cannot be densely covered by a repeated application the posterior matching kernel (starting from the right endpoint), without ever leaving the interval. This argument leans only on the linearity of the kernel within that interval, and does not use the entire non-linear structure of the kernel. It therefore seems plausible that condition (\textsl{B}\ref{cond:omega_dm_Q}) could be further weakened, or perhaps even completely removed.
\end{remark}
\end{proof}

\begin{proof}[Proof of Lemma \ref{lem:dmc_prop}, claim (\ref{claim:dmc5})] (\textsl{B}\ref{cond:omega_dm_nzI}) trivially holds for any equivalent input/channel pair. Let us show there exists one satisfying (\textsl{B}\ref{cond:omega_dm_dom}). To that end, the following Lemma is found useful.
\begin{lemma}\label{lem:perm}
Let $p^n,q^n$ be two distinct probability vectors. Then there exists a permutation operator $\sigma:\RealF^n\mapsto\RealF^n$ such that $\sigma(q^n)\dom \sigma(p^n)$.
\end{lemma}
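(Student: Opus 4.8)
The plan is to exhibit the permutation explicitly by sorting coordinates according to the signed differences. Write $d_i \dfn q_i - p_i$, so that $\sum_{i=1}^n d_i = 0$ and, since $p^n\neq q^n$, the vector $d^n$ is not identically zero; in particular at least one $d_i$ is strictly negative. Let $\sigma$ be any permutation that lists the coordinates in non-decreasing order of $d$, breaking ties arbitrarily, and denote by $\tilde p^n = \sigma(p^n)$, $\tilde q^n = \sigma(q^n)$ the rearranged vectors and by $S_k \dfn \sum_{j=1}^k (\tilde q_j - \tilde p_j)$ the partial sums of their difference (so $S_k$ is the sum of the $k$ smallest $d_i$'s). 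I claim this $\sigma$ works, i.e. $\sigma(q^n)\dom\sigma(p^n)$ in the sense of the strict dominance relation $\dom$ of Subsection~\ref{subsec:notations}.

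The heart of the matter is to show $S_k<0$ for every $k\in\{1,\dots,n-1\}$ (while $S_0=S_n=0$). First I would invoke the elementary fact that, for a non-decreasing sequence, the prefix averages $\bar S_k \dfn S_k/k$ are themselves non-decreasing: a one-line computation gives $\bar S_{k+1}-\bar S_k = (d_{(k+1)}-\bar S_k)/(k+1)\ge 0$, where $d_{(k+1)}$ is the $(k+1)$-st smallest difference, because $\bar S_k$ is an average of quantities each $\le d_{(k+1)}$. Since $\bar S_n=0$, monotonicity yields $\bar S_k\le 0$, hence $S_k\le 0$, for all $k$. To promote this to a strict inequality on $1\le k\le n-1$ I would argue by contradiction: if $S_k=0$ for such a $k$, then $0=\bar S_k\le\bar S_j\le\bar S_n=0$ for all $j\ge k$, so $\bar S_j\equiv 0$ and $d_{(j)}=S_j-S_{j-1}=0$ for $j>k$; but then $d_{(j)}\le d_{(k+1)}=0$ forces $d_i\le 0$ for \emph{all} $i$, and a sum of non-positive numbers equal to zero must vanish termwise, contradicting $p^n\neq q^n$.

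Finally I would translate this into the dominance statement. The c.d.f. of $\tilde p^n$ (resp. $\tilde q^n$) equals $\sum_{j=1}^k\tilde p_j$ (resp. $\sum_{j=1}^k\tilde q_j$) on the plateau corresponding to index $k$; the hypothesis $F_{\tilde p^n}(x)\in(0,1)$ forces the relevant index $k$ to satisfy $1\le k\le n-1$, since the full sum is $1$, and on that range $S_k<0$ gives $F_{\tilde q^n}(x)<F_{\tilde p^n}(x)$, which is exactly $\sigma(q^n)\dom\sigma(p^n)$. I do not expect a real obstacle: the bookkeeping relating step-function c.d.f.'s to partial sums is routine, and the only point needing care is the passage from the (automatic) weak inequality $S_k\le 0$ to the strict one, which is precisely where the hypothesis $p^n\neq q^n$ enters through the contradiction argument above.
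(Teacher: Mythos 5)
Your proof is correct and follows the same route as the paper's: sort the coordinates by the signed differences (your non-decreasing order of $q_i-p_i$ is the paper's descending order of $p_i-q_i$) and observe that the resulting partial sums of $\tilde q - \tilde p$ are strictly negative for $1\le k\le n-1$. The paper simply asserts this partial-sum property in one line, whereas you supply the verification via monotone prefix averages and the strictness argument; that is a welcome elaboration but not a different proof.
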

\begin{proof}
Let $\delta^n$ be the element-wise difference of $p^n$ and $q^n$, i.e., $\delta_k=p_k-q_k$. Define $\sigma$ to be a permutation operator such that $\sigma(\delta^n)$ is in descending order. Then since $p^n\neq q^n$ and $\sum_{i=1}^n\delta_i=0$ we have that any partial sum of $\sigma(\delta^n)$ is positive, i.e., $\sum_{i=1}^k\{\sigma(\delta^n)\}_i> 0$ for any $k<n$, which implies the result.
\end{proof}
Now, since $I(X;Y)>0$ there must exist some $y_0\in\m{Y}$ so that $P_{X|Y}(\cdot|y_0)\neq P_X$. Viewing distributions as probability vectors, then by Lemma \ref{lem:perm} above there exists a permutation operator $\sigma$ such that $\sigma(P_X)\dom \sigma(P_{X|Y}(\cdot|y_0))$. Thus, applying $\sigma$ to the input results in an equivalent input/channel pair for which (\textsl{B}\ref{cond:omega_dm_dom}) holds.
\end{proof}

\begin{proof}[Proof of Lemma \ref{lem:dmc_prop}, claim (\ref{claim:dmc6})] Let $(\mathsf{P}(\m{X}),d_{TV}$) be the space of probability distributions over the alphabet $\m{X}$, equipped with the total variation metric. For a fixed channel $P_{Y|X}$, the set $S$ of input distributions not satisfying property (\textsl{B}\ref{cond:omega_dm_Q}) is clearly of countable cardinality. Since any non-singleton open ball centered at any point in $\mathsf{P}(\m{X})$ is of uncountable cardinality, then $\mathsf{P}(\m{X})\setdiff S$  must be dense in $(\mathsf{P}(\m{X}),d_{TV}$), and the claim follows.

\end{proof}

\begin{proof}[Proof of Lemma \ref{lem:contraction_cond}]
Let $\lambda:[0,1]\mapsto[0,1]$ be any surjective, strictly
$\cap$-convex function symmetric about $\frac{1}{2}$. This implies in particular that $\lambda(\cdot)$ is continuous, its restriction to
$[0,\frac{1}{2}]$ is injective, and $\lambda(0)=\lambda(1)=0,\lambda(\frac{1}{2})=1$. Let $\lambda^{-1}:[0,1]\mapsto [0,\frac{1}{2}]$ be the inverse of $\lambda$ restricted to the $[0,\frac{1}{2}]$ branch. Let $\psi_{\sst\lambda}$ be the corresponding length function over $\mf{F}_{c}$, as defined in (\ref{eq:def_lenght_func}). Define the function $\xi^*:[0,1]\mapsto[0,1]$ as follows:
\begin{equation*}
\xi^*(\theta) \dfn \max\left\{\Expt\lambda\left(F_{\Theta|\Phi}(\lambda^{-1}(\theta)|\Phi)\right),
\Expt\lambda\left(F_{\Theta|\Phi}(1-\lambda^{-1}(\theta)|\Phi)\right)\right\}
\end{equation*}
We now establish the following two properties:
\begin{enumerate}[(a)]
\item \label{prop:xi*cont}\textit{$\xi^*(\cdot)$ is continuous over $[\hspace{1pt}0,1]$:} Fix any $\theta'\in[0,1]$, and let $\{\theta_n\}_{n=1}^\infty$ be a sequence in $[0,1]$ such that $\theta_n\rightarrow \theta'$. Define $q(\theta,\phi)\dfn \lambda\left(F_{\Theta|\Phi}(\lambda^{-1}(\theta)|\phi)\right)$, and  $q_n(\phi)\dfn q(\theta_n,\phi)$. By Lemma \ref{lem:norm_chan_prop} claim (\ref{claim:norm4}), $F_{\Theta|\Phi}(\theta|\phi)$ is continuous in $\theta$ for $P_\Phi$-a.a. $\phi\in\ui$. Since $\lambda(\cdot), \lambda^{-1}(\cdot)$ are continuous, we have that $q(\theta,\phi)$ is continuous in $\theta$ for $P_\Phi$-a.a. $\phi\in\ui$, and therefore $q_n(\phi)\rightarrow q(\theta',\phi)$ for $P_\Phi$-a.a. $\phi\in\ui$. Furthermore, $|q_n(\phi)|\leq 1$. Thus, invoking the bounded convergence Theorem~\cite{kolmogorov_fomin} we get
    \begin{equation*}
    \lim_{n\rightarrow\infty}\Expt(q_n(\Phi)) = \Expt(q(\theta',\Phi))
    \end{equation*}
    Reiterating for $q(\theta,\phi)\dfn \lambda\left(F_{\Theta|\Phi}(1-\lambda^{-1}(\theta)|\phi)\right)$, we conclude that $\xi^*(\theta_n)\rightarrow \xi^*(\theta')$.

\item \label{prop:xi*pos}\textit{$0\leq\xi^*(\theta)<\theta$ for $\theta\in(0,1]$:} The lower bound is trivial. For the upper bound, we note again that
    \begin{equation*}
    \Expt\left(F_{\Theta|\Phi}(\theta|\Phi)\right)=\Expt\Prob(\Theta\leq \theta|\Phi) = \Prob(\Theta\leq \theta) = \theta,
    \end{equation*}
    and since by the fixed-point free property (\textsl{A}\ref{cond:omega_fixedp}) we also have $\Prob(F_{\Theta|\Phi}(\theta|\Phi)=\theta) <1$ for any $\theta\in\ui$, then $F_{\Theta|\Phi}(\theta|\Phi)$ is not a.s. constant. Combining that with the fact that $\lambda(\cdot)$ is \textit{strictly} $\cap$-convex, a \textit{strict} Jensen's inequality holds:
    \begin{equation*}
    \Expt\lambda\big{(}F_{\Theta|\Phi}(\lambda^{-1}(\theta)|\Phi)\big{)} <
    \lambda\left(\Expt\left(F_{\Theta|\Phi}(\lambda^{-1}(\theta)|\Phi)\right)\right)
    = \lambda(\lambda^{-1}(\theta)) = \theta
    \end{equation*}
    Similarly, using the symmetry of $\lambda(\cdot)$,
    \begin{equation*}
    \Expt\lambda\big{(}F_{\Theta|\Phi}(1-\lambda^{-1}(\theta)|\Phi)\big{)} <
    \lambda(1-\lambda^{-1}(\theta))
    = \lambda(\lambda^{-1}(\theta))
    = \theta
    \end{equation*}
\end{enumerate}

Now, define $\xi(\cdot)$ to be the upper convex envelope of $\xi^*(\cdot)$. Let us show that $\xi(\cdot)$ is a contraction. $\xi(\cdot)$ is trivially $\cap$-convex and nonnegative, hence it remains to prove that $\xi(\theta)<\theta$ for $\theta\in(0,1]$. Define the function
\begin{equation*}
\delta(\theta)\dfn \inf_{\phi\in[\theta,1]}\left(\phi-\xi^*(\phi)\right)
\end{equation*}
Property (\ref{prop:xi*pos}) implies that $\delta(0) = 0$. Combining properties (\ref{prop:xi*cont}) and (\ref{prop:xi*pos}), we observe that $\phi-\xi^*(\phi)$ is continuous and positive over $[\theta,1]$ for any fixed $\theta\in(0,1]$, hence attains a positive infimum over that interval. We conclude that $\delta(\theta)$ is continuous and monotonically nondecreasing over $[0,1]$, and positive over $(0,1]$. Fixing any $\theta'\in(0,1]$, we use the definition of the upper convex hull and the above properties of $\delta(\cdot)$ to write
\begin{align}\label{eq:xi*xi}
\nonumber\xi(\theta') &= \sup\left\{\alpha\xi^*(\theta_0) + (1-\alpha)\xi^*(\theta_1)\right\}
\leq \sup\left\{\alpha (\theta_0-\delta(\theta_0))+(1-\alpha)(\theta_1-\delta(\theta_1))\right\}
\\
&\leq \theta'-\inf\left\{\alpha\delta(\theta_0)+(1-\alpha)\delta(\theta')\right\}
\end{align}
where the supremums and the infimum are taken over all $\{\theta_0,\theta_1,\alpha\}$ such that $0\leq\theta_0\leq \theta'\leq \theta_1\leq 1$, and such that $\theta'$ is the convex combination $\theta'=\alpha \theta_0+(1-\alpha)\theta_1$. Thus, since $\delta(\theta')>0$, a necessary condition for $\xi(\theta')\geq \theta'$ is for the infimum in (\ref{eq:xi*xi}) to be attained as $\alpha\rightarrow 1$ and $\delta(\theta_0)\rightarrow 0$. By continuity and positivity, the latter implies $\theta_0\rightarrow 0$. However, the convex combination for $\theta'$ can be maintained as $\alpha\rightarrow 1$ and $\theta_0\rightarrow 0$ if and only if $\theta'=0$, in contradiction. Hence $\xi(\theta')<\theta'$.

To conclude the proof, we demonstrate that $\xi(\cdot)$ and
$\psi_{\sst\lambda}$ satisfy (\ref{eq:contraction_prop}):
{\allowdisplaybreaks
\begin{align*}
\Expt\Big{(}\psi_{\sst{\lambda}}\big{[}F_{\Theta|\Phi}(\cdot\,|\Phi)\circ
h\big{]}\Big{)} &=
\int_0^1\Expt\lambda\left(F_{\Theta|\Phi}(h(\theta)|\Phi)\right)d\theta
\stackrel{(\rm a)}{\leq} \int_0^1\xi^*(\lambda(h(\theta)))d\theta
\stackrel{(\rm b)}{\leq} \int_0^1\xi(\lambda(h(\theta)))d\theta
\\
&\stackrel{(\rm c)}{\leq}
\xi\left(\int_0^1\lambda(h(\theta))d\theta\right)
=
\xi\big{(}\psi_{\sst{\lambda}}(h\,)\,\big{)}
\end{align*}}
where (a) holds by the definition of $\xi^*$ and the symmetry of $\lambda(\cdot)$, (b) holds since
$\xi\geq \xi^*$, and (c) holds by Jensen's inequality.
\end{proof}

\begin{lemma}\label{lem:uniform_eps}
Let $(P_X, P_{Y|X})$ satisfy property (\textsl{A}\ref{cond:omega_fixedp}). Then for any $\alpha>0$, $\eps>0$ and $\delta>0$,
\begin{equation*}
\Prob\left(\max_{1\leq m\leq n}\ssc{-}\Theta_{(1+\alpha)n-m}^\eps(\Theta_m) > \delta\right) = \bigo(\sqrt[8]{r(\alpha n)})
\end{equation*}
where $r(n)$ is the decay profile of the contraction $\xi(\cdot)$ from Lemma \ref{lem:contraction_cond}.
\end{lemma}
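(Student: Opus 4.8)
The plan is to refine the proof of Lemma~\ref{lem:diverge_eps}, this time tracking the maximum over the start times $m$ and exploiting that all the time‑shifted trajectories are driven by the \emph{same} i.i.d.\ output sequence.

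The first step is a deterministic reduction. Write $N=(1+\alpha)n$, and for $1\le m\le n$ let $\bar{G}^{(m)}_{j}\dfn F_{\Theta|\Phi}(\cdot\mid\Phi_{m+j-2})\circ\cdots\circ F_{\Theta|\Phi}(\cdot\mid\Phi_m)$ be the posterior‑c.d.f.\ IFS of the chain ``restarted'' at time $m$ ($\bar{G}^{(m)}_1$ the identity on $(0,1)$, $\bar{G}^{(1)}_j=\bar{G}_j$), so that, directly from the definition of the negative trajectory, $\ssc{-}\Theta^\eps_{k}(\Theta_m)=\bar{G}^{(m)}_{k+1}\big(\Theta_m-\min(\eps,\Theta_m/2)\big)$. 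Using $\bar{G}^{(m)}_{N-m+1}=\bar{G}^{(n)}_{\alpha n+1}\circ\bar{G}^{(m)}_{n-m+1}$ together with the monotonicity of the kernels $F_{\Theta|\Phi}(\cdot\mid\phi)$ (so $\bar{G}^{(n)}_{\alpha n+1}$ is non‑decreasing), one gets
\begin{equation*}
\max_{1\le m\le n}\ssc{-}\Theta^\eps_{N-m}(\Theta_m)=\bar{G}^{(n)}_{\alpha n+1}(W_n),\qquad W_n\dfn\max_{1\le m\le n}\ssc{-}\Theta^\eps_{n-m}(\Theta_m).
\end{equation*}
Since $\bar{G}^{(m)}_{n-m+1}(\Theta_m)=\Theta_n$ and $\Theta_m-\min(\eps,\Theta_m/2)<\Theta_m$, one has $W_n\le\Theta_n$; and on $\{W_n\le\Theta_n-\min(\eps,\Theta_n/2)\}$, monotonicity and stationarity give $\bar{G}^{(n)}_{\alpha n+1}(W_n)\le\bar{G}^{(n)}_{\alpha n+1}\big(\Theta_n-\min(\eps,\Theta_n/2)\big)=\ssc{-}\Theta^\eps_{\alpha n}(\Theta_n)\overset{d}{=}\ssc{-}\Theta^\eps_{\alpha n}$. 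Hence
\begin{equation*}
\Prob\Big(\max_{1\le m\le n}\ssc{-}\Theta^\eps_{N-m}(\Theta_m)>\delta\Big)\le\Prob\big(\ssc{-}\Theta^\eps_{\alpha n}>\delta\big)+\Prob\big(W_n>\Theta_n-\min(\eps,\Theta_n/2)\big),
\end{equation*}
the first summand being $\bigo(\sqrt[8]{r(\alpha n)})$ by Lemma~\ref{lem:diverge_eps}.

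It remains to bound the second summand at the same rate. A useful structural fact is that the $n$ trajectories are \emph{totally ordered} at every time past their pairwise meeting points: for $j<k$ the relative order of $\ssc{-}\Theta^\eps_{\cdot}(\Theta_j)$ and $\ssc{-}\Theta^\eps_{\cdot}(\Theta_k)$ is decided at time $k$ by whether $F_{\Theta|\Phi}(\cdot\mid\Phi_{k-1})\circ\cdots\circ F_{\Theta|\Phi}(\Theta_j-\min(\eps,\Theta_j/2)\mid\Phi_j)$ exceeds $\Theta_k-\min(\eps,\Theta_k/2)$, and is thereafter preserved by the monotone common evolution, and these pairwise orders are transitive. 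Thus $\max_{1\le m\le n}\ssc{-}\Theta^\eps_{N-m}(\Theta_m)=\ssc{-}\Theta^\eps_{N-M}(\Theta_M)$ for a single random index $M\le n$ with $N-M\ge\alpha n$, so \emph{each} candidate event $\{\ssc{-}\Theta^\eps_{N-m}(\Theta_m)>\delta\}$ is individually $\bigo(\sqrt[8]{r(\alpha n)})$ by Lemma~\ref{lem:diverge_eps} applied to the chain restarted at $m$. The remaining issue — exactly the one flagged after step~(a) of~(\ref{eq:final_proof}) — is to sum over the $\le n$ possible values of $M$ without reinstating the factor $n$. I would handle this by re‑running the split at a later time $n'$ with $N-n'=\Theta(\alpha n)$, so that every trajectory entering $W_{n'}=\max_{1\le m\le n}\ssc{-}\Theta^\eps_{n'-m}(\Theta_m)$ has already run for $\Theta(\alpha n)$ steps, and then exploiting that all the $\bar{G}^{(m)}$ share the one sequence $\{\Phi_k\}$: conditioning on a prefix $\Phi^{k-1}$ freezes the ``prefix'' c.d.f.'s $h_m=\bar{G}^{(m)}_{k-m+1}\in\mf{F}_{c}$ while leaving the tail a fresh i.i.d.-driven IFS over $\mf{F}_{c}$, and the contraction~(\ref{eq:contraction_prop}) of Lemma~\ref{lem:contraction_cond}, iterated $\Theta(\alpha n)$ times, gives $\Expt[\psi_{\sst{\lambda}}(\,\cdot\,)\mid h_m]\le\xi^{(\Theta(\alpha n))}(\psi_{\sst{\lambda}}(h_m))\le r(\Theta(\alpha n))$ \emph{uniformly} in $h_m$ (the uniformity in the initial point being exactly what Lemma~\ref{lem:IFS_convergence} provides).

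The main obstacle is that this last bound is only in expectation and is false in the supremum over $\mf{F}_{c}$ (a c.d.f.\ concentrated at a wrong point is mapped by $F_{\Theta|\Phi}(\cdot\mid\phi)$ to one concentrated near its median, of length $\approx1$), so one cannot directly control a ``worst‑case length'' over all start times. The intended resolution is to carry the median decomposition of the proof of Lemma~\ref{lem:diverge_eps} (the estimates leading to~(\ref{eq:Gn_left}) and~(\ref{eq:long1})) through simultaneously for all the restarted chains, feeding in the uniform bound above at each scale, so that the $n$ trajectory events collapse to a bounded number of ``scale'' events at the price of enlarging the perturbation and shrinking the threshold by constants; a little extra care is needed for the ``recent'' start times $m$ close to $n'$, whose single‑step kernels may be adverse, but these are absorbed by the choice of $n'$. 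Pushing the bookkeeping through, and using that $r$ is non‑increasing, yields the stated bound $\bigo(\sqrt[8]{r(\alpha n)})$.
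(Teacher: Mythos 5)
Your opening reduction is sound: splitting every trajectory at time $n$, observing that from time $n$ onward they are all pushed through the same monotone suffix map so that the maximum commutes with it, and then invoking stationarity together with Lemma \ref{lem:diverge_eps} for the single term $\ssc{-}\Theta^\eps_{\alpha n}(\Theta_n)$ correctly yields the rate $\bigo(\sqrt[8]{r(\alpha n)})$ for the first summand. The gap is in the second summand, and it is not mere bookkeeping. First, $\Prob\big(W_n>\Theta_n-\min(\eps,\Theta_n/2)\big)$ need not vanish at all: for $m$ close to $n$ the restarted trajectory has had only $n-m$ kernel applications to separate from $\Theta_n$, and with probability bounded away from zero (e.g.\ for the BSC, where a kernel branch has slope $2p<1$) those few steps contract the initial gap $\min(\eps,\Theta_m/2)$ below $\min(\eps,\Theta_n/2)$. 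Your proposed repair --- moving the split to a time $n'$ with $N-n'=\Theta(\alpha n)$ so that every trajectory has already run $\Theta(\alpha n)$ steps --- makes each \emph{individual} event small but leaves you facing a maximum over $n$ events driven by a common output sequence, which is exactly the original difficulty in slightly disguised form. At that point the proposal stops being a proof: the ``simultaneous median decomposition'' that is supposed to collapse the $n$ trajectory events into a bounded number of scale events is only announced, and you yourself flag the obstruction (the length functional cannot be controlled uniformly over arbitrary initial c.d.f.'s by a worst-case argument) without overcoming it.

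The ingredient you are missing is the one the paper's proof is built on. Writing $\bar{G}^{g}_{m,N}$ for the IFS restarted at time $m$ from an arbitrary $g\in\mf{F}_{c}$ and $L^*_{m,N}\dfn\sup_{g\in\mf{F}_{c}}\psi_{\sst{\lambda}}(\bar{G}^{g}_{m,N})$, one has the deterministic monotonicity $L^*_{m,N}\le L^*_{m+1,N}$: restarting at $m$ from $g$ and evolving one step lands you at time $m+1$ in some element of $\mf{F}_{c}$, which is dominated by the supremum taken at time $m+1$. Consequently $\max_{1\le m\le n}\psi_{\sst{\lambda}}(\bar{G}_{m,N})\le L^*_{n,N}$ with \emph{no union bound over $m$ at any stage}, and only the single quantity $L^*_{n,N}$, which still enjoys $\alpha n$ contraction steps, has to be controlled; the median argument of Lemma \ref{lem:diverge_eps} is then rerun with this uniform-in-$m$ length bound replacing (\ref{eq:length_tends_to_zero}). (Your worry about taking a supremum over all of $\mf{F}_{c}$ is legitimate --- it also bears on the paper's own one-line bound $\Prob(L^*_{n,N}>\nu)\le\nu^{-1}r(\alpha n)$ --- but the monotonicity-in-$m$ of the worst-case functional is the structural observation that eliminates the factor $n$, and it is absent from your proposal.)
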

\begin{proof}
For any $g\in\mf{F}_{c}$ and any $m,n\in\NaturalF$ where $m\leq n$, define
\begin{equation*}
\bar{G}_{m,m}(\cdot) \dfn  g(\cdot) ,\quad \bar{G}_{m,n}^g(\cdot) \dfn F_{\Theta|\Phi}(\cdot|\Phi_n)\circ \bar{G}_{m,n-1}(\cdot)
\end{equation*}
Then for any fixed $m$ and $g$, $\{G_{m,n}^g\}_{n=m}^\infty$ is an IFS over $\mf{F}_{c}$.  Let $g_u(\theta)=\theta$ be the uniform c.d.f., and define the following r.v.'s:
\begin{equation*}
L_{m,n} \dfn \psi_{\sst{\lambda}}\left(\bar{G}_{m,n}^{g_u}\right)\quad L_{m,n}^* \dfn \sup_{g\in\mf{F}_{c}}\psi_{\sst{\lambda}}\left(\bar{G}_{m,n}^{g}\right)
\end{equation*}
where $\psi_{\sst{\lambda}}$ is the associated length function from Lemma \ref{lem:contraction_cond}. Clearly, $L_{m,n}\leq L_{m,n}^*$. Furthermore, $L_{m,n}^* \leq L_{m+1,n}^*$ for any $m\leq n-1$. To see that, we note that $L_{m,n}^*$ is a deterministic function of $\Phi_m^n\dfn (\Phi_m,\ldots,\Phi_n)$, hence there exists a sequence of functions $\{g_k(\theta;\phi_m^n)\}_{k=1}^\infty$ such that $g_k(\cdot;\phi_m^n)\in\mf{F}_{c}$ for any sequence $\phi_m^n\in\ui^{n-m+1}$, and
\begin{equation*}
L_{m,n}^* = \lim_{k\rightarrow\infty} \psi_{\sst{\lambda}}\left(\bar{G}_{m,n}^{g_k(\cdot;\Phi_m^n)}\right)
= \lim_{k\rightarrow\infty} \psi_{\sst{\lambda}}\left(\bar{G}_{m+1,n}^{F_{\Theta|\Phi}(\cdot|\Phi_m)\circ g_k(\cdot;\Phi_m^n)}\right)
\leq \sup_{g\in\mf{F}_{c}}\psi_{\sst{\lambda}}\left(\bar{G}_{m+1,n}^g\right) = L_{m+1,n}^*
\end{equation*}
Therefore, for any $\nu>0$ we have
\begin{equation*}
\Prob\left(\max_{1\leq m\leq n}L_{m,(1+\alpha)n}>\nu\right) \leq \Prob\left(\max_{1\leq m\leq n}L_{m,(1+\alpha)n}^*>\nu\right)
= \Prob\left(L_{n,(1+\alpha)n}^*>\nu\right)
\leq \nu^{-1}r(\alpha n)
\end{equation*}
where we have used Lemmas \ref{lem:IFS_convergence} and \ref{lem:contraction_cond} for the last inequality, noting that the former holds for any IFS initialization. The proof now follows that of Lemma \ref{lem:diverge_eps}, with the proper minor modifications.
\end{proof}

\begin{lemma}\label{lem:diverge_R_B}
Let $(P_X, P_{Y|X})$ satisfy (\textsl{A}\ref{cond:omega_reg}) and (\textsl{A}\ref{cond:omega_cap_ach}). Then (\ref{eq:density_at_theta}) holds, and for any rate $R<I(X;Y)$
\begin{align}\label{eq:limsupI}
\nonumber\lim_{\eps\rightarrow 0}\limsup_{n\rightarrow\infty}&\,\Prob\left(\;\bigcap_{k=1}^n\left\{\Theta_k-\ssc{-}\Theta^{n,R}_k
< \min\Big{(}\eps,\frac{\Theta_k}{2}\,\Big{)}\right\}
 \right) = 0
\\
\lim_{\eps\rightarrow 0}\limsup_{n\rightarrow\infty}&\,\Prob\left(\;\bigcap_{k=1}^n\left\{\ssc{+}\Theta^{n,R}_k-\Theta_k
< \min\Big{(}\eps,\frac{1-\Theta_k}{2}\,\Big{)}\right\}\right) = 0
\end{align}
Furthermore, if $P_X$ is also the unique input distribution for $P_{Y|X}$ such that $I(X;Y)=C(P_{Y|X})$, then
\begin{equation}\label{eq:unique_cap}
\lim_{n\rightarrow\infty}n^{-1}\sum_{k=1}^n\eta(X_k)  = \Expt(\eta(X)) \quad \text{a.s.}
\end{equation}
for any measurable $\eta:\m{X}\mapsto\RealF$ satisfying $\Expt(|\eta(X)|)<\infty$.
\end{lemma}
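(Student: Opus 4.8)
The plan is to remove hypothesis (\textsl{A}\ref{cond:omega_inv}) by passing to the ergodic components of the stationary Markov chain $\{(\Theta_n,\Phi_n)\}_{n=1}^\infty$ and re-running, on each component, the proofs of Lemmas~\ref{lem:SLLN_memoryless} and~\ref{lem:diverge_R}. Write $P=\int P^\omega\,d\mu(\omega)$ for the ergodic decomposition~\cite{hernandez_lasserre} of the law of that process, with invariant $\sigma$-field $\mathcal{I}$; let $P^\omega_{\Theta\Phi},P^\omega_\Theta,P^\omega_\Phi$ denote the time-$1$ marginals of $P^\omega$ and $P^\omega_X$ the input law obtained from $P^\omega_\Theta$ through $F_X^{-1}$. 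Since in the chain $\Phi_n$ is generated from $\Theta_n$ by the channel, every invariant law $Q$ factors as $Q(d\theta\,d\phi)=Q_\Theta(d\theta)f_{\Phi|\Theta}(\phi|\theta)d\phi$; in particular $\int P^\omega_\Theta\,d\mu=P_\Theta=\m{U}$, and each $P^\omega_X$ is a legitimate input distribution for $P_{Y|X}$, so $I(P^\omega_X;P_{Y|X})\le C(P_{Y|X})=I(X;Y)$ by (\textsl{A}\ref{cond:omega_cap_ach}).

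The crux is the following. The projection $\pi$ of $\{(\Theta_n,\Phi_n)\}_n$ onto $\{\Phi_n\}_n$ is shift-equivariant, hence each $\pi_*P^\omega$ is ergodic and $\int\pi_*P^\omega\,d\mu=\pi_*P$; but under $P$ the output sequence $\Phi_1,\Phi_2,\dots$ is i.i.d.\ $\m{U}$ (as already used in the proof of Lemma~\ref{lem:SLLN_memoryless}), hence ergodic, so \emph{uniqueness} of the ergodic decomposition forces $\pi_*P^\omega$ to be the i.i.d.\ $\m{U}$ process for $\mu$-a.a.\ $\omega$; in particular $P^\omega_\Phi=\m{U}$. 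Therefore, for $\mu$-a.a.\ $\omega$,
\begin{equation*}
\Expt_{P^\omega_{\Theta\Phi}}\log f_{\Phi|\Theta}(\Phi|\Theta)=\Expt_{P^\omega_{\Theta\Phi}}\log\frac{f_{\Phi|\Theta}(\Phi|\Theta)}{f_{P^\omega_\Phi}(\Phi)}=I(P^\omega_\Theta;\Phi)=I(P^\omega_X;P_{Y|X})\le I(X;Y),
\end{equation*}
and integrating against $\mu$, using $\int P^\omega_{\Theta\Phi}\,d\mu=P_{\Theta\Phi}$ and $\Expt_{P_{\Theta\Phi}}\log f_{\Phi|\Theta}(\Phi|\Theta)=I(\Theta;\Phi)=I(X;Y)$, shows $I(P^\omega_X;P_{Y|X})=I(X;Y)$ for $\mu$-a.a.\ $\omega$; in words, every ergodic component of the chain corresponds to a capacity-achieving input distribution.

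Granting this, (\ref{eq:density_at_theta}) and (\ref{eq:limsupI}) follow by conditioning on $\mathcal{I}$, where the chain is ergodic. The SLLN (Lemma~\ref{lem:SLLN}) applied to (\ref{eq:posterior_expansion}) gives $\tfrac1n\log f_{\Theta_0|\Phi^n}(\Theta_0|\Phi^n)\to\Expt_{P^\omega_{\Theta\Phi}}\log f_{\Phi|\Theta}(\Phi|\Theta)=I(X;Y)$ a.s., which is (\ref{eq:density_at_theta}). For (\ref{eq:limsupI}) I repeat the proof of Lemma~\ref{lem:diverge_R} with the constant $I^-_\eps$ replaced by the $\mathcal{I}$-measurable variable $\Expt[\log \ssc{-}f^\eps_{\Phi|\Theta}(\Phi|\Theta)\mid\mathcal{I}]=\int P^\omega_\Theta(d\theta)\int f_{\Phi|\Theta}(\phi|\theta)\log \ssc{-}f^\eps_{\Phi|\Theta}(\phi|\theta)d\phi$; by monotone convergence (finiteness for small $\eps$ being supplied componentwise by the regularity property (\textsl{A}\ref{cond:omega_reg})) this increases a.s.\ to $\Expt_{P^\omega_{\Theta\Phi}}\log f_{\Phi|\Theta}(\Phi|\Theta)=I(X;Y)$ as $\eps\downarrow0$, so for fixed $\delta$ with $R<I(X;Y)-\delta$ the $\mathcal{I}$-measurable events $A_\eps:=\{\Expt[\log \ssc{-}f^\eps_{\Phi|\Theta}(\Phi|\Theta)\mid\mathcal{I}]>I(X;Y)-\tfrac\delta2\}$ have $\Prob(A_\eps)\uparrow1$. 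On $A_\eps$ the conditional SLLN makes the event $T_{n,\eps}$ of Lemma~\ref{lem:diverge_R} hold for all large $n$, so $\Prob(A_\eps\cap T_{n,\eps})\to\Prob(A_\eps)$; intersecting the chain of inequalities in that proof with $A_\eps$ gives $\Prob(E_{n,\eps}\cap T_{n,\eps}\cap A_\eps)\le2^{-n\delta/2-1}$, so via $\widetilde{E}_{n,\eps}\Rightarrow E_{n-1,\eps}$ one gets $\limsup_n\Prob(\widetilde{E}_{n,\eps})\le1-\Prob(A_\eps)$, and letting $\eps\downarrow0$ yields the first line of (\ref{eq:limsupI}); the second is identical with the positive trajectory. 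Finally, under the uniqueness hypothesis the crux gives $P^\omega_X=P_X$ for $\mu$-a.a.\ $\omega$, so the SLLN applied conditionally on $\mathcal{I}$ gives $\tfrac1n\sum_{k=1}^n\eta(X_k)\to\Expt_{P^\omega_X}\eta=\Expt_{P_X}\eta$ a.s., which is (\ref{eq:unique_cap}) (the per-$\omega$ integrability inherited from $\Expt|\eta(X)|<\infty$ via $\int\Expt_{P^\omega_X}|\eta|\,d\mu=\Expt_{P_X}|\eta|$).

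I expect the main obstacle to be the crux itself — extracting from the i.i.d.\ output that every ergodic component has a uniform output marginal, hence a capacity-achieving input — since once it is in hand everything reduces to a componentwise rerun of the earlier lemmas. A secondary, purely measure-theoretic, nuisance is passing the regularity property and the integrability of $\log f_{\Phi|\Theta}$ through the ergodic decomposition, so that the SLLN and the monotone-convergence step are legitimate for $\mu$-a.a.\ $\omega$.
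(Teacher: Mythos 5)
Your proposal is correct, and its overall architecture is the one the paper uses: decompose the stationary chain into ergodic components, show that ($\mu$-almost) every component corresponds to a capacity-achieving input, and then rerun Lemmas~\ref{lem:SLLN_memoryless} and~\ref{lem:diverge_R} componentwise. Where you genuinely diverge is in the proof of the crux itself. The paper establishes the equivalent statement that the component label $\Gamma=\chi(\Theta)$ is independent of $\Phi$ by a hands-on computation: invariance of $\chi^{-1}(S)$ under the kernel $F_{\Theta|\Phi}(\cdot|\phi)$ lets it replace $\Theta$ by $Z=F_{\Theta|\Phi}(\Theta|\Phi)$, which is uniform and independent of $\Phi$ by construction, yielding $P_{\Gamma\Phi}=P_\Gamma\times P_\Phi$; the chain $I(\Theta;\Phi|\Gamma)=I(\Theta;\Phi)=C(P_{Y|X})$ together with $I(\Theta;\Phi|\Gamma=\gamma)\leq C(P_{Y|X})$ then forces equality a.e. You instead view the output sequence as a shift factor of the pair process, push each ergodic component through the factor map, and invoke extremality/uniqueness of the ergodic decomposition of the i.i.d.\ $\m{U}$ output process to conclude $P^\omega_\Phi=\m{U}$, after which the same averaging argument pins every component at capacity. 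Your route is softer and gives slightly more (the whole output process, not just its time-$1$ marginal, is i.i.d.\ $\m{U}$ in each component), at the cost of leaning on abstract ergodic-theoretic machinery where the paper stays inside the Markov-chain formalism it has already set up; both are valid. The remaining differences are quantitative rather than structural: you let $\m{L}_\eps=\Expt[\log\ssc{-}f^\eps_{\Phi|\Theta}(\Phi|\Theta)\mid\mathcal{I}]$ increase a.s.\ to $I(X;Y)$ and take $A_\eps$ with $\Prob(A_\eps)\uparrow 1$, whereas the paper uses only $\Expt\m{L}_\eps=I^-_\eps\rightarrow C(P_{Y|X})$ and $\m{L}_\eps\leq C(P_{Y|X})$ to get the averaging bound $P_\Gamma(A_{\eps,\nu})\geq(1+\nu)^{-1}$, thereby sidestepping the almost-sure monotone-convergence claim; your version needs per-component integrability of $\log\ssc{-}f^\eps_{\Phi|\Theta}$ for small $\eps$, which does follow for $\mu$-a.a.\ components from the global regularity property (\textsl{A}\ref{cond:omega_reg}) by exactly the nonnegativity-of-the-integrand argument you gesture at.
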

\begin{proof}
Without the ergodicity property (\textsl{A}\ref{cond:omega_inv}), we cannot directly use the SLLN which was a key tool in deriving (\ref{eq:density_at_theta}) and (\ref{eq:SLLN_substochastic}). Instead, we use the \textit{ergodic decomposition} for Markov chains\footnote{The chain has at least one invariant distribution, and evolves over a locally compact state space $\ui^2$, hence admits an ergodic decomposition.}~\cite[Section 5.3]{hernandez_lasserre} to write the invariant distribution $P_{\Theta\Phi}$ as a mixture of ergodic distributions. We then apply the SLLN to each ergodic component, and use the maximality property (\textsl{A}\ref{cond:omega_cap_ach}) to control the behavior of the chain within each component. For clarity of exposition, we avoid some of the more subtle measure theoretic details for which the reader is referred to~\cite{hernandez_lasserre}.

Let $\m{P}$ denote the Markov stochastic kernel associated with the posterior matching scheme. The ergodic decomposition implies that there exists a r.v. $\Gamma$ taking values in $\ui$, such that $\Gamma=\chi(\Theta)$ for some measurable function $\chi:\ui\mapsto\ui$, and $P_{\Theta\Phi|\Gamma}(\cdot|\gamma)$ is ergodic for $\m{P}$, for $P_\Gamma$-a.a. $\gamma$. Let us first show that $\Phi$ and $\Gamma$ are statistically independent. For any $S\in\Borel$, it is clear that $P_{\Theta|\Gamma}(\cdot|S)$ is an invariant distribution for $\m{P}$, being a mixture of ergodic distributions. Hence the set $\chi^{-1}(S)$ must be invariant for the posterior matching kernel, i.e.,
\begin{equation}\label{eq:decomp_inv}
F_{\Theta|\Phi}(\chi^{-1}(S)|\phi) = \chi^{-1}(S)
\end{equation}
up to a $P_\Theta$-null set, for $P_\Phi$-a.a. $\phi$. Define $Z\dfn F_{\Theta|\Phi}(\Theta|\Phi)$. For any $S,T\in\Borel$:
\begin{align*}
P_{\Gamma\Phi}(S,T) &= P_{\Theta\Phi}(\chi^{-1}(S),T) \stackrel{(\rm
a)}{=} P_{Z\Phi}(\chi^{-1}(S),T)
\stackrel{(\rm
b)}{=} P_{Z}(\chi^{-1}(S))\cdot P_{\Phi}(T)
\\
&\stackrel{(\rm
c)}{=} P_{\Theta}(\chi^{-1}(S))\cdot P_{\Phi}(T)
= P_{\Gamma}(S)P_{\Phi}(T)
\end{align*}
where (a) follows from  (\ref{eq:decomp_inv}) and the fact that $(Z,\Phi)$ a.s. determines $\Theta$, (b) holds since $Z$ is independent of $\Phi$, and (c) holds since $Z\sim P_{\Theta}$ (i.e., uniform).

We can now apply the SLLN (Lemma \ref{lem:SLLN}) to each ergodic component $\chi^{-1}(\gamma)$. For $P_\Gamma$-a.a. $\gamma$ and $P_{\Theta|\Gamma}(\cdot|\gamma)$-a.a. message points $\theta_0\in\chi^{-1}(\gamma)$
{\allowdisplaybreaks
\begin{align}\label{eq:density_at_theta_nonerg}
\nonumber\lim_{n\rightarrow \infty}\frac{1}{n}\log f_{\Theta_0|\Phi^n}(\theta_0|\Phi^n)
&= \Expt\left(\log \frac{f_{\Phi|\Theta}(\Phi|\Theta)}{f_\Phi(\Phi)}\mid \Gamma=\gamma\right) \qquad \text{\rm a.s.}
\\
&= \Expt\left(\log \frac{f_{\Phi|\Theta \Gamma}(\Phi|\Theta,\gamma)}{f_{\Phi|\Gamma}(\Phi|\gamma)}\mid \Gamma=\gamma\right)
= I(\Theta;\Phi|\Gamma=\gamma)
\end{align}}
Now, for any $\gamma$
\begin{equation*}
I(\Theta;\Phi|\Gamma=\gamma) \leq C(P_{\Phi|\Theta}) =  C(P_{Y|X})
\end{equation*}
where the inequality holds by the definition of the unconstrained capacity and since $\Gamma-\Theta-\Phi$ is a Markov chain, and the equality holds since the normalized channel preserves the mutual information (Lemma \ref{lem:norm_chan_prop}). Furthermore, using the independence of $\Phi$ and $\Gamma$ and the Markov relation above again, together with property (\textsl{A}\ref{cond:omega_cap_ach}), leads to
\begin{equation*}
I(\Theta;\Phi|\Gamma) = I(\Theta;\Phi) = C(P_{Y|X})
\end{equation*}
Combining the above we conclude\footnote{Note that (\ref{eq:erg_comp}) does not hold in general if property (\textsl{A}\ref{cond:omega_cap_ach}) is not satisfied, as there may be variations in the limiting values between ergodic components.} that for $P_\Gamma$-a.a. $\gamma$
\begin{equation}\label{eq:erg_comp}
I(\Theta;\Phi|\Gamma=\gamma) = C(P_{Y|X})
\end{equation}
Substituting the above into (\ref{eq:density_at_theta_nonerg}) yields
\begin{equation*}
\lim_{n\rightarrow \infty}\frac{1}{n}\log f_{\Theta_0|\Phi^n}(\theta_0|\Phi^n) = C(P_{Y|X}) \qquad \text{\rm a.s.}
\end{equation*}
for $P_\Theta$-a.a. $\theta_0$. This in turn implies (\ref{eq:density_at_theta}).

Establishing (\ref{eq:limsupI}) follows the same line of argument, proving a weaker version of (\ref{eq:SLLN_substochastic}). By the ergodic decomposition, for $P_\Gamma$-a.a. $\gamma$ and $P_{\Theta|\Gamma}(\cdot|\gamma)$-a.a. message points $\theta_0\in\chi^{-1}(\gamma)$
\begin{align}\label{eq:chi_def}
\nonumber \lim_{n\rightarrow\infty}\frac{1}{n}\sum_{k=1}^n \log \ssc{-}f^\eps_{\Phi|\Theta}(\Phi_k|\Theta_k)
&= \Expt\left(\log \ssc{-}f^\eps_{\Phi|\Theta}(\Phi|\Theta) \mid \Gamma=\gamma\right) \qquad \text{\rm a.s.}
\\
&\dfn \m{L}_\eps(\gamma)
\end{align}
The function $\m{L}_\eps(\gamma)$ satisfies
\begin{equation*}
\Expt\m{L}_\eps(\Gamma) = \Expt\left(\log \ssc{-}f^\eps_{\Phi|\Theta}(\Phi|\Theta)\right) = I_{\eps}^-,
\end{equation*}
and since $\ssc{-}f^\eps_{\Phi|\Theta}\leq f_{\Phi|\Theta}$, then
\begin{equation*}
\m{L}_\eps(\gamma) \leq C(P_{Y|X})
\end{equation*}
for $P_\Gamma$-a.a. $\gamma$. Now since $I(X;Y) = C(P_{Y|X})$ under property (\textsl{A}\ref{cond:omega_cap_ach}), then
\begin{equation}\label{eq:I_eps_C}
\lim_{\eps\rightarrow 0}I_{\eps}^- = C(P_{Y|X})
\end{equation}
It is therefore clear that for small $\eps$ values $\m{L}_\eps(\gamma)$ must be close to $I_{\eps}^-$ for a set of high $P_\Gamma$ probability. Precisely:
\begin{equation*}
A_{\eps,\nu} \dfn \left\{\gamma\in\supp(\Gamma) : \m{L}_\eps(\gamma) > I_{\eps}^- - \nu^{-1}(C(P_{Y|X})-I_{\eps}^-)\right\}
\end{equation*}
Then
\begin{equation*}
I_{\eps}^- = \Expt\m{L}_\eps(\Gamma) \leq P_\Gamma(A_{\eps,\nu})C(P_{Y|X})
+ (1-P_\Gamma(A_{\eps,\nu}))\left(I_{\eps}^- - \nu^{-1}(C(P_{Y|X})-I_{\eps}^-)\right)
\end{equation*}
Rearranging, we get
\begin{equation}\label{eq:A_eps_nu_bound}
P_\Gamma(A_{\eps,\nu}) \geq \frac{1}{1+\nu}
\end{equation}
Combining (\ref{eq:chi_def}), (\ref{eq:I_eps_C}) and (\ref{eq:A_eps_nu_bound}), we conclude that for any $\nu>0$ and any $\eps>0$ small enough,
\begin{equation}
\Prob\left(\lim_{n\rightarrow\infty}\frac{1}{n}\sum_{k=1}^n \log \ssc{-}f^\eps_{\Phi|\Theta}(\Phi_k|\Theta_k) > I_{\eps}^- - \frac{\delta(\eps)}{\nu}\right) \geq \frac{1}{1+\nu}
\end{equation}
for $P_\Theta$-a.a. message points $\theta_0$, where $\delta(\eps)\rightarrow 0$ as $\eps\rightarrow 0$. The remainder of the proof follows that of Lemma \ref{lem:SLLN_memoryless}, with some minor adaptations.

Finally, suppose $P_X$ is the unique capacity achieving input distribution for $P_{Y|X}$. For $P_\Gamma$-a.a. $\gamma$,
\begin{equation}
I(X;Y|\Gamma=\gamma)  = I(\Theta;\Phi|\Gamma=\gamma) = C(P_{Y|X})
\end{equation}
Thus, since $\Gamma-X-Y$ is a Markov chain and from the uniqueness of $P_X$ as capacity achieving, it must be that $P_{X|\Gamma}(\cdot|\gamma) = P_X(\cdot)$ for $P_\Gamma$-a.a. $\gamma$. Applying the SLLN to each ergodic component, we find that for $P_\Gamma$-a.a. $\gamma$ and $P_{\Theta|\Gamma}(\cdot|\gamma)$-a.a. message points $\theta_0\in\chi^{-1}(\gamma)$
\begin{align*}
\lim_{n\rightarrow\infty}n^{-1}\sum_{k=1}^n\eta(X_k)  &= \lim_{n\rightarrow\infty}n^{-1}\sum_{k=1}^n\eta(F_X^{-1}(\Theta_k))
=\Expt(\eta(F_X^{-1}(\Theta))|\Gamma=\gamma) \qquad \text{a.s.}
\\
&= \Expt(\eta(X)|\Gamma=\gamma)  = \Expt\eta(X)
\end{align*}
establishing (\ref{eq:unique_cap}).

\begin{remark}
It is instructive to point out that the proof of the Lemma holds also when property (\textsl{A}\ref{cond:omega_fixedp}) is not satisfied, namely when the posterior matching kernel has fixed points. In that case, each ergodic component must lie strictly inside an invariant interval (i.e., an interval between adjacent fixed points), which results in a decoding ambiguity as the receiver cannot distinguish between the ergodic components. As discussed in Section~\ref{sec:extensions}-\ref{subsec:mu_var}, this exact phenomena prevents any positive rate from being achieved, and generally requires using a posterior matching variant. The fact that capacity is nonetheless achieved under (\textsl{A}\ref{cond:omega_cap_ach}) in the absence of fixed-points even when the chain is not ergodic, suggests that in this case almost any ergodic component, in addition to being capacity achieving in the sense of (\ref{eq:erg_comp}), is also dense in $\ui$. The intuitive interpretation is that in that case any interval intersects with almost all of the ergodic components, hence the receiver, interested in decoding intervals, is ``indifferent'' to the specific component the chain lies in.
\end{remark}
\end{proof}

\section{Pointwise Achievability Proofs}\label{app:pointwise}

\begin{proof}[Proof of Lemma \ref{lem:Omegac_in_Omega}]
Property (\textsl{A}\ref{cond:omegac_proper}) implies in particular that $F_X(x),F_Y(y)$ are continuous and bijective over $\,\isupp(X),\isupp(Y)$ respectively, and that $F_{XY}(x|y)$ is jointly continuous in $x,y$ over $\isupp(X,Y)$. The normalized posterior matching kernel is therefore given by
\begin{equation*}
F_{\Theta|\Phi}(\theta|\phi) =
F_{X|Y}(F_X^{-1}(\theta)|F_Y^{-1}(\phi))
\end{equation*}
and is jointly continuous in $\theta,\phi$ over $\isupp(\Theta,\Phi)$ (note that for the family $\Omega_A$ this does not hold in general). Thus, property (\textsl{A}\ref{cond:omegac_fixedp}${}^*$) implies (by continuity) that for any $\theta\in\ui$ there exists some $\phi\in\ui$ so that $F_{\Theta|\Phi}(\theta|\phi)\neq\theta$.\footnote{Note that continuity also implies there is an interval for which this holds, and since $\Phi\sim\m{U}$, the stronger property (\textsl{A}\ref{cond:omega_fixedp}) holds.}

We first show that the chain is $P_{\Theta\Phi}$-irreducible. Let $\Borel$ here denote the usual Borel $\sigma$-algebra corresponding to the open unit interval. Since $\Theta_n$ is a deterministic function of $(\Theta_{n-1},\Phi_{n-1})$, and since $\Phi_n$ is generated from $\Theta_n$ via a memoryless channel, it follows (by arguments similar to those given in the proof of Lemma \ref{lem:dmc_prop}) that to establish irreducibility it suffices to consider only the $\Theta_n$ component of the chain, and (since $P_{\Theta\Phi}$ has a proper p.d.f.) to show that any set $\Delta\in\Borel$ with $\m{U}(\Delta)>0$ is reached in a finite time with a positive probability starting from any fixed message point $\Theta_0=\theta_0\in\ui$.

Define the set mapping $\pi:\Borel\mapsto\Borel$
\begin{equation*}
\pi(A) \dfn \left\{\xi\in \ui\,:\, \xi =
F_{\Theta|\Phi}(\theta|\phi)\,,\theta\in A,
\phi\in
\isupp\left(\Phi|\Theta=\theta)\right)\right\}
\end{equation*}
namely, the set of all points that are ``reachable'' from the set $A$ in a single iteration. If $A$ is an interval (or a single point), then $\pi(A)$ is also an interval, since it is a continuous image of the set $A'=\isupp(\Theta,\Phi)\cap\left\{A\times \ui\right\}$, which by property (\textsl{A}\ref{cond:omegac_proper}) is a connected set\footnote{This is proved as follows: Since $F_X,F_Y$ are continuous, the set $\isupp(\Theta,\Phi)$ inherits the properties of $\isupp(X,Y)$, namely it is connected (and open, hence path-connected) and convex in the $\phi$-direction. Therefore, any two points in $a,b\in A'$ can be connected by a path in $\isupp(\Theta,\Phi)$. If this path does not lie entirely in $A'$, then consider a new path that starts from $a$ in a straight line connecting to the last point in the original path which has the same $\theta$ coordinate as $a$, then merges with the original path until reaching the first point with the same $\theta$ coordinate as $b$, and continuing in a straight line to $b$. Since $\isupp(\Theta,\Phi)$ is convex in the $\phi$-direction this new path is completely within $A'$.}. For any $\theta_0\in\ui$ it holds that $\Expt\left(F_{\Theta|\Phi}(\theta_0|\Phi)\right)=\theta_0$, and together with property (\textsl{A}\ref{cond:omegac_fixedp}${}^*$) it must also be that
\begin{equation}\label{eq:expnading}
\inf_{\phi\in\isupp(\Phi|\Theta=\theta_0)}\hspace{-0.2cm}F_{\Theta|\Phi}(\theta_0|\phi)
< \theta_0 \;<
\hspace{-0.2cm}\sup_{\phi\in\isupp(\Phi|\Theta=\theta_0)}\hspace{-0.2cm}F_{\Theta|\Phi}(\theta_0|\phi)
\end{equation}
Thus, $\theta_0$ is an interior point of the interval $\pi(\{\theta_0\})$. The arguments above regarding $\pi$ can be applied to all points within the set $\pi(\{\theta_0\})$, and then recursively to obtain
\begin{equation}\label{eq:expanding_intervals}
\theta_0\in \pi(\{\theta_0\})\subseteq \pi^{(2)}(\{\theta_0\}) \subseteq
\cdots\subseteq \pi^{(n)}(\{\theta_0\})\subseteq \cdots
\end{equation}
where $\pi^{(n)}$ is the $n$-fold iteration of $\pi$. Therefore, $\{\pi^{(n)}(\{\theta_0\})\}_{n=1}^\infty$ is a sequence of \textit{expanding intervals} containing $\theta_0$ as an interior point. Note also that $\pi^{(n)}(\{\theta_0\})=\isupp(\Theta_n|\Theta_{0}=\theta_0)$. Consider the set
\begin{equation*}
A_{\theta_0} = \bigcup_{n=0}^\infty \pi^{(n)}(\{\theta_0\})
\end{equation*}
Let us show that $A_{\theta_0}=\ui$. First, it is easy to see that $A_{\theta_0}$ is an open interval, since it is a union of nested intervals, and if it had contained one of its endpoints then that endpoint would have been contained in $\pi^{(n)}(\{\theta_0\})$ for some $n$, which by the expansion property above is an interior point of $\pi^{(n+1)}(\{\theta_0\})\subseteq A_{\theta_0}$, in
contradiction. Now, suppose that $A_{\theta_0}=(\theta_1,\theta_2)$ for $\theta_1>0$. Using
(\ref{eq:expnading}) and the continuity of $F_{\Theta|\Phi}(\theta|\phi)$ once again, we have
\begin{equation*}
\lim_{\theta\rightarrow\theta_1^+}\inf_{\phi\in\isupp(\Phi|\Theta=\theta)}\hspace{-0.35cm}F_{\Theta|\Phi}(\theta|\phi)
=
\inf_{\phi\in\isupp(\Phi|\Theta=\theta_1)}\hspace{-0.3cm}F_{\Theta|\Phi}(\theta_1|\phi)
< \theta_1
\end{equation*}
in contradiction. The same argument applies for $\theta_2$, establishing $A_{\theta_0}=\ui$. As a result, for any set $\Delta\in\Borel$ with $\m{U}(\Delta)>0$ we have that $\m{U}(\Delta\cap \pi^{(n)}(\{\theta_0\}))\rightarrow\m{U}(\Delta)$ as $n\rightarrow\infty$. Therefore, there exists a finite $n$ for which $\m{U}(\Delta\cap \pi^{(n)}(\{\theta_0\}))>0$, and since $\m{U}\ll P_{\Theta_n|\Theta_0}$ when restricted to $\pi^{(n)}(\{\theta_0\})$, it must be that $P_{\Theta_n|\Theta_0}(\Delta|\theta_0)>0$. Thus, the normalized chain is $P_{\Theta\Phi}$-irreducible. It was already verified that $P_{\Theta\Phi}$ is an invariant distribution, hence by Lemma \ref{lem:irreducible_to_recurrent} the chain is also recurrent, $P_{\Theta\Phi}$ is unique and ergodic, and so property (\textsl{A}\ref{cond:omega_inv}) holds.

Let $\m{P}$ denote the stochastic kernel of our Markov chain. To establish p.h.r., we would like to use condition (\ref{cond:PHR_abs_cont}) of Lemma \ref{lem:PHR_cond}. However, $\Theta_{n+1}$ is a deterministic function of $(\Theta_n,\Phi_n)$, and thus $\m{P}(\cdot|(\theta,\phi))\not\ll P_{\Theta\Phi}$ (as the former is supported on a $P_{\Theta\Phi}$-null set). Nevertheless, it is easy to see that due to the expansion property, the $2$-skeleton of the chain (which is also recurrent with the same invariant distribution) admits a proper p.d.f. over a subset of $\isupp(\Theta,\Phi)$ and therefore $\m{P}^2(\cdot|(\theta,\phi))\ll P_{\Theta\Phi}$ for any $(\theta,\phi)\in\isupp(\Theta,\Phi)$. Thus, by condition (\ref{cond:PHR_abs_cont}) of Lemma \ref{lem:PHR_cond} the $2$-skeleton is p.h.r., which in turn implies the chain itself is p.h.r. via condition (\ref{cond:PHR_skeleton}) of Lemma \ref{lem:PHR_cond}.

To establish aperiodicity, we use the expansion property (\ref{eq:expanding_intervals}) once again. Suppose the chain has period $d>1$ and let $\{D_i\}_{i=0}^{d-1}$ be the corresponding partition of the state space $\isupp(\Theta,\Phi)$. From our previous discussion we already know that for any
$(\theta_0,\phi_0)$, the set $\isupp(\Theta_n|\Theta_0=\theta_0,\Phi_0=\phi_0)$ is an interval that expands into $\ui$ as $n\rightarrow\infty$. Since we have the Markov relation $\Phi_n - \Theta_n - \Theta^{n-1}\Phi^{n-1}$, the set $\isupp(\Theta_n,\Phi_n|\Theta_0=\theta_0,\Phi_0=\phi_0)$ expands into $\isupp(\Theta,\Phi)$ in the sense that it contains any open subset of $\isupp(\Theta,\Phi)$ for any $n$ large enough. Therefore, by definition of periodicity for any $n\in\NaturalF$ and $i\in\{0,\ldots,d-1\}$ we have $\Prob((\Theta_{nd+i},\Phi_{nd+i})\in D_i|(\Theta_0,\Phi_0)\in D_0)=1$, and since $P_{\Theta\Phi}\ll \m{U}\times\m{U}$ , then it must be that $(\m{U}\times\m{U})\left(\isupp(\Theta,\Phi)\setdiff D_i\right)=0$ for any $i\in\{0,1,\ldots,d-1\}$. However, this cannot be satisfied by $d>1$ disjoint sets.

\end{proof}

\begin{lemma}\label{lem:pointwise_ext}
Suppose $(P_X,P_{Y|X})\in\Omega_C$. Then Lemmas
\ref{lem:SLLN_memoryless} and \ref{lem:diverge_R} hold for any
fixed message point $\Theta_0=\theta_0\in\ui$. Furthermore, for any $\eps>0,\delta>0$ and $\theta_0\in\ui$:
\begin{equation*}
\lim_{n\rightarrow\infty}\Prob\big{(}\ssc{-}\Theta^{\delta}_n>\eps|\Theta_0=\theta_0\big{)}
\hspace{-1pt}= \lim_{n\rightarrow\infty}\Prob\big{(}\ssc{+}\Theta^{\delta}_n<1-\eps|\Theta_0=\theta_0\big{)}
=0
\end{equation*}
\end{lemma}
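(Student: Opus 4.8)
The plan rests on the fact, established in Lemma \ref{lem:Omegac_in_Omega}, that for $(P_X,P_{Y|X})\in\Omega_C$ the chain $\{(\Theta_n,\Phi_n)\}_{n=1}^\infty$ is p.h.r.\ and aperiodic; this is precisely what promotes every ``$P_\Theta$-a.a.\ initial point'' statement used earlier into an ``every initial point'' statement. Fix $\theta_0\in\ui$ once and for all, so that $\Delta^+_\delta,\Delta^-_\delta>0$ and there are no boundary cases. For the pointwise form of Lemma \ref{lem:SLLN_memoryless}, observe that the identity $\frac1n\log f_{\Theta_0|\Phi^n}(\Theta_0|\Phi^n)=\frac1n\sum_{k=1}^n\log f_{\Phi|\Theta}(\Phi_k|\Theta_k)$ coming from (\ref{eq:posterior_expansion}) is a pathwise identity along the realized trajectory, so the only probabilistic ingredient is the SLLN applied to $\frac1n\sum_k\log f_{\Phi|\Theta}(\Phi_k|\Theta_k)$; by p.h.r.\ the ``furthermore'' clause of Lemma \ref{lem:SLLN} delivers this limit from every initial state $s\in\isupp(\Theta,\Phi)$, hence conditioned on $\Theta_0=\theta_0$ (averaging over $\Phi_1\sim P_{\Phi|\Theta}(\cdot|\theta_0)$), and (\ref{eq:density_at_theta}) follows. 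For the pointwise form of Lemma \ref{lem:diverge_R}, I would re-run its proof with every probability and expectation replaced by its conditional version given $\Theta_0=\theta_0$: the sole use of the uniformity of $\Theta_0$ is in asserting $\Prob(T_{n,\eps})\to1$, which splits into $\Prob(\Theta_0>2^{-nR})\to1$ (automatic for fixed $\theta_0>0$ and $n$ large) and the SLLN (\ref{eq:SLLN_substochastic}) for the substochastic density $\ssc{-}f^\eps_{\Phi|\Theta}$, again valid from every initial state by p.h.r.; everything else, including the chain of inequalities bounding $\Prob(E_{n,\eps}\cap T_{n,\eps})$ and the implication $\widetilde{E}_{n,\eps}\Rightarrow E_{n-1,\eps}$, is pathwise.

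The ``furthermore'' statement --- the pointwise version of Lemma \ref{lem:diverge_eps} --- is where the real work lies, since the original proof uses the i.i.d.\ uniformity of $\Phi^\infty$ in two essential places: the contraction bound (\ref{eq:length_tends_to_zero}), and the placement of the posterior median near $\Theta_0$ via $\bar{G}_n(\Theta_0)=\Theta_n\sim\m{U}$. The plan is to recover both by a burn-in. Fix a burn-in length $m_0$. By Lemma \ref{lem:PHR_convergence}, $\eta_{m_0}\dfn d_{TV}\big(\text{law}(\Theta_{m_0},\Phi_{m_0}\mid\Theta_0=\theta_0),\,P_{\Theta\Phi}\big)\to0$ as $m_0\to\infty$ (averaging the per-initial-state convergence over $\Phi_1$); and since total variation does not increase under the common Markov transition kernels, the law of $(\Theta_{m_0},\Phi_{m_0},\Phi_{m_0+1},\dots,\Phi_{n-1})$ given $\Theta_0=\theta_0$ is within $\eta_{m_0}$ of its stationary counterpart, under which $\{\Phi_k\}_{k\ge m_0}$ is i.i.d.\ $\m{U}$ and $\Theta_{m_0}\sim\m{U}$. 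Then (i) writing $\bar{G}_n$ as the restart, at time $m_0$, of the IFS over $\mf{F}_c$ considered in Lemma \ref{lem:uniform_eps}, one has $\psi_{\sst{\lambda}}(\bar{G}_n)\le L^*_{m_0,n}$, where the worst-case post-burn-in length $L^*_{m_0,n}$ is a bounded function of the output block $(\Phi_{m_0},\dots)$ only and, in the stationary regime, satisfies $\Prob(L^*_{m_0,n}>\nu)\le\nu^{-1}r(n-m_0)$ by the estimate in the proof of Lemma \ref{lem:uniform_eps}; hence $\Prob(\psi_{\sst{\lambda}}(\bar{G}_n)>\nu\mid\Theta_0=\theta_0)\le\nu^{-1}r(n-m_0)+\eta_{m_0}$, so $\limsup_n\Prob(\psi_{\sst{\lambda}}(\bar{G}_n)>\nu\mid\Theta_0=\theta_0)\le\eta_{m_0}\to0$, i.e.\ (\ref{eq:length_tends_to_zero}) holds conditionally in a rate-free form; and (ii) $\Prob(\Theta_n\in(\eta,1-\eta)\mid\Theta_0=\theta_0)\ge1-2\eta-\eta'_n$ with $\eta'_n\to0$, since by Lemma \ref{lem:PHR_convergence} the marginal law of $\Theta_n$ given $\theta_0$ converges to $\m{U}$ in total variation.

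With (i) and (ii) the remainder of the proof of Lemma \ref{lem:diverge_eps} goes through pathwise and in fact more directly, since one need no longer average over $\Theta_0$: the conditional concentration bound gives $\Prob(\bar{G}_n(\Theta^*_n-\nu/2)>\eta\mid\Theta_0=\theta_0)\to0$ and $\Prob(\bar{G}_n(\Theta^*_n+\nu/2)<1-\eta\mid\Theta_0=\theta_0)\to0$ (eqs.\ (\ref{eq:Gn_left})--(\ref{eq:Gn_right}) are pathwise consequences of (\ref{eq:length_Gn})), and combined with $\bar{G}_n(\theta_0)=\Theta_n\in(\eta,1-\eta)$ from (ii) this forces $|\Theta^*_n-\theta_0|<\nu/2$ with conditional probability $\to1$; choosing $\nu<\Delta^-_\delta$ then yields $\ssc{-}\Theta^\delta_n=\bar{G}_n(\theta_0-\Delta^-_\delta)\le\bar{G}_n(\Theta^*_n-\nu/2)\le\eta$ on this event, whence $\limsup_n\Prob(\ssc{-}\Theta^\delta_n>\eta\mid\Theta_0=\theta_0)=O(\eta)$, and letting $\eta\downarrow0$ (with $\eta\le\eps$) gives $\lim_n\Prob(\ssc{-}\Theta^\delta_n>\eps\mid\Theta_0=\theta_0)=0$; the positive trajectory is symmetric. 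The main obstacle, then, is step (i): one must notice that the worst-case length $L^*_{m_0,n}$ of Lemma \ref{lem:uniform_eps} depends only on the outputs after the burn-in, so that the non-stationarity introduced by conditioning on $\Theta_0=\theta_0$ is confined to the burn-in segment and absorbed into the vanishing error $\eta_{m_0}$; everything downstream is pathwise or a routine p.h.r.\ consequence.
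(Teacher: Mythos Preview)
Your proposal is correct and follows essentially the same approach as the paper: the pointwise versions of Lemmas \ref{lem:SLLN_memoryless} and \ref{lem:diverge_R} are obtained by invoking the p.h.r.\ clause of Lemma \ref{lem:SLLN}, and the pointwise version of Lemma \ref{lem:diverge_eps} is obtained via a burn-in argument using Lemma \ref{lem:PHR_convergence} to transfer the non-stationary conditional law back to the stationary (i.i.d.\ output) regime. The only cosmetic difference is that the paper organizes the burn-in through the $k$-fold chain (showing that $k$ consecutive outputs converge to $\m{U}^k$ in total variation and then applying a $k$-step contraction, yielding $\limsup_n\Prob(\psi_\lambda(\bar{G}_n)>\nu\mid\Theta_0=\theta_0)\le\nu^{-1}r(k)$ for every fixed $k$), whereas you burn in once at time $m_0$ and invoke the uniform-in-$g$ length $L^*_{m_0,n}$ from Lemma \ref{lem:uniform_eps}; these are interchangeable packagings of the same idea.
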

\begin{proof}
The proofs of Lemmas \ref{lem:SLLN_memoryless} and \ref{lem:diverge_R} remain virtually the same, only now using the SLLN for
p.h.r. chains (Lemma \ref{lem:SLLN}) to obtain convergence for any fixed message point.

Since by Lemma \ref{lem:Omegac_in_Omega} the normalized chain is p.h.r. and aperiodic, Lemma \ref{lem:PHR_convergence} guarantees that the marginal distribution converges to the invariant distribution $P_{\Theta\Phi}$ in total variation, for any initial condition and hence any fixed message point. Loosely speaking, we prove the result by reducing the fixed message point setting for large
enough $n$, to the already analyzed case of a uniform message point in Lemma \ref{lem:diverge_eps}.

First, let $\{\wt{\Phi}_n\}_{n=1}^\infty$ be a sequence of r.v.'s such that $P_{\wt{\Phi}_n}$ tends to $\m{U}$ in total variation. Then the result of Lemma \ref{lem:contraction_cond} can be rewritten as
\begin{equation}\label{eq:contr_lim1}
\lim_{n\rightarrow\infty}\Expt\Big{(}\psi_{\sst{\lambda}}\big{[}F_{\Theta|\Phi}(\cdot\,|\wt{\Phi}_n)\circ
h\big{]}\Big{)} \,\leq
\;\xi\big{(}\psi_{\sst{\lambda}}(h\,)\,\big{)}
\end{equation}
which holds since the expectation is taken over a bounded function.

Now, consider the $k$-fold chain $\{\Theta_n^{n+k-1},\Phi_n^{n+k-1}\}_{n=1}^{\infty}$ for some fixed $k$. It is immediately seen that this chain is also p.h.r., and its invariant distribution is $P_{\Theta\Phi}^k$, the $k$-fold cartesian product of $P_{\Theta\Phi}$. Thus, by Lemma \ref{lem:PHR_convergence} the $k$-fold chain approaches this invariant distribution in total variation for any initial condition. In particular, this implies that
\begin{equation*}
\lim_{n\rightarrow\infty}d_{TV}(P_{\Phi_n^{n+k-1}|\Theta_0}(\cdot|\theta_0),\m{U}^k) = 0
\end{equation*}
where $\m{U}^k$ is the $k$-fold cartesian product of $\m{U}$. Namely, the distribution of $k$ consecutive outputs tends to i.i.d. uniform in total variation. Using (\ref{eq:contr_lim1}) and a trivial modification of Lemma \ref{lem:IFS_convergence} for an asymptotically i.i.d. control sequence, we have that for any fixed $k$
\begin{equation}\label{eq:decay_PHR}
\lim_{n\rightarrow\infty}\Prob\left(\psi_{\sst{\lambda}}(\bar{G}_n(\theta)) > \nu\,|\Theta_0=\theta_0\right) \leq \frac{1}{\nu}\,r(k)
\end{equation}
where $r(\cdot)$ is the decay profile of $\xi$. Let $n_k$ be the smallest integer such that for any $n\geq n_k$
\begin{equation*}
\Prob\left(\psi_{\sst{\lambda}}(\bar{G}_n(\theta)) > \nu\,|\Theta_0=\theta_0\right) \leq \frac{1}{\nu}\,\sqrt{r(k)}
\end{equation*}
holds, which must exist by (\ref{eq:decay_PHR}). Thus,
\begin{equation*}
\lim_{k\rightarrow \infty}\Prob\left(\psi_{\sst{\lambda}}(\bar{G}_{n_k}(\theta)) > \nu\,|\Theta_0=\theta_0\right) \leq \frac{1}{\nu}\,\lim_{k\rightarrow \infty}\sqrt{r(k)} = 0
\end{equation*}
Now, the proof of the Lemma follows through by working with
$(k,n_k)$ in lieu of $n$, and in (\ref{eq:long1}) using the fact
that the distribution of $\bar{G}_n(\theta_0)$ tends to $\m{U}$ in
total variation.
\end{proof}

\begin{proof}[Proof of Theorem \ref{thrm:mismatch}]
Let us first make the distinction between the Markov chain
generated by the posterior matching scheme for $(P_X, P_{Y|X})$
when operating over the channel $P_{Y|X}$, according to whose law
the transmitter and receiver encode/decode, and the chain
generated by the same scheme when operating over the channel
$P_{Y^*|X^*}$, which describes what actually takes place during
transmission. We refer to the former as the \textit{primary chain}
denoting its input/output sequence as usual by $(X_n,Y_n)$, and to
the latter as the \textit{mismatch chain}, denoting its
input/outptut sequence by $(X^*_n,Y^*_n)$. The same monikers and
notations are used for the normalized counterparts.

Property (\textsl{C}\ref{cond:M_neighb}) guarantees that the expansion
property holds for the mismatch chain, and since by Property
(\textsl{C}\ref{cond:M_inv}) $P_{X^*Y^*}$ is an invariant distribution, a
similar derivation as in Lemma \ref{lem:Omegac_in_Omega} implies
that the mismatch chain is p.h.r., which in particular also
guarantees the uniqueness of $P_{X^*Y^*}$. We would now like to
obtain an analogue of Lemma \ref{lem:SLLN_memoryless}. Let us
expand posterior p.d.f. w.r.t. the  primary chain, using the fact
that it induces an i.i.d. output distribution is  (this does not
necessarily hold for the mismatch chain) and the channel is
memoryless.
\begin{equation*}
f_{\Theta_0|Y^n}(\theta|y^n) =
\frac{f_{\sst{Y_n|\Theta_0,Y^{n-1}}}(y_n\,|\,\theta,y^{n-1})}{f_{\sst{Y_n|Y^{n-1}}}(y_n\,|\,y^{n-1})}\,f_{\Theta_0|Y^{n-1}}(\theta|y^{n-1})
= \frac{f_{Y|X}(y_n\,|\,g_n(\theta,y^{n-1}))}{f_Y(y_n)}\,f_{\Theta_0|Y^{n-1}}(\theta|y^{n-1})
\end{equation*}
Applying the recursion rule $n$ times, taking a logarithm and
evaluating the above at the message point, we obtain
\begin{align*}
\frac{1}{n}\log f_{\Theta_0|Y^n}(\theta|y^n) =
\frac{1}{n}\sum_{k=1}^n\log \frac{f_{Y|X}(y_k\,|\,g_k(\theta,y^{k-1}))}{f_Y(y_k)}
\end{align*}
Now we can evaluate this posterior of the primary chain using the inputs/outputs of the mismatch chain, and apply the p.h.r. SLLN (Lemma \ref{lem:SLLN_memoryless}) for the mismatch chain using its
invariant distribution $P_{\Theta^*\Phi^*}$:
{\allowdisplaybreaks
\begin{align*}
\lim_{n\rightarrow\infty}\frac{1}{n}\log f_{\Theta_0|Y^n}(\Theta_0|Y^{*n})
&=\lim_{n\rightarrow\infty}\frac{1}{n}\sum_{k=1}^n\log\frac{f_{Y|X}(Y^*_k\,|\,g_k(\Theta_0,Y^{* k-1}))}{f_Y(Y^*_k)}
\stackrel{(\rm
a)}{=} \lim_{n\rightarrow\infty}\frac{1}{n}\sum_{k=1}^n\log\frac{f_{Y|X}(Y^*_k\,|\,X^*_k)}{f_Y(Y^*_k)}
\\
&\quad= \Expt\left(\log \frac{f_{Y|X}(Y^*|X^*)}{f_Y(Y^*)}\right)\qquad \m{P}^*_{\theta_0}{\text-a.s.}
\\
&\quad\stackrel{(\rm
b)}{=}\Expt\left(\log\frac{f_{Y|X}(Y^*|X^*)}{f_{Y^*|X^*}(Y^*|X^*)}+\log\frac{f_{Y^*}(Y^*)}{f_Y(Y^*)}
+\log\frac{f_{Y^*|X^*}(Y^*|X^*)}{f_{Y^*}(Y^*)}\right)
\\
&\quad= I(X^*;Y^*) - \left(D(P_{Y^*|X^*}\|P_{Y|X}|P_{X^*}) - D(P_{Y^*}\| P_Y)\right)
\\
&\quad\dfn\; R^{\rm mis}(X,Y;X^*,Y^*)
\end{align*}}
where in (a) we used the definition of the channel input, and in (b) we used Property (\textsl{C}\ref{cond:M_div}) and the convexity of the relative entropy which together guarantee that $D( P_{Y^*}\| P_Y)\leq D(P_{Y|X}^*\| P_{Y|X}\,|\, P_{X^*})<\infty$. The same analysis using normalized chains results in
\begin{equation*}
\lim_{n\rightarrow\infty}\frac{1}{n}\log
f_{\Theta_0|\Phi^n}(\Theta_0|\Phi^{*n}) = \Expt\log
f_{\Phi|\Theta}(\Phi^*|\Theta^*)
= R^{\rm mis} \qquad
P_{\theta_0}{\text-a.s.}
\end{equation*}
where the last equality is due to the invertibility of the chain
normalization, which is guaranteed by property
(\textsl{A}\ref{cond:omegac_proper}). Now we can define the analogue of
$I_{\eps}^-$ in (\ref{eq:eps_inf}) as follows:
\begin{equation*}
R^{\rm mis}_{\,\eps}\dfn \Expt\log
\ssc{-}f^\eps_{\Phi|\Theta}(\Phi^*|\Theta^*)
\end{equation*}
Therefore,
\begin{equation*}
0 \leq R^{\rm mis}-R^{\rm mis}_{\,\eps}
= D\left(P_{\Phi^*|\Theta^*}\|\,\ssc{-}P^\eps_{\Phi|\Theta}\,|\,P_\Theta\right)-D\left(P_{\Phi^*|\Theta^*}\|P_{\Phi|\Theta}\,|\,P_\Theta\right)
\end{equation*}
The second term on the right-hand-side above is finite due to
Property (\textsl{C}\ref{cond:M_div}), and by the Property
(\textsl{C}\ref{cond:M_reg}) we have that
$\inf_{\eps>0}D(P_{\Phi^*|\Theta^*}\|\ssc{-}P^\eps_{\Phi|\Theta}\,|\,P_\Theta)<\infty$.
Thus, for any $\eps$ small enough
\begin{equation*}
-\infty < R^{\rm mis}_{\,\eps} \leq R^{\rm mis}
\end{equation*}
We can now continue as in the proof of Lemma
(\ref{lem:diverge_R}), to show that (\ref{eq:prob1}) holds in this
case for any rate $R<R^{\rm mis}$.

The contraction Property (\textsl{C}\ref{cond:M_contr}) implies the
equivalent of Lemma \ref{lem:diverge_eps} for the mismatch chain,
since although the output sequence $Y^*_n$ is not necessarily
i.i.d. even when we start in the invariant distribution, we have a
contraction uniformly given any conditioning. Tied together with
the above and repeating the last steps of Theorem \ref{thrm:achv},
the achievability of (\ref{eq:mis_rate}) is established .
\end{proof}

\section{Miscellaneous Proofs}\label{app:misc}

\begin{proof}[Proof of Lemma \ref{lem:regular_dist}]
For simplicity we assume that $f_X$ is symmetric around its
maximum, the general unimodal case follow through essentially the
same way. Since the property of having a regular tail is shift
invariant, we can further assume without loss of generality that
$f_X$ attains its maximum at (and is symmetric around) $x=0$.
\begin{enumerate}[(i)]
\item By the assumption, there exist $m_0,m_1>0$, $b\geq a>1$ and
$x_0>1$ so that for any $|x|>x_0$
\begin{equation*}
m_0|x|^{-b} \leq f_X(x) \leq m_1|x|^{-a}
\end{equation*}
Thus, for any $x>x_0$
\begin{equation*}
1-F_X(x) \leq m_1\int_x^\infty y^{-a}dy = \frac{m_1}{a-1}\,x^{1-a}
\leq \frac{m_0^{\frac{1-a}{b}}m_1}{1-a}\,f_X^{\frac{a-1}{b}}(x)
\end{equation*}
and similarly
\begin{align*}
1-F_X(x) \geq
\frac{m_1^{\frac{1-b}{a}}m_0}{1-b}\,f_X^{\frac{b-1}{a}}(x)
\end{align*}
Identical derivations hold for $F_X(x)$ and $x<-x_0$, and thus
setting $\gamma=1-F_X(x_0)$ the tail regularity is established.

\item By the assumption, there exist $0<m_0<m_1$, $a\geq 1$, $b>0$
and $x_0>1$ so that for any $|x|>x_0$
\begin{equation*}
m_0e^{-b|x|^a} \leq f_X(x) \leq m_1e^{-b|x|^a}
\end{equation*}
Thus, for any $x>x_0$
\begin{align*}
1-F_X(x) &\leq m_1\int_x^\infty e^{-by^a}dy
\leq m_1\int_x^\infty
\left(\frac{y}{x}\right)^{a-1}e^{-by^a}dy
\stackrel{(z=y^a)}{\leq}
m_1\int_{x^a}^\infty \frac{1}{ax^{a-1}}\,e^{-bz}dz
=
\frac{m_1}{abx^{a-1}}\;e^{-bx^a}
\\
& \leq \frac{m_1}{m_0ab}\,f_X(x)
\end{align*}
and on the other hand
\begin{align*}
\left(1-F_X(x)\right)(ab+(a-1)x^{-a})
&\geq m_0\int_x^\infty
(ab+(a-1)x^{-a}) e^{-by^a}dy
\\
&\geq m_0\int_x^\infty
(ab+(a-1)y^{-a}) e^{-by^a}dy
\stackrel{(\rm a)}{=}
-m_0\frac{e^{-by^a}}{y^{a-1}}\;\bigg{|}_x^\infty
=m_0\frac{e^{-bx^a}}{x^{a-1}}
\end{align*}
where (a) is easily verified by differentiation. Thus for any
$x>x_0$
\begin{align*}
1-F_X(x) \geq \frac{m_0x}{abx^a+a-1}\,e^{-bx^a} \geq
m_0m_1^{-\frac{\beta}{b}}f_X^{\frac{\beta}{b}}(x)
\end{align*}
where the last inequality holds for $x>x_0$ with suitable
selection of $\beta>b$. Identical derivations hold for $F_X(x)$
and $x<-x_0$, and thus setting $\gamma=1-F_X(x_0)$ the tail
regularity is established.
\end{enumerate}

\end{proof}

\begin{proof}[Proof of Lemma \ref{lem:reg_cond}]
\begin{enumerate}[(i)]
\item Let $0<M\dfn
{\displaystyle\inf_{\isupp(\Theta,\Phi)}}f_{\Phi|\Theta}(\phi|\theta)$. Since $\isupp(\Theta,\Phi)$ is convex in the $\theta$-direction and $f_{\Phi|\Theta}(\phi|\theta) = f_{\Theta|\Phi}(\theta|\phi)$, we have that $\ssc{-}f^\eps_{\Phi|\Theta}(\phi|\theta) \geq M$ over $\isupp(\Theta,\Phi)$. Therefore:
{\allowdisplaybreaks
\begin{align*}
0&\leq
D(P_{\Phi|\Theta}\,\|\,\ssc{-}P^\eps_{\Phi|\Theta}\,|\,P_\Theta)
=
\hspace{-0.2cm}\dint{\isupp(\Theta,\Phi)}\hspace{-0.1cm}f_{\Phi|\Theta}(\phi|\theta)\log\frac{f_{\Phi|\Theta}(\phi|\theta)}{\ssc{-}f^\eps_{\Phi|\Theta}(\phi|\theta)}
\,d\theta d\phi
\\
&\leq
\dint{\isupp(\Theta,\Phi)}\hspace{-0.1cm}f_{\Phi|\Theta}(\phi|\theta)\log\frac{f_{\Phi|\Theta}(\phi|\theta)}{M}
\,d\theta d\phi
= -\left(h(\Phi|\Theta)+\log{M}\right) <\infty
\end{align*}}
where in the last inequality we used the finiteness of the joint entropy and $\Theta\sim\m{U}$. The same
holds for
$D(P_{\Phi|\Theta}\,\|\,\ssc{+}P^\eps_{\Phi|\Theta}\,|\,P_\Theta)$,
concluding the proof.

\item Since both $F_X,F_Y$ are now bijective, we have that
\begin{equation*}
F_{\Phi|\Theta}(\phi|\theta) = F_{\sst
Y|X}(F^{-1}_Y(\phi)|F^{-1}_X(\theta))
\end{equation*}
and thus
\begin{equation*}
f_{\Phi|\Theta}(\phi|\theta) =
\frac{\partial}{\partial\phi}\left(F_{Y|X}(F^{-1}_Y(\phi)|F^{-1}_X(\theta))\right)
=\frac{f_{Y|X}(F^{-1}_Y(\phi)|F^{-1}_X(\theta))}{
f_Y(F^{-1}_Y(\phi))}
= \frac{
f_{X|Y}(F^{-1}_X(\theta))|F^{-1}_Y(\phi))}{f_X(F^{-1}_X(\theta))}
\end{equation*}
We can therefore write
\begin{equation*}
\ssc{-}f^\eps_{\Phi|\Theta}(\phi|\theta) = \inf_{\xi\in
J_\eps^-(\phi,\theta)}\frac{
f_{X|Y}(F^{-1}_X(\xi))|F^{-1}_Y(\phi))}{f_X(F^{-1}_X(\xi))}
\geq m^{-1}\cdot\inf_{\xi\in J_\eps^-(\phi,\theta)}
f_{X|Y}(F^{-1}_X(\xi)|F^{-1}_Y(\phi))
\end{equation*}
where $m \dfn \sup f_X(x)<\infty$. Denote the max-to-min ratio bound by
\begin{equation*}
M = \sup_{y\in\isupp(Y)}\left(\frac{\sup_{x\in\isupp(X|Y=y)}
f_{X|Y}(x|Y=y)}{\inf_{x\in\isupp(X|y)} f_{X|Y}(x|y)}\right)
\end{equation*}
The relative entropy
$D(f_{\Phi|\Theta}\,\|\,\ssc{-}f^\eps_{\Phi|\Theta})$ is now upper
bounded as follows:
{\allowdisplaybreaks
\begin{align}\label{eq:divergence_bound}
\nonumber
D(P_{\Phi|\Theta}&\,\|\,\ssc{-}P^\eps_{\Phi|\Theta}\,|\,P_\Theta)
\leq
\\
\nonumber &\leq \dint{\isupp(\Theta,\Phi)}\hspace{-0.1cm}\frac{
f_{X|Y}(F^{-1}_X(\theta)|F^{-1}_Y(\phi))}{
f_X(F^{-1}_X(\theta))}\log\frac{
f_{X|Y}(F^{-1}_X(\theta)|F^{-1}_Y(\phi))}{f_X(F^{-1}_X(\theta))\cdot
m^{-1}\cdot{\hspace{-0.4cm}\displaystyle\inf_{\xi\in
J_\eps^-(\phi,\theta)}}f_{X|Y}(F^{-1}_X(\xi))|F^{-1}_Y(\phi))}
\,d\theta d\phi
\\
\nonumber &=
\log{(m)}+\hspace{-0.2cm}\dint{\isupp(X,Y)}\hspace{-0.1cm}
f_{X|Y}(x|y) f_Y(y)\log\left(\frac{1}{f_X(x)}\cdot\frac{
f_{X|Y}(x|y)}{\inf_{z\in \widehat{J}_\eps^-(y,x)}
f_{X|Y}(z|y)}\right) \,dxdy
\\
&\leq  \log{(m)}+h(X) + \log{M} < \infty
\end{align}}
where a straightforward change of variables was performed, and $\widehat{J}_\eps^-(y,x)$ is the counterpart of $J_\eps^-(\phi,\theta)$. In the last inequality we used the fact that $\isupp(X,Y)$ is convex in the $y$-direction, which implies that $\widehat{J}_\eps^-(y,x)\subseteq \isupp(X|Y=y)$. Furthermore, $h(X)$ is finite since $f_X$ is proper and bounded.

\item We prove the claim under the lenient assumption that
$f_{X|Y}(x|y)$ is also symmetric for any fixed $y$. The argument
for the general claim is a similar yet more tedious version of
this proof. We need the following Lemma:
\begin{lemma}\label{lem:aux_div}
Suppose $X$ is proper with a symmetric unimodal p.d.f., a finite
variance $\sigma^2$, and a regular tail with parameters
$\gamma,c_i,\alpha_i$. Define
\begin{equation*}
f_X^*(x) \dfn \inf_{z\in (e(x),x)}f_X(x) \,,\quad
e(x)\dfn F_X^{-1}\left(\frac{1}{2}F_X(x)\right)
\end{equation*}
and let
\begin{equation*}
\gamma^*\dfn \min(\gamma,\frac{1}{3})\,,\quad M \dfn \sup
f_X(x)\,,\quad M_1 \dfn\frac{\gamma^*}{4}\left(\sqrt{\frac{2}{\gamma^*}}\,\sigma-\frac{1-\gamma^*}{2M}\right)
\end{equation*}
Then
\begin{equation*}
D(f_X\|f_X^*)
\leq \alpha_1^{-1}\log{\frac{2c_1}{c_0}} +
(1+\alpha_1-\alpha_0)\log{M}+\log{M_1}
\end{equation*}
\end{lemma}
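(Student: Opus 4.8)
The plan is to bound the relative entropy $D(f_X\|f_X^*) = \int f_X(x)\log\frac{f_X(x)}{f_X^*(x)}\,dx$ by controlling the ratio $f_X(x)/f_X^*(x)$ on two separate regions: the ``central'' region where $F_X(x)$ (or $1-F_X(x)$) exceeds $\gamma^*$, and the ``tail'' region where it falls below $\gamma^*$, where the regular-tail hypothesis becomes available. On the tail region, say for $x$ large with $1-F_X(x)\le\gamma$, the regularity bounds $c_0 f_X^{\alpha_0}(x)\le 1-F_X(x)\le c_1 f_X^{\alpha_1}(x)$ apply at $x$; since $e(x)=F_X^{-1}(\tfrac12 F_X(x))$ moves toward the mode, and $f_X$ is unimodal, the infimum $f_X^*(x)=\inf_{z\in(e(x),x)}f_X(z)=f_X(x)$ on the right tail (the p.d.f.\ is monotone there), so actually the only contribution is where $e(x)$ and $x$ straddle the mode — wait, more carefully: $f_X^*(x)$ equals $f_X$ evaluated at the more extreme of the two endpoints relevant to the monotone branch, and one gets $f_X^*(x)\ge$ a constant multiple of a power of $1-F_X(x)$ via the \emph{other} inequality in the regular-tail definition applied at $e(x)$. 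Concretely, $1-F_X(e(x)) = 1-\tfrac12 F_X(x) \ge \tfrac12$, which is $\le\gamma$-territory only in the symmetric split; the key computation is: on the tail, $1-F_X(e(x)) = \tfrac12(1-F_X(x)) + \tfrac12 \cdot(\text{something})$, giving $1-F_X(e(x))\asymp 1-F_X(x)$ up to a factor $2$, hence $f_X(e(x)) \ge c_1^{-1/\alpha_1}(1-F_X(e(x)))^{1/\alpha_1} \ge (2c_1)^{-1/\alpha_1}(1-F_X(x))^{1/\alpha_1}$, while $f_X(x)\le c_0^{-1/\alpha_0}(1-F_X(x))^{1/\alpha_0}$. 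Taking the ratio and using that $1-F_X(x)\le 1$ together with $f_X(x)\le M$ to absorb the exponent mismatch, one gets $\log\frac{f_X(x)}{f_X^*(x)} \le \alpha_1^{-1}\log\frac{2c_1}{c_0} + (1+\alpha_1-\alpha_0)\log M$ pointwise on the tail, which when integrated against $f_X$ (a probability density) yields that term.

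For the central region — where both $F_X(x)\ge\gamma^*$ and $1-F_X(x)\ge\gamma^*$ — the idea is to get a uniform lower bound on $f_X^*(x)$ directly. Here I would use the finite variance: by unimodality and symmetry around the mode (taken to be $0$), Chebyshev-type reasoning forces $f_X$ not to be too concentrated, so there is a point at moderate distance from the mode where $f_X$ is bounded below. Precisely, the interval $(e(x),x)$ in the central region is contained in a fixed compact interval $[-L,L]$ determined by $\gamma^*$ (since $|x|$ is bounded when $F_X(x)\in[\gamma^*,1-\gamma^*]$), and $e(x)$ is even closer to the mode. On $[-L,L]$, unimodality gives $f_X\ge f_X(L)$, and a variance/total-mass argument gives $f_X(L)\ge M_1$ with $M_1$ as defined: if the density on $[-L,L]$ dropped below $M_1$, then combined with the bound $f_X\le M$ on the mass inside and the tail mass being controlled, the variance would exceed $\sigma^2$. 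This is where the somewhat mysterious constant $M_1 = \tfrac{\gamma^*}{4}(\sqrt{2/\gamma^*}\,\sigma - \tfrac{1-\gamma^*}{2M})$ comes from — it is the extremal density level consistent with mass $1-\gamma^*$ inside a slab and variance $\sigma^2$. Then $\log\frac{f_X(x)}{f_X^*(x)}\le \log\frac{M}{M_1} = \log M - \log M_1$ on the central region, contributing $\log M + \log M_1^{-1}$, i.e.\ $\log M_1$ appears (with the understanding that $\log M$ is folded into the $(1+\alpha_1-\alpha_0)\log M$ term, or the bookkeeping is arranged so the stated inequality holds — I would check that the three regional bounds sum to exactly the claimed right-hand side, adjusting the accounting of $\log M$ contributions as needed).

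The main obstacle I expect is the central-region variance estimate: extracting the precise constant $M_1$ requires a careful optimization showing that a symmetric unimodal density with bounded peak $M$, total mass $1$, and variance $\sigma^2$ cannot be arbitrarily flat on the central slab $[-L,L]$ where $L$ depends on $\gamma^*$. The cleanest route is: let $L = e^{-1}$ of the relevant quantile, bound $\Prob(|X|\le L)\ge$ something using Chebyshev ($\Prob(|X|>L)\le \sigma^2/L^2$), pick $L = \sqrt{2/\gamma^*}\,\sigma$ so that $\Prob(|X|>L)\le\gamma^*/2$, hence $\Prob(|X|\le L)\ge 1-\gamma^*/2$; then the mass on $[-L,L]$ minus the mass that can ``hide'' near the peak (at most $2\cdot\frac{1-\gamma^*}{2}\cdot\frac{1}{2M}\cdot M$-type bookkeeping, using that near the mode the density is $\le M$ over a width) must be supported at density $\ge$ some level over a region of width $\asymp L$, forcing $f_X(L)\ge M_1$. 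I would also need to double-check the right-tail/left-tail symmetry reduction stated at the start of the proof, and verify that $e(x)$ always lies between the mode and $x$ so that $\inf_{(e(x),x)} f_X$ is attained at one endpoint by unimodality — this monotonicity-on-a-branch fact is what makes the whole tail estimate go through cleanly, and the symmetric-density assumption (which the proof explicitly makes, deferring the general unimodal case) is exactly what keeps it from getting tedious.
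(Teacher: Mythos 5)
Your proposal follows essentially the same route as the paper's proof: split the integral into a central region and tail regions, bound the tail contribution by applying the regular-tail inequalities at both $x$ and $e(x)$ together with the exact relation $F_X(e(x))=\tfrac12 F_X(x)$ (yielding the $\alpha_1^{-1}\log\frac{2c_1}{c_0}$ and $(1+\alpha_1-\alpha_0)\log M$ terms), and bound the central contribution by a Chebyshev-plus-peak-height argument that pins down $M_1$ exactly as you describe (the paper takes $|x_2|\le\sqrt{2/\gamma}\,\sigma$ from Chebyshev and $|x_1|\ge\frac{1-\gamma}{2M}$ from the sup bound, then squeezes $\gamma/4$ of mass between them). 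One correction to the point you flagged for double-checking: $e(x)$ does \emph{not} move toward the mode --- since it halves $F_X$, it always moves left, so for $x$ in the left tail the interval $(e(x),x)$ sits entirely on the increasing branch (inf attained at $e(x)$, giving the regular-tail term), while for $x$ in the right tail $(e(x),x)$ straddles the mode and $f_X^*(x)=f_X(x)$, so that tail contributes nothing; sorting this out is what makes the per-region bookkeeping close.
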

\begin{proof}
Without loss of generality we can assume that
$\gamma<\frac{1}{3}$, since a larger value implies a regular tail
for any smaller value. Define $x_2<x_1<x_0<0$ to be
\begin{equation*}
x_0 = F_X^{-1}\left(\gamma\right)\,,\quad  x_1 =
F_X^{-1}\left(\frac{\gamma}{2}\right) \,,\quad  x_2 =
F_X^{-1}\left(\frac{\gamma}{4}\right)
\end{equation*}
It is easy to see that $e(x_0)=x_1$ and $e(x_1)=x_2$. Defining
$M=\sup f_X(x)$ we can lower bound $|x_1|$ using symmetry:
\begin{equation*}
2|x_1|M \geq 1-\gamma \quad\Rightarrow\quad  |x_1| \geq
\frac{1-\gamma}{2M}
\end{equation*}
Using Chebyshev's inequality and symmetry, we can upper bound
$|x_2|$ by
\begin{equation*}
2\int_{|x_2|}^\infty f_X(x)dx = \frac{\gamma}{2} \leq
\frac{\sigma^2}{x_2^2} \quad \Rightarrow\quad |x_2|\leq
\sqrt{\frac{2}{\gamma}}\,\sigma
\end{equation*}
Combining the above and using the monotonicity of $f$ for $x<0$, we have
\begin{equation*}
f_X(x_1)\cdot (|x_2|-|x_1|) \geq \frac{\gamma}{4}
\end{equation*}
which yields a lower bound for $f_X(x_1)$:
\begin{equation*}
f_X(x_1) \geq \frac{\gamma}{4(|x_2|-|x_1|)} \geq \frac{\gamma}{4}\left(\sqrt{\frac{2}{\gamma}}\,\sigma-\frac{1-\gamma}{2M}\right)
= M_1
\end{equation*}
and since $f_X$ is symmetric and unimodal and by the assumption
$\gamma<\frac{1}{3}$, it is readily verified that
\begin{align}\label{eq:f*bound}
\nonumber &f_X^*(x) = f_X(e(x)) &x\in(-\infty,x_0)
\\
\nonumber &f_X^*(x) \geq f_X(x_1)\geq M_1 &x\in(x_0,|x_0|)
\\
&f_X^*(x) = f_X(x) &x\in(|x_0|,\infty)
\end{align}
Now, recall that $f$ has a regular tail, which is this symmetric
case means that (recall that $x_0<0$)
\begin{equation*}
c_0f^{\alpha_0}(x)\leq F(x) \leq  c_1f^{\alpha_1}(x)\,\qquad
|x|>|x_0|
\end{equation*}
Let us upper bound the relative entropy between $f_X,f_X^*$ using
the above together with (\ref{eq:f*bound}):
\begin{align*}
D(f_X\|f_X^*) &= \int_{-\infty}^{x_0}
f_X(x)\log\frac{f_X(x)}{f_X^*(x)}\,dx+\int_{x_0}^{|x_0|}
f_X(x)\log\frac{f_X(x)}{f_X^*(x)}\,dx+\int_{|x_0|}^\infty
f_X(x)\log\frac{f_X(x)}{f_X^*(x)}\,dx
\\
&\leq  \int_{-\infty}^{x_0}
f_X(x)\log\frac{f_X(x)}{f_X(e(x))}\,dx+\int_{x_0}^{|x_0|}f_X(x)\log\frac{M}{M_1}\,dx
+ \int_{|x_0|}^\infty f_X(x)\log{1}\,dx
\\
& \leq \alpha_1^{-1}\int_{-\infty}^{x_0}
f_X(x)\log\left(\frac{f_X^{\alpha_1}(x)}{F_X(x)}\frac{2F_X(e(x))}{f_X^{\alpha_1}(e(x))}\right)\,dx
+\log{M}-\log{M_1}
\\
&\leq \alpha_1^{-1}\int_{-\infty}^{x_0}
f_X(x)\log\left(\frac{f_X^{\alpha_1}(x)}{c_0f_X^{\alpha_0}(x)}\frac{2c_1f_X^{\alpha_1}(e(x))}{f_X^{\alpha_1}(e(x))}\right)\,dx
+\log{M}-\log{M_1}
\\
&\leq \alpha_1^{-1}\log{\frac{2c_1}{c_0}} +
(1+\alpha_1-\alpha_0)\log{M}-\log{M_1} \;<\; \infty
\end{align*}
\end{proof}

Returning to the pursued claim, let $\gamma,c_i,\alpha_i$ be the
common tail parameters of $f_{X|Y}(\cdot|y)$, let $M=\sup
f_{X|Y}(x|y)$ and let $\sigma^2$ be an upper bound on the variance
of $f_{X|Y}(\cdot|y)$ for all $y$. It follows from definition that
for any $y$
\begin{equation*}
\inf_{z\in \widehat{J}_\eps^-(y,x)} f_{X|Y}(z|y) \geq
f_{X|Y}^*(x|y)
\end{equation*}
where $f_{X|Y}^*$ is defined as in Lemma \ref{lem:aux_div}. We now follow the derivations of the previous claim (\ref{cond:reg_min_max}) up to (\ref{eq:divergence_bound}), and use the above inequality and Lemma \ref{lem:aux_div} to obtain:
{\allowdisplaybreaks
\begin{align*}
D(P_{\Phi|\Theta}\,\|\,\ssc{-}P^\eps_{\Phi|\Theta}\,|\,P_\Theta)
&\leq \log{(m)}+h(X)+\int_{\isupp(Y)}\hspace{-0.5cm}
f_Y(y)dy\int_{\isupp(X|Y=y)}\hspace{-0.8cm} f_{X|Y}(x|y)\log\frac{
f_{X|Y}(x|y)}{ f_{X|Y}^*(x|y)} \,dx
\\
& = \log{(m)}+h(X)+ \int_{\isupp(Y)}\hspace{-0.5cm}
f_Y(y)D(f_{X|Y}(\cdot|y)\,\|\, f_{X|Y}^*(\cdot|y))dy
\\
& = \log{(m)}+h(X)+ \alpha_1^{-1}\log{\frac{2c_1}{c_0}} +
(1+\alpha_1-\alpha_0)\log{M}-\log{M_1} <\infty
\end{align*}}
The same proof holds for
$D(P_{\Phi|\Theta}\,\|\,\ssc{+}P^\eps_{\Phi|\Theta}\,|\,P_\Theta)$.

\item A direct consequence of (\ref{cond:reg_bounded_away}).

\end{enumerate}
\end{proof}

\bibliographystyle{IEEEbib}

\end{document}